\documentclass[reqno]{amsart}

\usepackage[a4paper]{geometry}
\usepackage[T2A]{fontenc}
\usepackage[utf8]{inputenc}
\usepackage[ukrainian,russian,english]{babel}
\usepackage{comment}

\usepackage{amssymb}

\usepackage[mathscr]{eucal}
\usepackage{tikz}
\usetikzlibrary{positioning,calc,decorations.markings}
\usetikzlibrary{patterns}
\usetikzlibrary{calc}
\usepackage{mathtools}
\usepackage{enumerate}
\usepackage{accents}
\usepackage{dsfont}
\usepackage{multicol}

\usepackage{color}
\usepackage[colorlinks=true]{hyperref}
\hypersetup{urlcolor=blue,citecolor=red,linkcolor=blue}
\usepackage[initials,nobysame]{amsrefs}
\definecolor{darkgreen}{rgb}{0.13, 0.55, 0.13}

\numberwithin{equation}{section}
\theoremstyle{plain}
\newtheorem{theorem}{Theorem}[section]
\newtheorem{proposition}[theorem]{Proposition}
\newtheorem{corollary}[theorem]{Corollary}
\newtheorem{lemma}[theorem]{Lemma}
\theoremstyle{definition}
\newtheorem*{rh-pb*}{Basic RH problem}
\newtheorem*{rh-I*}{RH problem normalized as $I$ at $\infty$}
\newtheorem*{sol-rh-pb*}{Soliton RH problem}
\newtheorem*{data*}{Data of this RH problem associated with $\BS{u_0(x)}$}
\theoremstyle{remark}
\newtheorem{remark}[theorem]{Remark}
\newtheorem*{notations*}{Notations}
\newtheorem*{assumptions*}{Assumptions}
\providecommand{\BS}[1]{\boldsymbol{#1}}  
\newcommand{\R}{\mathbb{R}}

\newcommand{\dd}{\mathrm{d}}
\newcommand{\eul}{\mathrm{e}}
\newcommand{\ii}{\mathrm{i}}
\newlength{\dhatheight}

\renewcommand{\Im}{\operatorname{Im}}

\newcommand{\ord}{\mathrm{O}}
\DeclareMathOperator{\Res}{Res}

\newif\ifshort
\shorttrue

\title[A RH approach to the two-component mCH system with step-like boundary conditions]{A Riemann--Hilbert approach to the two-component modified Camassa--Holm system with step-like boundary conditions}

\author[Iryna Karpenko]{Iryna Karpenko}
\address{IK: Faculty of Mathematics\\ University of Vienna\\
Oskar-Morgenstern-Platz 1\\ 1090 Wien\\ Austria\\ and B. Verkin Institute for Low Temperature Physics and Engineering\\ 47, Nauky ave\\ 61103 Kharkiv\\ Ukraine}
\email{\href{mailto:iryna.karpenko@univie.ac.at}{iryna.karpenko@univie.ac.at}}

\author[Dmitry Shepelsky]{Dmitry Shepelsky}
\address{DS: B.Verkin Institute for Low Temperature Physics and Engineering, Kharkiv, Ukraine\\ and V.N.Karazin Kharkiv National University, Ukraine}
\email{\href{mailto:shepelsky@yahoo.com }{shepelsky@yahoo.com}}

\author[Gerald Teschl]{Gerald Teschl}
\address{GT: Faculty of Mathematics\\ University of Vienna\\
Oskar-Morgenstern-Platz 1\\ 1090 Wien\\ Austria}
\email{\href{mailto:gerald.teschl@univie.ac.at}{gerald.teschl@univie.ac.at}}

\begin{document}

\begin{abstract}
We develop a Riemann--Hilbert formalism for the Cauchy problem for the two-component modified Camassa--Holm (2-mCH) system on the real line under non-zero step-like boundary conditions. Starting from the corresponding Lax pair, we carry out the direct spectral analysis and establish the analytic, symmetry, and asymptotic properties of the eigenfunctions and spectral data determined by the initial conditions. These results are then used to formulate a matrix Riemann--Hilbert problem whose solution provides a reconstruction formula for the solution of the original Cauchy problem.
\end{abstract}

\maketitle

\section{Introduction}

The modified Camassa--Holm (mCH) equation
\begin{equation}\label{eq:mch}
 m_t+\bigl((u^2-u_x^2)m\bigr)_x=0,
 \qquad
 m:=u-u_{xx},
\end{equation}
is an integrable equation with cubic nonlinearity and may be viewed as
a modification of the Camassa--Holm equation \cite{CH93}. It arose in
several equivalent integrability constructions developed by
Fokas, Fuchssteiner, and Olver and Rosenau
\cites{Fokas95,Fuchssteiner96,OR96}. A Lax representation and explicit
cuspon and W/M-shaped solutions were obtained by Qiao in
\cite{Qiao06}. For this reason, \eqref{eq:mch} is also referred to as the
Fokas--Olver--Rosenau--Qiao (FORQ) equation.

The Riemann--Hilbert approach to the mCH equation with a nonzero
constant background was developed in \cites{BKS20,Karpenko2022}. The case of
different nonzero asymptotic states at the two spatial ends was
subsequently studied in \cite{KST22}, where the direct spectral theory
and the associated Riemann--Hilbert problem were formulated in the
presence of inner spectral cuts.

In this paper, we study the two-component integrable modified
Camassa--Holm (2-mCH) equation, introduced by Song, Qu, and Qiao in
\cite{SQQ11} as a natural extension of \eqref{eq:mch} to two interacting
fields:
\begin{subequations}\label{eq:2mch}
\begin{alignat}{4}           
& m_t+\left((u-u_x)(v+v_x)m\right)_x = 0,&\quad&m:= u-u_{xx},&\quad&t>0,&\;&-\infty<x<+\infty,\\
& n_t+\left((u-u_x)(v+v_x)n\right)_x = 0,&\quad&n:= v-v_{xx},&\quad&t>0,&\;&-\infty<x<+\infty.
\end{alignat}
\end{subequations}

It was shown in \cite{SQQ11} that the 2-mCH system describes
pseudospherical surfaces.   This system of equations is known to possess infinitely many conservation laws as well as a Lax formulation:
\begin{subequations}
    \label{eq:lax}
    \begin{equation}
        \Psi_x(x,t,\lambda)=\mathsf{U}(x,t,\lambda)\Psi(x,t,\lambda),\qquad\Psi_t(x,t,\lambda)=\mathsf{V}(x,t,\lambda)\Psi(x,t,\lambda),
    \end{equation}
    where the coefficient matrices $\mathsf{U}$ and $\mathsf{V}$ are defined in terms of a solution of the 2-mCH equation:
    \begin{equation}
        \mathsf{U}=\frac{1}{2}\begin{pmatrix} -1 & \lambda  m \\
-\lambda n & 1\end{pmatrix},~
\mathsf{V}= \begin{pmatrix}
  \displaystyle \lambda^{-2}
    +\frac{(u-u_x)(v+v_x)}{2}
  &\displaystyle -\lambda^{-1}(u-u_x)
    -\frac{\lambda}{2}(u-u_x)(v+v_x)m\\[6pt]
  \displaystyle \lambda^{-1}(v+v_x)
    +\frac{\lambda}{2}(u-u_x)(v+v_x)n
  &\displaystyle -\lambda^{-2}
    -\frac{(u-u_x)(v+v_x)}{2}
 \end{pmatrix}.
    \end{equation}
\end{subequations}
A notable feature of \eqref{eq:2mch} is that passing from the scalar
mCH equation to its two-component extension does not increase the
dimension of the auxiliary spectral problem. For comparison, the
$N$-component vector NLS equation is associated with an
$(N+1)\times(N+1)$ Lax pair; in particular, the two-component
Manakov system \cite{Manakov74} leads to a $3\times3$ spectral and
RH formulation \cites{PAB06,PBT11}. Similarly, the standard Lax representation of the matrix NLS equation
with an $N\times N$ matrix-valued potential involves a
$2N\times2N$ spectral problem \cite{PDLH18}. By contrast, both the Lax pair
\eqref{eq:lax} and the RH problems for the 2-mCH system remain
$2\times2$. The coupling between the components is encoded in the two
off-diagonal momenta $m$ and $n$, rather than through an enlargement
of the matrix spectral problem.

A geometric formulation in terms of pseudospherical surfaces and
explicit cuspon and W/M-shaped solutions were also presented in
\cite{SQQ11}. The local well-posedness for the associated Cauchy problem in the Besov spaces, explicit expressions of its single peakon and two peakon solutions, and blow-up
scenario were studied in \cite{YanQiaoZhang2015}. Interlacing multipeakons were constructed by
inverse spectral methods in \cite{CHS16}, where their global existence
and long-time asymptotics were also analyzed.

We consider the Cauchy problem for \eqref{eq:2mch} subject to the initial conditions
\begin{equation}\label{eq:initial-data}
 u(x,0)=u_0(x),\qquad v(x,0)=v_0(x),\qquad x\in\R,
\end{equation}
assuming that
\begin{equation}\label{eq:step-data}
 \begin{aligned}
  (u_0(x),v_0(x))& \to  (A_1,A_1),&&x\to-\infty,\\
  (u_0(x),v_0(x))& \to  (A_2,A_2),&&x\to+\infty,
 \end{aligned}
 \qquad 0<A_1<A_2.
\end{equation}
and that the solution $(u(x,t),v(x,t))$ preserves this behavior for all fixed $t>0$: 
\begin{equation}\label{eq:step-solution}
 \begin{aligned}
  (u(x,t),v(x,t))& \to  (A_1,A_1),&&x\to-\infty,\\
  (u(x,t),v(x,t))& \to  (A_2,A_2),&&x\to+\infty.
 \end{aligned}
\end{equation}
The case with different
background values is briefly discussed in Section~\ref{sec:different-component-backgrounds}.

We also impose the positivity condition
\begin{equation}\label{eq:positivity}
 m_0(x):=u_0(x)-u_{0,xx}(x)>0,
 \qquad
 n_0(x):=v_0(x)-v_{0,xx}(x)>0;
\end{equation}
then it can be shown that  $m(x,t)>0$ and $n(x,t)>0$ for all $t$ (see Appendix \ref{app:A}, for the case of the mCH equation, see \cites{KST22}).

The paper aims at extending the ``step-like Riemann--Hilbert
formalism'' of \cite{KST22} to the two-component system
\eqref{eq:2mch}.
We focus on establishing representation results, assuming the existence of a solution to the Cauchy problem.

For this purpose, we introduce a balancing gauge for the two momenta,
develop the direct spectral theory, and analyze the Jost solutions and
scattering data, including their symmetries, discrete spectrum, and
behavior at the branch points and at $\lambda=0$. We then formulate
the associated right and left Riemann--Hilbert problems.

A key distinction from the scalar case appears in the reconstruction
of the solution. Setting
\[
 r=\frac14\log\frac{m}{n},
\]
the expansion of the Riemann--Hilbert solutions at $\lambda=0$
determines the combination $x+2r$, rather than the inverse coordinate
$x$ alone. The function $r$ is transported by the common
characteristic flow, so its initial profile, or equivalently the
initial ratio $m_0/n_0$, must be retained in order to separate $x$
and $r$ and recover both components of the solution. We derive
parametric recovery formulas that incorporate this additional
two-component information.

The outline of the paper is as follows.
Section~\ref{sec:2} develops the gauge transformations and direct
spectral analysis. The right and left Riemann--Hilbert problems and
the relation between them are presented in
Section~\ref{sec:ic-rh-problems}. Section~\ref{sec:ic-two-component-reconstruction}
derives the recovery formulas from both formulations. Section~\ref{sec:different-component-backgrounds} describes the
modifications required when the two components of the solution $(u(x,t),v(x,t))$ have different
background values at the same spatial end.

\section{Notations and Assumptions}
\begin{notations*}
Throughout the paper, $\sigma_1,\sigma_2,\sigma_3$ stand for the standard
Pauli matrices,
\[
\sigma_1:=
\begin{pmatrix}0&1\\1&0\end{pmatrix},
\qquad
\sigma_2:=
\begin{pmatrix}0&-\ii\\ \ii&0\end{pmatrix},
\qquad
\sigma_3:=
\begin{pmatrix}1&0\\0&-1\end{pmatrix}.
\]
For any $2\times2$ matrices $A$ and $B$, we denote their commutator by
$[A,B]:=AB-BA$ and, for $\alpha\in\mathbb C$, we write
\[
e^{\alpha\hat\sigma_3}A:= e^{\alpha\sigma_3}Ae^{-\alpha\sigma_3}.
\]
We denote the upper and lower half-planes by
\[
\mathbb C^\pm:=\{\lambda\in\mathbb C:\ \pm\Im \lambda>0\},
\]
and set
\[
 \mathbb R_1:=(-\infty,0),
 \qquad
 \mathbb R_2:=(0,+\infty).
\]
For $j=1,2$, we introduce the intervals
\[
\Sigma_j:=
\left(-\infty,-\frac{1}{A_j}\right]
\cup
\left[\frac{1}{A_j},\infty\right),
\qquad
\dot{\Sigma}_j:=
\left(-\infty,-\frac{1}{A_j}\right)
\cup
\left(\frac{1}{A_j},\infty\right),
\]
\[
\Sigma_0:=
\left[-\frac{1}{A_1},-\frac{1}{A_2}\right]
\cup
\left[\frac{1}{A_2},\frac{1}{A_1}\right],
\qquad
\dot{\Sigma}_0:=
\left(-\frac{1}{A_1},-\frac{1}{A_2}\right)
\cup
\left(\frac{1}{A_2},\frac{1}{A_1}\right),
\]
and
\[
\Gamma_j:=
\left[-\frac{1}{A_j},\frac{1}{A_j}\right],
\qquad
\dot{\Gamma}_j:=
\left(-\frac{1}{A_j},\frac{1}{A_j}\right).
\]
The functions
\[
l_j(\lambda):=\sqrt{\lambda^2-\frac{1}{A_j^2}},
\qquad j=1,2,
\]
are defined with branch cuts along $\Gamma_j$ and normalized by
$l_j(\lambda)\sim\lambda$ as $\lambda\to\infty$.

For $\lambda$ on a real cut, $\lambda_+$ and $\lambda_-$ denote the
points on its upper and lower sides, respectively. If a scalar-, vector-,
or matrix-valued function $F$ admits the corresponding boundary values,
we write
\[
 F_\pm(\lambda)
 \equiv F(\lambda_\pm)
 :=
 \lim_{\varepsilon\downarrow0}
 F(\lambda\pm\ii\varepsilon).
\]

For a function $f$, its Schwarz conjugate is denoted by
\[
 f^*(\lambda):=\overline{f(\bar\lambda)}.
\]
For a vector or matrix $A$, we denote its Hermitian transpose by
\[
 A^\dagger:=\overline{A}^{\,\mathsf T}.
\]

For a scalar-, vector-, or matrix-valued function $F$ with a simple pole at a
point $\lambda_k$ on a cut, we denote its residues from the two sides by
\begin{equation}\label{ic:one-sided-residues}
 \Res_{\lambda=\lambda_{k\pm}}F
 :=
 \lim_{\substack{\lambda\to\lambda_k\\
                  \pm\Im\lambda>0}}
 (\lambda-\lambda_k)F(\lambda).
\end{equation}

Finally, for an interval
$\Omega\subseteq\mathbb R$, we denote the Sobolev space $W^{k,p}(\Omega)$ by
\[
W^{k,p}(\Omega)
=
\left\{
f\in L^p(\Omega):
D^j f\in L^p(\Omega),\quad j=1,\ldots,k
\right\}.
\]

\end{notations*}

\begin{assumptions*}

Throughout the article, $(u,v)$ denotes a given real-valued classical
solution of \eqref{eq:2mch} on $\mathbb R\times[0,T)$, for some
$T\leq \infty$, satisfying
\eqref{eq:initial-data}--\eqref{eq:positivity}. Set
\begin{equation}\label{ic:Q}
Q:=(u-u_x)(v+v_x).    
\end{equation}
We assume that
\[
 m,n,Q\in C\bigl([0,T);C_b^1(\mathbb R)\bigr).
\]
By Proposition~\ref{prop:positivity}, condition \eqref{eq:positivity}
implies
\[
 m(x,t)>0,
 \qquad
 n(x,t)>0,
 \qquad
 (x,t)\in\mathbb R\times[0,T).
\]
We may therefore define
\begin{equation}\label{ic:def-rhoRr}
 \rho:=\sqrt{mn},
 \qquad
  \nu :=\frac{m}{n},
 \qquad
 r:=\frac14\log\frac{m}{n}.
\end{equation}

We assume that, uniformly for $t$ in compact subintervals of $[0,T)$,
\begin{equation}\label{ic:rhoR-limits}
 \begin{aligned}
  \rho(x,t)& \to  A_1,
  &\qquad
   \nu (x,t)& \to 1,
  &\qquad
  r(x,t),\,r_x(x,t)& \to 0,
  &&x\to-\infty,\\
  \rho(x,t)& \to  A_2,
  &
   \nu (x,t)& \to 1,
  &
  r(x,t),\,r_x(x,t)& \to 0,
  &&x\to+\infty.
 \end{aligned}
\end{equation}
In particular, $m$ and $n$ are uniformly bounded above and away from zero
on $\mathbb R\times[0,T_0]$ for every $T_0<T$.

Set
\begin{equation}\label{ic:delta-def}
 \delta_j
 :=
 \frac{\rho-A_j}{2A_j}-r_x,
 \qquad j=1,2.
\end{equation}
For every $T_0<T$, we assume that
\begin{equation}\label{ic:standing-direct-assumptions}
\begin{aligned}
 \sup_{0\leq t\leq T_0}
 \sum_{j=1}^2
 \Bigg(
 &\bigl\|(1+|\cdot|)^2
       \bigl(\rho(\cdot,t)-A_j\bigr)
  \bigr\|_{L^1(\mathbb R_j)}
 +\bigl\|(1+|\cdot|)^2r_x(\cdot,t)
  \bigr\|_{L^1(\mathbb R_j)}
 +
 \left\|
  \frac{\delta_j(\cdot,t)}{\rho(\cdot,t)}
 \right\|_{W^{1,1}(\mathbb R_j)}
 \Bigg)<\infty.
\end{aligned}
\end{equation}

\end{assumptions*}

\section{Lax pairs and eigenfunctions}\label{sec:2}

Introduce
\begin{equation}\label{ic:def-pqQ}
 p:=u-u_x,\qquad q:=v+v_x
\end{equation}
so that
\begin{equation}\label{eq:identities}
 m=p+p_x,\qquad n=q-q_x,\qquad Q=pq, \qquad Q_x=mq-np.
\end{equation}

\begin{lemma}\label{lem:rho-r-conservation}
The functions $\rho$ and $r$ satisfy
\begin{equation}\label{ic:rho-r-conservation}
 \rho_t+(Q\rho)_x=0,
 \qquad
 r_t+Qr_x=0.
\end{equation}
\end{lemma}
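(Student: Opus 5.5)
The plan is to start from the two momentum equations in \eqref{eq:2mch}. Since $Q=(u-u_x)(v+v_x)=pq$ by \eqref{ic:Q} and \eqref{eq:identities}, these read
\[
 m_t+(Qm)_x=0,\qquad n_t+(Qn)_x=0 .
\]
Expanding, we get $m_t+Qm_x=-Q_xm$ and $n_t+Qn_x=-Q_xn$. The positivity $m,n>0$ (Proposition~\ref{prop:positivity}) allows us to divide and take logarithms. This gives
\[
 (\partial_t+Q\partial_x)\log m=-Q_x,\qquad (\partial_t+Q\partial_x)\log n=-Q_x .
\]
These two identities are the only input we need. The rest is taking their sum and difference.

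For $r=\frac14(\log m-\log n)$ from \eqref{ic:def-rhoRr}, subtracting the two identities cancels the source term $-Q_x$. We obtain $(\partial_t+Q\partial_x)r=0$, which is the second equation in \eqref{ic:rho-r-conservation}.

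For $\rho=\sqrt{mn}$, we have $\log\rho=\frac12(\log m+\log n)$, so averaging the two identities gives $(\partial_t+Q\partial_x)\log\rho=-Q_x$. Multiplying by $\rho$ yields $\rho_t+Q\rho_x+Q_x\rho=0$, which is $\rho_t+(Q\rho)_x=0$.

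The only point needing care is regularity. We need $m,n$ to be differentiable in $t$ and $x$ so that the product rule applies. This holds because $(u,v)$ is a classical solution, and by the standing assumptions $m,n,Q\in C([0,T);C^1_b)$. Strict positivity then makes $\log m$, $\log n$ and $\sqrt{mn}$ smooth functions of $(m,n)$, so no real obstacle is expected.
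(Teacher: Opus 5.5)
Your proposal is correct and follows the paper's proof essentially verbatim. Both arguments expand the conservation laws and divide by the positive $m$ and $n$, which gives transport equations for $\log m$ and $\log n$ with the common source $-Q_x$; adding them then yields the equation for $\log\rho$, and subtracting them yields the equation for $r$.
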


\begin{proof}
Expanding the conservation laws in \eqref{eq:2mch}, we obtain
\[
 m_t+Qm_x+Q_xm=0,
 \qquad
 n_t+Qn_x+Q_xn=0.
\]
Since $m,n>0$, division by $m$ and $n$, respectively, yields
\[
 (\log m)_t+Q(\log m)_x+Q_x=0,
 \qquad
 (\log n)_t+Q(\log n)_x+Q_x=0.
\]
Adding these identities and using
$\log\rho=\frac12(\log m+\log n)$, we obtain
 $(\log\rho)_t+Q(\log\rho)_x+Q_x=0$,
which is equivalent to the first equation in
\eqref{ic:rho-r-conservation}, while subtracting the two identities and using
$r=\frac14(\log m-\log n)$ yields the second equation.
\end{proof}

The RH formulation is built from solutions of the Lax pair with controlled
analytic behavior in the spectral parameter. Since $\mathsf U$ and
$\mathsf V$ are traceless, the determinant of any matrix solution of
\eqref{eq:lax} is independent of $x$ and $t$. The points $\lambda=0$ and
$\lambda=\infty$ require separate treatment: $\mathsf U$ is singular at
infinity, whereas
\[
 \mathsf U(x,t,0)=-\frac12\sigma_3
\]
is independent of the solution. This structure motivates the use of the
expansion at $\lambda=0$ in
the recovery formulas for the 2-mCH equation (cf. \cites{BS06,KST22}).

\subsection{Gauge transformations}

To control the large-$\lambda$ behavior of the eigenfunctions, we use a
two-stage gauge procedure. First, a common diagonal gauge balances the
momenta $m$ and $n$, replacing them by their geometric mean
$\rho=\sqrt{mn}$ in the leading spatial coefficient. The second stage
uses two different gauges: $H_1$ is adapted to the background $A_1$ at
$x=-\infty$, while $H_2$ is adapted to the background $A_2$ at
$x=+\infty$. Each gauge diagonalizes the corresponding limiting spatial
coefficient. After the resulting diagonal phase is extracted, the
transformed coefficients remain bounded as $\lambda\to\infty$.

We first set
\begin{equation}\label{ic:G-def}
 G:=\operatorname{diag}( \nu ^{-1/4}, \nu ^{1/4})
   =\eul^{-r\sigma_3},
 \qquad
 \Psi^{\mathrm b}:=G\Psi.
\end{equation}
Direct substitution into \eqref{eq:lax} gives the following transformed
Lax pair.

\begin{proposition}
\label{prop:balanced-lax-pair}
The function $\Psi^{\mathrm b}$ satisfies
\begin{subequations}\label{ic:balanced-lax}
    \begin{equation}\label{ic:balanced-lax-pair}
 \Psi^{\mathrm b}_x=U^{\mathrm b}\Psi^{\mathrm b},
 \qquad
 \Psi^{\mathrm b}_t=V^{\mathrm b}\Psi^{\mathrm b},
\end{equation}
where
\begin{equation}\label{ic:balanced-U}
 U^{\mathrm b}
 =
 \frac12
 \begin{pmatrix}
  -1&\lambda\rho\\
  -\lambda\rho&1
 \end{pmatrix}
 -r_x\sigma_3
\end{equation}
and
\begin{equation}\label{ic:balanced-V}
 V^{\mathrm b}
 =
 \begin{pmatrix}
  \lambda^{-2}+\dfrac Q2+Qr_x
  &
  -\lambda^{-1}p \nu ^{-1/2}
  -\dfrac{\lambda}{2}Q\rho
  \\[2mm]
  \lambda^{-1}q \nu ^{1/2}
  +\dfrac{\lambda}{2}Q\rho
  &
  -\lambda^{-2}-\dfrac Q2-Qr_x
 \end{pmatrix},
\end{equation}
where $\rho$, $r$, and $ \nu $ are defined in \eqref{ic:def-rhoRr}, and $Q$ is defined in \eqref{ic:Q}.
\end{subequations}

\end{proposition}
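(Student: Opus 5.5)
The plan is a direct gauge computation: for $\Psi^{\mathrm b}=G\Psi$ with $G=\eul^{-r\sigma_3}$, the Lax pair \eqref{eq:lax} yields
\[
 \Psi^{\mathrm b}_x=(G_xG^{-1}+G\mathsf U G^{-1})\Psi^{\mathrm b},
 \qquad
 \Psi^{\mathrm b}_t=(G_tG^{-1}+G\mathsf V G^{-1})\Psi^{\mathrm b}.
\]
Because $G$ is diagonal, it commutes with $\sigma_3$. We therefore get $G_xG^{-1}=-r_x\sigma_3$ and $G_tG^{-1}=-r_t\sigma_3$. Conjugation $A\mapsto \eul^{-r\hat\sigma_3}A$ leaves diagonal entries unchanged, multiplies the $(1,2)$ entry by $\eul^{-2r}=\nu^{-1/2}$, and multiplies the $(2,1)$ entry by $\eul^{2r}=\nu^{1/2}$.

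For the $x$-part, the diagonal of $\mathsf U$ is $\mp\frac12$, which gives the $-\frac12\sigma_3$ term. The off-diagonal entries become $\frac12\lambda m\nu^{-1/2}$ and $-\frac12\lambda n\nu^{1/2}$. Since $m\nu^{-1/2}=m\sqrt{n/m}=\sqrt{mn}=\rho$ and likewise $n\nu^{1/2}=\rho$, both equal $\pm\frac12\lambda\rho$. Adding $-r_x\sigma_3$ gives \eqref{ic:balanced-U}.

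For the $t$-part, I use Lemma~\ref{lem:rho-r-conservation}. It gives $r_t=-Qr_x$, so $G_tG^{-1}=Qr_x\sigma_3$. This produces the extra $\pm Qr_x$ on the diagonal, added to $\pm(\lambda^{-2}+Q/2)$, where $Q=(u-u_x)(v+v_x)=pq$. The off-diagonal entries of $\mathsf V$ are $-\lambda^{-1}p-\frac\lambda2 Qm$ and $\lambda^{-1}q+\frac\lambda2 Qn$. After conjugation they become $-\lambda^{-1}p\nu^{-1/2}-\frac\lambda2 Q\rho$ and $\lambda^{-1}q\nu^{1/2}+\frac\lambda2 Q\rho$, by the same identities $m\nu^{-1/2}=n\nu^{1/2}=\rho$. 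This matches \eqref{ic:balanced-V}.

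The only nonroutine ingredient is the time derivative of the gauge, and it is handled by the transport equation $r_t+Qr_x=0$ from Lemma~\ref{lem:rho-r-conservation}. That lemma relies on $m,n>0$ (Proposition~\ref{prop:positivity}), which is also what makes $\nu^{\pm1/4}$ and $r$ well defined and differentiable under the standing regularity assumptions. I expect no real obstacle beyond tracking signs. The point most easily gotten wrong is the sign of $G_tG^{-1}$: $G=\eul^{-r\sigma_3}$ gives $-r_t\sigma_3=+Qr_x\sigma_3$, which produces the $+Qr_x$ in the $(1,1)$ entry.
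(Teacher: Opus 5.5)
Your proposal is correct and follows the same route as the paper, which only says that the result follows by direct substitution of $\Psi^{\mathrm b}=G\Psi$ into \eqref{eq:lax}. Your computation supplies the details the paper leaves implicit: $G_xG^{-1}=-r_x\sigma_3$, $G_tG^{-1}=-r_t\sigma_3=Qr_x\sigma_3$ via Lemma~\ref{lem:rho-r-conservation}, and the identities $m\nu^{-1/2}=n\nu^{1/2}=\rho$ for the conjugated off-diagonal entries.
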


The second stage consists of two transformations, one for each asymptotic
background.

For $j=1,2$, introduce $H_j$, which diagonalizes the limiting
spatial coefficient associated with $A_j$:
\begin{equation}\label{ic:H-def}
 H_j(\lambda)
 :=
 \sqrt{\frac12}
 \sqrt{\frac1{\ii A_jl_j(\lambda)}-1}
 \begin{pmatrix}
  \dfrac{\lambda A_j}{1-\ii A_jl_j(\lambda)}&-1\\[3mm]
  -1&\dfrac{\lambda A_j}{1-\ii A_jl_j(\lambda)}
 \end{pmatrix},
\end{equation}
with inverse
\begin{equation}\label{ic:H-inverse}
 H_j^{-1}(\lambda)
 :=
 \sqrt{\frac12}
 \sqrt{\frac1{\ii A_jl_j(\lambda)}-1}
 \begin{pmatrix}
  \dfrac{\lambda A_j}{1-\ii A_jl_j(\lambda)}&1\\[3mm]
  1&\dfrac{\lambda A_j}{1-\ii A_jl_j(\lambda)}
 \end{pmatrix}.
\end{equation}
Here, the square root is taken on $\mathbb C\setminus[0,\infty)$ with
$\sqrt{-1}=\ii$. This normalization
implies $\det H_j=1$. Moreover,
\begin{equation}\label{ic:H-identities}
\begin{aligned}
 H_j(\lambda)\sigma_1H_j^{-1}(\lambda)=\sigma_1,\quad
 H_j(\lambda)\sigma_3H_j^{-1}(\lambda)
 =\frac{\lambda}{l_j(\lambda)}\sigma_2
   +\frac1{\ii A_jl_j(\lambda)}\sigma_3,\quad
 H_j(\lambda)\sigma_2H_j^{-1}(\lambda)
 =\frac1{\ii A_jl_j(\lambda)}\sigma_2
   -\frac{\lambda}{l_j(\lambda)}\sigma_3,
\end{aligned}
\end{equation} 
and
\begin{equation}\label{ic:H-diagonalization}
 H_j(\lambda)
 \begin{pmatrix}
  -1&\lambda A_j\\
  -\lambda A_j&1
 \end{pmatrix}
 H_j^{-1}(\lambda)
 =
 -\ii A_jl_j(\lambda)\sigma_3.
\end{equation}

\begin{remark}\label{rem:ic-H-at-zero}
The denominator $1-\ii A_jl_j(\lambda)$ in
\eqref{ic:H-inverse} vanishes at $\lambda=0$ when $\lambda$ approaches
zero from $\mathbb C^-$. Nevertheless, $H_j^{-1}$ remains finite. Indeed,
\[
 l_j(\lambda)=
 \begin{cases}
  \displaystyle
  \frac{\ii}{A_j}-\frac{\ii A_j}{2}\lambda^2
  +\ord(\lambda^4),
  &\lambda\in\mathbb C^+,\\[2mm]
  \displaystyle
  -\frac{\ii}{A_j}+\frac{\ii A_j}{2}\lambda^2
  +\ord(\lambda^4),
  &\lambda\in\mathbb C^-,
 \end{cases}
\]
and therefore
\[
 1-\ii A_jl_j(\lambda)=
 \begin{cases}
  \displaystyle
  2-\frac{A_j^2}{2}\lambda^2+\ord(\lambda^4),
  &\lambda\in\mathbb C^+,\\[2mm]
  \displaystyle
  \frac{A_j^2}{2}\lambda^2+\ord(\lambda^4),
  &\lambda\in\mathbb C^-.
 \end{cases}
\]
At the same time,
\[
 \sqrt{\frac1{\ii A_jl_j(\lambda)}-1}
 =
 \begin{cases}
  \sqrt{2}\,\ii+\ord(\lambda^2),
  &\lambda\in\mathbb C^+,\\[1mm]
  \displaystyle
  -\frac{A_j}{\sqrt{2}}\lambda+\ord(\lambda^3),
  &\lambda\in\mathbb C^-.
 \end{cases}
\]
Thus, in $\mathbb C^-$, the zero of the scalar factor cancels the pole
arising from $1-\ii A_jl_j(\lambda)$. Substitution into
\eqref{ic:H-inverse} yields
\begin{equation}\label{H_at_0}
 H_j^{-1}(\lambda)=
 \begin{cases}
  \displaystyle
  \ii\sigma_1+\frac{\ii A_j}{2}\lambda I
  +\ord(\lambda^2),
  &\lambda\to0,\quad\lambda\in\mathbb C^+,\\[3mm]
  \displaystyle
  -I-\frac{A_j}{2}\lambda\sigma_1
  +\ord(\lambda^2),
  &\lambda\to0,\quad\lambda\in\mathbb C^-.
 \end{cases}
\end{equation}
Hence $H_j^{-1}$ admits finite boundary values at $\lambda=0$ from both
half-planes, and these boundary values are distinct.
\end{remark}

To extract the resulting diagonal term, define
\begin{equation}\label{ic:f-def}
 f_j(x,t,\lambda)
 :=
 \ii A_jl_j(\lambda)
 \left[
  \frac1{2A_j}
  \int_{(-1)^j\infty}^{x}
  \bigl(\rho(\xi,t)-A_j\bigr)\dd\xi
  +\frac x2
  -t\left(
    \frac1{\lambda^2}+\frac{A_j^2}{2}
   \right)
 \right].
\end{equation}
Then Lemma \ref{lem:rho-r-conservation} implies
\begin{equation}\label{ic:f-derivatives}
 f_{j,x}(x,t,\lambda)=\frac{\ii l_j(\lambda)\rho(x,t)}{2},
 \qquad
 f_{j,t}(x,t,\lambda)
 =-\ii l_j(\lambda)
 \left(
  \frac{Q(x,t)\rho(x,t)}{2}+\frac{A_j}{\lambda^2}
 \right).
\end{equation}

\begin{remark}
The limits in \eqref{ic:rhoR-limits} imply that
$m,n\to A_j$ as $x\to(-1)^j\infty$. Since
\[
 p(x,t)=\int_{-\infty}^{x}\eul^{y-x}m(y,t)\,\dd y,
 \qquad
 q(x,t)=\int_x^\infty\eul^{x-y}n(y,t)\,\dd y,
\]
it follows that
\[
 p,q \to  A_j,\qquad
 Q \to  A_j^2,\qquad
 Q\rho \to  A_j^3,
 \qquad x\to(-1)^j\infty.
\]
\end{remark}

Combining \eqref{ic:H-diagonalization}, \eqref{ic:H-identities}, and \eqref{ic:f-derivatives}, we obtain the
following transformed Lax pairs.

\begin{proposition}
\label{prop:inner-cut-transformed-lax}
Let
\begin{equation}\label{ic:hatPsi-def}
 \widehat\Psi_j(x,t,\lambda):=H_j(\lambda)\Psi^{\mathrm b}(x,t,\lambda).
\end{equation}
Then
\begin{subequations}\label{ic:transformed-lax}
\begin{align}
 \widehat\Psi_{jx}
 +f_{j,x}\sigma_3\widehat\Psi_j
 &=\widehat W_j\widehat\Psi_j,\\
 \widehat\Psi_{jt}
 +f_{j,t}\sigma_3\widehat\Psi_j
 &=\widehat V_j\widehat\Psi_j,
\end{align}
where
\begin{equation}\label{ic:W-def}
 \widehat W_j
 =
 \delta_j
 \left(
  \frac{\lambda}{l_j}\sigma_2
  +\frac1{\ii A_jl_j}\sigma_3
 \right),
\end{equation}
and
\begin{equation}\label{ic:hatV-def}
\begin{aligned}
 \widehat V_j
 &=
 -\frac1{A_jl_j}
 \left[
  \frac{
   p \nu ^{-1/2}+q \nu ^{1/2}-2A_j
  }{2\lambda}
  +A_j\lambda Q\delta_j
 \right]\sigma_2
 +\frac{
  q \nu ^{1/2}-p \nu ^{-1/2}
 }{2\lambda}\sigma_1
 \\
 &\quad
 -\frac1{\ii A_jl_j}
 \left[
  \frac{A_j}{2}
  \bigl(
   p \nu ^{-1/2}+q \nu ^{1/2}-2A_j
  \bigr)
  +Q\delta_j
 \right]\sigma_3.
\end{aligned}
\end{equation}
\end{subequations}
In particular,
\[
 \widehat W_j(x,t,\lambda),~
 \widehat V_j(x,t,\lambda)
 \to 0,
 \qquad
 x\to(-1)^j\infty,
\]
and as $\lambda\to\infty$, both matrices remain bounded and their diagonal
parts are of order $\ord(\lambda^{-1})$.
\end{proposition}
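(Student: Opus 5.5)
The proof is a direct conjugation of the balanced Lax pair of Proposition~\ref{prop:balanced-lax-pair} by the constant-in-$(x,t)$ matrix $H_j(\lambda)$. Since $H_j$ does not depend on $x,t$, we have $\widehat\Psi_{jx}=H_jU^{\mathrm b}H_j^{-1}\widehat\Psi_j$ and $\widehat\Psi_{jt}=H_jV^{\mathrm b}H_j^{-1}\widehat\Psi_j$. It therefore suffices to compute $H_jU^{\mathrm b}H_j^{-1}+f_{j,x}\sigma_3$ and $H_jV^{\mathrm b}H_j^{-1}+f_{j,t}\sigma_3$. The only tools are the identities \eqref{ic:H-identities}, the diagonalization \eqref{ic:H-diagonalization} and the derivatives \eqref{ic:f-derivatives}.

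For the $x$-part, I split the first term of \eqref{ic:balanced-U} as
\[
\tfrac12\begin{pmatrix}-1&\lambda\rho\\-\lambda\rho&1\end{pmatrix}
=\tfrac{\rho}{2A_j}\begin{pmatrix}-1&\lambda A_j\\-\lambda A_j&1\end{pmatrix}+\tfrac{\rho-A_j}{2A_j}\sigma_3 .
\]
By \eqref{ic:H-diagonalization}, the first piece conjugates to $-\tfrac{\ii}{2}l_j\rho\,\sigma_3=-f_{j,x}\sigma_3$. The remaining term is $\bigl(\tfrac{\rho-A_j}{2A_j}-r_x\bigr)\sigma_3=\delta_j\sigma_3$, and by \eqref{ic:H-identities} it conjugates to $\delta_j\bigl(\tfrac{\lambda}{l_j}\sigma_2+\tfrac{1}{\ii A_jl_j}\sigma_3\bigr)$, which is \eqref{ic:W-def}.

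For the $t$-part, I write $V^{\mathrm b}$ in the Pauli basis. Its diagonal part is $(\lambda^{-2}+\tfrac Q2+Qr_x)\sigma_3$. Its off-diagonal part, with $a=p\nu^{-1/2}$ and $b=q\nu^{1/2}$, is
\[
\frac{b-a}{2\lambda}\sigma_1+\ii\Bigl(\frac{a+b}{2\lambda}+\frac{\lambda}{2}Q\rho\Bigr)\sigma_2 .
\]
The $\sigma_1$ term is invariant under conjugation. The $\sigma_2$ coefficient I rewrite by subtracting $\lambda^{-1}A_j$, and I use the $\sigma_3$ part to absorb the background pieces, so that the combination $\lambda Q\rho\,\sigma_2$ together with $Q\sigma_3$ reproduces $Q$ times the $U$-type matrix. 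Concretely, I use
\[
V^{\mathrm b}=Q\Bigl(U^{\mathrm b}\text{-shape with }\rho\Bigr)\cdot(-1)+\dots ,
\]
noting that $\tfrac Q2\sigma_3+\tfrac{\lambda}{2}Q\rho\,\ii\sigma_2$ equals $-Q$ times the first matrix in \eqref{ic:balanced-U}. The $\lambda^{-2}\sigma_3+\lambda^{-1}A_j\ii\sigma_2$ part equals $\lambda^{-2}$ times a multiple of the background matrix $\bigl(\begin{smallmatrix}-1&\lambda A_j\\-\lambda A_j&1\end{smallmatrix}\bigr)$ and so diagonalizes. After applying \eqref{ic:H-diagonalization} and \eqref{ic:H-identities}, these diagonal terms must cancel exactly against $f_{j,t}\sigma_3=-\ii l_j(\tfrac{Q\rho}{2}+\tfrac{A_j}{\lambda^2})\sigma_3$. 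What survives should be the term $Q\delta_j$ (from $-Q$ times the $\delta_j\sigma_3$ remainder, combined with $Qr_x$) and the terms involving $a+b-2A_j$. Conjugating those by \eqref{ic:H-identities} and collecting the $\sigma_2$ and $\sigma_3$ coefficients gives \eqref{ic:hatV-def}. This bookkeeping is the main obstacle. The signs, the factor $\ii$ between $\sigma_2$ and the off-diagonal entries, and the requirement that the $\lambda^{-2}$ and $Q\rho$ diagonal pieces cancel exactly against $f_{j,t}$ must all be checked carefully; I would verify the cancellation first, before collecting the remaining terms.

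Finally, the limiting statements follow from the formulas. As $x\to(-1)^j\infty$, the remark preceding the proposition gives $p,q\to A_j$ and $Q\to A_j^2$. Assumption \eqref{ic:rhoR-limits} gives $\nu\to1$, $\rho\to A_j$ and $r_x\to0$, hence $\delta_j\to0$ and $p\nu^{-1/2}+q\nu^{1/2}-2A_j\to0$, $q\nu^{1/2}-p\nu^{-1/2}\to0$. So every coefficient of $\widehat W_j$ and $\widehat V_j$ tends to zero. As $\lambda\to\infty$, $l_j\sim\lambda$. Therefore $\lambda/l_j$ and $\lambda\delta_j/l_j$ are bounded, and the $\lambda^{-1}$ terms decay. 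The $\sigma_3$ coefficients carry the factor $1/l_j=\ord(\lambda^{-1})$, which gives boundedness and the stated $\ord(\lambda^{-1})$ decay of the diagonal parts.
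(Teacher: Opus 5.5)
Your route is the paper's: conjugate the balanced Lax pair by the $(x,t)$-independent matrix $H_j$ and use \eqref{ic:H-diagonalization}, \eqref{ic:H-identities}, \eqref{ic:f-derivatives}. The $x$-part is exactly right. So are your arguments for the limits as $x\to(-1)^j\infty$ and for the large-$\lambda$ bounds.

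The $t$-part, however, would not close as written, because your Pauli decomposition of $V^{\mathrm b}$ has the wrong sign on $\sigma_2$. The $(1,2)$ entry of $c_1\sigma_1+c_2\sigma_2$ is $c_1-\ii c_2$, and it must equal $-a/\lambda-\frac{\lambda}{2}Q\rho$. Hence the $\sigma_2$ coefficient is $-\ii\bigl(\frac{a+b}{2\lambda}+\frac{\lambda}{2}Q\rho\bigr)$, not $+\ii(\cdots)$. With your sign, both identities you rely on fail:
\begin{itemize}
\item $\frac Q2\sigma_3+\frac{\lambda}{2}Q\rho\,\ii\sigma_2=\frac Q2\bigl(\begin{smallmatrix}1&\lambda\rho\\-\lambda\rho&-1\end{smallmatrix}\bigr)$ is not $-Q$ times the first matrix in \eqref{ic:balanced-U};
\item $\lambda^{-2}\sigma_3+\lambda^{-1}A_j\ii\sigma_2$ is not a multiple of $B_j:=\bigl(\begin{smallmatrix}-1&\lambda A_j\\-\lambda A_j&1\end{smallmatrix}\bigr)=-\sigma_3+\ii\lambda A_j\sigma_2$.
\end{itemize}
Both are built from matrices $-\sigma_3-\ii\lambda(\cdot)\sigma_2$, i.e. the background matrix with $\lambda\mapsto-\lambda$, which $H_j$ does not diagonalize. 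By \eqref{ic:H-identities}, the conjugate of $-\sigma_3-\ii\lambda A_j\sigma_2$ has $\sigma_2$-component $-2\lambda/l_j$. So the cancellation against $f_{j,t}\sigma_3$, which you postpone checking, would not occur.

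With the correct sign, the decomposition you are aiming for is
\[
 V^{\mathrm b}
 =-\Bigl(\frac1{\lambda^2}+\frac{Q\rho}{2A_j}\Bigr)B_j
 -Q\delta_j\sigma_3
 +\frac{b-a}{2\lambda}\sigma_1
 -\ii\,\frac{a+b-2A_j}{2\lambda}\sigma_2 .
\]
It is checked entrywise; on the diagonal, $-Q\delta_j=-\frac{Q\rho}{2A_j}+\frac Q2+Qr_x$.
\begin{itemize}
\item By \eqref{ic:H-diagonalization}, the first term conjugates to $\ii l_j\bigl(\frac{A_j}{\lambda^2}+\frac{Q\rho}{2}\bigr)\sigma_3=-f_{j,t}\sigma_3$, which cancels exactly.
\item By \eqref{ic:H-identities}, the remaining terms give $\sigma_1$-coefficient $\frac{b-a}{2\lambda}$ and $\sigma_2$-coefficient $-\frac{\lambda Q\delta_j}{l_j}-\frac{a+b-2A_j}{2\lambda A_jl_j}$.
\item They also give $\sigma_3$-coefficient $-\frac{Q\delta_j}{\ii A_jl_j}+\frac{\ii(a+b-2A_j)}{2l_j}$.
\end{itemize}
Using $\ii=-1/\ii$, these are exactly the coefficients in \eqref{ic:hatV-def}. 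With this correction and the deferred bookkeeping actually carried out, your proof is complete. It then coincides with the paper's, which simply cites the three identities.
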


\begin{remark}
\label{rem:scalar-reduction}
Under the reduction $u=v$, one has
\[
 m=n,\qquad
  \nu =1,\qquad
 r=0,\qquad
 \rho=m,
\]
and
\[
 \delta_j=\frac{m-A_j}{2A_j}.
\]
Substituting these identities into \eqref{ic:W-def} and
\eqref{ic:hatV-def} recovers the transformed coefficients for the scalar
mCH equation obtained in \cite{KST22}.
\end{remark}

\subsection{Eigenfunctions and scattering functions}

The Lax pair \eqref{ic:transformed-lax} can be used to construct, via the
corresponding integral equations, particular solutions whose
behavior for large values of the spectral parameter $\lambda$
is nicely controlled. Namely, we introduce
\begin{equation}\label{ic:tildePsi-def}
 \widetilde\Psi_j
 :=\widehat\Psi_j\eul^{R_j},\qquad R_j(x,t,\lambda):=f_j(x,t,\lambda)\sigma_3.
\end{equation}
In terms of $\widetilde\Psi_j$, the Lax pair \eqref{ic:transformed-lax} takes the form
\begin{equation}\label{ic:commutator-system}
\begin{cases}
 \widetilde\Psi_{jx}+[R_{jx},\widetilde\Psi_j]
   =\widehat W_j\widetilde\Psi_j,\\
 \widetilde\Psi_{jt}+[R_{jt},\widetilde\Psi_j]
   =\widehat V_j\widetilde\Psi_j.
\end{cases}
\end{equation}
We then define the Jost solutions
$\widetilde\Psi_j(x,t,\lambda)$,
$j=1,2$, of \eqref{ic:commutator-system} through the Volterra integral equations
\begin{equation}\label{ic:volterra-tildePsi}
\begin{aligned}
 \widetilde\Psi_j(x,t,\lambda)=I
 &+\int_{(-1)^j\infty}^{x}
 \eul^{\frac{\ii l_j(\lambda)}2      \int_x^\xi\rho(\tau,t)\dd\tau\,\sigma_3}
 \widehat W_j(\xi,t,\lambda)\widetilde\Psi_j(\xi,t,\lambda)
 \eul^{-\frac{\ii l_j(\lambda)}2      \int_x^\xi\rho(\tau,t)\dd\tau\,\sigma_3}
 \dd\xi.
\end{aligned}
\end{equation}
Since $\widehat V_j(x,t,\lambda)$ as $x\to(-1)^j\infty$, it can be shown that in view of the compatibility of the two equations in~\eqref{ic:transformed-lax}, the solutions of~\eqref{ic:volterra-tildePsi} also satisfy the $t$-equation in~\eqref{ic:commutator-system}. 

An equivalent Volterra representation is obtained by returning to the
balanced gauge. Define
\begin{equation}\label{ic:Psi-b-j-def}
 \Psi_j^{\mathrm b}
 :=
 H_j^{-1}\widetilde\Psi_j\eul^{-R_j},
 \qquad
 \Psi_{0,j}^{\mathrm b}
 :=
 H_j^{-1}\eul^{-R_j}.
\end{equation}
Since
$ \widehat W_j
 =
 \delta_jH_j\sigma_3H_j^{-1}$,
equation \eqref{ic:volterra-tildePsi} is equivalent to
\begin{equation}\label{ic:volterra-Psi-b}
\begin{aligned}
 \Psi_j^{\mathrm b}(x,t,\lambda)
 &=
 \Psi_{0,j}^{\mathrm b}(x,t,\lambda)
 +\int_{(-1)^j\infty}^{x}
 P_j(x,\xi,t,\lambda)
 \delta_j(\xi,t)\sigma_3
 \Psi_j^{\mathrm b}(\xi,t,\lambda)
 \dd\xi,
\end{aligned}
\end{equation}
where
\begin{equation}\label{ic:Pj-def}
 P_j(x,y,t,\lambda)
 :=
 \Psi_{0,j}^{\mathrm b}(x,t,\lambda)
 \bigl(\Psi_{0,j}^{\mathrm b}(y,t,\lambda)\bigr)^{-1}.
\end{equation}
The corresponding Jost solutions of the original Lax pair are recovered
from
\begin{equation}\label{ic:Psi-j-def}
 \Psi_j=G^{-1}\Psi_j^{\mathrm b}.
\end{equation}

\begin{lemma}
\label{lem:entire-background-propagator}
For fixed $x,y,t$, the matrix
$P_j(x,y,t,\lambda)$ is an entire function of $\lambda$.
\end{lemma}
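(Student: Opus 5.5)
We will compute $P_j$ explicitly and check that all branch-type quantities (the square root $l_j$ and the factors inside $H_j$) enter only through even functions of $l_j$, so that the cut along $\Gamma_j$ disappears.

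By \eqref{ic:Psi-b-j-def} and \eqref{ic:Pj-def},
\[
 P_j(x,y,t,\lambda)=H_j^{-1}(\lambda)\,\eul^{-(f_j(x,t,\lambda)-f_j(y,t,\lambda))\sigma_3}\,H_j(\lambda),
\]
and by \eqref{ic:f-def} the difference of phases is $\ii l_j(\lambda)\phi$ with
\[
 \phi:=\tfrac12\int_y^x\rho(\xi,t)\,\dd\xi,
\]
which does not depend on $\lambda$. Indeed, the $t$-dependent term $t(\lambda^{-2}+A_j^2/2)$ cancels in the difference, and so does the base point of the integral. Hence, with $\theta:=l_j(\lambda)\phi$,
\[
 P_j=\cos\theta\, I-\ii\sin\theta\, H_j^{-1}\sigma_3H_j .
\]

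The key step is to identify $H_j^{-1}\sigma_3H_j$ without the branch data. We avoid inverting the identities \eqref{ic:H-identities} and use the diagonalization \eqref{ic:H-diagonalization} directly. Writing
\[
 M_j(\lambda):=\begin{pmatrix}-1&\lambda A_j\\-\lambda A_j&1\end{pmatrix},
\]
that identity says $H_jM_jH_j^{-1}=-\ii A_jl_j\sigma_3$. Therefore
\[
 H_j^{-1}\sigma_3H_j=\frac{\ii}{A_jl_j}\,M_j(\lambda).
\]
Substituting,
\[
 P_j=\cos\bigl(l_j\phi\bigr)I+\frac{\sin\bigl(l_j\phi\bigr)}{A_jl_j}\,M_j(\lambda).
\]
This is the expected formula $\exp\bigl(\tfrac{\phi}{A_j}M_j\bigr)$. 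As a consistency check, $M_j^2=(1-\lambda^2A_j^2)I=-A_j^2l_j^2I$, and the matrix exponential $\exp\bigl(\tfrac12\int_y^x\rho\,\dd\xi\,A_j^{-1}M_j\bigr)$ is visibly entire in $\lambda$. One can then verify the identity directly.

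Entireness now follows. The functions $\cos(l_j\phi)$ and $\sin(l_j\phi)/l_j$ are even power series in $l_j$, hence entire functions of $l_j^2=\lambda^2-A_j^{-2}$. The value at $l_j=0$, i.e.\ at $\lambda=\pm1/A_j$, is removable, with limit $\phi$. The matrix $M_j$ is polynomial in $\lambda$. Thus $P_j$ is entire.

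The main obstacle is the behaviour at $\lambda=0$ and on the cut $\Gamma_j$, where $H_j^{\pm1}$ have different boundary values (Remark~\ref{rem:ic-H-at-zero}). Since the final expression does not involve $H_j$ at all, this issue disappears. The only point to justify is that \eqref{ic:H-diagonalization} holds on $\mathbb C\setminus\Gamma_j$, so the identity for $P_j$ holds there. Both sides are continuous up to the cut, and the right-hand side is entire, so the left-hand side extends to an entire function, which settles the claim.
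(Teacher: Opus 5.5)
Your proof is correct and follows the paper's route: you arrive at exactly the explicit formula \eqref{ic:background-propagator-formula}, and deriving $H_j^{-1}\sigma_3H_j=\frac{\ii}{A_jl_j}M_j$ from \eqref{ic:H-diagonalization} is a tidy way of carrying out the paper's ``direct multiplication''. You then conclude, as the paper does, from the fact that $\cos\theta_j$ and $\sin\theta_j/l_j$ are power series in $l_j^2=\lambda^2-A_j^{-2}$; your remark that $P_j=\exp\bigl(\tfrac{\phi}{A_j}M_j\bigr)$ is a correct and slightly quicker alternative justification of the same conclusion.
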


\begin{proof}
Set
\[
 L_\rho(x,y,t):=\int_y^x\rho(\xi,t)\dd\xi,
 \qquad
 \theta_j:=\frac{l_j(\lambda)}2L_\rho(x,y,t).
\]
Since
$ f_j(x,t,\lambda)-f_j(y,t,\lambda)=\ii\theta_j,$
direct multiplication yields
\begin{equation}\label{ic:background-propagator-formula}
 P_j(x,y,t,\lambda)
 =
 \begin{pmatrix}
  \displaystyle
  \cos\theta_j-\frac{\sin\theta_j}{A_jl_j}
  &
  \displaystyle
  \frac{\lambda\sin\theta_j}{l_j}\\[3mm]
  \displaystyle
  -\frac{\lambda\sin\theta_j}{l_j}
  &
  \displaystyle
  \cos\theta_j+\frac{\sin\theta_j}{A_jl_j}
 \end{pmatrix}.
\end{equation}
The expansions
\begin{align*}
 \cos\theta_j
 &=
 \sum_{k=0}^{\infty}
 \frac{(-1)^k}{(2k)!}
 \left(\frac{L_\rho}{2}\right)^{2k}
 \left(\lambda^2-A_j^{-2}\right)^k,\\
 \frac{\sin\theta_j}{l_j}
 &=
 \sum_{k=0}^{\infty}
 \frac{(-1)^k}{(2k+1)!}
 \left(\frac{L_\rho}{2}\right)^{2k+1}
 \left(\lambda^2-A_j^{-2}\right)^k
\end{align*}
show that every entry of $P_j$ is entire in $\lambda$.
\end{proof}

Standard Volterra arguments applied to
\eqref{ic:volterra-tildePsi}, using the decay of the exponential kernels
in the corresponding half-planes, yield the following properties.

\begin{proposition}
\label{prop:inner-cut-column-analyticity}  For each fixed $(x,t)$:
\begin{enumerate}[\textbullet]
 \item $\widetilde\Psi_1^{(1)}$ and
       $\widetilde\Psi_2^{(2)}$ are analytic in $\mathbb C^+$;
 \item $\widetilde\Psi_1^{(2)}$ and
       $\widetilde\Psi_2^{(1)}$ are analytic in $\mathbb C^-$;
 \item every column admits a continuous boundary value on
       $\dot\Sigma_j$ from its half-plane of analyticity;
       $\widetilde\Psi_1^{(1)}$ and
       $\widetilde\Psi_2^{(2)}$ extend to the upper sides of
       $\dot\Gamma_1$ and $\dot\Gamma_2$, respectively, whereas
       $\widetilde\Psi_1^{(2)}$ and
       $\widetilde\Psi_2^{(1)}$ extend to the corresponding lower
       sides;
 \item $\det\widetilde\Psi_j\equiv1$ wherever both columns are defined.

\item 
\begin{subequations}\label{ic:tildePsi-large-lambda}
\begin{align}
 \left(\begin{smallmatrix}
  \widetilde\Psi_1^{(1)}&\widetilde\Psi_2^{(2)}
 \end{smallmatrix}\right)&=I+\ord(\lambda^{-1}),
 &&\lambda\to\infty,\quad\lambda\in\overline{\mathbb C^+},\\
 \left(\begin{smallmatrix}
  \widetilde\Psi_2^{(1)}&\widetilde\Psi_1^{(2)}
 \end{smallmatrix}\right)&=I+\ord(\lambda^{-1}),
 &&\lambda\to\infty,\quad\lambda\in\overline{\mathbb C^-}.
\end{align}
\end{subequations}
\end{enumerate}
\end{proposition}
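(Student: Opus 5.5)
The plan is to work column by column with the Volterra equation \eqref{ic:volterra-tildePsi}. Conjugating by the diagonal exponential leaves the diagonal entries of $\widehat W_j\widetilde\Psi_j$ unchanged. It multiplies the $(1,2)$ entry by $\eul^{\ii l_j(\lambda)\int_x^\xi\rho\,\dd\tau}$ and the $(2,1)$ entry by its inverse. The first column of $\widetilde\Psi_1$ (integration from $-\infty$, so $\xi<x$) therefore carries only the factor $\eul^{-\ii l_1\int_\xi^x\rho}$. Since $\rho\ge c>0$ by \eqref{ic:rhoR-limits}, this factor is bounded by $1$ exactly when $\Im l_1(\lambda)\ge0$. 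With the branch $l_1\sim\lambda$ cut along $\Gamma_1$, the condition $\Im l_1>0$ is equivalent to $\lambda\in\mathbb C^+$. On $\dot\Sigma_1$ and on the upper side of $\dot\Gamma_1$ the boundary values satisfy $\Im l_{1,+}\ge0$; on $\dot\Gamma_1$ the value $l_{1,+}$ is purely imaginary with positive imaginary part. The other three columns are handled the same way, with the sign of $\Im l_j$ reversed for $\widetilde\Psi_1^{(2)}$ and $\widetilde\Psi_2^{(1)}$, and the direction of integration reversed for $j=2$.

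For each column, the Neumann series is then bounded by $\exp\bigl(\int_{\mathbb R_j\text{-side}}|\widehat W_j(\xi,t,\lambda)|\,\dd\xi\bigr)$. For $\lambda$ away from the branch points $\pm1/A_j$ we have $|\widehat W_j|\le C(\lambda)|\delta_j|$, and $\delta_j\in L^1$ by \eqref{ic:standing-direct-assumptions}. (Here $\delta_j/\rho\in L^1$ and $\rho$ is bounded; on the opposite half-line one uses $\rho-A_j\to A_{3-j}-A_j$, and the integral is only over $(-\infty,x]$ or $[x,\infty)$ with $x$ fixed.) Each term of the series is analytic in the open half-plane, and the series converges locally uniformly, so the sum is analytic there. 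Dominated convergence gives continuity up to the stated boundary pieces. At $\lambda=0$ we have $l_j(0_\pm)=\pm\ii/A_j$, so $\widehat W_j$ is finite and there is no problem. The determinant claim follows because $\widehat W_j$ and $[R_{jx},\cdot]$ are traceless: $\det\widetilde\Psi_j$ is independent of $x$, and it tends to $\det I=1$ as $x\to(-1)^j\infty$.

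The asymptotics \eqref{ic:tildePsi-large-lambda} are the main obstacle. The matrix $\widehat W_j$ is only bounded as $\lambda\to\infty$: its off-diagonal part $\delta_j\frac{\lambda}{l_j}\sigma_2$ does not decay, while the diagonal part is $\ord(\lambda^{-1})$. The plan is a single integration by parts in $\xi$. In the off-diagonal entry of the first iterate, write $\eul^{\ii l_j\int_x^\xi\rho}=\frac{1}{\ii l_j\rho(\xi)}\partial_\xi \eul^{\ii l_j\int_x^\xi\rho}$. Integrating by parts produces a boundary term of size $\ord(\delta_j/(\rho l_j))$ and an integral involving $\partial_\xi(\delta_j/\rho)$. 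The $W^{1,1}$ assumption on $\delta_j/\rho$ in \eqref{ic:standing-direct-assumptions} is exactly what makes the latter $\ord(\lambda^{-1})$. Higher iterates are handled the same way: each off-diagonal kernel step gains a factor $l_j^{-1}$ after integration by parts, combined with the uniform Neumann bound. Alternatively, one can expand $\widetilde\Psi_j=I+\lambda^{-1}\Psi^{(1)}+\dots$ and verify the remainder through the integral equation. This shows the off-diagonal entries are $\ord(\lambda^{-1})$. For the diagonal entries, $\int\widehat W_{j,\mathrm{diag}}$ is $\ord(\lambda^{-1})$ directly, and the products of two off-diagonal terms are $\ord(\lambda^{-1})$ by the previous step. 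Together these give $I+\ord(\lambda^{-1})$ uniformly in the closed half-planes.
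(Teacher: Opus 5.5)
Your proposal is correct and follows the same route as the paper. The paper only invokes standard Volterra/Neumann-series arguments for \eqref{ic:volterra-tildePsi} together with the half-plane decay of the exponential kernels, and your integration by parts against $\delta_j/\rho\in W^{1,1}$ is the natural way to supply the large-$\lambda$ claim that the paper leaves implicit.

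Two small remarks. First, for $\xi<x$ the factor in the first column of $\widetilde\Psi_1$ is $\eul^{+\ii l_1\int_\xi^x\rho\,\dd\tau}$; your conclusion $\Im l_1\geq0$ is right, but the displayed sign is not. Second, the integration by parts is cleanest if you do it once, in the integral equation for the off-diagonal component itself, rather than iterate by iterate. This works because the diagonal component has $\xi$-derivative bounded by $C|\delta_j(\xi)|$ uniformly for large $\lambda$.
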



Since the Jost solutions $\Psi_1$ and $\Psi_2$ satisfy the same Lax pair
\eqref{eq:lax}, they are related by the
scattering relation
\begin{equation}\label{ic:scattering-relation}
 \Psi_1(x,t,\lambda)
 =
 \Psi_2(x,t,\lambda)c(\lambda),
 \qquad \lambda\in\dot\Sigma_1.
\end{equation}
Here the coefficient matrix $c(\lambda)$ is called the scattering matrix. For
the normalized eigenfunctions $\widetilde \Psi_j$, $j=1,2$, the
scattering relation \eqref{ic:scattering-relation} takes the
form
\begin{equation}\label{ic:normalized-scattering-relation}
 H_1^{-1}\widetilde\Psi_1
 =
 H_2^{-1}\widetilde\Psi_2\,\widetilde c(x,t,\lambda),
 \qquad \lambda\in\dot\Sigma_1,
\end{equation}
where
\[
 \widetilde c(x,t,\lambda)
 :=
 \eul^{-R_2}c(\lambda)\eul^{R_1}.
\]

\begin{proposition}\label{prop:scattering-matrix}
The scattering matrix $c(\lambda)$ is independent of $x$ and $t$ and satisfies
\[
 \det c(\lambda)=1.
\]
For $\lambda\in\dot\Sigma_1$, its entries admit the determinant representations
\begin{subequations}\label{ic:scattering-determinants}
\begin{align}
 c_{11}
 &=\det\bigl(\Psi_1^{(1)},\Psi_2^{(2)}\bigr),
 &
 c_{12}
 &=\det\bigl(\Psi_1^{(2)},\Psi_2^{(2)}\bigr),\\
 c_{21}
 &=\det\bigl(\Psi_2^{(1)},\Psi_1^{(1)}\bigr),
 &
 c_{22}
 &=\det\bigl(\Psi_2^{(1)},\Psi_1^{(2)}\bigr).
\end{align}
\end{subequations}

In particular, $c_{11}$ extends analytically to $\mathbb C^+$ and
continuously to the upper side of $\dot\Gamma_1\setminus\{\pm A_2^{-1}\}$, whereas $c_{22}$ extends analytically to $\mathbb C^-$ and continuously
to the lower side of $\dot\Gamma_1\setminus\{\pm A_2^{-1}\}$. Moreover, $c_{21}$ and $c_{12}$
admit continuous extensions to the upper and lower sides of
$\dot\Sigma_0$, respectively. On $\dot\Sigma_0$, the corresponding
column relations are
\begin{subequations}\label{ic:continued-column-scattering}
\begin{align}
 \Psi_{1,+}^{(1)}
 &=
 c_{11,+}\Psi_{2,-}^{(1)}
 +c_{21,+}\Psi_{2,+}^{(2)},
 &&\lambda\in\dot\Sigma_0,\\
 \Psi_{1,-}^{(2)}
 &=
 c_{12,-}\Psi_{2,-}^{(1)}
 +c_{22,-}\Psi_{2,+}^{(2)},
 &&\lambda\in\dot\Sigma_0.
\end{align}
\end{subequations}
Here each column of the Jost solutions is evaluated from its half-plane of analyticity:
$\Psi_2^{(1)}$ from $\mathbb C^-$ and $\Psi_2^{(2)}$ from
$\mathbb C^+$. The subscripts on the scattering coefficients indicate
their continuations to the corresponding sides of $\Gamma_1$.
\end{proposition}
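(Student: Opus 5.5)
Independence of $x$ and $t$ comes first. Both $\Psi_1=G^{-1}H_1^{-1}\widetilde\Psi_1\eul^{-R_1}$ and $\Psi_2$ solve the same Lax pair \eqref{eq:lax}; for the $t$-equation this uses the remark after \eqref{ic:volterra-tildePsi}. For $\lambda\in\dot\Sigma_1$ all columns are defined, since $\dot\Sigma_1\subset\dot\Sigma_2$. Because $\mathsf U,\mathsf V$ are traceless, $\det\Psi_j$ is constant in $(x,t)$. Evaluating at $x\to(-1)^j\infty$ with $\widetilde\Psi_j\to I$, and using $\det G=\det H_j=1$ and $\det\eul^{-R_j}=1$, gives $\det\Psi_j\equiv1$. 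Hence $\Psi_2$ is invertible and $c:=\Psi_2^{-1}\Psi_1$ is well defined. Differentiating, $c_x=-\Psi_2^{-1}\mathsf U\Psi_1+\Psi_2^{-1}\mathsf U\Psi_1=0$, and likewise $c_t=0$. Then $\det c=\det\Psi_1/\det\Psi_2=1$.

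The determinant formulas follow from Cramer's rule applied to $\Psi_1^{(k)}=c_{1k}\Psi_2^{(1)}+c_{2k}\Psi_2^{(2)}$, using $\det(\Psi_2^{(1)},\Psi_2^{(2)})=1$. Taking the determinant with $\Psi_2^{(2)}$ on the right gives $c_{11}=\det(\Psi_1^{(1)},\Psi_2^{(2)})$ and $c_{12}=\det(\Psi_1^{(2)},\Psi_2^{(2)})$. Taking it with $\Psi_2^{(1)}$ on the left gives $c_{21}=\det(\Psi_2^{(1)},\Psi_1^{(1)})$ and $c_{22}=\det(\Psi_2^{(1)},\Psi_1^{(2)})$.

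For the analytic extensions I use Proposition~\ref{prop:inner-cut-column-analyticity}. The columns $\Psi_1^{(1)}$ and $\Psi_2^{(2)}$ are $G^{-1}H_j^{-1}$ times columns of $\widetilde\Psi_j$ analytic in $\mathbb C^+$, times the scalar exponentials $\eul^{\mp f_j}$. The main point is that the determinant is analytic across the inner cuts, even though $H_j^{-1}$ and $l_j$ individually have branch cuts. Since $G$ has determinant $1$, $c_{11}=\det(H_1^{-1}\widetilde\Psi_1^{(1)}\eul^{-f_1},H_2^{-1}\widetilde\Psi_2^{(2)}\eul^{f_2})$ is analytic in $\mathbb C^+$ off the branch points. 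The $H_j^{-1}$ are analytic in $\mathbb C\setminus\Gamma_j$ and have finite boundary values at $\lambda=0$ by Remark~\ref{rem:ic-H-at-zero}. Continuity up to the upper side of $\dot\Gamma_1$ follows from the boundary extensions of the columns: $\widetilde\Psi_1^{(1)}$ extends to the upper side of $\dot\Gamma_1$, and $\widetilde\Psi_2^{(2)}$ to the upper side of $\dot\Gamma_2$ and to $\dot\Sigma_2\supset\dot\Sigma_0$. The points $\pm A_2^{-1}$ are excluded because $H_2^{-1}$ and $l_2$ are singular there. The argument for $c_{22}$ in $\mathbb C^-$ is symmetric.

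For $c_{21}$ and $c_{12}$ there is no half-plane extension, since they combine columns from opposite half-planes. On $\dot\Sigma_0$, however, $\Psi_1^{(1)}$ has an upper boundary value, $\Psi_2^{(1)}$ is evaluated from $\mathbb C^-$ (which is continuous there because $\dot\Sigma_0\subset\dot\Sigma_2$), and $\Psi_2^{(2)}$ from $\mathbb C^+$. This defines $c_{21,+}=\det(\Psi_{2,-}^{(1)},\Psi_{1,+}^{(1)})$ and similarly $c_{12,-}$.

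The relations \eqref{ic:continued-column-scattering} then follow by expanding $\Psi_{1,+}^{(1)}$ in the basis $\{\Psi_{2,-}^{(1)},\Psi_{2,+}^{(2)}\}$. This set is a basis because both vectors solve the $x$-equation at the same real $\lambda$, so their determinant is $x$-independent, and it equals $1$ by the same limit argument at $+\infty$. Cramer's rule identifies the coefficients with $c_{11,+}$ and $c_{21,+}$. The expansion of $\Psi_{1,-}^{(2)}$ is analogous.

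The main obstacle I expect is checking continuity of these boundary values near $\lambda=0$ and near the branch points $\pm A_1^{-1}$. This should reduce to the finiteness in \eqref{H_at_0} and to the uniform Volterra bounds supplied by the weighted assumptions \eqref{ic:standing-direct-assumptions}.
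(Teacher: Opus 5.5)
Your proposal is correct and follows the paper's proof essentially step by step: you differentiate $c=\Psi_2^{-1}\Psi_1$ using the Lax pair, obtain $\det c=1$ from $\det\Psi_j\equiv1$, apply Cramer's rule, read off the extensions from Proposition~\ref{prop:inner-cut-column-analyticity} via \eqref{ic:Psi-tildePsi-relation}, and on $\dot\Sigma_0$ expand in the fundamental system $\Psi_{2,-}^{(1)},\Psi_{2,+}^{(2)}$. On the obstacle you flag: nothing needs checking at $\pm A_1^{-1}$, since all the intervals are open, and at $\lambda=0$ the factors $\eul^{\mp f_j}$ are unbounded as $\lambda\to0$ in $\mathbb C^+$ when $t>0$, but they enter only through $c_{11}=\eul^{f_2-f_1}\det\bigl(H_1^{-1}\widetilde\Psi_1^{(1)},H_2^{-1}\widetilde\Psi_2^{(2)}\bigr)$, where the $t/\lambda^2$ terms cancel because $\ii A_jl_j(\lambda)=-1+\ord(\lambda^2)$ in $\mathbb C^+$ for both $j$ (alternatively, use $t$-independence and set $t=0$), a point the paper also leaves implicit.
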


\begin{proof}
First notice that, for $\lambda\in\dot\Sigma_1$,
$ c(\lambda)=\Psi_2^{-1}(x,t,\lambda)\Psi_1(x,t,\lambda)$.
Differentiating this identity and using the Lax pair \eqref{eq:lax}, we conclude that
 $c$ is independent of $x$ and $t$. Since the coefficient
matrices in \eqref{eq:lax} are traceless and the Jost solutions have
unit-determinant normalizations, we have $\det \Psi_j=1$, and thus $\det c=1$.
The determinant representations
\eqref{ic:scattering-determinants} then follow from Cramer's rule.

The analyticity and boundary properties of the scattering coefficients
follow from Proposition~\ref{prop:inner-cut-column-analyticity} and
\eqref{ic:Psi-tildePsi-relation}. On $\dot\Sigma_0$, the columns
$\Psi_2^{(1)}(\lambda_-)$ and $\Psi_2^{(2)}(\lambda_+)$ form a
fundamental matrix. Expanding the corresponding columns of $\Psi_1$
in this basis gives \eqref{ic:continued-column-scattering}.
\end{proof}

\subsection{Symmetries}
\label{sec:ic-symmetries} 
The symmetries of the Jost solutions and the scattering data are
inherited from those of the functions $l_j$, the matrices $H_j$, and
the coefficient matrices $\mathsf U$ and $\mathsf V$ in the Lax pair
\eqref{eq:lax}. For brevity, we suppress the dependence on $x$ and $t$ in what follows.

We begin with the symmetries satisfied by $l_j(\lambda)$.

\begin{lemma}\label{lem:ell-symmetries}
For $j=1,2$, the following relations hold:
\begin{subequations}\label{ic:l-symmetries}
\begin{alignat}{2}
 l_j(-\lambda)
 &=-l_j(\lambda),
 &\qquad&
 \lambda\in\mathbb C\setminus\Gamma_j,
 \label{ic:l-symmetry-a}\\
 \overline{l_j(\overline\lambda)}
 &=l_j(\lambda),
 &&
 \lambda\in\mathbb C\setminus\Gamma_j,
 \label{ic:l-symmetry-c}\\
 l_j(\lambda_-)
 &=-l_j(\lambda_+),
 &&
 \lambda\in\dot\Gamma_j,
 \label{ic:l-symmetry-e}\\
 \overline{l_j(\lambda_\pm)}
 &=-l_j(\lambda_\pm),
 &&
 \lambda\in\dot\Gamma_j,
 \label{ic:l-symmetry-d}\\
 l_j((-\lambda)_\pm)
 &=l_j(\lambda_\pm),
 &&
 \lambda\in\dot\Gamma_j.
 \label{ic:l-symmetry-b}
\end{alignat}
\end{subequations}
\end{lemma}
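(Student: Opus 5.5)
The plan is to use a single explicit representation of $l_j$ and read off all five relations from it. Write $a:=A_j^{-1}$ and
\[
 l_j(\lambda)=\sqrt{\lambda-a}\,\sqrt{\lambda+a},
\]
where both square roots are principal, i.e.\ have their cuts along $(-\infty,a]$ and $(-\infty,-a]$ respectively. The first step is to check that this product is the function in the statement: on $(-\infty,-a)$ both factors jump by a sign, so the jumps cancel; the product is therefore analytic on $\mathbb C\setminus\Gamma_j$, and it behaves like $\lambda$ at infinity. By uniqueness of the branch with this normalization, it coincides with $l_j$.

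With this representation, \eqref{ic:l-symmetry-c} follows immediately, since each principal root commutes with complex conjugation off its cut. For \eqref{ic:l-symmetry-a}, note that $l_j(-\lambda)^2=l_j(\lambda)^2$, so $l_j(-\lambda)=\varepsilon\, l_j(\lambda)$ with a sign $\varepsilon$ that is continuous on the connected set $\mathbb C\setminus\Gamma_j$, hence constant. Comparing the behavior as $\lambda\to\infty$ gives $\varepsilon=-1$.

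The cut relations come from explicit boundary values. For $\lambda\in\dot\Gamma_j$ we have $\lambda-a<0<\lambda+a$. Hence $\sqrt{\lambda_\pm-a}=\pm\ii\sqrt{a-\lambda}$, while $\sqrt{\lambda+a}$ is real and positive. This gives
\[
 l_j(\lambda_\pm)=\pm\ii\sqrt{a^2-\lambda^2},
\]
which is purely imaginary, with positive imaginary part from above. This is consistent with Remark~\ref{rem:ic-H-at-zero}, where $l_j(0_+)=\ii/A_j$. From this formula, \eqref{ic:l-symmetry-e} and \eqref{ic:l-symmetry-d} are immediate. For \eqref{ic:l-symmetry-b}, the same formula depends on $\lambda$ only through $\lambda^2$, and $(-\lambda)_\pm$ again means the upper or lower side at the point $-\lambda\in\dot\Gamma_j$. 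One can cross-check this against \eqref{ic:l-symmetry-a}: the map $\lambda\mapsto-\lambda$ swaps the half-planes, so $l_j((-\lambda)_+)=-l_j(\lambda_-)=l_j(\lambda_+)$.

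The main obstacle is bookkeeping rather than substance. One must fix the side conventions consistently, and in particular keep in mind that $(-\lambda)_\pm$ denotes a boundary value at $-\lambda$ from the same half-plane. The sign in \eqref{ic:l-symmetry-a} also has to be pinned down by the argument at infinity rather than by formal manipulation of square roots.
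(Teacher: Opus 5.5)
Your proof is correct. For the on-cut relations it takes a different route from the paper. The paper never factors $l_j$. It obtains \eqref{ic:l-symmetry-a} and \eqref{ic:l-symmetry-c} from the fact that both sides have the same square $\lambda^2-A_j^{-2}$, with the sign fixed by the normalization at infinity; this is your argument for \eqref{ic:l-symmetry-a}. It then asserts \eqref{ic:l-symmetry-e} as the sign change of a square root across its cut. Finally it derives \eqref{ic:l-symmetry-d} and \eqref{ic:l-symmetry-b} by taking boundary values in \eqref{ic:l-symmetry-c} and \eqref{ic:l-symmetry-a} (e.g.\ $l_j((-\lambda)_-)=-l_j(\lambda_+)$) and combining with \eqref{ic:l-symmetry-e}, which is exactly your cross-check.

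Your factorization $l_j(\lambda)=\sqrt{\lambda-A_j^{-1}}\,\sqrt{\lambda+A_j^{-1}}$ instead gives the explicit boundary values $l_j(\lambda_\pm)=\pm\ii\sqrt{A_j^{-2}-\lambda^2}$. All three cut identities can be read off from this at once, and \eqref{ic:l-symmetry-e} becomes a computation rather than an assertion. It also settles something the lemma does not state but the paper uses later: that the upper side carries $+\ii$. This sign gives $l_j(0_+)=\ii/A_j$ in Remark~\ref{rem:ic-H-at-zero}. It is also what places the argument of the square root in $H_j^{-1}$ on the negative real axis on the upper side in the proof of Lemma~\ref{lem:ic-HR-symmetries}. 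The paper's route is shorter and free of branch conventions, needing only the square and the normalization at infinity.

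One small point: for \eqref{ic:l-symmetry-c} at $\lambda\in(-\infty,-A_j^{-1})$, both principal factors sit on their cuts. So ``each root commutes with conjugation off its cut'' does not literally apply there. The identity extends by continuity, or you can simply reuse your same-square argument from \eqref{ic:l-symmetry-a}.
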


\begin{proof}
The functions on the two sides of \eqref{ic:l-symmetry-a} and
\eqref{ic:l-symmetry-c} have the same square,
$\lambda^2-A_j^{-2}$. Their behavior at infinity fixes the sign and
proves these two identities.

Crossing the cut changes the sign of the square root, and hence
\eqref{ic:l-symmetry-e} holds. Taking boundary values in
\eqref{ic:l-symmetry-c} gives
\begin{equation}
 l_j(\lambda_+)=\overline{l_j(\lambda_-)},
 \qquad \lambda\in\Gamma_j.
\end{equation}
Using \eqref{ic:l-symmetry-e}, we obtain
\begin{equation}
 \overline{l_j(\lambda_+)}
 =-\overline{l_j(\lambda_-)}
 =-l_j(\lambda_+),  \qquad \lambda\in\Gamma_j
\end{equation}
which proves \eqref{ic:l-symmetry-d}.

Finally, taking the upper boundary value in
\eqref{ic:l-symmetry-a} gives
\begin{equation}
 l_j((-\lambda)_-)=-l_j(\lambda_+),
\end{equation}
since $-\lambda$ is approached from the lower half-plane when
$\lambda$ is approached from the upper half-plane. By
\eqref{ic:l-symmetry-e},
\begin{equation}
 l_j((-\lambda)_-)=-l_j((-\lambda)_+),
\end{equation}
and therefore
\begin{equation}
 l_j(\lambda_+)=l_j((-\lambda)_+),
\end{equation}
which is \eqref{ic:l-symmetry-b}.
\end{proof}

Now, recall that the
original and normalized Jost solutions are related by
\begin{equation}\label{ic:Psi-tildePsi-relation}
 H_j^{-1}(\lambda)\widetilde\Psi_j(\lambda)
 =
 G\Psi_j(\lambda)\eul^{R_j(\lambda)}.
\end{equation}
Since $G$ is real and diagonal, it satisfies
\begin{equation}\label{ic:G-symmetry-properties}
 \overline G=G,
 \qquad
 G=\sigma_3G\sigma_3.
\end{equation}
The corresponding identities for $H_j$ and $R_j$ are collected in the
following lemma.

\begin{lemma}\label{lem:ic-HR-symmetries}
For $j=1,2$ and $\lambda\in\dot\Sigma_j$,
\begin{subequations}\label{ic:HR-outer-symmetries}
\begin{align}
 H_j^{-1}(\lambda)
 &=\sigma_3H_j^{-1}(-\lambda)\sigma_2,
 \label{ic:H-outer-minus}\\
 H_j^{-1}(\lambda)
 &=-\ii\overline{H_j^{-1}(\lambda)}\sigma_1,
 \label{ic:H-outer-conjugation}\\
 R_j(-\lambda)&=-R_j(\lambda),
 \label{ic:R-outer-minus}\\
 R_j(\lambda)&=-\overline{R_j(\lambda)}.
 \label{ic:R-outer-conjugation}
\end{align}
\end{subequations}
For $\lambda\in\dot\Gamma_j\setminus\{0\}$,
\begin{subequations}\label{ic:HR-cut-symmetries}
\begin{align}
 R_j((-\lambda)_\pm)&=R_j(\lambda_\pm),
 \label{ic:R-cut-minus}\\
 \overline{R_j(\lambda_\pm)}&=R_j(\lambda_\pm),
 \label{ic:R-cut-real}\\
 R_j(\lambda_-)&=-R_j(\lambda_+),
 \label{ic:R-cut-jump}\\
 \bigl(H_j^{-1}(\lambda_\pm)\bigr)^{(1)}
 &=\mp\sigma_3
   \bigl(H_j^{-1}((-\lambda)_\pm)\bigr)^{(1)},
 \label{ic:H-cut-minus-first}\\
 \bigl(H_j^{-1}(\lambda_\pm)\bigr)^{(2)}
 &=\pm\sigma_3
   \bigl(H_j^{-1}((-\lambda)_\pm)\bigr)^{(2)},
 \label{ic:H-cut-minus-second}\\
 H_j^{-1}(\lambda_+)
 &=-\overline{H_j^{-1}(\lambda_+)},
 \label{ic:H-cut-upper-conjugation}\\
 H_j^{-1}(\lambda_-)
 &=\overline{H_j^{-1}(\lambda_-)}.
 \label{ic:H-cut-lower-conjugation}
\end{align}
\end{subequations}
\end{lemma}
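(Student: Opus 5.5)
The plan is to verify every identity by direct substitution, using only the explicit formulas \eqref{ic:H-inverse} for $H_j^{-1}$ and \eqref{ic:f-def} for $f_j$ (hence $R_j=f_j\sigma_3$), together with the symmetries of $l_j$ collected in Lemma~\ref{lem:ell-symmetries}. Write
\[
 s_j(\lambda):=\sqrt{\tfrac1{\ii A_jl_j(\lambda)}-1},\qquad
 a_j(\lambda):=\frac{\lambda A_j}{1-\ii A_jl_j(\lambda)},
\]
so that $H_j^{-1}=\tfrac{1}{\sqrt2}\,s_j\bigl(a_jI+\sigma_1\bigr)$. Every identity then reduces to a statement about how $s_j$, $a_j$ and the scalar factor of $f_j$ transform under $\lambda\mapsto-\lambda$, complex conjugation, and switching sides of the cut.

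\textbf{The $R_j$ identities.} The bracket in \eqref{ic:f-def} is real, depends on $\lambda$ only through $\lambda^2$, and is therefore even in $\lambda$. Hence $R_j$ inherits the parity and reality of $\ii l_j$ alone:
\begin{itemize}
\item \eqref{ic:l-symmetry-a} gives \eqref{ic:R-outer-minus}.
\item On $\dot\Sigma_j$ the function $l_j$ is real, so $\ii l_j$ is purely imaginary, which gives \eqref{ic:R-outer-conjugation}.
\item On $\dot\Gamma_j$, \eqref{ic:l-symmetry-d} says that $l_j(\lambda_\pm)$ is purely imaginary, so $\ii l_j$ is real; this gives \eqref{ic:R-cut-real}.
\item \eqref{ic:l-symmetry-b} and \eqref{ic:l-symmetry-e} give \eqref{ic:R-cut-minus} and \eqref{ic:R-cut-jump}, respectively.
\end{itemize}
The point $\lambda=0$ is excluded on the cut because of the $\lambda^{-2}$ term.

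\textbf{The $H_j^{-1}$ identities on $\dot\Sigma_j$.} Here $l_j$ is real and nonzero, so $1/(\ii A_jl_j)-1$ lies in $\mathbb C\setminus[0,\infty)$ and the square root is unambiguous. Under $\lambda\mapsto-\lambda$ the number $1/(\ii A_jl_j)$ is replaced by its conjugate, so $s_j(-\lambda)=\overline{s_j(\lambda)}$; the branch must be checked here, since the cut of the root lies along $[0,\infty)$ and conjugation symmetry of the principal-type branch holds off that cut. I also expect $a_j(-\lambda)=-\overline{a_j(\lambda)}$. To reach \eqref{ic:H-outer-minus}, I rewrite $|a_j|^2$ through the identity $\lambda^2A_j^2=(1-\ii A_jl_j)(1+\ii A_jl_j)$, which gives $a_j\overline{a_j}=1$ on $\dot\Sigma_j$. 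With this one can compare $\sigma_3(\overline{a_j}I-\sigma_1)\sigma_2$ with $a_jI+\sigma_1$ entrywise, absorbing the phase $s_j/\overline{s_j}$. The identity \eqref{ic:H-outer-conjugation} follows the same way: multiplying by $\sigma_1$ swaps the columns, and $|a_j|=1$ together with the relation between $s_j$ and $\overline{s_j}$ should produce the factor $-\ii$.

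\textbf{The $H_j^{-1}$ identities on $\dot\Gamma_j$.} Here $\ii A_jl_j(\lambda_\pm)$ is real, so $a_j(\lambda_\pm)$ is real, and $\sqrt{1/(\ii A_jl_j)-1}$ is either real or purely imaginary depending on the sign of $\ii A_jl_j(\lambda_\pm)$. On the upper side $\ii A_jl_j(\lambda_+)\in(0,1)$, since it equals $\sqrt{A_j^{-2}-\lambda^2}\,A_j$ with the sign fixed by the expansion near $0$ in Remark~\ref{rem:ic-H-at-zero}. Then $1/(\ii A_jl_j)-1>0$ lies on the branch cut of the root, which makes $s_j$ purely imaginary and gives \eqref{ic:H-cut-upper-conjugation}. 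On the lower side the value is negative, so $s_j>0$ is real and \eqref{ic:H-cut-lower-conjugation} follows.

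For \eqref{ic:H-cut-minus-first} and \eqref{ic:H-cut-minus-second}, the relation \eqref{ic:l-symmetry-b} shows that $s_j$ is even in $\lambda$ and $a_j$ is odd along each side of the cut. The columns $(a_j,1)^{\mathsf T}$ and $(1,a_j)^{\mathsf T}$ are then mapped to $\sigma_3$ times themselves up to the signs $\mp$ and $\pm$, respectively; the sign difference between the two sides must come from $s_j$. This is the step I expect to be the main obstacle. On the upper side $1/(\ii A_jl_j)-1$ sits exactly on the cut $[0,\infty)$ of the square root, so the boundary value of $s_j$ depends on the direction of approach, and this direction flips under $\lambda\mapsto-\lambda$. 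I would resolve this by tracking the sign of $\Im l_j$ near $\lambda_\pm$ and $(-\lambda)_\pm$ using \eqref{ic:l-symmetry-a}: the imaginary part of $1/(\ii A_jl_j)$ has opposite signs at $\lambda+\ii0$ and at $-\lambda+\ii0$. This produces $s_j((-\lambda)_+)=-s_j(\lambda_+)$, which accounts for the extra sign on the upper side, whereas on the lower side $s_j$ is unaffected. The parity of $a_j$ then yields the stated column identities.
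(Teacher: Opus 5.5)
You handle the $R_j$ identities correctly, and your facts about $a_j$ (odd, $|a_j|=1$) are right. The gap is the branch analysis of $s_j$ on $\dot\Gamma_j$, which is the heart of the lemma: you have the two sides reversed. By Remark~\ref{rem:ic-H-at-zero}, $1-\ii A_jl_j(0_+)=2$, i.e.\ $l_j(0_+)=\ii/A_j$. Hence $l_j(\lambda_+)=\ii\sqrt{A_j^{-2}-\lambda^2}$ and $\ii A_jl_j(\lambda_+)=-A_j\sqrt{A_j^{-2}-\lambda^2}\in(-1,0)$, not $(0,1)$; on the lower side, $\ii A_jl_j(\lambda_-)\in(0,1)$.

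So on the upper side the argument $1/(\ii A_jl_j)-1<-2$ lies on the negative axis, off the cut $[0,\infty)$ of the root. There $s_j(\lambda_+)$ is purely imaginary, unambiguous, and even under $\lambda\mapsto-\lambda$. It is on the lower side that the argument is positive and sits on the cut. There $s_j(\lambda_-)=\pm\sqrt{\,\cdot\,}$ is real, and the side of approach flips under $\lambda\mapsto-\lambda$, giving $s_j((-\lambda)_-)=-s_j(\lambda_-)$; this is how the paper argues. Your conclusions for \eqref{ic:H-cut-upper-conjugation}--\eqref{ic:H-cut-lower-conjugation} come out right only because of a second, compensating slip. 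The square root of a positive number is real, not purely imaginary, and that of a negative number is purely imaginary, not real.

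For \eqref{ic:H-cut-minus-first}--\eqref{ic:H-cut-minus-second} the reversal is fatal. Since $a_j$ is odd along each side, $\sigma_3(a_j(-\lambda),1)^{\mathsf T}=-(a_j(\lambda),1)^{\mathsf T}$ and $\sigma_3(1,a_j(-\lambda))^{\mathsf T}=(1,a_j(\lambda))^{\mathsf T}$. So the signs $-$ (first column) and $+$ (second column) come for free when $s_j$ is even. These are the upper-side signs in the statement; it is the lower side that needs the extra sign from $s_j$. With your claims ($s_j$ odd on the upper side, even on the lower side) you would derive exactly the opposite signs to those stated.

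A similar branch slip affects the outer identities. The root with cut $[0,\infty)$ and $\sqrt{-1}=\ii$ maps onto $\mathbb C^+$, so $\sqrt{\bar z}=-\overline{\sqrt z}$. Hence $s_j(-\lambda)=-\overline{s_j(\lambda)}$ on $\dot\Sigma_j$, not $\overline{s_j(\lambda)}$. Moreover, \eqref{ic:H-outer-conjugation} requires the relation $\overline{s_j}=\ii a_js_j$ on $\dot\Sigma_j$, which you have not established. You can get it by comparing squares using $|a_j|=1$, then fixing the sign by continuity on each component as $\lambda\to\pm\infty$. Combined with your version of $s_j(-\lambda)$, that relation would yield \eqref{ic:H-outer-minus} with the wrong overall sign.
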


\begin{proof} First, notice that the definition of $f_j$, together with
\eqref{ic:l-symmetry-a} and \eqref{ic:l-symmetry-c}, yields
\[
 R_j(-\lambda)=-R_j(\lambda),
 \qquad
 R_j(\lambda)=-\overline{R_j(\overline\lambda)}.
\]
Restricting these identities to $\dot\Sigma_j$ gives
\eqref{ic:R-outer-minus} and
\eqref{ic:R-outer-conjugation}.
Similarly,
for $\lambda\in\dot\Gamma_j\setminus\{0\}$, substitution of
\eqref{ic:l-symmetry-b}, \eqref{ic:l-symmetry-d}, and \eqref{ic:l-symmetry-e}  into the definition of $f_j$ yields
\eqref{ic:R-cut-minus},
\eqref{ic:R-cut-real}, and \eqref{ic:R-cut-jump}, respectively.

To establish the identities for $H_j^{-1}$, set
\[
 h_j(\lambda)
 :=
 \frac{\lambda A_j}{1-\ii A_jl_j(\lambda)},
 \qquad
 \gamma_j(\lambda)
 :=
 \sqrt{\frac12}
 \sqrt{\frac1{\ii A_jl_j(\lambda)}-1},
\]
so that
\[
 H_j^{-1}(\lambda)
 =
 \gamma_j(\lambda)
 \begin{pmatrix}
  h_j(\lambda)&1\\
  1&h_j(\lambda)
 \end{pmatrix}.
\]
Recall that the square root entering the definition of $\gamma_j$ is
defined on $\mathbb C\setminus[0,\infty)$ and normalized by
$\sqrt{-1}=\ii$.

For $\lambda\in\dot\Sigma_j$, substituting
\eqref{ic:l-symmetry-a} and \eqref{ic:l-symmetry-c} into this
representation gives \eqref{ic:H-outer-minus} and
\eqref{ic:H-outer-conjugation}, respectively.

For $\lambda\in\dot\Gamma_j\setminus\{0\}$,
\eqref{ic:l-symmetry-b} implies
\[
 h_j((-\lambda)_\pm)=-h_j(\lambda_\pm),
\]
whereas the boundary values of $\gamma_j$ satisfy
\[
 \gamma_j((-\lambda)_+)=\gamma_j(\lambda_+),
 \qquad
 \gamma_j((-\lambda)_-)=-\gamma_j(\lambda_-).
\]
Indeed, on the upper side of $\Gamma_j$, the argument of the square
root in $\gamma_j$ lies on the negative real axis and is unchanged
under $\lambda\mapsto-\lambda$, which proves the first equality. On
the lower side, this argument lies on the branch cut $[0,\infty)$.
Although its limiting value is unchanged, the transformation
$\lambda\mapsto-\lambda$ interchanges the sides from which this cut is
approached. The corresponding boundary values of the square root
differ by a sign, which proves the second equality.

Taking the two columns separately proves
\eqref{ic:H-cut-minus-first} and
\eqref{ic:H-cut-minus-second}. Moreover,
\eqref{ic:l-symmetry-d} shows that $h_j$ is real on both sides of the
cut. The factor $\gamma_j$ is purely imaginary on the upper side and
real on the lower side. Consequently,
\eqref{ic:H-cut-upper-conjugation} and
\eqref{ic:H-cut-lower-conjugation} follow.
\end{proof}

We will also use the following boundary-uniqueness principle, which
follows from the Schwarz reflection principle and the identity theorem;
see, for example,
\cite{Ahlfors79}.

\begin{lemma}\label{lem:ic-boundary-uniqueness}
Let $F$ be a scalar- or matrix-valued function analytic in
$\mathbb C^\pm$ and continuous up to a nonempty open interval
$I\subset\mathbb R$. If $ F(\lambda)=0$ for $\lambda\in I$,
then $F\equiv0$ in $\mathbb C^\pm$.
\end{lemma}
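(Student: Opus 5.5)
The statement is Lemma~\ref{lem:ic-boundary-uniqueness}, and we prove it by Schwarz reflection followed by the identity theorem. Because the matrix-valued case follows entrywise, it suffices to treat a scalar $F$. We take the case $\mathbb C^+$; the case $\mathbb C^-$ is identical, or can be reduced to $\mathbb C^+$ by passing to $F^*$.

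Let $F$ be analytic in $\mathbb C^+$, continuous on $\mathbb C^+\cup I$, and zero on $I$. Fix a point $\lambda_0\in I$ and a disk $D$ centered at $\lambda_0$ with $D\cap\mathbb R\subset I$. Define $\widetilde F$ on $D$ by
\[
\widetilde F(\lambda)=F(\lambda)\ \ (\lambda\in D\cap\mathbb C^+),\qquad
\widetilde F(\lambda)=0\ \ (\lambda\in D\cap\mathbb R),\qquad
\widetilde F(\lambda)=\overline{F(\overline\lambda)}\ \ (\lambda\in D\cap\mathbb C^-).
\]
The boundary values of $F$ on $I$ are zero, so they are in particular real, and the Schwarz reflection principle (in the form found in \cite{Ahlfors79}) applies to $\operatorname{Re}F$ and $\operatorname{Im}F$ separately. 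It shows that $\widetilde F$ is continuous on $D$ and analytic there. To apply the principle we only need continuity up to $I$ and real boundary values on $I$, and both hold by hypothesis. Alternatively, Morera's theorem gives analyticity directly: integrals over triangles that cross the real axis are limits of integrals over triangles shifted off the axis, and continuity of $\widetilde F$ lets us pass to the limit.

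Since $\widetilde F$ is analytic on the connected disk $D$ and vanishes on the segment $D\cap\mathbb R$, which has accumulation points, the identity theorem gives $\widetilde F\equiv0$ on $D$. In particular $F\equiv0$ on the nonempty open set $D\cap\mathbb C^+$. The half-plane $\mathbb C^+$ is connected, so a second application of the identity theorem yields $F\equiv0$ on all of $\mathbb C^+$.

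The only point needing care is justifying the reflection from mere continuity up to $I$, without any assumed smoothness. Both the Morera argument and the harmonic-function version of reflection handle this. Apart from that step, the proof is routine.
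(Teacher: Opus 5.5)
Your proof is correct and follows the same route as the paper: reduce to a scalar function, use Schwarz reflection across $I$ (justified because the boundary values vanish), and then apply the identity theorem. You supply more detail than the paper, namely the local disk, the explicit reflected function $\overline{F(\overline\lambda)}$, and the passage from $D\cap\mathbb C^+$ to all of $\mathbb C^\pm$ by connectedness. One small wording point: in Ahlfors' harmonic form it suffices to reflect $\operatorname{Im}F$ alone, since odd reflection of $\operatorname{Re}F$ does not give the real part of your $\widetilde F$; your Morera alternative makes this moot.
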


\begin{proof}
It suffices to argue componentwise. Since the boundary values of $F$
vanish on $I$, Schwarz reflection extends $F$ analytically across
$I$. The extension vanishes on $I$, and hence the identity theorem
implies that $F\equiv0$ in $\mathbb C^\pm$.
\end{proof}

\subsubsection*{First symmetry: $\lambda\longleftrightarrow-\lambda$}

\begin{proposition}\label{prop:ic-Psi-minus}
For $j=1,2$, the  Jost solutions $\Psi_j$ satisfy
\begin{equation}\label{ic:Psi-minus}
 \Psi_j(\lambda)
 =
 \sigma_3\Psi_j(-\lambda)\sigma_2,
 \qquad
 \lambda\in\dot\Sigma_j.
\end{equation}
\end{proposition}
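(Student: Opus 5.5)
The plan is to show that both sides of \eqref{ic:Psi-minus} solve the same Lax pair and obey the same normalization as $x\to(-1)^j\infty$, and then to invoke uniqueness of Jost solutions.

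\textbf{Step 1: the Lax pair is invariant under $\lambda\mapsto-\lambda$ combined with conjugation by $\sigma_3$.} Conjugating \eqref{eq:lax} by $\sigma_3$ flips the sign of the off-diagonal entries and leaves the diagonal entries unchanged. In $\mathsf U$ the off-diagonal entries are odd in $\lambda$ and the diagonal entries are constant, so $\sigma_3\mathsf U(-\lambda)\sigma_3=\mathsf U(\lambda)$. In $\mathsf V$ the diagonal entries are even in $\lambda$ (they involve $\lambda^{-2}$ and $Q$), and the off-diagonal entries are odd (they involve $\lambda^{\pm1}$). Hence $\sigma_3\mathsf V(-\lambda)\sigma_3=\mathsf V(\lambda)$ as well. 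It follows that whenever $\Psi(\lambda)$ solves the Lax pair, so does $\sigma_3\Psi(-\lambda)C$ for any constant matrix $C$; we take $C=\sigma_2$. The point set $\dot\Sigma_j$ is symmetric under $\lambda\mapsto-\lambda$, so $\Psi_j(-\lambda)$ is defined there.

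\textbf{Step 2: compare the normalizations.} Rather than argue through asymptotics, I would use the Volterra equation \eqref{ic:volterra-Psi-b}. Since $G=\sigma_3G\sigma_3$ by \eqref{ic:G-symmetry-properties}, the claim is equivalent to
\[
\Psi_j^{\mathrm b}(\lambda)=\sigma_3\Psi_j^{\mathrm b}(-\lambda)\sigma_2 .
\]
Combining \eqref{ic:H-outer-minus} with \eqref{ic:R-outer-minus}, we get
\[
\sigma_3\Psi_{0,j}^{\mathrm b}(-\lambda)\sigma_2=\sigma_3H_j^{-1}(-\lambda)\eul^{R_j(\lambda)}\sigma_2 .
\]
We need this to equal $H_j^{-1}(\lambda)\eul^{-R_j(\lambda)}=\sigma_3H_j^{-1}(-\lambda)\sigma_2\eul^{-R_j(\lambda)}$. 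This holds because $\eul^{R}\sigma_2=\sigma_2\eul^{-R}$ for diagonal $R=f\sigma_3$. Thus $\Psi_{0,j}^{\mathrm b}$ has the claimed symmetry. The same symmetry then transfers to the kernel:
\[
P_j(x,y,-\lambda)=\sigma_3P_j(x,y,\lambda)\sigma_3 ,
\]
which can be checked either directly from \eqref{ic:background-propagator-formula}, where $\theta_j$ and the diagonal entries are even while the off-diagonal entries are odd, or from the symmetry of $\Psi_{0,j}^{\mathrm b}$, since the $\sigma_2$ factors cancel.

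\textbf{Step 3: uniqueness.} Set $\Phi(\lambda):=\sigma_3\Psi_j^{\mathrm b}(-\lambda)\sigma_2$. Conjugating the integral equation at $-\lambda$ and using $\sigma_3\delta_j\sigma_3\sigma_3=\delta_j\sigma_3\sigma_3$, i.e. that $\sigma_3$ commutes with $\delta_j\sigma_3$, we find that $\Phi$ satisfies
\[
\Phi=\sigma_3\Psi_{0,j}^{\mathrm b}(-\lambda)\sigma_2+\int P_j(\lambda)\,\delta_j\sigma_3\,\Phi\,\dd\xi .
\]
By Step 2, this is the same Volterra equation as the one satisfied by $\Psi_j^{\mathrm b}(\lambda)$. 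Uniqueness of solutions of the Volterra equation for $\lambda\in\dot\Sigma_j$, where $l_j$ is real and the kernel is bounded, then gives the identity.

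The only delicate point is the branch bookkeeping for $H_j^{-1}$ and $l_j$ at $-\lambda$. This is already covered on $\dot\Sigma_j$ by \eqref{ic:H-outer-minus} and \eqref{ic:l-symmetry-a}, so I expect the main work to lie in the short verification of Step 2.
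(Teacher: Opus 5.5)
Your proof is correct and uses the same ingredients as the paper's: \eqref{ic:H-outer-minus}, \eqref{ic:R-outer-minus}, the identity $\eul^{R_j}\sigma_2=\sigma_2\eul^{-R_j}$, and Volterra uniqueness. The paper compares the Lax pairs and the normalizations as $x\to(-1)^j\infty$, whereas you check directly that $\sigma_3\Psi_j^{\mathrm b}(-\lambda)\sigma_2$ solves the integral equation \eqref{ic:volterra-Psi-b}; this makes the uniqueness step explicit and makes your Step~1 unnecessary.

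One harmless slip: $\theta_j$ is odd in $\lambda$ (because $l_j$ is). It is $\cos\theta_j$ and $\sin\theta_j/l_j$ that are even, and that is all your parity argument for $P_j$ needs.
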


\begin{proof}
We first observe that the coefficient matrices of the Lax pair
\eqref{eq:lax} possess the symmetries
\begin{equation}\label{ic:UV-minus-symmetry}
 \mathsf U(-\lambda)
 =
 \sigma_3\mathsf U(\lambda)\sigma_3,
 \qquad
 \mathsf V(-\lambda)
 =
 \sigma_3\mathsf V(\lambda)\sigma_3.
\end{equation}
Consequently, whenever $\Psi_j(\lambda)$ is a solution of
\eqref{eq:lax}, the function $\sigma_3\Psi_j(-\lambda)$
is also a solution. Moreover, since $G\to I$ and $\widetilde\Psi_j\to I$
as $x\to(-1)^j\infty$, the Jost solutions have the asymptotic behavior
\[
 \Psi_j(\lambda)
 \sim
 H_j^{-1}(\lambda)\eul^{-R_j(\lambda)},\qquad x\to(-1)^j\infty.
\]
Furthermore, the identities
\eqref{ic:H-outer-minus}, \eqref{ic:R-outer-minus},
together with
$\eul^{R_j(\lambda)}=\sigma_2\eul^{-R_j(\lambda)}\sigma_2$,
yield
\begin{align*}
 \sigma_3H_j^{-1}(-\lambda)
 \eul^{-R_j(-\lambda)}\sigma_2
 =
 H_j^{-1}(\lambda)\eul^{-R_j(\lambda)}.
\end{align*}
Thus, $\Psi_j(\lambda)$ and
$\sigma_3\Psi_j(-\lambda)\sigma_2$ satisfy the same
Lax pair and have the same asymptotic normalization.
The uniqueness of the solution to the corresponding Volterra
integral equation therefore yields \eqref{ic:Psi-minus}.
\end{proof}

\begin{corollary}\label{cor:ic-minus-symmetries}
The following relations hold.
\begin{enumerate}
 \item For $\lambda\in\dot\Sigma_1$,
 \begin{equation}\label{ic:c-minus-matrix}
  c(\lambda)
  =
  \sigma_2c(-\lambda)\sigma_2.
 \end{equation}
 Equivalently,
 \begin{equation}\label{ic:c-minus-entries}
  c_{11}(\lambda)=c_{22}(-\lambda),
  \qquad
  c_{21}(\lambda)=-c_{12}(-\lambda).
 \end{equation}

 \item For $\lambda\in\mathbb C^+$,
 \begin{equation}\label{ic:c11-c22-minus-analytic}
  c_{11}(\lambda)
  =
  c_{22}(-\lambda).
 \end{equation}
 Moreover,
 \begin{equation}\label{ic:c11-c22-minus-cut}
  c_{11}(\lambda_+)
  =
  c_{22}((-\lambda)_-),
  \qquad
  \lambda\in
  \dot\Gamma_1\setminus\{\pm A_2^{-1}\}.
 \end{equation}

 \item For $\lambda\in\mathbb C^+$,
 \begin{subequations}\label{ic:Psi-minus-columns}
 \begin{align}
  \Psi_1^{(1)}(\lambda)
  &=
  \ii\sigma_3\Psi_1^{(2)}(-\lambda),\\
  \Psi_2^{(2)}(\lambda)
  &=
  -\ii\sigma_3\Psi_2^{(1)}(-\lambda).
 \end{align}
 \end{subequations}

 \item For $j=1,2$ and $\lambda\in\dot\Sigma_j$,
 \begin{equation}\label{ic:HtildePsi-minus}
  \bigl(H_j^{-1}\widetilde\Psi_j\bigr)(\lambda)
  =
  \sigma_3
  \bigl(H_j^{-1}\widetilde\Psi_j\bigr)(-\lambda)
  \sigma_2.
 \end{equation}
 \item  For $\lambda\in\mathbb C^+$,
 \begin{subequations}\label{ic:HtildePsi-minus-columns}
 \begin{align}
  \bigl(H_1^{-1}\widetilde\Psi_1\bigr)^{(1)}(\lambda)
  &=
  \ii\sigma_3
  \bigl(H_1^{-1}\widetilde\Psi_1\bigr)^{(2)}(-\lambda),\\
  \bigl(H_2^{-1}\widetilde\Psi_2\bigr)^{(2)}(\lambda)
  &=
  -\ii\sigma_3
  \bigl(H_2^{-1}\widetilde\Psi_2\bigr)^{(1)}(-\lambda).
 \end{align}
 \end{subequations}

\end{enumerate}
\end{corollary}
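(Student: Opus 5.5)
The plan is to derive all five items from Proposition~\ref{prop:ic-Psi-minus}, together with the scattering relation \eqref{ic:scattering-relation}, the determinant representations \eqref{ic:scattering-determinants}, and Lemma~\ref{lem:ic-boundary-uniqueness}.

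\emph{Items (1) and (2).} For $\lambda\in\dot\Sigma_1$ (a set symmetric under $\lambda\mapsto-\lambda$), insert \eqref{ic:Psi-minus} for both $j=1,2$ into $\Psi_1(\lambda)=\Psi_2(\lambda)c(\lambda)$. This gives
\[
 \sigma_3\Psi_1(-\lambda)\sigma_2=\sigma_3\Psi_2(-\lambda)\sigma_2c(\lambda).
\]
On the other hand, $\Psi_1(-\lambda)=\Psi_2(-\lambda)c(-\lambda)$. Since $\det\Psi_2=1$, we may cancel and obtain $\sigma_2c(\lambda)\sigma_2=c(-\lambda)$, which is \eqref{ic:c-minus-matrix} because $\sigma_2^2=I$. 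Writing out $\sigma_2\begin{pmatrix}a&b\\c&d\end{pmatrix}\sigma_2=\begin{pmatrix}d&-c\\-b&a\end{pmatrix}$ yields \eqref{ic:c-minus-entries}.

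For (2), observe that $c_{11}(\lambda)$ is analytic in $\mathbb C^+$ and $c_{22}(-\lambda)$ is analytic there as well, since $-\lambda\in\mathbb C^-$. Both extend continuously to the interval $\lambda>A_1^{-1}$ of $\dot\Sigma_1$, where they coincide by (1). Lemma~\ref{lem:ic-boundary-uniqueness}, applied to their difference, gives \eqref{ic:c11-c22-minus-analytic}. Letting $\lambda$ tend to the upper side of $\dot\Gamma_1\setminus\{\pm A_2^{-1}\}$ then gives \eqref{ic:c11-c22-minus-cut}, because $-\lambda$ simultaneously approaches the lower side of the cut.

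\emph{Item (3).} Take columns in \eqref{ic:Psi-minus}. Since $\sigma_2$ has first column $(0,\ii)^{\mathsf T}$ and second column $(-\ii,0)^{\mathsf T}$, we get
\[
 \Psi_j^{(1)}(\lambda)=\ii\sigma_3\Psi_j^{(2)}(-\lambda),\qquad
 \Psi_j^{(2)}(\lambda)=-\ii\sigma_3\Psi_j^{(1)}(-\lambda)
\]
on $\dot\Sigma_j$. By Proposition~\ref{prop:inner-cut-column-analyticity} and \eqref{ic:Psi-tildePsi-relation}, $\Psi_1^{(1)}$ is analytic in $\mathbb C^+$ and $\Psi_1^{(2)}(-\lambda)$ is also analytic in $\mathbb C^+$. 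Both are continuous up to $\dot\Sigma_1$, and their difference vanishes there, so Lemma~\ref{lem:ic-boundary-uniqueness} yields the first identity in $\mathbb C^+$. The argument for $j=2$ is identical. The only point that needs checking is that the factors $G^{-1}$ and $H_j^{-1}$ do not spoil analyticity: $G$ is independent of $\lambda$, and $H_j^{-1}$ is analytic off $\Gamma_j$.

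\emph{Items (4) and (5).} These follow by multiplying the previous identities by the $\lambda$-independent matrix $G$ and the factor $\eul^{R_j}$, using \eqref{ic:Psi-tildePsi-relation}. Explicitly,
\[
 H_j^{-1}\widetilde\Psi_j(\lambda)=G\sigma_3\Psi_j(-\lambda)\sigma_2\eul^{R_j(\lambda)}.
\]
Now use $G\sigma_3=\sigma_3G$ (both are diagonal) together with $\sigma_2\eul^{R_j(\lambda)}=\eul^{-R_j(\lambda)}\sigma_2=\eul^{R_j(-\lambda)}\sigma_2$, by \eqref{ic:R-outer-minus}. This turns the right-hand side into $\sigma_3\bigl(H_j^{-1}\widetilde\Psi_j\bigr)(-\lambda)\sigma_2$, which is \eqref{ic:HtildePsi-minus}.

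For (5), the column version requires that the scalar phases $\eul^{\pm f_j}$ attached to the columns match. Here $f_j(-\lambda)=-f_j(\lambda)$, so the first column of $\eul^{R_j(\lambda)}$ carries $\eul^{f_j(\lambda)}$, and the second column at $-\lambda$ carries $\eul^{-f_j(-\lambda)}=\eul^{f_j(\lambda)}$. The phases therefore agree, and the extension to $\mathbb C^+$ again follows from Lemma~\ref{lem:ic-boundary-uniqueness}, or directly from (3) together with the analyticity of $f_j$ off $\Gamma_j$.

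The main obstacle I expect is bookkeeping rather than anything conceptual. It consists of verifying the correct half-plane for each column, and confirming that the continuous boundary values on $\dot\Sigma_j$ required by Lemma~\ref{lem:ic-boundary-uniqueness} exist. Both are supplied by Proposition~\ref{prop:inner-cut-column-analyticity}.
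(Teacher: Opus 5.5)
Your proposal is correct and follows the paper's proof essentially step by step. You substitute \eqref{ic:Psi-minus} into the scattering relation for (1), apply Lemma~\ref{lem:ic-boundary-uniqueness} to extend the entrywise and columnwise identities into $\mathbb C^+$ for (2), (3) and (5), and combine \eqref{ic:Psi-tildePsi-relation} with \eqref{ic:R-outer-minus} and $\sigma_2\eul^{R_j}=\eul^{-R_j}\sigma_2$ for (4). Your additional phase-matching check, which derives (5) directly from (3) using $f_j(-\lambda)=-f_j(\lambda)$ off $\Gamma_j$, is a valid alternative to the paper's second appeal to the boundary-uniqueness lemma.
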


\begin{proof}
Substituting \eqref{ic:Psi-minus} into
\eqref{ic:scattering-relation} implies \eqref{ic:c-minus-matrix},
which is equivalent to  \eqref{ic:c-minus-entries}.

The functions $c_{11}(\lambda)$ and $c_{22}(-\lambda)$ are analytic
in $\mathbb C^+$ and, by \eqref{ic:c-minus-entries}, have the same
boundary values on $\dot\Sigma_1$. Applying
Lemma~\ref{lem:ic-boundary-uniqueness} to their difference proves
\eqref{ic:c11-c22-minus-analytic}. Passing to the upper boundary of
the cut then gives \eqref{ic:c11-c22-minus-cut}, since
$-\lambda_+=(-\lambda)_-$.

Taking the first column of \eqref{ic:Psi-minus} for $j=1$ and the
second column for $j=2$ gives \eqref{ic:Psi-minus-columns} on
$\dot\Sigma_1$ and $\dot\Sigma_2$, respectively. Since both sides of
these identities are analytic in $\mathbb C^+$,
Lemma~\ref{lem:ic-boundary-uniqueness}, applied to their differences,
yields \eqref{ic:Psi-minus-columns} for all $\lambda\in\mathbb C^+$.

Finally, using \eqref{ic:Psi-tildePsi-relation},
\eqref{ic:G-symmetry-properties}, \eqref{ic:Psi-minus}, and
\eqref{ic:R-outer-minus}, together with $\eul^{R_j(\lambda)}=\sigma_2\eul^{-R_j(\lambda)}\sigma_2$,
we obtain \eqref{ic:HtildePsi-minus}. Applying
Lemma~\ref{lem:ic-boundary-uniqueness} to the relevant columns then
yields \eqref{ic:HtildePsi-minus-columns}.
\end{proof}

\subsubsection*{Second symmetry:
$\lambda\longleftrightarrow\overline\lambda$}

\begin{proposition}\label{prop:ic-Psi-conjugation}
For $j=1,2$, the  Jost solutions $\Psi_j$ satisfy
\begin{equation}\label{ic:Psi-conjugation}
 \Psi_j(\lambda)
 =
 -\ii\overline{\Psi_j(\overline\lambda)}\sigma_1,
 \qquad
 \lambda\in\dot\Sigma_j.
\end{equation}
\end{proposition}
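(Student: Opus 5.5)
The argument mirrors the proof of Proposition~\ref{prop:ic-Psi-minus}. We show that the right-hand side of \eqref{ic:Psi-conjugation} solves the same Lax pair and has the same asymptotic normalization as $\Psi_j(\lambda)$, and then conclude by uniqueness of the Volterra solution.

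\emph{Step 1: the coefficient matrices are real for real $\lambda$.} Since $u,v$ (hence $m,n,p,q,Q$) are real-valued, the entries of $\mathsf U$ and $\mathsf V$ in \eqref{eq:lax} are polynomials in $\lambda$ and $\lambda^{-1}$ with real coefficients. Therefore
\[
 \overline{\mathsf U(\overline\lambda)}=\mathsf U(\lambda),
 \qquad
 \overline{\mathsf V(\overline\lambda)}=\mathsf V(\lambda).
\]
Consequently, if $\Psi_j(\lambda)$ solves \eqref{eq:lax}, then so does $\overline{\Psi_j(\overline\lambda)}$. Right multiplication by the constant matrix $-\ii\sigma_1$ preserves solutions. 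For $\lambda\in\dot\Sigma_j$ we have $\overline\lambda=\lambda$, so the right-hand side of \eqref{ic:Psi-conjugation} is simply $-\ii\overline{\Psi_j(\lambda)}\sigma_1$, which is again a solution.

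\emph{Step 2: matching normalizations.} As in the proof of Proposition~\ref{prop:ic-Psi-minus},
\[
 \Psi_j(\lambda)\sim H_j^{-1}(\lambda)\eul^{-R_j(\lambda)},
 \qquad x\to(-1)^j\infty,
\]
because $G\to I$ and $\widetilde\Psi_j\to I$. We compute $-\ii\overline{H_j^{-1}(\lambda)\eul^{-R_j(\lambda)}}\sigma_1$ using two facts from Lemma~\ref{lem:ic-HR-symmetries}:
\begin{itemize}
 \item by \eqref{ic:R-outer-conjugation}, $\overline{\eul^{-R_j}}=\eul^{R_j}$;
 \item the identity $\eul^{R_j}\sigma_1=\sigma_1\eul^{-R_j}$ holds since $R_j$ is diagonal and proportional to $\sigma_3$.
\end{itemize}
These give
\[
 -\ii\overline{H_j^{-1}}\,\eul^{R_j}\sigma_1
 =-\ii\overline{H_j^{-1}}\sigma_1\eul^{-R_j}
 =H_j^{-1}\eul^{-R_j},
\]
where the last equality is \eqref{ic:H-outer-conjugation}. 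So both sides of \eqref{ic:Psi-conjugation} have the same leading behaviour at $x\to(-1)^j\infty$.

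\emph{Step 3: uniqueness, which is the main point requiring care.} Asymptotic equivalence alone is not enough here, because on $\dot\Sigma_j$ the exponentials are oscillatory. I would pass to the normalized functions. Setting $\Phi:=G\Psi_j\eul^{R_j}$, relation \eqref{ic:Psi-tildePsi-relation} gives $\Phi=H_j^{-1}\widetilde\Psi_j$. By Steps 1 and 2, together with $\overline G=G$ from \eqref{ic:G-symmetry-properties}, the conjugated candidate $\widetilde\Psi_j^\sharp:=H_j(-\ii)\overline{H_j^{-1}\widetilde\Psi_j}\,\sigma_1$ satisfies the same $x$-equation \eqref{ic:commutator-system}. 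Using \eqref{ic:H-outer-conjugation} and Step 2, it also tends to $I$ as $x\to(-1)^j\infty$.

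To conclude, I would verify that $\widetilde\Psi_j^\sharp$ satisfies the same Volterra equation \eqref{ic:volterra-tildePsi}. Conjugating \eqref{ic:volterra-tildePsi} is straightforward, since $l_j$ and $\rho$ are real on $\dot\Sigma_j$, so the exponentials become their inverses. The nontrivial check is that conjugation by $\sigma_1$ swaps them back. One also needs
\[
 H_j(-\ii)\overline{H_j^{-1}}\,\sigma_1\,\overline{\widehat W_j}\,\sigma_1\,(\cdots)=\widehat W_j,
\]
which follows from $\widehat W_j=\delta_jH_j\sigma_3H_j^{-1}$, \eqref{ic:H-outer-conjugation}, and $\sigma_1\sigma_3\sigma_1=-\sigma_3$ (the sign is absorbed by $\ii^2$). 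Uniqueness of solutions of the Volterra equation then gives $\widetilde\Psi_j^\sharp=\widetilde\Psi_j$. Undoing the gauge and the $\eul^{R_j}$ factor yields \eqref{ic:Psi-conjugation}.
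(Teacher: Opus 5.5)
Your proof is correct and is essentially the paper's. Steps~1 and~2 reproduce it: reality of $\mathsf U,\mathsf V$, then matching $H_j^{-1}\eul^{-R_j}$ via \eqref{ic:H-outer-conjugation}, \eqref{ic:R-outer-conjugation} and $\eul^{R_j}\sigma_1=\sigma_1\eul^{-R_j}$. Step~3 unpacks the Volterra uniqueness that the paper simply invokes; it amounts to $\widetilde\Psi_j=\sigma_1\overline{\widetilde\Psi_j}\sigma_1$, whose only kernel identity is $\sigma_1\overline{\widehat W_j}\sigma_1=\widehat W_j$. This is cleaner than your displayed version and follows at once from \eqref{ic:W-def}, because $\delta_j,\lambda,l_j$ are real on $\dot\Sigma_j$.

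One correction to your motivation for Step~3: the oscillation of the exponentials on $\dot\Sigma_j$ is exactly why matching normalizations already suffices, not an obstruction. Two such solutions differ by a constant right factor $C$, and $\eul^{-R_j}C\eul^{R_j}\to I$ together with $|\eul^{\pm2f_j}|=1$ forces $C=I$.
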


\begin{proof}
Since the solution $(u,v)$ is real-valued, the coefficient matrices
of the Lax pair \eqref{eq:lax} satisfy
\begin{equation}\label{ic:UV-conjugation-symmetry}
 \overline{\mathsf U(\overline\lambda)}
 =
 \mathsf U(\lambda),
 \qquad
 \overline{\mathsf V(\overline\lambda)}
 =
 \mathsf V(\lambda).
\end{equation}
Therefore, whenever $\Psi_j(\lambda)$ solves
\eqref{eq:lax}, the function $\overline{\Psi_j(\overline\lambda)}$
is also a solution.

To compare their asymptotic behavior, recall that
\[
\Psi_j(\lambda)
\sim
H_j^{-1}(\lambda)\eul^{-R_j(\lambda)},
\qquad x\to(-1)^j\infty.
\]
For $\lambda\in\dot\Sigma_j$, the identities
\eqref{ic:H-outer-conjugation} and
\eqref{ic:R-outer-conjugation}, together with
$\eul^{R_j(\lambda)}
=\sigma_1\eul^{-R_j(\lambda)}\sigma_1$,
yield
\begin{align*}
 -\ii\overline{
 H_j^{-1}(\bar\lambda)\eul^{-R_j(\bar\lambda)}}\sigma_1 =
 H_j^{-1}(\lambda)\eul^{-R_j(\lambda)}.
\end{align*}
Consequently, $\Psi_j(\lambda)$ and
$-\ii\overline{\Psi_j(\overline\lambda)}\sigma_1$
satisfy the same Lax pair with the same asymptotic normalization.
By uniqueness of the solution to the corresponding Volterra
integral equation, the symmetry relation
\eqref{ic:Psi-conjugation} follows.
\end{proof}

\begin{corollary}\label{cor:ic-conjugation-symmetries}
The following relations hold.
\begin{enumerate}
 \item For $\lambda\in\dot\Sigma_1$,
 \begin{equation}\label{ic:c-conjugation-matrix}
  c(\lambda)
  =
  \sigma_1\overline{c(\lambda)}\sigma_1.
 \end{equation}
 Equivalently,
 \begin{equation}\label{ic:c-conjugation-entries}
  c_{11}(\lambda)=\overline{c_{22}(\lambda)},
  \qquad
  c_{21}(\lambda)=\overline{c_{12}(\lambda)}.
 \end{equation}

 \item For $\lambda\in\dot\Sigma_1$,
\begin{equation}\label{ic:c-pseudo-unitarity}
 |c_{11}(\lambda)|^2-|c_{21}(\lambda)|^2=1.
\end{equation}

 \item For $\lambda\in\mathbb C^+$,
 \begin{equation}\label{ic:c11-c22-conjugation-analytic}
  c_{11}(\lambda)
  =
  \overline{c_{22}(\overline\lambda)}.
 \end{equation}
 Moreover,
 \begin{equation}\label{ic:c11-c22-conjugation-cut}
  c_{11}(\lambda_+)
  =
  \overline{c_{22}(\lambda_-)},
  \qquad
  \lambda\in
  \dot\Gamma_1\setminus\{\pm A_2^{-1}\}.
 \end{equation}

 \item For $\lambda\in\mathbb C^+$,
 \begin{subequations}\label{ic:Psi-conjugation-columns}
 \begin{align}
  \Psi_1^{(1)}(\lambda)
  &=
  -\ii\overline{\Psi_1^{(2)}(\overline\lambda)},\\
  \Psi_2^{(2)}(\lambda)
  &=
  -\ii\overline{\Psi_2^{(1)}(\overline\lambda)}.
 \end{align}
 \end{subequations}

 \item  For $j=1,2$ and $\lambda\in\dot\Sigma_j$,
 \begin{equation}\label{ic:HtildePsi-conjugation}
  \bigl(H_j^{-1}\widetilde\Psi_j\bigr)(\lambda)
  =
  -\ii\overline{
  \bigl(H_j^{-1}\widetilde\Psi_j\bigr)(\lambda)}\sigma_1.
 \end{equation}
 \item For $\lambda\in\mathbb C^+$,
 \begin{subequations}\label{ic:HtildePsi-conjugation-columns}
 \begin{align}
  \bigl(H_1^{-1}\widetilde\Psi_1\bigr)^{(1)}(\lambda)
  &=
  -\ii\overline{
  \bigl(H_1^{-1}\widetilde\Psi_1\bigr)^{(2)}
  (\overline\lambda)},\\
  \bigl(H_2^{-1}\widetilde\Psi_2\bigr)^{(2)}(\lambda)
  &=
  -\ii\overline{
  \bigl(H_2^{-1}\widetilde\Psi_2\bigr)^{(1)}
  (\overline\lambda)}.
 \end{align}
 \end{subequations}

\end{enumerate}
\end{corollary}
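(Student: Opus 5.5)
The plan mirrors the proof of Corollary~\ref{cor:ic-minus-symmetries}, with Proposition~\ref{prop:ic-Psi-conjugation} in place of Proposition~\ref{prop:ic-Psi-minus}.

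\emph{Item (1).} For real $\lambda\in\dot\Sigma_1$ we have $\bar\lambda=\lambda$, so \eqref{ic:Psi-conjugation} for $j=1,2$ reads $\Psi_j=-\ii\overline{\Psi_j}\sigma_1$. Substituting this into the scattering relation \eqref{ic:scattering-relation} gives
\[
 -\ii\overline{\Psi_1}\sigma_1=-\ii\overline{\Psi_2}\sigma_1c .
\]
Conjugating the scattering relation itself gives $\overline{\Psi_1}=\overline{\Psi_2}\,\overline c$. Since $\det\Psi_2=1$, comparing the two yields $\sigma_1c=\overline c\sigma_1$. This is \eqref{ic:c-conjugation-matrix}, and writing it out entrywise gives \eqref{ic:c-conjugation-entries}.

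\emph{Item (2).} Combine $\det c=1$ from Proposition~\ref{prop:scattering-matrix} with \eqref{ic:c-conjugation-entries}:
\[
 1=c_{11}c_{22}-c_{12}c_{21}=|c_{11}|^2-|c_{21}|^2 .
\]

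\emph{Item (3).} The function $\lambda\mapsto\overline{c_{22}(\bar\lambda)}$ is analytic in $\mathbb C^+$, because $c_{22}$ is analytic in $\mathbb C^-$. Its boundary values on $\dot\Sigma_1$ coincide with those of $c_{11}$ by \eqref{ic:c-conjugation-entries}. Applying Lemma~\ref{lem:ic-boundary-uniqueness} to the difference proves \eqref{ic:c11-c22-conjugation-analytic}. Letting $\lambda$ tend to the upper side of the cut, we have $\bar\lambda\to\lambda_-$, which gives \eqref{ic:c11-c22-conjugation-cut}. The continuity needed here away from $\pm A_2^{-1}$ is provided by Proposition~\ref{prop:scattering-matrix}.

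\emph{Item (4).} Take the first column of \eqref{ic:Psi-conjugation} for $j=1$ on $\dot\Sigma_1$, and the second column for $j=2$ on $\dot\Sigma_2$. In each case right multiplication by $\sigma_1$ swaps the columns. The left sides are analytic in $\mathbb C^+$. The right sides $\overline{\Psi_1^{(2)}(\bar\lambda)}$ and $\overline{\Psi_2^{(1)}(\bar\lambda)}$ are also analytic there, since $\Psi_1^{(2)}$ and $\Psi_2^{(1)}$ are analytic in $\mathbb C^-$; this uses Proposition~\ref{prop:inner-cut-column-analyticity} together with the facts that $H_j^{-1}$, $G$ and $\eul^{-R_j}$ are analytic off the cuts. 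Both sides are continuous up to the real intervals, so Lemma~\ref{lem:ic-boundary-uniqueness} extends the identities to all of $\mathbb C^+$.

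\emph{Items (5)--(6).} By \eqref{ic:Psi-tildePsi-relation}, $H_j^{-1}\widetilde\Psi_j=G\Psi_j\eul^{R_j}$. We use three facts: \eqref{ic:Psi-conjugation}; $\overline G=G$ from \eqref{ic:G-symmetry-properties}; and $\overline{R_j}=-R_j$ on $\dot\Sigma_j$ from \eqref{ic:R-outer-conjugation}, which gives $\overline{\eul^{-R_j}}=\eul^{R_j}$. Combining them,
\[
 G\Psi_j\eul^{R_j}=-\ii G\overline{\Psi_j}\sigma_1\eul^{R_j}
 =-\ii G\overline{\Psi_j}\eul^{-R_j}\sigma_1
 =-\ii\overline{G\Psi_j\eul^{R_j}}\sigma_1 ,
\]
where the middle step uses $\sigma_1\eul^{R_j}=\eul^{-R_j}\sigma_1$. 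This is \eqref{ic:HtildePsi-conjugation}. The column identities \eqref{ic:HtildePsi-conjugation-columns} then follow exactly as in item (4), by Lemma~\ref{lem:ic-boundary-uniqueness}.

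\emph{Expected obstacle.} The only delicate point is checking the analyticity and boundary continuity of the reflected functions $\overline{F(\bar\lambda)}$ in $\mathbb C^+$, including the role of the branch of $H_j^{-1}$. Everything else is direct algebra.
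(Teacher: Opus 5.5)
Your proposal is correct and follows essentially the same route as the paper. You substitute \eqref{ic:Psi-conjugation} into the scattering relation for (1)--(2), extend via Lemma~\ref{lem:ic-boundary-uniqueness} for (3), (4) and (6), and combine \eqref{ic:Psi-tildePsi-relation} with $\overline G=G$ and $\overline{R_j}=-R_j$ for (5); you just make explicit the algebra the paper leaves implicit (invertibility of $\overline{\Psi_2}$, the identity $\sigma_1\eul^{R_j}=\eul^{-R_j}\sigma_1$, and analyticity of the reflected functions).
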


\begin{proof}
Substituting \eqref{ic:Psi-conjugation} into the scattering relation
\eqref{ic:scattering-relation}, we obtain
\eqref{ic:c-conjugation-matrix}, or equivalently,
\eqref{ic:c-conjugation-entries}. The latter, together with
$\det c=1$, yields \eqref{ic:c-pseudo-unitarity}.

Next, both $c_{11}(\lambda)$ and
$\overline{c_{22}(\overline\lambda)}$ are analytic in
$\mathbb C^+$ and have the same boundary values on
$\dot\Sigma_1$ by \eqref{ic:c-conjugation-entries}.
Applying Lemma~\ref{lem:ic-boundary-uniqueness} to their
difference yields \eqref{ic:c11-c22-conjugation-analytic}.
Taking upper boundary values on the cut then gives
\eqref{ic:c11-c22-conjugation-cut}.

Taking the first column of \eqref{ic:Psi-conjugation} for $j=1$
and the second column for $j=2$ yields
\eqref{ic:Psi-conjugation-columns} on $\dot\Sigma_1$
and $\dot\Sigma_2$, respectively. Since both sides are analytic
in $\mathbb C^+$ and have the same boundary values on the
respective intervals, Lemma~\ref{lem:ic-boundary-uniqueness}
extends these identities to the entire upper half-plane.

Finally, combining \eqref{ic:Psi-tildePsi-relation},
\eqref{ic:G-symmetry-properties}, and the identity
$R_j(\lambda)=-\overline{R_j(\overline\lambda)}$,
we obtain \eqref{ic:HtildePsi-conjugation}.
The corresponding columnwise identities
\eqref{ic:HtildePsi-conjugation-columns} then follow
from Lemma~\ref{lem:ic-boundary-uniqueness}.
\end{proof}

\subsubsection*{Third symmetry:
$\lambda_\pm\longleftrightarrow(-\lambda)_\pm$}

\begin{proposition}\label{prop:ic-Psi-minus-upper-cut}
For the upper boundary values, we have
\begin{subequations}\label{ic:Psi-minus-upper-cut}
\begin{align}
 \Psi_1^{(1)}(\lambda_+)
 &=
 -\sigma_3\Psi_1^{(1)}((-\lambda)_+),
 &&\lambda\in\dot\Gamma_1\setminus\{0\},\\
 \Psi_2^{(2)}(\lambda_+)
 &=
 \sigma_3\Psi_2^{(2)}((-\lambda)_+),
 &&\lambda\in\dot\Gamma_2\setminus\{0\}.
\end{align}
\end{subequations}
\end{proposition}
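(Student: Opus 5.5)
Both sides of each identity solve the same ODE, so it is enough to compare their behaviour at the normalizing end $x\to(-1)^j\infty$ and then invoke uniqueness for the Volterra equation. By \eqref{ic:UV-minus-symmetry}, if $\Psi$ solves the Lax pair \eqref{eq:lax} at $-\lambda$, then $\sigma_3\Psi$ solves it at $\lambda$. The Lax coefficients are polynomial in $\lambda,\lambda^{-1}$, so they have no jump across $\dot\Gamma_j$. Hence for $\lambda\in\dot\Gamma_j\setminus\{0\}$ the boundary value $\sigma_3\Psi_1^{(1)}((-\lambda)_+)$ is a solution of the $x$- and $t$-equations at the point $\lambda$. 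We compare it with $\Psi_1^{(1)}(\lambda_+)$, and do the same for $\Psi_2^{(2)}$.

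First, the asymptotics. By Proposition~\ref{prop:inner-cut-column-analyticity}, $\widetilde\Psi_1^{(1)}$ extends continuously to the upper side of $\dot\Gamma_1$. On that side the Volterra equation \eqref{ic:volterra-tildePsi} for the first column is still well defined: the exponential factor $e^{\frac{\ii l_1}{2}\int_x^\xi\rho\,\sigma_3}$ in the $(2,1)$ entry is bounded for $\xi<x$, because $\ii l_1(\lambda_+)$ is real (by \eqref{ic:l-symmetry-d}) and has the sign inherited from $\mathbb C^+$. Therefore $\widetilde\Psi_1^{(1)}(\lambda_+)\to e_1$ as $x\to-\infty$. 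Consequently
\[
\Psi_1^{(1)}(\lambda_+)=G^{-1}\bigl(H_1^{-1}(\lambda_+)\bigr)^{(1)}e^{-f_1(\lambda_+)}\bigl(1+o(1)\bigr),\qquad x\to-\infty .
\]
Now apply Lemma~\ref{lem:ic-HR-symmetries}. By \eqref{ic:H-cut-minus-first}, $\bigl(H_1^{-1}(\lambda_+)\bigr)^{(1)}=-\sigma_3\bigl(H_1^{-1}((-\lambda)_+)\bigr)^{(1)}$, and by \eqref{ic:R-cut-minus}, $f_1((-\lambda)_+)=f_1(\lambda_+)$. Also $G^{-1}$ commutes with $\sigma_3$ and $G\to I$. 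So the leading term of $-\sigma_3\Psi_1^{(1)}((-\lambda)_+)$ equals that of $\Psi_1^{(1)}(\lambda_+)$. In the same way, \eqref{ic:H-cut-minus-second}, which carries the sign $+$, gives matching asymptotics for $\Psi_2^{(2)}$ as $x\to+\infty$.

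To turn matching asymptotics into equality, I work at the level of the normalized functions and use uniqueness of the integral equation. Set $\phi(x):=H_1(\lambda_+)\,G\,\bigl(-\sigma_3\Psi_1^{(1)}((-\lambda)_+)\bigr)e^{f_1(\lambda_+)}$. Since $-\sigma_3\Psi_1^{(1)}((-\lambda)_+)$ solves the $x$-equation at $\lambda_+$, $\phi$ satisfies the first column of \eqref{ic:commutator-system} at $\lambda_+$ and tends to $e_1$ as $x\to-\infty$. Integrating that differential equation from $-\infty$ shows that $\phi$ satisfies the column Volterra equation \eqref{ic:volterra-tildePsi} at $\lambda_+$. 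Its kernel is bounded there and $\delta_1\in L^1$ by \eqref{ic:standing-direct-assumptions}, so the bounded solution is unique, and $\phi=\widetilde\Psi_1^{(1)}(\lambda_+)$. Undoing the transformation gives the first identity. The $t$-dependence needs no separate argument, because both sides solve the $t$-equation and are fixed by the $x$-problem at each $t$.

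The main obstacle is the middle step: the convergence $\widetilde\Psi_1^{(1)}(\lambda_+)\to e_1$ with enough uniformity to invoke uniqueness. This needs the sign of $\ii l_1(\lambda_+)$ to control $e^{\ii l_1\int_x^\xi\rho}$ for $\xi<x$, and it needs $H_1^{-1}(\lambda_+)$ to be finite away from $0$ and $\pm A_1^{-1}$. Both facts come from Lemma~\ref{lem:ell-symmetries} and Remark~\ref{rem:ic-H-at-zero}, which is why the point $\lambda=0$ is excluded. If the direct argument is awkward, I would instead write the Volterra equation \eqref{ic:volterra-Psi-b} with the entire propagator $P_j$ of Lemma~\ref{lem:entire-background-propagator}. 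Formula \eqref{ic:background-propagator-formula} shows directly that $P_j(x,y,\lambda)=\sigma_3P_j(x,y,-\lambda)\sigma_3$, since $\cos\theta_j$ and $\sin\theta_j/l_j$ are even functions of $\lambda$. Then $-\sigma_3\Psi^{\mathrm b(1)}_1((-\lambda)_+)$ satisfies the same integral equation as $\Psi^{\mathrm b(1)}_1(\lambda_+)$ with the same inhomogeneous term, by \eqref{ic:H-cut-minus-first} and \eqref{ic:R-cut-minus}, and uniqueness finishes the proof.
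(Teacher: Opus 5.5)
Your proposal is correct and follows the paper's own argument. That argument has three steps: the symmetry \eqref{ic:UV-minus-symmetry} of the single-valued Lax coefficients, matching of the normalizations at $x\to(-1)^j\infty$ through \eqref{ic:R-cut-minus}, \eqref{ic:H-cut-minus-first} and \eqref{ic:H-cut-minus-second}, and uniqueness for the Volterra equation; you supply the details the paper leaves implicit, namely boundedness of the kernel on the upper side of the cut and the passage from the ODE back to the integral equation.

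One small correction to your last paragraph: $H_j^{-1}$ has finite boundary values at $\lambda=0$ by Remark~\ref{rem:ic-H-at-zero}. The point $0$ is excluded instead because $\eul^{\mp f_j}$ has an essential singularity there, coming from the $t/\lambda^2$ term in $f_j$.
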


\begin{proof}
The coefficient matrices in \eqref{eq:lax} are single-valued functions
of $\lambda$. Therefore, \eqref{ic:UV-minus-symmetry} remains valid
when $\lambda$ and $-\lambda$ are regarded as points on the upper side
of $\Gamma_j$. Consequently,
$ -\sigma_3\Psi_1^{(1)}((-\lambda)_+)$
and
 $\sigma_3\Psi_2^{(2)}((-\lambda)_+)$
satisfy the same Lax equations \eqref{eq:lax} as
$\Psi_1^{(1)}(\lambda_+)$ and
$\Psi_2^{(2)}(\lambda_+)$, respectively. By
\eqref{ic:R-cut-minus}, \eqref{ic:H-cut-minus-first}, and
\eqref{ic:H-cut-minus-second}, each pair has the same 
normalization. The result, therefore, follows from the uniqueness of the solution to the corresponding Volterra
integral equation.
\end{proof}

\begin{corollary}\label{cor:ic-c11-minus-upper-cut}
For $\lambda\in\dot\Gamma_2$,
\begin{equation}\label{ic:c11-minus-upper-cut}
 c_{11}(\lambda_+)
 =
 c_{11}((-\lambda)_+).
\end{equation}
\end{corollary}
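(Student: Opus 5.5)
The plan is to combine the determinant representation of $c_{11}$ from Proposition~\ref{prop:scattering-matrix} with the third symmetry in Proposition~\ref{prop:ic-Psi-minus-upper-cut}, both evaluated on the upper side of the cut. The key step is that the two sign factors $-\sigma_3$ and $\sigma_3$ cancel in the determinant.

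\emph{Step 1: the determinant formula on the upper side of the cut.} On $\dot\Sigma_1$ we have $c_{11}=\det\bigl(\Psi_1^{(1)},\Psi_2^{(2)}\bigr)$. Both columns are analytic in $\mathbb C^+$ by Proposition~\ref{prop:inner-cut-column-analyticity} and \eqref{ic:Psi-tildePsi-relation}, so the right-hand side is analytic in $\mathbb C^+$. The difference between it and the analytic continuation of $c_{11}$ vanishes on $\dot\Sigma_1$, hence vanishes in $\mathbb C^+$ by Lemma~\ref{lem:ic-boundary-uniqueness}. For $\lambda\in\dot\Gamma_2\subset\dot\Gamma_1\setminus\{\pm A_2^{-1}\}$, both $\Psi_1^{(1)}$ and $\Psi_2^{(2)}$ have continuous upper boundary values, since $\dot\Gamma_2\subset\dot\Gamma_1$. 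Passing to the limit therefore gives
\[
 c_{11}(\lambda_+)=\det\bigl(\Psi_1^{(1)}(\lambda_+),\Psi_2^{(2)}(\lambda_+)\bigr),
 \qquad \lambda\in\dot\Gamma_2 .
\]
Because the left-hand side does not depend on $(x,t)$, we may evaluate at any fixed $(x,t)$.

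\emph{Step 2: apply the symmetry.} Take $\lambda\in\dot\Gamma_2\setminus\{0\}$. Then $-\lambda$ also lies in $\dot\Gamma_2$, and Proposition~\ref{prop:ic-Psi-minus-upper-cut} applies to both columns, because $\dot\Gamma_2\setminus\{0\}\subset\dot\Gamma_1\setminus\{0\}$. Using multilinearity of the determinant,
\[
 \det\bigl(-\sigma_3 a,\sigma_3 b\bigr)=-\det(\sigma_3)\det(a,b)=\det(a,b),
\]
with $a=\Psi_1^{(1)}((-\lambda)_+)$ and $b=\Psi_2^{(2)}((-\lambda)_+)$. By Step 1 applied at $-\lambda$, the right-hand side equals $c_{11}((-\lambda)_+)$. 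The case $\lambda=0$ is trivial, since then $\lambda=-\lambda$ (alternatively it follows by continuity).

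\emph{Main obstacle.} The only delicate point is the justification in Step 1. One must check that the determinant representation holds for the boundary values on the inner cut $\dot\Gamma_2$, and not merely on $\dot\Sigma_1$. One must also check that both columns are taken from the same (upper) side, since this is exactly the setting in which Proposition~\ref{prop:ic-Psi-minus-upper-cut} is stated. The branch points $\pm A_2^{-1}$, where continuity of $c_{11}$ may fail, are excluded automatically because $\dot\Gamma_2$ is open.
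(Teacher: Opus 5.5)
Your proof is correct and follows the paper's argument: you substitute the upper-cut symmetry of Proposition~\ref{prop:ic-Psi-minus-upper-cut} into the determinant representation $c_{11}=\det\bigl(\Psi_1^{(1)},\Psi_2^{(2)}\bigr)$, and $\det\sigma_3=-1$ cancels the minus sign in the first column. Your Step~1, which checks that the determinant representation holds for the upper boundary values on $\dot\Gamma_2$, is extra care that the paper takes for granted.
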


\begin{proof}
Substituting \eqref{ic:Psi-minus-upper-cut} into the determinant
representation of $c_{11}$ in \eqref{ic:scattering-determinants}
proves the result. Indeed, the minus sign in the first column is
canceled by $\det\sigma_3=-1$.
\end{proof}

\begin{proposition}\label{prop:ic-Psi-minus-lower-cut}
For the lower boundary values,
\begin{subequations}\label{ic:Psi-minus-lower-cut}
\begin{align}
 \Psi_2^{(1)}(\lambda_-)
 &=
 \sigma_3\Psi_2^{(1)}((-\lambda)_-),
 &&\lambda\in\dot\Gamma_2\setminus\{0\},\\
 \Psi_1^{(2)}(\lambda_-)
 &=
 -\sigma_3\Psi_1^{(2)}((-\lambda)_-),
 &&\lambda\in\dot\Gamma_1\setminus\{0\}.
\end{align}
\end{subequations}
\end{proposition}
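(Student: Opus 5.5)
The plan mirrors the proof of Proposition~\ref{prop:ic-Psi-minus-upper-cut}, now on the lower sides of the cuts. Because $\mathsf U$ and $\mathsf V$ are single-valued in $\lambda$, the symmetry \eqref{ic:UV-minus-symmetry} also holds when $\lambda$ and $-\lambda$ are read as points on the lower side of $\Gamma_j$. Hence, if $\Psi_2^{(1)}((-\lambda)_-)$ solves \eqref{eq:lax} at the spectral point $-\lambda$, then $\sigma_3\Psi_2^{(1)}((-\lambda)_-)$ solves \eqref{eq:lax} at $\lambda$, and likewise for $-\sigma_3\Psi_1^{(2)}((-\lambda)_-)$. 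A constant scalar factor $\pm1$ does not affect this. So each right-hand side in \eqref{ic:Psi-minus-lower-cut} solves the same Lax pair as the corresponding left-hand side, which is a lower boundary value. By Proposition~\ref{prop:inner-cut-column-analyticity}, these lower boundary values exist on $\dot\Gamma_2$ and $\dot\Gamma_1$, respectively.

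The next step is to compare the normalizations at $x\to(-1)^j\infty$. There the relevant column behaves like $\bigl(H_j^{-1}(\lambda_-)\bigr)^{(k)}\eul^{\mp f_j(\lambda_-)}$, with $G\to I$ and $\widetilde\Psi_j\to I$. By \eqref{ic:R-cut-minus}, $R_j((-\lambda)_-)=R_j(\lambda_-)$, so the exponential factors agree. For the columns of $H_j^{-1}$, I would use \eqref{ic:H-cut-minus-first} and \eqref{ic:H-cut-minus-second} with the lower sign.

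For $j=2$ and the first column, \eqref{ic:H-cut-minus-first} gives $\bigl(H_2^{-1}(\lambda_-)\bigr)^{(1)}=\sigma_3\bigl(H_2^{-1}((-\lambda)_-)\bigr)^{(1)}$. This matches the factor $+\sigma_3$ in the first identity.

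For $j=1$ and the second column, \eqref{ic:H-cut-minus-second} gives $\bigl(H_1^{-1}(\lambda_-)\bigr)^{(2)}=-\sigma_3\bigl(H_1^{-1}((-\lambda)_-)\bigr)^{(2)}$. This matches the factor $-\sigma_3$ in the second identity.

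Thus both sides of each identity satisfy the same Volterra equation, the column version of \eqref{ic:volterra-Psi-b} restricted to the lower side of the cut. Uniqueness of its solution then gives \eqref{ic:Psi-minus-lower-cut}.

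The main obstacle is making the Volterra uniqueness argument legitimate on the cut itself. On the lower side of $\dot\Gamma_j$, $l_j(\lambda_-)$ is purely imaginary by \eqref{ic:l-symmetry-d}, so the exponential kernel is not oscillatory. I would need to confirm two things: that for $\lambda_-$ the relevant column's kernel decays (or at least stays bounded) in the direction of integration, which is exactly the statement that $\widetilde\Psi_2^{(1)}$ and $\widetilde\Psi_1^{(2)}$ extend to the lower sides in Proposition~\ref{prop:inner-cut-column-analyticity}; and that this condition is invariant under $\lambda_-\mapsto(-\lambda)_-$. The second point follows from $l_j((-\lambda)_-)=l_j(\lambda_-)$, so the kernels at $\pm\lambda$ coincide after conjugation by $\sigma_3$. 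Conjugating \eqref{ic:volterra-Psi-b} by $\sigma_3$ leaves $\delta_j\sigma_3$ unchanged, since diagonal matrices commute. Using the explicit formula \eqref{ic:background-propagator-formula}, one checks $P_j(x,y,t,-\lambda)=\sigma_3P_j(x,y,t,\lambda)\sigma_3$, because the off-diagonal entries are odd in $\lambda$ while $\theta_j$ and $l_j$ take equal values at $\lambda_-$ and $(-\lambda)_-$. This identity settles the argument directly at the level of the integral equation.
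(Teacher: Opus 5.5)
Your proposal is correct and follows essentially the same route as the paper. Both use the $\lambda\mapsto-\lambda$ symmetry of the single-valued coefficients $\mathsf U,\mathsf V$ on the lower side of the cut. Both match the normalizations via \eqref{ic:R-cut-minus} and the lower-sign cases of \eqref{ic:H-cut-minus-first}--\eqref{ic:H-cut-minus-second}, and both conclude by uniqueness for the Volterra equation. Your additional identity $P_j(x,y,t,-\lambda)=\sigma_3P_j(x,y,t,\lambda)\sigma_3$ just makes explicit the uniqueness step that the paper leaves implicit. That uniqueness should be understood in the class where $\Psi_j^{\mathrm b}\eul^{\pm f_j}$ is bounded, i.e.\ for the $\widetilde\Psi_j$ form, because $P_j$ itself grows exponentially on the cut.
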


\begin{proof}
The proof is analogous to that of
Proposition~\ref{prop:ic-Psi-minus-upper-cut}. Since the coefficient matrices in
\eqref{eq:lax} are single-valued functions of $\lambda$,
\eqref{ic:UV-minus-symmetry} remains valid when $\lambda$ and
$-\lambda$ are regarded as points on the lower side of $\Gamma_j$.
Consequently,
$\sigma_3\Psi_2^{(1)}((-\lambda)_-)$ and
$-\sigma_3\Psi_1^{(2)}((-\lambda)_-)$ satisfy the same Lax equations
\eqref{eq:lax} as $\Psi_2^{(1)}(\lambda_-)$ and
$\Psi_1^{(2)}(\lambda_-)$, respectively. By
\eqref{ic:R-cut-minus}, \eqref{ic:H-cut-minus-first}, and
\eqref{ic:H-cut-minus-second}, each pair has the same 
normalization. The result therefore follows from the uniqueness of
the solution to the corresponding Volterra integral equation.
\end{proof}

\begin{corollary}\label{cor:ic-c22-minus-lower-cut}
For $\lambda\in\dot\Gamma_2$,
\begin{equation}\label{ic:c22-minus-lower-cut}
 c_{22}(\lambda_-)
 =
 c_{22}((-\lambda)_-).
\end{equation}
\end{corollary}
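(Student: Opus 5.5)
The plan mirrors the proof of Corollary~\ref{cor:ic-c11-minus-upper-cut}, now using the lower boundary values. First we need a determinant representation of $c_{22}$ valid on the lower side of $\dot\Gamma_2$. By Proposition~\ref{prop:scattering-matrix}, $c_{22}=\det\bigl(\Psi_2^{(1)},\Psi_1^{(2)}\bigr)$ on $\dot\Sigma_1$. Both columns extend continuously to the lower side of $\dot\Gamma_2$: $\Psi_2^{(1)}$ and $\Psi_1^{(2)}$ are analytic in $\mathbb C^-$, and they have lower boundary values on $\dot\Gamma_2$ and $\dot\Gamma_1\supset\dot\Gamma_2$ respectively (Proposition~\ref{prop:inner-cut-column-analyticity}, transported through \eqref{ic:Psi-tildePsi-relation}). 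The determinant is therefore analytic in $\mathbb C^-$ and agrees with the analytic continuation of $c_{22}$ there, since both coincide on $\dot\Sigma_1$ (Lemma~\ref{lem:ic-boundary-uniqueness}). Taking lower boundary values gives
\[
 c_{22}(\lambda_-)=\det\bigl(\Psi_2^{(1)}(\lambda_-),\Psi_1^{(2)}(\lambda_-)\bigr),\qquad \lambda\in\dot\Gamma_2\setminus\{0\}.
\]

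Next, substitute both identities of Proposition~\ref{prop:ic-Psi-minus-lower-cut} into this determinant. This gives
\[
 c_{22}(\lambda_-)=\det\bigl(\sigma_3\Psi_2^{(1)}((-\lambda)_-),\,-\sigma_3\Psi_1^{(2)}((-\lambda)_-)\bigr)
 =-\det\sigma_3\,\det\bigl(\Psi_2^{(1)}((-\lambda)_-),\Psi_1^{(2)}((-\lambda)_-)\bigr).
\]
Since $\det\sigma_3=-1$, the two signs cancel, and the right-hand side equals $c_{22}((-\lambda)_-)$, again by the boundary representation above. The representation applies at $-\lambda$ because $-\lambda\in\dot\Gamma_2$ whenever $\lambda\in\dot\Gamma_2$.

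The only delicate point is the point $\lambda=0$, which is excluded in Proposition~\ref{prop:ic-Psi-minus-lower-cut}. For $\lambda\in\dot\Gamma_2\setminus\{0\}$ the argument above applies directly. At $\lambda=0$ the identity \eqref{ic:c22-minus-lower-cut} is trivial, because then $\lambda$ and $-\lambda$ coincide, so no continuity argument is needed. I expect the main point requiring care to be justifying the continuation of the determinant formula to the cut. This reduces to the boundary-value statements already collected in Propositions~\ref{prop:inner-cut-column-analyticity} and~\ref{prop:scattering-matrix}.
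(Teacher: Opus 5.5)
Your proof is correct and follows the paper's own argument. You substitute the lower-side symmetries \eqref{ic:Psi-minus-lower-cut} into the determinant representation $c_{22}=\det\bigl(\Psi_2^{(1)},\Psi_1^{(2)}\bigr)$ and use $\det\sigma_3=-1$ to cancel the sign coming from the second column. Your justification of the boundary-value determinant formula on the lower side of $\dot\Gamma_2$, and your observation that the case $\lambda=0$ is trivial, are sound details that the paper leaves implicit.
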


\begin{proof}
Similarly to Corollary \ref{cor:ic-c11-minus-upper-cut}, substituting \eqref{ic:Psi-minus-lower-cut} into the determinant
representation of $c_{22}$ in \eqref{ic:scattering-determinants}
proves the result. 
\end{proof}

\begin{remark}
   The corresponding symmetries for the columns of
$H_j^{-1}\widetilde\Psi_j$ are
\begin{subequations}\label{ic:HtildePsi-minus-cut}
\begin{align}
 \bigl(H_1^{-1}\widetilde\Psi_1\bigr)^{(1)}(\lambda_+)
 &=
 -\sigma_3
 \bigl(H_1^{-1}\widetilde\Psi_1\bigr)^{(1)}
 ((-\lambda)_+),
 &&\lambda\in\dot\Gamma_1\\
 \bigl(H_2^{-1}\widetilde\Psi_2\bigr)^{(2)}(\lambda_+)
 &=
 \sigma_3
 \bigl(H_2^{-1}\widetilde\Psi_2\bigr)^{(2)}
 ((-\lambda)_+),
 &&\lambda\in\dot\Gamma_2\\
 \bigl(H_2^{-1}\widetilde\Psi_2\bigr)^{(1)}(\lambda_-)
 &=
 \sigma_3
 \bigl(H_2^{-1}\widetilde\Psi_2\bigr)^{(1)}
 ((-\lambda)_-),
 &&\lambda\in\dot\Gamma_2\\
 \bigl(H_1^{-1}\widetilde\Psi_1\bigr)^{(2)}(\lambda_-)
 &=
 -\sigma_3
 \bigl(H_1^{-1}\widetilde\Psi_1\bigr)^{(2)}
 ((-\lambda)_-),
 &&\lambda\in\dot\Gamma_1.
\end{align}
\end{subequations}
These relations follow from
\eqref{ic:Psi-tildePsi-relation},
\eqref{ic:Psi-minus-upper-cut},
\eqref{ic:Psi-minus-lower-cut},
\eqref{ic:G-symmetry-properties}, and
\eqref{ic:R-cut-minus}. Unlike $\Psi_j$, the columns
of $H_j^{-1}\widetilde\Psi_j$ admit finite boundary values at
$\lambda=0$ by Remark~\ref{rem:ic-H-at-zero} and
Proposition~\ref{prop:inner-cut-column-analyticity}. Thus, the
relations extend to $\lambda=0$ by continuity.
\end{remark}

\subsubsection*{Fourth symmetry: $\lambda_+ \longleftrightarrow \lambda_+$}

\begin{proposition}\label{prop:ic-Psi-conjugation-upper-cut}
For the upper boundary values, we have 
\begin{equation}\label{ic:Psi-conjugation-upper-cut}
 \Psi_j^{(j)}(\lambda_+)
 =
 -\overline{\Psi_j^{(j)}(\lambda_+)},
 \qquad
 \lambda\in\dot\Gamma_j\setminus\{0\},
\end{equation}
for $j=1,2$.
\end{proposition}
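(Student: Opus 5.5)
Following the pattern of Propositions~\ref{prop:ic-Psi-minus-upper-cut} and~\ref{prop:ic-Psi-minus-lower-cut}, we show that both sides of \eqref{ic:Psi-conjugation-upper-cut} solve the same Lax equations and have the same normalization at $x\to(-1)^j\infty$, and then conclude by uniqueness for the Volterra equation. Fix $\lambda\in\dot\Gamma_j\setminus\{0\}$ and regard it as a point on the upper side of the cut. Since $\lambda$ is real and the potentials $m,n,u,v$ are real, the coefficient matrices $\mathsf U(\lambda)$ and $\mathsf V(\lambda)$ are real. These matrices are single-valued in $\lambda$, so the side of the cut plays no role for them. Hence, if the column $\Psi_j^{(j)}(\lambda_+)$ solves \eqref{eq:lax}, then so does its complex conjugate, and therefore so does $-\overline{\Psi_j^{(j)}(\lambda_+)}$.

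Next we compare normalizations. By \eqref{ic:Psi-tildePsi-relation}, $\Psi_j^{(j)}(\lambda_+)=G^{-1}\bigl(H_j^{-1}\widetilde\Psi_j\bigr)^{(j)}(\lambda_+)$ multiplied by the scalar exponential $\eul^{\mp f_j(\lambda_+)}$ (sign $-$ for $j=1$, $+$ for $j=2$), with $\widetilde\Psi_j\to I$ as $x\to(-1)^j\infty$. The factor $G$ is real by \eqref{ic:G-symmetry-properties}. By \eqref{ic:R-cut-real}, $f_j(\lambda_+)$ is real on the cut, so the exponential factor is real. By \eqref{ic:H-cut-upper-conjugation}, $H_j^{-1}(\lambda_+)=-\overline{H_j^{-1}(\lambda_+)}$, so the relevant column of $H_j^{-1}$ is purely imaginary. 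It follows that the asymptotic profile $G^{-1}\bigl(H_j^{-1}(\lambda_+)\bigr)^{(j)}\eul^{\mp f_j(\lambda_+)}$ equals minus its own conjugate. In other words, $-\overline{\Psi_j^{(j)}(\lambda_+)}$ has exactly the same normalization as $\Psi_j^{(j)}(\lambda_+)$.

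The main obstacle is making the uniqueness statement rigorous for a single column on the cut. There $l_j(\lambda_+)$ is purely imaginary, so the exponential kernels are no longer oscillatory. The column $\Psi_j^{(j)}(\lambda_+)$ is defined as the boundary value of the solution of the Volterra equation \eqref{ic:volterra-tildePsi}, which Proposition~\ref{prop:inner-cut-column-analyticity} guarantees exists. To handle this, we would conjugate the Volterra equation \eqref{ic:volterra-Psi-b} for the column directly. Its kernel $P_j$ is given by \eqref{ic:background-propagator-formula}; it is entire in $\lambda$ and has real Taylor coefficients, so it is real for real $\lambda$. Moreover, $\delta_j$ is real. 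Consequently, the conjugated column $-\overline{\Psi_j^{(j)}(\lambda_+)}$ satisfies the same column Volterra equation with the same inhomogeneous term. The decaying direction of the exponential on the upper side is unchanged under conjugation, because $f_j(\lambda_+)$ is real. Uniqueness of the solution of this Volterra equation, which is the same one used to construct the boundary value, then gives \eqref{ic:Psi-conjugation-upper-cut} for $j=1,2$. The point $\lambda=0$ is excluded because $H_j^{-1}$ changes its form there.
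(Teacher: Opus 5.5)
Your proposal is correct and follows the paper's argument: $-\overline{\Psi_j^{(j)}(\lambda_+)}$ solves the same real-coefficient Lax pair and, by \eqref{ic:H-cut-upper-conjugation} and \eqref{ic:R-cut-real}, has the same normalization, so uniqueness for the Volterra equation gives the claim; your conjugation of \eqref{ic:volterra-Psi-b} with the real propagator $P_j$ only makes explicit the uniqueness step that the paper leaves implicit. One small correction to your closing remark: $\lambda=0$ is excluded not because of $H_j^{-1}$, whose upper boundary value $H_j^{-1}(0_+)=\ii\sigma_1$ is finite and still purely imaginary, but because $f_j$ and $\mathsf V$ contain $\lambda^{-2}$ terms, so for $t>0$ the column $\Psi_j^{(j)}$ itself is singular there.
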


\begin{proof}
The coefficient
matrices $\mathsf U(\lambda)$ and $\mathsf V(\lambda)$ in
\eqref{eq:lax} are single-valued functions of $\lambda$ and have real
entries for real $\lambda$. Therefore,
$-\overline{\Psi_j^{(j)}(\lambda_+)}$ satisfies the same Lax pair
\eqref{eq:lax} as $\Psi_j^{(j)}(\lambda_+)$.
Moreover,
\eqref{ic:H-cut-upper-conjugation} and \eqref{ic:R-cut-real} imply
\[
 -\overline{
 \bigl(H_j^{-1}(\lambda_+)\eul^{-R_j(\lambda_+)}\bigr)^{(j)}
 }
 =
 \bigl(H_j^{-1}(\lambda_+)\eul^{-R_j(\lambda_+)}\bigr)^{(j)}.
\]
Thus the two solutions have the same normalization, and uniqueness of the solution
to the corresponding Volterra integral equation proves the result.
\end{proof}

\begin{corollary}\label{cor:ic-upper-cut-conjugation}
The following relations hold.
\begin{enumerate}
 \item For $\lambda\in\dot\Gamma_2$,
 \begin{equation}\label{ic:c11-real-upper-cut}
  c_{11}(\lambda_+)
  =
  \overline{c_{11}(\lambda_+)}.
 \end{equation}

 \item For $j=1,2$,
 \begin{equation}\label{ic:HtildePsi-conjugation-upper-cut}
  \bigl(H_j^{-1}\widetilde\Psi_j\bigr)^{(j)}(\lambda_+)
  =
  -\overline{
  \bigl(H_j^{-1}\widetilde\Psi_j\bigr)^{(j)}(\lambda_+)},
  \qquad
  \lambda\in\dot\Gamma_j.
 \end{equation}
\end{enumerate}
\end{corollary}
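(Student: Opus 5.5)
Both parts follow directly from Proposition~\ref{prop:ic-Psi-conjugation-upper-cut}, combined with the determinant representation \eqref{ic:scattering-determinants} and the relation \eqref{ic:Psi-tildePsi-relation}.

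\emph{Part (1).} For $\lambda\in\dot\Gamma_2\subset\dot\Gamma_1$, both $\Psi_1^{(1)}(\lambda_+)$ and $\Psi_2^{(2)}(\lambda_+)$ are defined. By Proposition~\ref{prop:scattering-matrix}, $c_{11}$ extends continuously to the upper side of $\dot\Gamma_1\setminus\{\pm A_2^{-1}\}$. Passing to the upper boundary value in $c_{11}=\det(\Psi_1^{(1)},\Psi_2^{(2)})$ gives $c_{11}(\lambda_+)=\det\bigl(\Psi_1^{(1)}(\lambda_+),\Psi_2^{(2)}(\lambda_+)\bigr)$. We then apply \eqref{ic:Psi-conjugation-upper-cut} to each column with $j=1$ and $j=2$; this is valid since $\lambda\in\dot\Gamma_2\setminus\{0\}$ lies in both $\dot\Gamma_1$ and $\dot\Gamma_2$. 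The determinant becomes $\det(-\overline{\Psi_1^{(1)}},-\overline{\Psi_2^{(2)}})=\overline{c_{11}(\lambda_+)}$, because the two minus signs cancel. The point $\lambda=0$ is handled by continuity of $c_{11}(\lambda_+)$ at $0$; this holds because $c_{11}$ is independent of $x,t$ and can be written through the $H_j^{-1}\widetilde\Psi_j$ columns, which have finite boundary values at $0$. Alternatively, one can read the statement on $\dot\Gamma_2\setminus\{0\}$ and extend by continuity.

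\emph{Part (2).} By \eqref{ic:Psi-tildePsi-relation},
\[
\bigl(H_j^{-1}\widetilde\Psi_j\bigr)^{(j)}(\lambda_+)=G\,\Psi_j^{(j)}(\lambda_+)\,\eul^{\mp f_j(\lambda_+)},
\]
where the sign is $-$ for $j=1$ and $+$ for $j=2$, since $\eul^{-R_j}$ is diagonal. Here $G$ is real by \eqref{ic:G-symmetry-properties}, and $f_j(\lambda_+)$ is real by \eqref{ic:R-cut-real}. Hence taking complex conjugates and using \eqref{ic:Psi-conjugation-upper-cut} yields \eqref{ic:HtildePsi-conjugation-upper-cut} for $\lambda\in\dot\Gamma_j\setminus\{0\}$.

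The only delicate point is $\lambda=0$, where $\Psi_j$ itself may be singular through $H_j^{-1}\eul^{-R_j}$ (since $f_j$ contains $\lambda^{-2}t$). To handle it, we invoke Remark~\ref{rem:ic-H-at-zero} and Proposition~\ref{prop:inner-cut-column-analyticity}: the column $(H_j^{-1}\widetilde\Psi_j)^{(j)}$ has a finite, continuous upper boundary value at $0$. The identity then passes to the limit $\lambda\to0$, exactly as in the remark following Corollary~\ref{cor:ic-c22-minus-lower-cut}. I expect this continuity bookkeeping at $0$, rather than any algebra, to be the only real obstacle.
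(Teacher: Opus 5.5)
Your proof is correct and follows the paper's own argument. Part (1) uses the determinant representation of $c_{11}$ together with \eqref{ic:Psi-conjugation-upper-cut}, and part (2) uses \eqref{ic:Psi-tildePsi-relation} together with the reality of $G$ and of $f_j(\lambda_+)$; your continuity argument at $\lambda=0$ fills in a point the paper leaves implicit.

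One harmless slip: $\eul^{R_j}=\operatorname{diag}(\eul^{f_j},\eul^{-f_j})$, so the factor is $\eul^{+f_1}$ for $j=1$ and $\eul^{-f_2}$ for $j=2$, the reverse of what you wrote. This does not affect the argument, since only the reality of the exponential matters.
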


\begin{proof}
Combining the determinant representation of $c_{11}$ in
\eqref{ic:scattering-determinants} with
\eqref{ic:Psi-conjugation-upper-cut} yields
\eqref{ic:c11-real-upper-cut}. Furthermore,
\eqref{ic:Psi-tildePsi-relation},
\eqref{ic:G-symmetry-properties},
\eqref{ic:R-cut-real}, and
\eqref{ic:Psi-conjugation-upper-cut} imply
\eqref{ic:HtildePsi-conjugation-upper-cut}.
\end{proof}

\begin{proposition}\label{prop:ic-Psi-conjugation-lower-cut}
For the lower boundary values,
\begin{subequations}\label{ic:Psi-conjugation-lower-cut}
\begin{align}
 \Psi_1^{(2)}(\lambda_-)
 &=
 \overline{\Psi_1^{(2)}(\lambda_-)},
 &&\lambda\in\dot\Gamma_1\setminus\{0\},\\
 \Psi_2^{(1)}(\lambda_-)
 &=
 \overline{\Psi_2^{(1)}(\lambda_-)},
 &&\lambda\in\dot\Gamma_2\setminus\{0\}.
\end{align}
\end{subequations}
\end{proposition}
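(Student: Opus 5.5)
The argument mirrors the proof of Proposition~\ref{prop:ic-Psi-conjugation-upper-cut}, with \eqref{ic:H-cut-lower-conjugation} in place of \eqref{ic:H-cut-upper-conjugation}. Fix $j$ and $\lambda\in\dot\Gamma_j\setminus\{0\}$, and let $k$ be the index of the column under consideration: $k=2$ for $j=1$ and $k=1$ for $j=2$. By Proposition~\ref{prop:inner-cut-column-analyticity}, the column $\Psi_j^{(k)}$ extends to the lower side of $\dot\Gamma_j$, so $\Psi_j^{(k)}(\lambda_-)$ is well defined.

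\textbf{Step 1: both sides solve the same Lax pair.} The coefficient matrices $\mathsf U(\lambda)$ and $\mathsf V(\lambda)$ in \eqref{eq:lax} are single-valued rational functions of $\lambda$. Since $(u,v)$ is real-valued, they have real entries for real $\lambda\neq0$. Hence, if $\Psi_j^{(k)}(\lambda_-)$ solves \eqref{eq:lax}, so does its complex conjugate $\overline{\Psi_j^{(k)}(\lambda_-)}$.

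\textbf{Step 2: both sides have the same normalization.} As $x\to(-1)^j\infty$ we have $G\to I$ and $\widetilde\Psi_j\to I$, so
\[
 \Psi_j^{(k)}(\lambda_-)\sim\bigl(H_j^{-1}(\lambda_-)\eul^{-R_j(\lambda_-)}\bigr)^{(k)}.
\]
By \eqref{ic:R-cut-real}, $R_j(\lambda_-)$ is real on the lower side. Indeed, \eqref{ic:R-cut-jump} gives $R_j(\lambda_-)=-R_j(\lambda_+)$, and $R_j(\lambda_+)$ is real. By \eqref{ic:H-cut-lower-conjugation}, $H_j^{-1}(\lambda_-)$ is real. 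Therefore the background solution column is real,
\[
 \overline{\bigl(H_j^{-1}(\lambda_-)\eul^{-R_j(\lambda_-)}\bigr)^{(k)}}=\bigl(H_j^{-1}(\lambda_-)\eul^{-R_j(\lambda_-)}\bigr)^{(k)}.
\]
This explains the plus sign here, in contrast to the minus sign for the upper side.

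\textbf{Step 3: uniqueness.} To make the uniqueness argument rigorous, I would work with the Volterra equation \eqref{ic:volterra-Psi-b} in the balanced gauge, restricted to the $k$-th column. Its ingredients are:
\begin{itemize}
 \item the inhomogeneous term $\bigl(\Psi^{\mathrm b}_{0,j}\bigr)^{(k)}(\lambda_-)$, which is real by Step~2;
 \item the kernel $P_j(x,\xi,t,\lambda)$, which is entire in $\lambda$ by Lemma~\ref{lem:entire-background-propagator} and, by formula \eqref{ic:background-propagator-formula}, real for real $\lambda$, since it involves only $\cos\theta_j$ and $\sin\theta_j/l_j$, both even in $l_j$;
 \item the potential $\delta_j\sigma_3$, which is real.
\end{itemize}
Consequently the complex conjugate of the solution solves the same Volterra equation. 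Uniqueness of its solution gives $\Psi_j^{\mathrm b,(k)}(\lambda_-)=\overline{\Psi_j^{\mathrm b,(k)}(\lambda_-)}$. Multiplying by the real matrix $G^{-1}$ and using \eqref{ic:Psi-j-def} then yields \eqref{ic:Psi-conjugation-lower-cut}.

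\textbf{Main obstacle.} The delicate point is justifying that the column solves \eqref{ic:volterra-Psi-b} at the boundary point $\lambda_-$ on the cut. This means checking that the kernel in \eqref{ic:volterra-tildePsi} stays bounded there. On the lower side of $\dot\Gamma_j$, $l_j$ is purely imaginary with the sign that gives decay (respectively, boundedness) of the exponential along the integration direction. This is precisely the continuity statement of Proposition~\ref{prop:inner-cut-column-analyticity}, and I would invoke it directly.
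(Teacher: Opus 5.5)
Your proposal is correct and follows the paper's proof. You use the same three ingredients: both vectors solve the Lax pair because it has real coefficients for real $\lambda\neq0$; \eqref{ic:H-cut-lower-conjugation} and \eqref{ic:R-cut-real} make the lower-side normalization real; and uniqueness for the Volterra equation finishes the argument. Your Step~3 checks that \eqref{ic:volterra-Psi-b} has a real inhomogeneous term, a real kernel $P_j$ and a real potential $\delta_j\sigma_3$; this just makes explicit the uniqueness step the paper leaves implicit. That uniqueness is understood in the natural exponentially weighted class, which conjugation preserves because $f_j(\lambda_-)$ is real.
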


\begin{proof}
The proof is analogous to that of
Proposition~\ref{prop:ic-Psi-conjugation-upper-cut}. In this case,
\eqref{ic:H-cut-lower-conjugation} and \eqref{ic:R-cut-real} show that
the lower-side normalization is real, and the result follows from
uniqueness of the solution to the corresponding Volterra integral equation.
\end{proof}

\begin{corollary}\label{cor:ic-lower-cut-conjugation}
The following relations hold.
\begin{enumerate}
 \item For $\lambda\in\dot\Gamma_2$,
 \begin{equation}\label{ic:c22-real-lower-cut}
  c_{22}(\lambda_-)
  =
  \overline{c_{22}(\lambda_-)}.
 \end{equation}

 \item For $\lambda\in\Gamma_j\setminus\{0\}$
 \begin{subequations}\label{ic:Psi-opposite-side-cut}
 \begin{align}
  \Psi_1^{(1)}(\lambda_+)
  &=
  -\ii\Psi_1^{(2)}(\lambda_-),
  &&\lambda\in\dot\Gamma_1\setminus\{0\},\\
  \Psi_2^{(2)}(\lambda_+)
  &=
  -\ii\Psi_2^{(1)}(\lambda_-),
  &&\lambda\in\dot\Gamma_2\setminus\{0\}.
 \end{align}
 \end{subequations}

 \item The scattering coefficients satisfy
 \begin{subequations}\label{ic:c-cut-relations}
 \begin{align}
  c_{11}(\lambda_+)
  &=
  c_{22}(\lambda_-),
  &&\lambda\in\dot\Gamma_2,
  \label{ic:c11-c22-cut}\\
  c_{11}(\lambda_+)
  &=
  -\ii c_{12}(\lambda_-),
  &&\lambda\in\dot\Sigma_0,
  \label{ic:c11-c12-band}\\
  c_{22}(\lambda_-)
  &=
  \ii c_{21}(\lambda_+),
  &&\lambda\in\dot\Sigma_0.
  \label{ic:c22-c21-band}
 \end{align}
 \end{subequations}

 \item  For $\lambda\in\Gamma_j$
 \begin{subequations}\label{ic:HtildePsi-conjugation-lower-cut}
 \begin{align}
  \bigl(H_1^{-1}\widetilde\Psi_1\bigr)^{(2)}(\lambda_-)
  &=
  \overline{
  \bigl(H_1^{-1}\widetilde\Psi_1\bigr)^{(2)}(\lambda_-)},
  &&\lambda\in\dot\Gamma_1,\\
  \bigl(H_2^{-1}\widetilde\Psi_2\bigr)^{(1)}(\lambda_-)
  &=
  \overline{
  \bigl(H_2^{-1}\widetilde\Psi_2\bigr)^{(1)}(\lambda_-)},
  &&\lambda\in\dot\Gamma_2.
 \end{align}
 \end{subequations}

 \item For $\lambda\in\Gamma_j$
 \begin{subequations}\label{ic:HtildePsi-opposite-side-cut}
 \begin{align}
  \bigl(H_1^{-1}\widetilde\Psi_1\bigr)^{(1)}(\lambda_+)
  &=
  -\ii
  \bigl(H_1^{-1}\widetilde\Psi_1\bigr)^{(2)}(\lambda_-),
  &&\lambda\in\dot\Gamma_1,\\
  \bigl(H_2^{-1}\widetilde\Psi_2\bigr)^{(2)}(\lambda_+)
  &=
  -\ii
  \bigl(H_2^{-1}\widetilde\Psi_2\bigr)^{(1)}(\lambda_-),
  &&\lambda\in\dot\Gamma_2.
 \end{align}
 \end{subequations}
\end{enumerate}
\end{corollary}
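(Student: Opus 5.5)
We would prove the five items in the order (1), (2), (3), (4)–(5). Item (3) rests on (2), and (4)–(5) are gauge translations of Proposition~\ref{prop:ic-Psi-conjugation-lower-cut} and of item (2).

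\emph{Item (1).} Since $\dot\Gamma_2\subset\dot\Gamma_1$, both $\Psi_2^{(1)}(\lambda_-)$ and $\Psi_1^{(2)}(\lambda_-)$ are defined on $\dot\Gamma_2\setminus\{0\}$. We use the determinant representation $c_{22}(\lambda_-)=\det\bigl(\Psi_2^{(1)}(\lambda_-),\Psi_1^{(2)}(\lambda_-)\bigr)$ from \eqref{ic:scattering-determinants}. By Proposition~\ref{prop:ic-Psi-conjugation-lower-cut} both columns are real, so the determinant is real. The point $\lambda=0$ is included by continuity of $c_{22}$ on the lower side of $\dot\Gamma_1\setminus\{\pm A_2^{-1}\}$ (Proposition~\ref{prop:scattering-matrix}).

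\emph{Item (2).} We argue as in Propositions~\ref{prop:ic-Psi-minus-upper-cut}--\ref{prop:ic-Psi-conjugation-lower-cut}. The coefficients $\mathsf U,\mathsf V$ are single-valued in $\lambda$, so $\Psi_1^{(1)}(\lambda_+)$ and $-\ii\Psi_1^{(2)}(\lambda_-)$ solve the same Lax pair. It then suffices to match their normalizations; uniqueness for the Volterra equation \eqref{ic:volterra-Psi-b} gives the identity, and the case $j=2$ is identical.
\begin{enumerate}[\textbullet]
\item The first has normalization $(H_1^{-1}(\lambda_+))^{(1)}\eul^{-f_1(\lambda_+)}$. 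The second has $-\ii(H_1^{-1}(\lambda_-))^{(2)}\eul^{f_1(\lambda_-)}$, and by \eqref{ic:R-cut-jump} the exponentials coincide.
\item With $h_j,\gamma_j$ as in the proof of Lemma~\ref{lem:ic-HR-symmetries}, we check
\[
h_j(\lambda_+)h_j(\lambda_-)=\frac{\lambda^2A_j^2}{1+A_j^2l_j^2}=1,
\]
using $l_j(\lambda_-)=-l_j(\lambda_+)$. Hence $(1,h_j(\lambda_-))^{\mathsf T}=h_j(\lambda_-)\,(h_j(\lambda_+),1)^{\mathsf T}$, and the matching reduces to the scalar identity $\gamma_j(\lambda_+)=-\ii\gamma_j(\lambda_-)h_j(\lambda_-)$.
\item Both sides of this identity have the same square; this is a routine computation using $A_j^2l_j^2=A_j^2\lambda^2-1$.
\item The sign is the main obstacle, since branches of the square root are involved. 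Both sides are continuous and nonvanishing on $\dot\Gamma_j\setminus\{0\}$, so the sign is constant on each half-interval. It suffices to check it as $\lambda\to0$ via Remark~\ref{rem:ic-H-at-zero}: there $\gamma_j(\lambda_+)\to\ii$, $\gamma_j(\lambda_-)\sim-\frac{A_j\lambda}{2}$ and $h_j(\lambda_-)\sim\frac{2}{A_j\lambda}$, so the right side tends to $\ii$ as well.
\end{enumerate}

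\emph{Item (3).} For \eqref{ic:c11-c22-cut} we insert item (2) into $c_{11}(\lambda_+)=\det\bigl(\Psi_1^{(1)}(\lambda_+),\Psi_2^{(2)}(\lambda_+)\bigr)$. This gives $(-\ii)^2\det\bigl(\Psi_1^{(2)},\Psi_2^{(1)}\bigr)(\lambda_-)=\det\bigl(\Psi_2^{(1)},\Psi_1^{(2)}\bigr)(\lambda_-)=c_{22}(\lambda_-)$. For \eqref{ic:c11-c12-band} and \eqref{ic:c22-c21-band} we work on $\dot\Sigma_0$, where $\Psi_2$ has no cut. We substitute $\Psi_{1,+}^{(1)}=-\ii\Psi_{1,-}^{(2)}$ into \eqref{ic:continued-column-scattering} and compare coefficients in the basis $\Psi_{2,-}^{(1)},\Psi_{2,+}^{(2)}$. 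This yields $c_{11,+}=-\ii c_{12,-}$ and $c_{21,+}=-\ii c_{22,-}$.

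\emph{Items (4) and (5).} We use \eqref{ic:Psi-tildePsi-relation}, $H_j^{-1}\widetilde\Psi_j=G\Psi_j\eul^{R_j}$, columnwise; on the relevant side the columns carry the scalar factors $\eul^{\pm f_j}$.
\begin{enumerate}[\textbullet]
\item For (4): on the lower side $G$ is real and $R_j$ is real by \eqref{ic:R-cut-real}, so Proposition~\ref{prop:ic-Psi-conjugation-lower-cut} transfers directly.
\item For (5): item (2) together with \eqref{ic:R-cut-jump} gives the identity, because the factor $\eul^{-f_j(\lambda_+)}$ on the first column equals $\eul^{f_j(\lambda_-)}$ on the second.
\item Finally, both statements extend to $\lambda=0$ by continuity, exactly as in the remark following Corollary~\ref{cor:ic-c22-minus-lower-cut}.
\end{enumerate}
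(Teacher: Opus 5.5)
Your proof is correct. Items (1), (3) and (4) follow the paper, but you reach the key relation \eqref{ic:Psi-opposite-side-cut} of item (2) by a genuinely different route.

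\emph{The paper's route.} It gets item (2) in one line by passing to boundary values in the $\mathbb C^+$ symmetry \eqref{ic:Psi-conjugation-columns}. This gives $\Psi_1^{(1)}(\lambda_+)=-\ii\overline{\Psi_1^{(2)}(\lambda_-)}$, and similarly for $j=2$. The conjugation is then removed using the lower-side reality \eqref{ic:Psi-conjugation-lower-cut}. Item (5) is obtained the same way, from boundary values of \eqref{ic:HtildePsi-conjugation-columns} combined with item (4).

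\emph{Your route.} You rerun the Volterra-uniqueness argument directly on the cut. This forces you to check by hand that $\gamma_j(\lambda_+)=-\ii\gamma_j(\lambda_-)h_j(\lambda_-)$, including the branch signs. Your computation is right:
\begin{enumerate}
\item[(a)] $h_j(\lambda_+)h_j(\lambda_-)=1$;
\item[(b)] the two sides have equal squares;
\item[(c)] both tend to $\ii$ as $\lambda\to0$, consistent with $H_j^{-1}(0_+)=\ii\sigma_1$ and $H_j^{-1}(0_-)=-I$.
\end{enumerate}
You assert that $\gamma_j(\lambda_-)$ is continuous and nonvanishing on each half-interval; this needs one sentence of justification. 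The argument of the square root lies exactly on its cut $[0,\infty)$, but on each half-interval it approaches that cut from a fixed side. This is the observation made in the proof of Lemma~\ref{lem:ic-HR-symmetries}.

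\emph{Comparison.} Your route does not depend on the $\mathbb C^+$ symmetry or on Proposition~\ref{prop:ic-Psi-conjugation-lower-cut}, and it gives an explicit check of the branch conventions. The paper's route is shorter because that branch bookkeeping has already been absorbed into \eqref{ic:H-outer-conjugation} and \eqref{ic:H-cut-lower-conjugation}.

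\emph{Smaller points.} For \eqref{ic:c11-c12-band}--\eqref{ic:c22-c21-band}, comparing coefficients in the basis $\Psi_{2,-}^{(1)},\Psi_{2,+}^{(2)}$ is equivalent, via Cramer's rule, to the paper's continued determinant representations. In item (5) your exponent signs are reversed relative to $H_j^{-1}\widetilde\Psi_j=G\Psi_j\eul^{R_j}$: for $j=1$ the first column carries $\eul^{f_1}$, not $\eul^{-f_1}$. This is harmless, since \eqref{ic:R-cut-jump} gives both versions of the identity.
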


\begin{proof}
Relation \eqref{ic:c22-real-lower-cut} follows by combining the determinant representation
of $c_{22}$ in \eqref{ic:scattering-determinants} with
\eqref{ic:Psi-conjugation-lower-cut}. 

Next, passing to the boundary values in
\eqref{ic:Psi-conjugation-columns} and applying
\eqref{ic:Psi-conjugation-lower-cut} yields
\eqref{ic:Psi-opposite-side-cut}.

The determinant representations of $c_{11}$ and $c_{22}$ given in \eqref{ic:scattering-determinants}, together with
\eqref{ic:Psi-opposite-side-cut}, yield
\begin{align*}
 c_{11}(\lambda_+)
 &=
 \det\bigl(
 -\ii\Psi_1^{(2)}(\lambda_-),
 -\ii\Psi_2^{(1)}(\lambda_-)
 \bigr)=
 \det\bigl(
 \Psi_2^{(1)}(\lambda_-),
 \Psi_1^{(2)}(\lambda_-)
 \bigr)
 =
 c_{22}(\lambda_-),
\end{align*}
which proves \eqref{ic:c11-c22-cut}.

For $\lambda\in\dot\Sigma_0\subset\dot\Sigma_2$, the columns
$\Psi_2^{(1)}$ and $\Psi_2^{(2)}$ are evaluated at their natural
boundary values $\Psi_2^{(1)}(\lambda_-)$ and
$\Psi_2^{(2)}(\lambda_+)$, respectively. Accordingly, the continued
determinant representations of the off-diagonal scattering
coefficients take the form
\[
 c_{12}(\lambda_-)
 =
 \det\bigl(
 \Psi_1^{(2)}(\lambda_-),
 \Psi_2^{(2)}(\lambda_+)
 \bigr),
 \qquad
 c_{21}(\lambda_+)
 =
 \det\bigl(
 \Psi_2^{(1)}(\lambda_-),
 \Psi_1^{(1)}(\lambda_+)
 \bigr).
\]
Using \eqref{ic:Psi-opposite-side-cut} in the determinant
representations \eqref{ic:scattering-determinants}, we obtain
\begin{align*}
 c_{11}(\lambda_+)
 &=
 \det\bigl(
 \Psi_1^{(1)}(\lambda_+),
 \Psi_2^{(2)}(\lambda_+)
 \bigr)=
 \det\bigl(
 -\ii\Psi_1^{(2)}(\lambda_-),
 \Psi_2^{(2)}(\lambda_+)
 \bigr)
 =
 -\ii c_{12}(\lambda_-),\\
 c_{22}(\lambda_-)
 &=
 \det\bigl(
 \Psi_2^{(1)}(\lambda_-),
 \Psi_1^{(2)}(\lambda_-)
 \bigr)=
 \det\bigl(
 \Psi_2^{(1)}(\lambda_-),
 \ii\Psi_1^{(1)}(\lambda_+)
 \bigr)
 =
 \ii c_{21}(\lambda_+).
\end{align*}
Therefore, \eqref{ic:c11-c12-band} and
\eqref{ic:c22-c21-band} hold on $\dot\Sigma_0$.

Finally, \eqref{ic:Psi-tildePsi-relation},
\eqref{ic:G-symmetry-properties}, \eqref{ic:R-cut-real}, and
\eqref{ic:Psi-conjugation-lower-cut} yield
\eqref{ic:HtildePsi-conjugation-lower-cut}. Taking the boundary values
of \eqref{ic:HtildePsi-conjugation-columns} and applying
\eqref{ic:HtildePsi-conjugation-lower-cut} then gives
\eqref{ic:HtildePsi-opposite-side-cut}.
\end{proof}

The preceding symmetry relations also imply the following
nonvanishing property of the scattering coefficients.

\begin{proposition}
\label{prop:ic-no-zeros-continuous-spectrum}
The scattering coefficients satisfy
\begin{align}
 c_{11}(\lambda)&\neq0,
 &&\lambda\in\dot\Sigma_1,
 \label{ic:c11-nonzero-outer}\\
 c_{11}(\lambda_+)&\neq0,
 \qquad
 c_{21}(\lambda_+)\neq0,
 &&\lambda\in\dot\Sigma_0.
 \label{ic:c11-c21-nonzero-band}
\end{align}
\end{proposition}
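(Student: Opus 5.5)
The proof splits into two parts: the outer spectrum $\dot\Sigma_1$ and the band $\dot\Sigma_0$.

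\emph{Outer spectrum.} Here I use the pseudo-unitarity relation \eqref{ic:c-pseudo-unitarity} from Corollary~\ref{cor:ic-conjugation-symmetries}. For $\lambda\in\dot\Sigma_1$ it gives
\[
 |c_{11}(\lambda)|^2=1+|c_{21}(\lambda)|^2\ge 1 .
\]
Hence $c_{11}(\lambda)\neq0$, which is \eqref{ic:c11-nonzero-outer}. The only inputs are $\det c=1$ and the conjugation symmetry $c_{21}=\overline{c_{12}}$, $c_{22}=\overline{c_{11}}$, both already established.

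\emph{Band.} On $\dot\Sigma_0$ the conjugation symmetry is no longer available in the same form, so I derive an analogue of pseudo-unitarity from the cut relations. First, \eqref{ic:c11-c22-conjugation-cut} holds on $\dot\Gamma_1\setminus\{\pm A_2^{-1}\}\supset\dot\Sigma_0$ and gives $c_{22}(\lambda_-)=\overline{c_{11}(\lambda_+)}$. Combining this with \eqref{ic:c22-c21-band} yields
\[
 c_{21}(\lambda_+)=-\ii\,\overline{c_{11}(\lambda_+)}, \qquad \lambda\in\dot\Sigma_0 .
\]
In particular $|c_{21}(\lambda_+)|=|c_{11}(\lambda_+)|$ there, so it suffices to show that $c_{11}(\lambda_+)$ and $c_{21}(\lambda_+)$ cannot vanish simultaneously.

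For this I use the column relation \eqref{ic:continued-column-scattering}:
\[
 \Psi_{1,+}^{(1)}=c_{11,+}\Psi_{2,-}^{(1)}+c_{21,+}\Psi_{2,+}^{(2)} .
\]
If both coefficients vanished at some $\lambda\in\dot\Sigma_0$, then $\Psi_1^{(1)}(\lambda_+)\equiv0$ in $x$. This is impossible because of the asymptotic normalization as $x\to-\infty$: on $\dot\Sigma_0\subset\dot\Gamma_1$ we have $\Psi_1^{(1)}(\lambda_+)\sim G^{-1}\bigl(H_1^{-1}(\lambda_+)\eul^{-R_1(\lambda_+)}\bigr)^{(1)}$. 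Here $H_1^{-1}(\lambda_+)$ is invertible, since $\det H_1=1$, and $\eul^{-R_1}$ is finite and nonzero, since $R_1$ is real on the cut by \eqref{ic:R-cut-real}. So the leading term is nonzero. Equivalently, one can use $\det\bigl(\Psi_1^{(1)}(\lambda_+),\Psi_1^{(2)}(\lambda_-)\bigr)$ together with \eqref{ic:Psi-opposite-side-cut}: if $\Psi_1^{(1)}(\lambda_+)=0$, then $\Psi_1^{(2)}(\lambda_-)=0$ as well, contradicting the determinant/normalization. Thus $|c_{11}(\lambda_+)|=|c_{21}(\lambda_+)|>0$, which is \eqref{ic:c11-c21-nonzero-band}.

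The main obstacle is this last step. Relation \eqref{ic:continued-column-scattering} is only stated pointwise, so I must make sure the boundary values are taken consistently. I also need to confirm that $\Psi_{2,-}^{(1)}$ and $\Psi_{2,+}^{(2)}$ form a fundamental system on $\dot\Sigma_0$, i.e.\ that their Wronskian is nonzero. This follows from $\det\widetilde\Psi_2=1$ and $\det H_2=1$ on $\dot\Sigma_2\supset\dot\Sigma_0$, so the expansion coefficients are uniquely determined and the vanishing of both would indeed force $\Psi_1^{(1)}(\lambda_+)\equiv0$.
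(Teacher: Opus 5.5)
Your proof is correct and follows the paper's argument. On $\dot\Sigma_1$ you use pseudo-unitarity. On $\dot\Sigma_0$ you use $\overline{c_{11}(\lambda_+)}=c_{22}(\lambda_-)=\ii c_{21}(\lambda_+)$, then the first relation in \eqref{ic:continued-column-scattering}, then the normalization of $\Psi_1^{(1)}(\lambda_+)$ as $x\to-\infty$.

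One small correction: drop the ``equivalently'' remark via $\det\bigl(\Psi_1^{(1)}(\lambda_+),\Psi_1^{(2)}(\lambda_-)\bigr)$. By \eqref{ic:Psi-opposite-side-cut} these two columns are proportional, so that determinant vanishes identically and yields no contradiction. Only the normalization argument does the work.
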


\begin{proof}
For $\lambda\in\dot\Sigma_1$, Equation
\eqref{ic:c-pseudo-unitarity} gives
\[
 |c_{11}(\lambda)|^2
 =
 1+|c_{21}(\lambda)|^2>0,
\]
and hence $c_{11}(\lambda)\neq0$.

For $\lambda\in\dot\Sigma_0$, Equations
\eqref{ic:c11-c22-conjugation-cut} and
\eqref{ic:c22-c21-band} give
\[
 \overline{c_{11}(\lambda_+)}
 =
 c_{22}(\lambda_-)
 =
 \ii c_{21}(\lambda_+).
\]
Thus $c_{11}(\lambda_+)$ and $c_{21}(\lambda_+)$ vanish
simultaneously. If they vanished at some
$\lambda_0\in\dot\Sigma_0$, the first relation in
\eqref{ic:continued-column-scattering} would imply
 $\Psi_1^{(1)}(\lambda_{0+})=0$,
which is incompatible with the normalization. Therefore,
\[
 c_{11}(\lambda_+)\neq0,
 \qquad
 c_{21}(\lambda_+)\neq0,
 \qquad
 \lambda\in\dot\Sigma_0.
\]
\end{proof}

\subsection{Discrete spectrum and zeros of $c_{11}(\lambda)$}
\label{sec:ic-discrete-spectrum}

Proposition~\ref{prop:ic-no-zeros-continuous-spectrum} implies that
$c_{11}$ has no zeros on $\dot\Sigma_1$ and that
$c_{11}(\lambda_+)$ has no zeros on $\dot\Sigma_0$. To locate the
remaining possible zeros, we rewrite the spatial equation in
\eqref{ic:balanced-lax-pair} as a self-adjoint weighted Dirac problem.

Fix $t\in[0,T)$ and set
\[
 J:=
 \begin{pmatrix}
  0&-1\\
  1&0
 \end{pmatrix}.
\]
Multiplying the spatial equation for $\Psi^{\mathrm b}$ in
\eqref{ic:balanced-lax-pair} by $2J$ and using
\eqref{ic:balanced-U}, we obtain
\begin{equation}\label{ic:balanced-weighted-Dirac-system}
 2J\Psi_x^{\mathrm b}
 +(1+2r_x)\sigma_1\Psi^{\mathrm b}
 =
 \lambda\rho\Psi^{\mathrm b}.
\end{equation}
Accordingly, the equation for each column of $\Psi^{\mathrm b}$ can be
written as the spectral problem
\begin{equation}\label{ic:weighted-Dirac-operator}
 \mathcal D(t)f=\lambda f,
 \qquad
 \mathcal D(t)
 :=
 \frac1{\rho}
 \left(
  2J\frac{\dd}{\dd x}
  +(1+2r_x)\sigma_1
 \right).
\end{equation}
We consider $\mathcal D(t)$ in the weighted Hilbert space
\begin{equation}\label{ic:weighted-Hilbert-space}
 \mathcal H_t
 :=
 L^2(\mathbb R,\rho\dd x;\mathbb C^2),
 \qquad
 \langle f,g\rangle_{\mathcal H_t}
 :=
 \int_{\mathbb R}
 \rho(x,t)f(x)^\dagger g(x)\dd x,
\end{equation}
on the domain
\begin{equation}\label{ic:weighted-Dirac-domain}
 \operatorname{dom}\mathcal D(t)
 :=
 \left\{
 f\in\mathcal H_t\cap
 AC_{\mathrm{loc}}(\mathbb R;\mathbb C^2):
 \frac1{\rho}
 \bigl(
  2Jf_x+(1+2r_x)\sigma_1f
 \bigr)\in\mathcal H_t
 \right\}.
\end{equation}
Since $m,n>0$, for any $h=(h_1,h_2)^{\mathsf T}\in\mathbb C^2$, \eqref{ic:G-def} implies
\begin{equation}\label{ic:weighted-norm-balanced}
 \rho|Gh|^2
 =
 n|h_1|^2+m|h_2|^2.
\end{equation}

\begin{proposition}\label{prop:ic-self-adjoint-Dirac}
The operator $\mathcal D(t)$ defined by
\eqref{ic:weighted-Dirac-operator}--\eqref{ic:weighted-Dirac-domain}
is self-adjoint in $\mathcal H_t$, and
\begin{equation}\label{ic:essential-spectrum}
 \sigma_{\mathrm{ess}}\bigl(\mathcal D(t)\bigr)
 =
 \Sigma_2.
\end{equation}
In particular, its spectrum is real and
\begin{equation}\label{ic:discrete-spectrum-location}
 \sigma_{\mathrm{disc}}\bigl(\mathcal D(t)\bigr)
 \subset\dot\Gamma_2.
\end{equation}
\end{proposition}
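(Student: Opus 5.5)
Our plan is to reduce $\mathcal D(t)$ to a standard Dirac operator with a constant-coefficient leading term via a Liouville-type change of variables, and then invoke limit-point theory and Weyl's theorem. Since $\rho$ is bounded above and away from zero by \eqref{ic:rhoR-limits}, the map $y=\frac12\int_0^x\rho(\xi,t)\,\dd\xi$ is a bi-Lipschitz diffeomorphism of $\mathbb R$. For $f\in\mathcal H_t$ we set $(\mathcal Uf)(y)=\sqrt2\,f(x(y))$. This map is unitary from $\mathcal H_t$ onto $L^2(\mathbb R,\dd y;\mathbb C^2)$ because $\rho\,\dd x=2\,\dd y$. A direct computation then shows that $\mathcal U\mathcal D(t)\mathcal U^{-1}=J\frac{\dd}{\dd y}+\frac{1+2r_x}{\rho}\sigma_1=:D_0$. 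This is a one-dimensional Dirac operator $J\partial_y+V(y)$ with a real symmetric, bounded, continuous potential $V=\frac{1+2r_x}{\rho}\sigma_1$, and it carries the maximal domain, which is exactly the image of \eqref{ic:weighted-Dirac-domain}.

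For self-adjointness, we first note that $J$ is real antisymmetric and $\sigma_1$ is real symmetric, so integration by parts gives symmetry of the minimal operator. For Dirac systems with locally integrable potentials, both endpoints $\pm\infty$ are always in the limit-point case; this is the classical result of Weyl, Levitan--Sargsjan, and Weidmann. Consequently the maximal operator is self-adjoint, and it coincides with the closure of the operator restricted to $C_c^\infty$. Equivalently, one can argue directly: if $f$ is in the maximal domain, then $f\in L^2$ and $f'\in L^2$, so $f\to0$ at $\pm\infty$ and the boundary terms $g^\dagger Jf$ vanish. Realness of the spectrum follows at once.

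For the essential spectrum we use decomposition and compact perturbation. By \eqref{ic:rhoR-limits}, $V(y)\to A_1^{-1}\sigma_1$ as $y\to-\infty$ and $V(y)\to A_2^{-1}\sigma_1$ as $y\to+\infty$. We cut the line at $y=0$ with some self-adjoint boundary condition; this is a rank-two resolvent perturbation, so it does not change $\sigma_{\mathrm{ess}}$. On each half-line, the difference $V-A_j^{-1}\sigma_1$ is bounded and tends to $0$, so it is relatively compact with respect to $J\partial_y+A_j^{-1}\sigma_1$ (multiplication by a function vanishing at infinity composed with the resolvent is compact). The free operator $J\partial_y+a\sigma_1$ has the symbol $\begin{pmatrix}a&-\ii k\\ \ii k&-a\end{pmatrix}$, whose eigenvalues are $\pm\sqrt{k^2+a^2}$. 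Hence its spectrum is $(-\infty,-|a|]\cup[|a|,\infty)$, and this holds also on a half-line. With $a=A_j^{-1}$ this yields $\sigma_{\mathrm{ess}}=\Sigma_1\cup\Sigma_2=\Sigma_2$, since $A_1<A_2$ gives $\Sigma_1\subset\Sigma_2$. It follows that $\sigma_{\mathrm{disc}}\subset\mathbb R\setminus\Sigma_2=\dot\Gamma_2$.

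The main obstacle is the technical verification that the transformed maximal domain matches \eqref{ic:weighted-Dirac-domain}, together with the relative compactness on each half-line. We would handle the latter by the standard $L^\infty$-vanishing criterion, approximating $V-A_j^{-1}\sigma_1$ by compactly supported functions and using Rellich compactness on bounded intervals (since $H^1$ controls the domain). A further point to check is that the coupling condition at $0$ produces only a finite-rank resolvent difference.
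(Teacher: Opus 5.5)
Your proposal is correct and follows essentially the same route as the paper. The paper also makes a Liouville-type change of variables, $s=\int_0^x\rho\,\dd\xi$, which differs from your normalization only by a factor $2$ and gives $2Jg_s+\frac{1+2r_x}{\rho}\sigma_1g$. This turns $\mathcal D(t)$ into a Dirac operator with a bounded Hermitian potential on $H^1(\mathbb R;\mathbb C^2)$, and the essential spectrum is then read off from the constant-coefficient limits at $\pm\infty$. Your decoupling, relative-compactness and Weyl-theorem argument simply makes rigorous what the paper asserts in one line, and it needs only $V\to A_j^{-1}\sigma_1$, not integrability.

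One harmless slip: the symbol of $J\partial_y+a\sigma_1$ on $\eul^{\ii ky}$ is $\begin{pmatrix}0&a-\ii k\\ a+\ii k&0\end{pmatrix}$, not the matrix you wrote, though its eigenvalues are the same $\pm\sqrt{k^2+a^2}$. Also, on a half-line only the essential spectrum equals $(-\infty,-|a|]\cup[|a|,\infty)$; a boundary condition can create an eigenvalue in the gap.
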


\begin{proof}
Define
\[
 s=s(x,t):=\int_0^x\rho(\xi,t)\dd\xi.
\]
Since $\rho(x,t)\geq c_t>0$, we have $s_x=\rho>0$, and the definition
of $s$ gives $s(x,t)\to\pm\infty$ as $x\to\pm\infty$. Thus
$x\mapsto s(x,t)$ is a diffeomorphism of $\mathbb R$ onto itself.

For $g(s):=f(x(s,t))$, we have
\[
 \int_{\mathbb R}|g(s)|^2\dd s
 =
 \int_{\mathbb R}\rho(x,t)|f(x)|^2\dd x.
\]
Hence this change of variables is unitary from $\mathcal H_t$ onto
$L^2(\mathbb R;\mathbb C^2)$. Since $f_x=\rho g_s$, the equation
$\mathcal D(t)f=\lambda f$ becomes
\begin{equation}\label{ic:unweighted-Dirac-system}
 2Jg_s+q(s,t)\sigma_1g=\lambda g,
 \qquad
 q(s,t)
 :=
 \frac{1+2r_x(x(s,t),t)}
      {\rho(x(s,t),t)}.
\end{equation}
The function $q$ is real and bounded. The operator
$2J\frac{\dd}{\dd s}$ is self-adjoint on
$H^1(\mathbb R;\mathbb C^2)$, and adding the bounded Hermitian
matrix $q\sigma_1$ preserves self-adjointness. Since the change of
variables maps \eqref{ic:weighted-Dirac-domain} onto
$H^1(\mathbb R;\mathbb C^2)$, the operator $\mathcal D(t)$ is
self-adjoint in $\mathcal H_t$.

Moreover, $q(s,t)\to A_j^{-1}$ as
$s\to(-1)^j\infty$, $j=1,2$, and
\eqref{ic:standing-direct-assumptions} implies that
$q(\cdot,t)-A_1^{-1}\in L^1((-\infty,0))$ and
$q(\cdot,t)-A_2^{-1}\in L^1((0,\infty))$.

As $s\to(-1)^j\infty$, equation
\eqref{ic:unweighted-Dirac-system} approaches
\[
 2Jg_s+\frac1{A_j}\sigma_1g=\lambda g.
\]
Substitution of $g(s)=\eul^{\ii\xi s}g_0$ gives
$\lambda^2=4\xi^2+A_j^{-2}$. Thus the limiting equation has
oscillatory solutions precisely for $\lambda\in\Sigma_j$. Since the
coefficients of \eqref{ic:unweighted-Dirac-system} differ integrably
from their limits, its essential spectrum is determined by these two
limiting equations. Therefore,
\[
 \sigma_{\mathrm{ess}}\bigl(\mathcal D(t)\bigr)
 =
 \Sigma_1\cup\Sigma_2
 =
 \Sigma_2,
\]
where the last equality follows from $0<A_1<A_2$.

Finally, self-adjointness implies that the spectrum of
$\mathcal D(t)$ is real. Any discrete eigenvalue must lie outside
$\Sigma_2$, and hence
\[
 \sigma_{\mathrm{disc}}\bigl(\mathcal D(t)\bigr)
 \subset\mathbb R\setminus\Sigma_2
 =
 \dot\Gamma_2.
\]
\end{proof}

\begin{corollary}\label{cor:ic-scattering-zeros}
For $\lambda\in\dot\Gamma_2\setminus\{0\}$,
\begin{equation}\label{ic:eigenvalue-zero-equivalence}
 \lambda\in\sigma_{\mathrm{disc}}\bigl(\mathcal D(t)\bigr)
 \quad\Longleftrightarrow\quad
 c_{11}(\lambda_+)=0.
\end{equation}
Moreover,
\begin{equation}\label{ic:no-nonreal-scattering-zeros}
 c_{11}(\lambda)\neq0,
 \qquad
 \lambda\in\mathbb C^+,
\end{equation}
and
\begin{equation}\label{ic:no-nonreal-scattering-zeros-lower}
 c_{22}(\lambda)\neq0,
 \qquad
 \lambda\in\mathbb C^-.
\end{equation}
At $\lambda=0$, the scattering
coefficients satisfy
\begin{equation}\label{ic:c11-at-zero}
 c_{11}(0_+)=1
\end{equation}
and
\begin{equation}\label{ic:c22-at-zero}
 c_{22}(0_-)=1.
\end{equation}
In particular, $\lambda=0$ is not an eigenvalue.
\end{corollary}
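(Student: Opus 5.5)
The plan is to read everything off the Wronskian representation $c_{11}=\det(\Psi_1^{(1)},\Psi_2^{(2)})$ from \eqref{ic:scattering-determinants}. By Proposition~\ref{prop:inner-cut-column-analyticity}, this representation makes sense in $\mathbb C^+$ and on the upper side of $\dot\Gamma_2\setminus\{0\}$. It is convenient to pass to the balanced gauge, since $\det G=1$ gives $c_{11}=\det(G\Psi_1^{(1)},G\Psi_2^{(2)})$. Here $G\Psi_j^{(j)}=H_j^{-1}\widetilde\Psi_j^{(j)}\eul^{\mp f_j}$ solves \eqref{ic:balanced-weighted-Dirac-system}.

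For $\lambda\in\mathbb C^+$ (and $\lambda_+\in\dot\Gamma_2\setminus\{0\}$, where $\Im l_j(\lambda_+)>0$ for both $j$), the factor $\eul^{-f_1}$ decays like $\eul^{-\Im l_1\cdot A_1|x|/2}$ as $x\to-\infty$, and $\eul^{f_2}$ decays as $x\to+\infty$. This uses \eqref{ic:f-def} and the bounds on $\widetilde\Psi_j$. So $G\Psi_1^{(1)}$ lies in $L^2$ near $-\infty$ and $G\Psi_2^{(2)}$ lies in $L^2$ near $+\infty$, with respect to the weight $\rho$.

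\textbf{Step 1 (equivalence on $\dot\Gamma_2\setminus\{0\}$).} If $c_{11}(\lambda_+)=0$, the two columns are proportional. The common solution then lies in $\mathcal H_t$ and in the domain \eqref{ic:weighted-Dirac-domain}, so $\lambda$ is an eigenvalue. Conversely, suppose $\lambda\in\dot\Gamma_2$ is an eigenvalue with eigenfunction $f$. The solution space of the first-order $2\times2$ system has dimension two. Near each end it contains a decaying solution and a growing solution (the growing one is obtained by reduction of order, since the Wronskian is constant). Hence $f$ must be a multiple of both $G\Psi_1^{(1)}$ and $G\Psi_2^{(2)}$, and the Wronskian vanishes. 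One subtlety: for $\lambda\in\dot\Sigma_0$ the solution $\Psi_1$ is oscillatory rather than decaying at $-\infty$. That case is excluded by Proposition~\ref{prop:ic-no-zeros-continuous-spectrum}, and it is consistent with $\dot\Sigma_0\subset\Sigma_2$.

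\textbf{Step 2 (no nonreal zeros).} For $\lambda\in\mathbb C^+$ the same argument shows that $c_{11}(\lambda)=0$ would give an $L^2$ eigenfunction of the self-adjoint operator $\mathcal D(t)$ with nonreal eigenvalue. This contradicts Proposition~\ref{prop:ic-self-adjoint-Dirac}. The statement \eqref{ic:no-nonreal-scattering-zeros-lower} then follows from \eqref{ic:c11-c22-minus-analytic}, or equivalently from \eqref{ic:c11-c22-conjugation-analytic}.

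\textbf{Step 3 ($\lambda=0$).} At $\lambda=0$ we have $\mathsf U=-\frac12\sigma_3$ independently of the solution. We will evaluate $c_{11}(0_+)$ from the limits of $G\Psi_1^{(1)}$ and $G\Psi_2^{(2)}$ as $\lambda\to0$ in $\mathbb C^+$. The Volterra equation \eqref{ic:volterra-Psi-b} has the propagator $P_j$ from Lemma~\ref{lem:entire-background-propagator}. At $\lambda=0$ this propagator is diagonal with entries $\eul^{\mp L_\rho/(2A_j)}$, because $\theta_j=\ii L_\rho/(2A_j)$ and $\cos\theta_j\mp\sin\theta_j/(A_jl_j)=\eul^{\mp\theta_j\cdot(-\ii)}$.

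The strategy is to compute the limit without solving the integral equation. We will instead use the original equation at $\lambda=0$, $\Psi_x=-\frac12\sigma_3\Psi$, whose solutions are $\eul^{-x/2}e_1$ and $\eul^{x/2}e_2$ explicitly. From \eqref{H_at_0} the normalization at the appropriate end is $H_j^{-1}(0_+)=\ii\sigma_1$, and $f_j$ contributes the factor $\eul^{\mp(\int(\rho-A_j)/(2A_j)+x/2-t/\lambda^2\cdots)}$. The divergent $t/\lambda^2$ phases in $f_1$ and $f_2$ are equal up to sign conventions, because $A_jl_j\to\ii$. They therefore cancel in the determinant, as do the $\rho$-integrals after the exact match. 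We then expect to get $\Psi_1^{(1)}(0_+)\propto \ii e_2\,\eul^{x/2}\cdot\kappa$ and $\Psi_2^{(2)}(0_+)\propto\ii e_1\eul^{-x/2}\cdot\kappa^{-1}$. This gives the determinant $\ii\cdot\ii\cdot\det(e_2,e_1)=1$.

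\textbf{Main obstacle.} The delicate part is Step 3. One must justify passing $\lambda\to0$ in $\mathbb C^+$ despite the essential singularity $\eul^{t/\lambda^2}$ in $R_j$ and the branch behaviour of $H_j^{-1}$. The approach is to work with $t=0$ first, using the time-independence of $c$, and then to check that the $\lambda\to0$ limit of $\widetilde\Psi_j$ commutes with the Volterra iteration, via uniform bounds from \eqref{ic:standing-direct-assumptions}. The identity $c_{22}(0_-)=1$ then follows from \eqref{ic:c11-c22-minus-analytic}. Since $0\in\dot\Gamma_2$ and $c_{11}(0_+)\ne0$, the same Wronskian argument as in Step 1, applied directly at $\lambda=0$, shows that $0$ is not an eigenvalue.
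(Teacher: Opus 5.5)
Your proposal follows the paper's proof essentially step by step: the equivalence on $\dot\Gamma_2\setminus\{0\}$ comes from the Wronskian of the two decaying balanced Jost columns, zeros in $\mathbb C^+$ are excluded by self-adjointness of $\mathcal D(t)$ and transferred to $c_{22}$ by symmetry, and $c_{11}(0_+)$ is evaluated at $t=0$ from $H_j^{-1}(0_+)=\ii\sigma_1$ and the explicit solutions of $\Psi_x=-\frac12\sigma_3\Psi$. The differences are cosmetic: you obtain $c_{22}(0_-)=1$ from \eqref{ic:c11-c22-minus-analytic} rather than from its own determinant, and you exclude $\lambda=0$ by a Wronskian argument rather than the paper's explicit norm computation; your aside about $\dot\Sigma_0$ is unnecessary since $\dot\Sigma_0\cap\dot\Gamma_2=\emptyset$, and it is also reversed, because on $\dot\Sigma_0\subset\dot\Gamma_1$ the column $\Psi_1^{(1)}(\lambda_+)$ still decays at $-\infty$ and it is $\Psi_2$ that oscillates at $+\infty$.
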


\begin{proof}
For $\lambda\in\dot\Gamma_2\setminus\{0\}$,
$\bigl(\Psi_1^{\mathrm b}\bigr)^{(1)}(x,t,\lambda_+)$ decays
exponentially as $x\to-\infty$, whereas
$\bigl(\Psi_2^{\mathrm b}\bigr)^{(2)}(x,t,\lambda_+)$ decays
exponentially as $x\to+\infty$. Since $\det G=1$,
\[
 \det\left(
  \bigl(\Psi_1^{\mathrm b}\bigr)^{(1)},
  \bigl(\Psi_2^{\mathrm b}\bigr)^{(2)}
 \right)
 =
 \det\left(
  \Psi_1^{(1)},\Psi_2^{(2)}
 \right)
 =
 c_{11}(\lambda_+).
\]
Thus $c_{11}(\lambda_+)=0$ precisely when the two decaying solutions
are linearly dependent. Their linear dependence implies the existence of a nonzero solution
that decays at both spatial ends and hence belongs to $\mathcal H_t$.
This solution is an eigenfunction of $\mathcal D(t)$. Conversely,
an eigenfunction in $\mathcal H_t$ must be proportional to the
decaying solution at each spatial end. This gives
\eqref{ic:eigenvalue-zero-equivalence}.

The same argument applies for $\lambda\in\mathbb C^+$. A zero of
$c_{11}(\lambda)$ would give an eigenfunction of the self-adjoint
operator $\mathcal D(t)$ with a nonreal eigenvalue. Hence
\eqref{ic:no-nonreal-scattering-zeros} holds, while
\eqref{ic:no-nonreal-scattering-zeros-lower} follows from
\eqref{ic:c11-c22-conjugation-analytic}.

Finally, the analytic continuations of $c_{11}$ and $c_{22}$ are
independent of $t$, so their values at $\lambda=0$ may be computed
from the spatial problem at $t=0$. By
Remark~\ref{rem:ic-H-at-zero},
\[
 H_j^{-1}(0_+)=\ii\sigma_1,
 \qquad
 H_j^{-1}(0_-)=-I,
\]
whereas
\[
 \mathsf U(x,0,0)=-\frac12\sigma_3.
\]
The normalizations of the relevant Jost columns therefore yield
\[
 \Psi_{1,+}^{(1)}(x,0,0)
 =
 \ii\eul^{x/2}
 \begin{pmatrix}0\\1\end{pmatrix},
 \qquad
 \Psi_{2,+}^{(2)}(x,0,0)
 =
 \ii\eul^{-x/2}
 \begin{pmatrix}1\\0\end{pmatrix},
\]
and
\[
 \Psi_{2,-}^{(1)}(x,0,0)
 =
 -\eul^{-x/2}
 \begin{pmatrix}1\\0\end{pmatrix},
 \qquad
 \Psi_{1,-}^{(2)}(x,0,0)
 =
 -\eul^{x/2}
 \begin{pmatrix}0\\1\end{pmatrix}.
\]
Consequently, the determinant representations of the diagonal
scattering coefficients imply
\[
 \begin{aligned}
 c_{11}(0_+)
 &=
 \det\bigl(
  \Psi_{1,+}^{(1)}(x,0,0),
  \Psi_{2,+}^{(2)}(x,0,0)
 \bigr)
 =1,\\
 c_{22}(0_-)
 &=
 \det\bigl(
  \Psi_{2,-}^{(1)}(x,0,0),
  \Psi_{1,-}^{(2)}(x,0,0)
 \bigr)
 =1.
 \end{aligned}
\]
 Furthermore, every vector solution of the spatial
equation at $\lambda=0$ has the form
\[
 h(x)=
 \begin{pmatrix}
  a\eul^{-x/2}\\
  b\eul^{x/2}
 \end{pmatrix},
 \qquad a,b\in\mathbb C.
\]
The corresponding solution of the balanced Dirac equation \eqref{ic:balanced-weighted-Dirac-system} is $Gh$.
Using \eqref{ic:weighted-norm-balanced}, we obtain
\[
 \|Gh\|_{\mathcal H_t}^2
 =
 \int_{\mathbb R}
 \left(
  n(x,t)|a|^2\eul^{-x}
  +
  m(x,t)|b|^2\eul^{x}
 \right)\dd x.
\]
Since $m$ and $n$ are bounded away from zero, this integral is finite
only if $a=b=0$. Hence $\lambda=0$ is not an eigenvalue.
\end{proof}

\medskip
\noindent
\textit{Assumption.}
We assume that $\mathcal D(t)$ has finitely many eigenvalues in
$\dot\Gamma_2$, or equivalently, that $c_{11}(\lambda_+)$ has
finitely many zeros there. 


Denote the zeros of
$c_{11}(\lambda_+)$ in $\dot\Gamma_2$ by
$\{\lambda_k\}_{k=1}^{N}$. For each $\lambda_k$, there exists a
nonzero constant $\mathfrak b_k$ such that
\begin{equation}\label{ic:eigenfunction-linear-dependence}
 \Psi_1^{(1)}(x,t,\lambda_{k+})
 =
 \mathfrak b_k
 \Psi_2^{(2)}(x,t,\lambda_{k+}).
\end{equation}
Since both columns in
\eqref{ic:eigenfunction-linear-dependence} are purely imaginary by
\eqref{ic:Psi-conjugation-upper-cut},
\[
 \mathfrak b_k\in\mathbb R\setminus\{0\}.
\]

\begin{proposition}\label{prop:ic-simple-zeros}
Every zero of $c_{11}(\lambda_+)$ in $\dot\Gamma_2$ is simple.
\end{proposition}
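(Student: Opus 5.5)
Fix $t$ (say $t=0$; $c_{11}$ is independent of $t$) and a zero $\lambda_k\in\dot\Gamma_2$ of $c_{11}(\lambda_+)$. We use the standard Wronskian-derivative argument for self-adjoint problems. Since $c_{11}$ is analytic in $\mathbb C^+$ and, near $\lambda_k\in\dot\Gamma_2$, the columns $\Psi_1^{(1)}$ and $\Psi_2^{(2)}$ are analytic across the upper side, $c_{11}(\lambda_+)$ extends analytically to a neighbourhood of $\lambda_k$. On $\dot\Gamma_2$ the Volterra kernels decay exponentially, so $\lambda$-derivatives can be taken inside the integral equations. Differentiating the representation
\[
 c_{11}(\lambda)=\det\bigl(\Psi_1^{(1)}(x,\lambda),\Psi_2^{(2)}(x,\lambda)\bigr)
\]
at $\lambda=\lambda_k$ and using \eqref{ic:eigenfunction-linear-dependence} gives
\[
 \dot c_{11}(\lambda_k)
 =\det\bigl(\dot\Psi_1^{(1)},\Psi_2^{(2)}\bigr)
 +\mathfrak b_k\det\bigl(\Psi_2^{(2)},\dot\Psi_2^{(2)}\bigr),
\]
where the dot denotes $\partial_\lambda$. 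The key step is to convert this into an $x$-integral. If $y$ and $z$ solve $\Psi_x=\mathsf U(\lambda)\Psi$, then $\partial_x\det(\dot y,z)=\det(\mathsf U_\lambda y,z)$, because $\operatorname{tr}\mathsf U=0$. Here
\[
 \mathsf U_\lambda=\tfrac12\begin{pmatrix}0&m\\-n&0\end{pmatrix},
 \qquad
 \det(\mathsf U_\lambda y,z)=-\tfrac12\bigl(m\,y_2z_2+n\,y_1z_1\bigr).
\]

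We integrate the first determinant over $(-\infty,x)$ and the second over $(x,\infty)$. The boundary terms vanish: at $x=-\infty$, $\Psi_1^{(1)}$ and $\dot\Psi_1^{(1)}$ grow at most like $(1+|x|)\eul^{-\kappa|x|}$, while $\Psi_2^{(2)}$ grows like $\eul^{\kappa |x|}$ with the same rate $\kappa=\frac{|l_2|A_2}{2}$. More precisely, $\det(\dot\Psi_1^{(1)},\Psi_2^{(2)})=\mathfrak b_k^{-1}\det(\dot\Psi_1^{(1)},\Psi_1^{(1)})$ at $\lambda_k$, which tends to zero. The same reasoning applies at $+\infty$. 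The result is
\[
 \dot c_{11}(\lambda_k)=-\frac{\mathfrak b_k}{2}\int_{\mathbb R}
 \bigl(n\,(\Psi_{2,1}^{(2)})^2+m\,(\Psi_{2,2}^{(2)})^2\bigr)\dd x ,
\]
with $\Psi_2^{(2)}$ evaluated at $\lambda_{k+}$ and the sign fixed up to orientation.

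By Proposition~\ref{prop:ic-Psi-conjugation-upper-cut}, $\Psi_2^{(2)}(\lambda_{k+})$ is purely imaginary, so both squares are $-|\cdot|^2\le0$. By \eqref{ic:weighted-norm-balanced} the integral equals $-\|G\Psi_2^{(2)}\|_{\mathcal H_t}^2$, which is nonzero because the column is a nontrivial $L^2$ eigenfunction and $m,n>0$. Since $\mathfrak b_k\neq0$, we get $\dot c_{11}(\lambda_k)\ne0$, i.e.\ the zero is simple.

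The main obstacle is justifying the vanishing of the boundary terms. This requires that $\dot\Psi_1^{(1)}$ still decays exponentially, up to a polynomial factor, as $x\to-\infty$, and that the $x$-linear term coming from $\partial_\lambda f_j$ does not spoil the decay. We would verify this from the Volterra equation \eqref{ic:volterra-tildePsi}, differentiated in $\lambda$, using the moment condition in \eqref{ic:standing-direct-assumptions}. The endpoints $\pm A_2^{-1}$ are excluded because $\lambda_k\in\dot\Gamma_2$ lies in the open interval, and $\lambda_k\ne0$ by Corollary~\ref{cor:ic-scattering-zeros}.
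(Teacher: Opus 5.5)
Your proof is correct and follows the paper's argument essentially step for step. The paper also splits $\partial_\lambda c_{11}(\lambda_+)|_{\lambda=\lambda_k}$ into $F=\det(\partial_\lambda\Psi_1^{(1)},\Psi_2^{(2)})$ and $K=\det(\Psi_1^{(1)},\partial_\lambda\Psi_2^{(2)})$, uses $\operatorname{tr}\mathsf U=0$ and \eqref{ic:eigenfunction-linear-dependence} to integrate $F_x$ from $-\infty$ and $K_x$ from $+\infty$, and uses that the eigenfunction is purely imaginary to obtain $-\frac{\mathfrak b_k}{2}\int_{\mathbb R}(n|\phi_{k,1}|^2+m|\phi_{k,2}|^2)\,\dd x\neq0$.

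Two harmless slips. First, in your notation $\det(\mathsf U_\lambda y,z)=+\tfrac12(m\,y_2z_2+n\,y_1z_1)$, so your final formula has the opposite overall sign; this does not affect simplicity, but this sign is what later yields $\varkappa_k<0$. Second, the exponential rate at $x\to-\infty$ is $A_1|l_1(\lambda_{k+})|/2$ rather than $A_2|l_2|/2$. In any case, it is your proportionality argument, not the rate comparison, that makes the boundary term vanish.
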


\begin{proof}
Let $\lambda_k\in\dot\Gamma_2$ be a zero of
$c_{11}(\lambda_+)$ and set
\[
 \phi_k(x,t)
 :=
 \Psi_2^{(2)}(x,t,\lambda_{k+})
 =
 \begin{pmatrix}
  \phi_{k,1}(x,t)\\
  \phi_{k,2}(x,t)
 \end{pmatrix}.
\]
For $\lambda$ on the upper side of $\dot\Gamma_2$, introduce
\[
 y_1(x,\lambda)
 :=
 \Psi_1^{(1)}(x,t,\lambda_+),
 \qquad
 y_2(x,\lambda)
 :=
 \Psi_2^{(2)}(x,t,\lambda_+).
\]
Differentiating the spatial equation in \eqref{eq:lax} with respect to
$\lambda$, we obtain
\begin{equation}\label{ic:lambda-differentiated-spatial}
 \partial_x\bigl(\partial_\lambda y_j\bigr)
 =
 \mathsf U\,\partial_\lambda y_j
 +(\partial_\lambda\mathsf U)y_j,
 \qquad j=1,2,
\end{equation}
where
\begin{equation}\label{ic:U-lambda}
 \partial_\lambda\mathsf U
 =
 \frac12
 \begin{pmatrix}
  0&m\\
  -n&0
 \end{pmatrix}.
\end{equation}

At $\lambda=\lambda_k$, define
\[
 F(x)
 :=
 \det\bigl(\partial_\lambda y_1,y_2\bigr)(x,\lambda_k),
 \qquad
 K(x)
 :=
 \det\bigl(y_1,\partial_\lambda y_2\bigr)(x,\lambda_k).
\]
 By
\eqref{ic:eigenfunction-linear-dependence},
\[
 y_1(x,\lambda_k)=\mathfrak b_k\phi_k(x,t),
 \qquad
 y_2(x,\lambda_k)=\phi_k(x,t).
\]
Using \eqref{ic:lambda-differentiated-spatial},
$\operatorname{tr}\mathsf U=0$, and \eqref{ic:U-lambda}, we find
\begin{align*}
 F_x
 &=
 \mathfrak b_k
 \det\bigl((\partial_\lambda\mathsf U)\phi_k,\phi_k\bigr),\\
 K_x
 &=
 \mathfrak b_k
 \det\bigl(\phi_k,(\partial_\lambda\mathsf U)\phi_k\bigr).
\end{align*}
The components of $\phi_k$ are purely imaginary by
\eqref{ic:Psi-conjugation-upper-cut}. Therefore,
\begin{subequations}\label{ic:FK-derivatives}
\begin{align}
 F_x
 &=
 -\frac{\mathfrak b_k}{2}
 \left(
  n|\phi_{k,1}|^2+m|\phi_{k,2}|^2
 \right),
 \label{ic:F-derivative}\\
 K_x
 &=
 \frac{\mathfrak b_k}{2}
 \left(
  n|\phi_{k,1}|^2+m|\phi_{k,2}|^2
 \right).
 \label{ic:K-derivative}
\end{align}
\end{subequations}

The functions $y_1$ and $\partial_\lambda y_1$ decay as
$x\to-\infty$, while $y_2$ and $\partial_\lambda y_2$ decay as
$x\to+\infty$. Hence
\[
 F(-\infty)=0,
 \qquad
 K(+\infty)=0.
\]
Integrating \eqref{ic:FK-derivatives} gives
\begin{align*}
 F(x)
 &=
 -\frac{\mathfrak b_k}{2}
 \int_{-\infty}^{x}
 \left(
  n|\phi_{k,1}|^2+m|\phi_{k,2}|^2
 \right)\dd\xi,\\
 K(x)
 &=
 -\frac{\mathfrak b_k}{2}
 \int_x^{+\infty}
 \left(
  n|\phi_{k,1}|^2+m|\phi_{k,2}|^2
 \right)\dd\xi.
\end{align*}
Since
$c_{11}(\lambda_+)=\det(y_1,y_2)$, we obtain
\begin{equation}\label{ic:c11-derivative-norming}
 \left.
 \partial_\lambda c_{11}(\lambda_+)
 \right|_{\lambda=\lambda_k}
 =
 F(x)+K(x)
 =
 -\frac{\mathfrak b_k}{2}
 \int_{-\infty}^{+\infty}
 \left(
  n|\phi_{k,1}|^2+m|\phi_{k,2}|^2
 \right)\dd x.
\end{equation}
The integral is strictly positive and
$\mathfrak b_k\neq0$. Consequently,
\[
 \left.
 \partial_\lambda c_{11}(\lambda_+)
 \right|_{\lambda=\lambda_k}
 \neq0,
\]
and the zero $\lambda_k$ is simple.

Finally, setting $\phi_k^{\mathrm b}:=G\phi_k$ and using
\eqref{ic:weighted-norm-balanced}, we may equivalently write
\begin{equation}\label{ic:c11-derivative-balanced}
 \left.
 \partial_\lambda c_{11}(\lambda_+)
 \right|_{\lambda=\lambda_k}
 =
 -\frac{\mathfrak b_k}{2}
 \int_{-\infty}^{+\infty}
 \rho|\phi_k^{\mathrm b}|^2\dd x.
\end{equation}
\end{proof}

For the RH formulation, define the norming constants
\begin{equation}\label{ic:norming-constants-def}
 \varkappa_k
 :=
 \frac{\mathfrak b_k}{c'_{11,k}},
 \qquad
 \widetilde\varkappa_k
 :=
 \frac1{\mathfrak b_kc'_{11,k}},
\end{equation}
where $ c'_{11,k}
 :=
 \left.
 \partial_\lambda c_{11}(\lambda_+)
 \right|_{\lambda=\lambda_k}$. By
Proposition~\ref{prop:ic-simple-zeros},
\begin{subequations}\label{ic:norming-constants-integral}
\begin{align}
 \varkappa_k
 &=
 -2
 \left[
  \int_{-\infty}^{+\infty}
  \left(
   n|\phi_{k,1}|^2+m|\phi_{k,2}|^2
  \right)\dd x
 \right]^{-1}
 <0,
 \label{ic:right-norming-integral}\\
 \widetilde\varkappa_k
 &=
 -\frac{2}{\mathfrak b_k^2}
 \left[
  \int_{-\infty}^{+\infty}
  \left(
   n|\phi_{k,1}|^2+m|\phi_{k,2}|^2
  \right)\dd x
 \right]^{-1}
 <0.
 \label{ic:left-norming-integral}
\end{align}
\end{subequations}
In particular, both norming constants are real and nonzero. Since the
columns in \eqref{ic:eigenfunction-linear-dependence} satisfy the same
Lax pair, $\mathfrak b_k$ is independent of $x$ and $t$. Therefore,
$\varkappa_k$ and $\widetilde\varkappa_k$ are also independent of
$x$ and $t$.

By \eqref{ic:c11-minus-upper-cut}, the zeros occur in pairs:
\begin{equation}\label{ic:discrete-spectrum-parity}
 c_{11}(\lambda_{k+})=0
 \quad\Longrightarrow\quad
 c_{11}((-\lambda_k)_+)=0.
\end{equation}
Let $\lambda_{k^*}=-\lambda_k$. Substitution of
\eqref{ic:Psi-minus-upper-cut} into
\eqref{ic:eigenfunction-linear-dependence} and differentiation of
\eqref{ic:c11-minus-upper-cut} give
\begin{equation}\label{ic:norming-parity-input}
 \mathfrak b_{k^*}=-\mathfrak b_k,
 \qquad
 c'_{11,k^*}=-c'_{11,k}.
\end{equation}
Consequently,
\begin{equation}\label{ic:norming-parity}
 \varkappa_{k^*}
 =
 \varkappa_k,
 \qquad
 \widetilde\varkappa_{k^*}
 =
 \widetilde\varkappa_k.
\end{equation}

\subsection{Behavior at the branch points}
\label{sec:ic-branch-point-behavior}

At the branch points $\lambda=\pm A_j^{-1}$, the functions $l_j$
vanish $l_j\left(\pm A_j^{-1}\right)=0$, $j=1,2$. To describe the local behavior of the Jost solutions, introduce
the local parameters
\begin{equation}\label{ic:omega-branch-def}
 \omega_j^+(\lambda)
 :=
 \sqrt{\lambda-A_j^{-1}},
 \qquad
 \omega_j^-(\lambda)
 :=
 \sqrt{\lambda+A_j^{-1}}.
\end{equation}
Here $\omega_j^+$ has its branch cut along
$[A_j^{-1},+\infty)$ and is normalized by
$\omega_j^+(0)=\ii A_j^{-1/2}$, whereas $\omega_j^-$ has its branch
cut along $(-\infty,-A_j^{-1}]$ and is normalized by
$\omega_j^-(0)=A_j^{-1/2}$.

\begin{proposition}\label{prop:ic-tildePsi-branch-expansion}
For $j=1,2$, there exist real-valued functions
$\chi_j(x,t)$, $a_j(x,t)$, and $b_j(x,t)$ such that
\begin{align}
 \widetilde\Psi_j(x,t,\lambda)
 &=
 \frac{\ii\chi_j(x,t)}{\omega_j^+(\lambda)}
 \begin{pmatrix}
  1&1\\
  -1&-1
 \end{pmatrix}
 +
 \begin{pmatrix}
  a_j(x,t)&b_j(x,t)\\
  b_j(x,t)&a_j(x,t)
 \end{pmatrix}
 +\ord\bigl(\omega_j^+(\lambda)\bigr),
 \qquad
 \lambda\to A_j^{-1},
 \label{ic:tildePsi-positive-branch}\\
 \widetilde\Psi_j(x,t,\lambda)
 &=
 \frac{\chi_j(x,t)}{\omega_j^-(\lambda)}
 \begin{pmatrix}
  -1&1\\
  -1&1
 \end{pmatrix}
 +
 \begin{pmatrix}
  a_j(x,t)&-b_j(x,t)\\
  -b_j(x,t)&a_j(x,t)
 \end{pmatrix}
 +\ord\bigl(\omega_j^-(\lambda)\bigr),
 \qquad
 \lambda\to-A_j^{-1}.
 \label{ic:tildePsi-negative-branch}
\end{align}
\end{proposition}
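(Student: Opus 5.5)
The plan is to work with the balanced-gauge representation \eqref{ic:Psi-b-j-def}, namely $\widetilde\Psi_j=H_j\Psi_j^{\mathrm b}\eul^{R_j}$, rather than with the Volterra equation \eqref{ic:volterra-tildePsi} directly, because the kernel of \eqref{ic:volterra-tildePsi} contains $\widehat W_j$, which carries the factor $l_j^{-1}$ and is singular at the branch points. The integral equation \eqref{ic:volterra-Psi-b} for $\Psi_j^{\mathrm b}$ has kernel $P_j(x,\xi,t,\lambda)\delta_j(\xi,t)\sigma_3$. By Lemma~\ref{lem:entire-background-propagator}, this kernel is entire in $\lambda$, and we have the explicit formula \eqref{ic:background-propagator-formula}.

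The steps are as follows.

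\textbf{Step 1: regularity of $\Psi_j^{\mathrm b}\eul^{R_j}$.} Write $\Phi_j:=\Psi_j^{\mathrm b}\eul^{R_j}=H_j^{-1}\widetilde\Psi_j$. Then $\Phi_j$ satisfies
\[
 \Phi_j(x)=H_j^{-1}+\int_{(-1)^j\infty}^x P_j(x,\xi)\delta_j(\xi)\sigma_3\Phi_j(\xi)\,\eul^{R_j(x)-R_j(\xi)}\dd\xi .
\]
The factor $\eul^{R_j(x)-R_j(\xi)}$ depends on $\lambda$ only through $l_j$, which is continuous near $\pm A_j^{-1}$. Near $\lambda=A_j^{-1}$, the only singularity comes from the inhomogeneous term $H_j^{-1}$.

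Expanding with $l_j=\omega_j^+\omega_j^-$ gives $\frac{1}{\ii A_jl_j}-1\sim \frac{c}{\omega_j^+}$. Hence
\[
 \gamma_j\sim \frac{c'}{\sqrt{\omega_j^+}},\qquad h_j\to 1 .
\]
So $H_j^{-1}$ is of order $(\omega_j^+)^{-1/2}$, with leading matrix $\begin{pmatrix}1&1\\1&1\end{pmatrix}$, and $H_j$ has the same form.

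The weights \eqref{ic:standing-direct-assumptions}, namely $(1+|x|)^2$ moments of $\rho-A_j$ and $r_x$ together with $\delta_j/\rho\in W^{1,1}$, make the Neumann series converge uniformly in a neighbourhood of the branch point. The resulting series is also differentiable in $\omega_j^+$: each $\omega$-derivative of the exponentials produces a factor $|x-\xi|$, which the moments control. Therefore $\Phi_j=\gamma_j\,M_j(x,t,\omega_j^+)$, where $M_j$ is $C^1$ in $\omega_j^+$.

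\textbf{Step 2: assemble $\widetilde\Psi_j$.} Compute $\widetilde\Psi_j=H_j\Phi_j=\gamma_j^2\,\widehat H_j M_j$, where $H_j=\gamma_j\widehat H_j$ and $\gamma_j^2=\frac12\bigl(\frac{1}{\ii A_jl_j}-1\bigr)$. The prefactor $\gamma_j^2$ is exactly of order $1/\omega_j^+$, which gives the expansion
\[
 \widetilde\Psi_j=\frac{X}{\omega_j^+}+Y+\ord(\omega_j^+).
\]
The leading coefficient is $\widehat H_j(A_j^{-1})M_j(0)$. Since $\widehat H_j(A_j^{-1})=\begin{pmatrix}1&-1\\-1&1\end{pmatrix}$ has rank one, the coefficient $X$ has the form $\begin{pmatrix}1&1\\-1&-1\end{pmatrix}$ up to a scalar, once we show that the row structure of $M_j(0)$ is symmetric.

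\textbf{Step 3: use symmetries to pin down the structure.} This is the main obstacle. We need $X\propto\begin{pmatrix}1&1\\-1&-1\end{pmatrix}$ with a coefficient $\ii\chi_j$, where $\chi_j$ is real, and we need the constant term $Y$ to have the bisymmetric form $\begin{pmatrix}a&b\\b&a\end{pmatrix}$ with $a,b$ real.

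I would first use the conjugation symmetry \eqref{ic:HtildePsi-conjugation}, transported to $\widetilde\Psi_j$ via \eqref{ic:H-outer-conjugation}. This gives $\widetilde\Psi_j(\lambda)=\sigma_1\overline{\widetilde\Psi_j(\bar\lambda)}\sigma_1$. Combine it with $\overline{\omega_j^+(\bar\lambda)}=-\omega_j^+(\lambda)$ near $A_j^{-1}$, which holds because the cut lies on $[A_j^{-1},\infty)$ and the normalization at $0$ is imaginary. Comparing the $1/\omega$ and $\omega^0$ orders then forces $X=-\sigma_1\overline X\sigma_1$ and $Y=\sigma_1\overline Y\sigma_1$.

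Next I would use $\det\widetilde\Psi_j=1$. The $\omega^{-2}$ order gives $\det X=0$, and the $\omega^{-1}$ order gives $\operatorname{tr}(\operatorname{adj}(X)Y)=0$. Together with the rank-one image from Step 2, these relations reduce $X$ and $Y$ to the stated forms. If they do not suffice, I would compute $M_j(0)$ directly: at $l_j=0$, the propagator $P_j$ and the vector $(1,1)^{\mathsf T}$ are explicit, and $\sigma_3$ maps this vector to $(1,-1)^{\mathsf T}$. Following this structure through the Neumann series should identify the entries of $M_j(0)$.

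\textbf{Step 4: the point $-A_j^{-1}$.} I would not redo the computation. Instead I would apply the $\lambda\to-\lambda$ symmetry. By Corollary~\ref{cor:ic-minus-symmetries}(4) and \eqref{ic:H-outer-minus}, we have $\widetilde\Psi_j(\lambda)=\sigma\,\widetilde\Psi_j(-\lambda)\,\sigma_2$ for a suitable conjugating matrix. Since $H_j(\lambda)\sigma_3H_j^{-1}(-\lambda)$ is computable from \eqref{ic:H-identities}, this conjugating matrix can be determined explicitly. Finally, relate $\omega_j^-(\lambda)$ to $\omega_j^+(-\lambda)$; they agree up to the factor $\pm\ii$ fixed by the stated normalizations. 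These two facts transform \eqref{ic:tildePsi-positive-branch} into \eqref{ic:tildePsi-negative-branch}, including the sign pattern and the removal of the factor $\ii$.
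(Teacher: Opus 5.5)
Your Steps 1, 2 and 4 take a different and rather clean route from the paper's. The paper conjugates the Volterra equation for $\widetilde\Psi_j$ by $W_j^+$ to get a kernel continuous at $A_j^{-1}$. You instead pass to the balanced equation with entire propagator $P_j$, so that $\widetilde\Psi_j=\gamma_j^2\widehat H_jM_j$ and the whole singularity sits in the explicit scalar $\gamma_j^2$. On this route the rank-one structure needs no symmetry argument. At $\omega=0$ one has $\eul^{R_j(x)-R_j(\xi)}=I$ and inhomogeneous term $\left(\begin{smallmatrix}1&1\\1&1\end{smallmatrix}\right)$. Hence both columns of $M_j(0)$ solve the same equation and coincide, and $\widehat H_j(A_j^{-1})M_j(0)$ is a multiple of $\left(\begin{smallmatrix}1&1\\-1&-1\end{smallmatrix}\right)$. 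Step 4 is essentially the paper's last step: it uses $\widetilde\Psi_j(\lambda)=\sigma_2\widetilde\Psi_j(-\lambda)\sigma_2$ and $\omega_j^+(-\lambda)=\ii\,\omega_j^-(\lambda)$, keeping track of the fact that $\lambda\mapsto-\lambda$ swaps the sides of the real axis.

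The gap is in Step 3, which as written produces the wrong reality class. Since $\sigma_1\left(\begin{smallmatrix}1&1\\-1&-1\end{smallmatrix}\right)\sigma_1=-\left(\begin{smallmatrix}1&1\\-1&-1\end{smallmatrix}\right)$, your relation $X=-\sigma_1\overline X\sigma_1$ applied to $X=\ii\chi_j\left(\begin{smallmatrix}1&1\\-1&-1\end{smallmatrix}\right)$ gives $\overline{\chi_j}=-\chi_j$, i.e. $\chi_j\in\ii\R$. This contradicts the statement unless $\chi_j\equiv0$, which already fails at first order in $\delta_j$; there $\chi_j$ is a nonzero multiple of $D_j$.

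The source is a branch inconsistency. $\omega_j^+$ is cut along $(A_j^{-1},\infty)\subset\dot\Sigma_j$, and $l_j=\omega_j^+\omega_j^-$ in $\mathbb C^+$ but $l_j=-\omega_j^+\omega_j^-$ in $\mathbb C^-$. So the $1/\omega_j^+$ coefficient of $\gamma_j^2=\frac1{2\ii A_jl_j}-\frac12$ changes sign between the half-planes. Step 2 uses one expansion of $\gamma_j^2$ for both columns, while Step 3 uses $\overline{\omega_j^+(\overline\lambda)}=-\omega_j^+(\lambda)$ to move between half-planes. If each column were expanded in $\omega_j^+$ in its own half-plane, your relation would hold, but the leading matrix would then be $\ii\chi_j\left(\begin{smallmatrix}1&-1\\-1&1\end{smallmatrix}\right)$. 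The statement must be read with one local variable, e.g. $\omega=\sqrt{\lambda-A_j^{-1}}>0$ on $\dot\Sigma_j$ (the upper boundary value of $\omega_j^+$). There the symmetry is pointwise, $\widetilde\Psi_j=\sigma_1\overline{\widetilde\Psi_j}\sigma_1$, and it gives $X=\sigma_1\overline X\sigma_1$ instead.

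Moreover, $\det\widetilde\Psi_j=1$ cannot fix $Y$. At order $\omega^{-1}$ it only gives $Y_{11}+Y_{21}=Y_{12}+Y_{22}$. Your representation satisfies this automatically: since $(1,1)\widehat H_j(A_j^{-1})=0$, both sides equal half the entry sum of a column of $M_j(0)$. Together with $Y=\sigma_1\overline Y\sigma_1$, it still allows $Y=\left(\begin{smallmatrix}\alpha+\ii s&\beta+\ii s\\ \beta-\ii s&\alpha-\ii s\end{smallmatrix}\right)$.

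What is missing is reality, and your representation supplies it. Regard $M_j$ as the solution of your equation for all small real $\omega$, with $l_j=\omega\,\omega_j^-$. Then $\lambda$ and $P_j$ are real and even in $\omega$, $l_j$ is real and odd, $\overline{h_j(\omega)}=h_j(-\omega)$, and $\overline{\eul^{R_j(x)-R_j(\xi)}}$ is the same factor at $-\omega$. Uniqueness gives $\overline{M_j(\omega)}=M_j(-\omega)$, so $M_j(0)$ is real and $\partial_\omega M_j(0)$ is purely imaginary. The $1/\omega$ coefficient $1/(2\ii A_j\omega_j^-(A_j^{-1}))$ of $\gamma_j^2$ and $\partial_\omega h_j(0)$ are purely imaginary. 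Hence $\chi_j$ is real and $Y$ is real, and $Y=\sigma_1Y\sigma_1$ is then exactly the form $\left(\begin{smallmatrix}a_j&b_j\\ b_j&a_j\end{smallmatrix}\right)$.

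A minor point: $C^1$ dependence on $\omega$ only gives an $o(1)$ remainder. The stated $\ord(\omega_j^+)$ needs $M_j=M_j(0)+\omega\,\partial_\omega M_j(0)+\ord(\omega^2)$, which with your kernel bounds costs one more moment than the $(1+|x|)^2$ weights. The paper asserts the analogous expansion of $X_j^+$ without comment.
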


\begin{proof}
We first consider the behavior near $A_j^{-1}$. Introduce
\begin{equation}\label{ic:W-plus-def}
 W_j^+(\lambda)
 :=
 \begin{pmatrix}
  1&\dfrac{\ii}{\omega_j^+(\lambda)}\\[2mm]
  1&-\dfrac{\ii}{\omega_j^+(\lambda)}
 \end{pmatrix},
 \qquad
 X_j^+:=(W_j^+)^{-1}\widetilde\Psi_j.
\end{equation}
Then, in particular,
\[
 (W_j^+)^{-1}
 =
 \frac12
 \begin{pmatrix}
  1&1\\
  -\ii\omega_j^+&\ii\omega_j^+
 \end{pmatrix}.
\]
Substituting \eqref{ic:W-plus-def} into the Volterra equation
\eqref{ic:volterra-tildePsi} yields a Volterra equation for $X_j^+$:
\begin{equation}\label{ic:volterra-X-plus}
\begin{aligned}
 X_j^+(x,t,\lambda)
 &=
 \bigl(W_j^+(\lambda)\bigr)^{-1}+
 \int_{(-1)^j\infty}^{x}
 \bigl(W_j^+(\lambda)\bigr)^{-1}
 \eul^{\frac{\ii l_j(\lambda)}2
       \int_x^\xi\rho(\tau,t)\dd\tau\,\sigma_3}
 \widehat W_j(\xi,t,\lambda)W_j^+(\lambda)
 X_j^+(\xi,t,\lambda)
 \eul^{-\frac{\ii l_j(\lambda)}2
       \int_x^\xi\rho(\tau,t)\dd\tau\,\sigma_3}
 \dd\xi .
\end{aligned}
\end{equation}
A direct calculation shows that its kernel extends continuously to
$\lambda=A_j^{-1}$. The standard
successive-approximation argument therefore gives
\[
 X_j^+
 =
 X_{j,0}^+
 +\omega_j^+X_{j,1}^+
 +\ord\bigl((\omega_j^+)^2\bigr).
\]
Multiplying by $W_j^+$ and using the  symmetries established
in Section \ref{sec:ic-symmetries}, we obtain \eqref{ic:tildePsi-positive-branch}, with real-valued
coefficients $\chi_j$, $a_j$, and $b_j$.


Near $-A_j^{-1}$, introduce instead
\[
 W_j^-(\lambda)
 :=
 \begin{pmatrix}
  \dfrac{\ii}{\omega_j^-(\lambda)}&1\\[2mm]
  \dfrac{\ii}{\omega_j^-(\lambda)}&-1
 \end{pmatrix},
 \qquad
 X_j^-:=(W_j^-)^{-1}\widetilde\Psi_j.
\]
A similar argument as above yields
\[
 \widetilde\Psi_j
 =
 \frac{\beta_j}{\omega_j^-}
 \begin{pmatrix}
  -1&1\\
  -1&1
 \end{pmatrix}
 +
 \begin{pmatrix}
  \widehat a_j&\widehat b_j\\
  \widehat b_j&\widehat a_j
 \end{pmatrix}
 +\ord(\omega_j^-).
\]
The $\lambda\mapsto-\lambda$ symmetry of $\widetilde\Psi_j$ then
implies
\[
 \beta_j=\chi_j,
 \qquad
 \widehat a_j=a_j,
 \qquad
 \widehat b_j=-b_j,
\]
which proves \eqref{ic:tildePsi-negative-branch}.
\end{proof}

To describe the behavior of $H_j^{-1}$, set
\begin{equation}\label{ic:zeta-branch-def}
 \frac{1}{\zeta_j(\lambda)}
 :=
 \frac{1}{\sqrt{2}}
 \sqrt{\frac{1}{\ii A_j l_j(\lambda)}},
 \qquad
 \zeta_j(\lambda)
 :=
 \left(
 \frac{1}{\sqrt{2}}
 \sqrt{\frac{1}{\ii A_j l_j(\lambda)}}
 \right)^{-1},
 \qquad l_j(\lambda)\neq 0.
\end{equation}
Here the square root is taken on
$\mathbb C\setminus[0,\infty)$ and is normalized by
$\sqrt{-1}=\ii$, as in \eqref{ic:H-inverse}. Its boundary values are
\[
 \sqrt{r+\ii0}=\sqrt r,
 \qquad
 \sqrt{r-\ii0}=-\sqrt r,
 \qquad r>0.
\]
In particular,
\[
 \zeta_j(\lambda)^2=2\ii A_j l_j(\lambda).
\]

A direct expansion of
\eqref{ic:H-inverse} gives the following result.

\begin{proposition}\label{prop:ic-H-inverse-branch-expansion}
As $\lambda\to A_j^{-1}$,
\begin{equation}\label{ic:H-inverse-positive-branch}
 H_j^{-1}(\lambda)
 =
 \frac1{\zeta_j(\lambda)}
 \begin{pmatrix}
  1&1\\
  1&1
 \end{pmatrix}
 +
 \frac{\zeta_j(\lambda)}4
 \begin{pmatrix}
  1&-1\\
  -1&1
 \end{pmatrix}
 +\ord\bigl(\zeta_j(\lambda)^3\bigr).
\end{equation}
As $\lambda\to-A_j^{-1}$,
\begin{equation}\label{ic:H-inverse-negative-branch}
 H_j^{-1}(\lambda)
 =
 \frac1{\zeta_j(\lambda)}
 \begin{pmatrix}
  -1&1\\
  1&-1
 \end{pmatrix}
 -
 \frac{\zeta_j(\lambda)}4
 \begin{pmatrix}
  1&1\\
  1&1
 \end{pmatrix}
 +\ord\bigl(\zeta_j(\lambda)^3\bigr).
\end{equation}
\end{proposition}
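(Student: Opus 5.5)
The plan is a direct algebraic re-expression of $H_j^{-1}$ in terms of the local parameter $\zeta_j$, followed by Taylor expansion in $\zeta_j$. As in the proof of Lemma~\ref{lem:ic-HR-symmetries}, I will write
\[
 H_j^{-1}=\gamma_j\begin{pmatrix}h_j&1\\1&h_j\end{pmatrix},\qquad
 \gamma_j=\sqrt{\tfrac12}\sqrt{\tfrac{1}{\ii A_jl_j}-1},\qquad
 h_j=\frac{\lambda A_j}{1-\ii A_jl_j}.
\]
By \eqref{ic:zeta-branch-def}, $\zeta_j^2=2\ii A_jl_j$. Hence $\frac{1}{\ii A_jl_j}=\frac{2}{\zeta_j^2}$ and $1-\ii A_jl_j=1-\frac{\zeta_j^2}{2}$. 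Near either branch point, $l_j\to0$, so $\zeta_j\to0$ and $w:=\frac{2}{\zeta_j^2}$ is large.

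The first step is to factor the square root:
\[
 \gamma_j=\frac{1}{\sqrt2}\sqrt{w-1}=\frac{1}{\sqrt2}\sqrt{w}\,\sqrt{1-w^{-1}}=\frac{1}{\zeta_j}\sqrt{1-\tfrac{\zeta_j^2}{2}},
\]
where the last root is the principal one near $1$. This uses $\frac1{\sqrt2}\sqrt w=\frac1{\zeta_j}$, which is exactly the definition of $\zeta_j$ in \eqref{ic:zeta-branch-def} with the same branch of the square root.

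The identity $\sqrt{w-1}=\sqrt w\sqrt{1-w^{-1}}$ for the branch on $\mathbb C\setminus[0,\infty)$ is the step I expect to need care. For $|w|$ large, $w$ and $w-1$ differ by a relatively small amount, so they lie on the same side of the cut $[0,\infty)$. The identity then holds wherever $w$ is off the cut. When $\lambda$ is on the real axis and $w$ falls on the cut (the lower side of $\Gamma_j$, cf. the proof of Lemma~\ref{lem:ic-HR-symmetries}), I will interpret both sides as boundary values from the same side. The stated boundary-value convention $\sqrt{r\pm\ii0}=\pm\sqrt r$ applies to $w$ and $w-1$ simultaneously, so the identity persists by continuity. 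Expanding gives
\[
 \gamma_j=\frac1{\zeta_j}\Bigl(1-\frac{\zeta_j^2}{4}+\ord(\zeta_j^4)\Bigr).
\]

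The second step is to expand $\lambda A_j$ in $\zeta_j$. From $l_j^2=\lambda^2-A_j^{-2}$ and $l_j=\frac{\zeta_j^2}{2\ii A_j}$, we get $A_j^2\lambda^2=1+A_j^2l_j^2=1-\frac{\zeta_j^4}{4}$. Therefore
\[
 \lambda A_j=\pm\Bigl(1-\frac{\zeta_j^4}{8}+\ord(\zeta_j^8)\Bigr)\quad\text{as }\lambda\to\pm A_j^{-1}.
\]
Combining with the first step,
\[
 \gamma_jh_j=\frac{\lambda A_j}{\zeta_j\sqrt{1-\zeta_j^2/2}}
 =\pm\frac1{\zeta_j}\Bigl(1+\frac{\zeta_j^2}{4}+\ord(\zeta_j^4)\Bigr).
\]

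The final step is to assemble the entries. At $+A_j^{-1}$ the diagonal entries are $\frac1{\zeta_j}+\frac{\zeta_j}4$ and the off-diagonal entries are $\frac1{\zeta_j}-\frac{\zeta_j}4$; this is precisely \eqref{ic:H-inverse-positive-branch}. At $-A_j^{-1}$ the diagonal entries become $-\frac1{\zeta_j}-\frac{\zeta_j}4$ and the off-diagonal entries stay $\frac1{\zeta_j}-\frac{\zeta_j}4$, which rearranges into \eqref{ic:H-inverse-negative-branch}.

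The remainders are $\ord(\zeta_j^3)$ because both series above involve only even powers of $\zeta_j$ beyond the leading $\zeta_j^{-1}$ factor. The only genuine obstacle is the branch bookkeeping in the first step. If the continuity argument there turns out to be delicate on the cut, I would instead verify the sign of $\gamma_j\zeta_j$ directly at one test point on each side of the cut, for instance by comparison with Remark~\ref{rem:ic-H-at-zero}, and propagate it by continuity.
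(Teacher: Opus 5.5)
Your proof is correct and is essentially the paper's own argument. The paper only says that a direct expansion of $H_j^{-1}$ gives the result, and you carry out that expansion via $\gamma_j=\zeta_j^{-1}\sqrt{1-\zeta_j^2/2}$, $h_j=\lambda A_j/(1-\zeta_j^2/2)$ and $\lambda A_j=\pm\sqrt{1-\zeta_j^4/4}$, while also making explicit the branch matching of $\sqrt{w-1}$ and $\sqrt{w}$ (same side of the cut, since $\Im(w-1)=\Im w$) that the paper leaves implicit.
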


\begin{corollary}\label{cor:ic-Jost-fourth-root}
For fixed $(x,t)$, the balanced and original Jost solutions have at
most fourth-root singularities at the branch points:
\begin{equation}\label{ic:Jost-fourth-root}
 \Psi_j^{\mathrm b}(x,t,\lambda),
 \ \Psi_j(x,t,\lambda)
 =
 \ord\bigl(
 |\lambda\mp A_j^{-1}|^{-1/4}
 \bigr),
 \qquad
 \lambda\to\pm A_j^{-1},
 \qquad j=1,2.
\end{equation}
\end{corollary}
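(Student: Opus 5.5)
The plan is to combine the two expansions just established. By \eqref{ic:Psi-b-j-def} we have $\Psi_j^{\mathrm b}=H_j^{-1}\widetilde\Psi_j\eul^{-R_j}$, and by \eqref{ic:Psi-j-def} we have $\Psi_j=G^{-1}\Psi_j^{\mathrm b}$. Since $G=\eul^{-r\sigma_3}$ does not depend on $\lambda$ and is bounded for fixed $(x,t)$, it suffices to treat $\Psi_j^{\mathrm b}$.

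The factor $\eul^{-R_j}$ is harmless. Near $\pm A_j^{-1}$ we have $f_j\to0$, because $l_j\to0$ and the bracket in \eqref{ic:f-def} is finite for fixed $(x,t)$; the bracket is finite since $\pm A_j^{-1}\neq0$ and $\rho-A_j$ is integrable on $\mathbb R_j$. Hence $\eul^{-R_j}=I+\ord(l_j)$ is bounded.

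It remains to estimate the product $H_j^{-1}\widetilde\Psi_j$ near $\lambda=A_j^{-1}$; the point $-A_j^{-1}$ is handled identically, or via the symmetry \eqref{ic:HtildePsi-minus}. We record the relevant orders.
\begin{itemize}
\item From \eqref{ic:zeta-branch-def}, $\zeta_j^2=2\ii A_jl_j$ and $l_j\sim c\,(\lambda-A_j^{-1})^{1/2}$, so $\zeta_j\asymp|\lambda-A_j^{-1}|^{1/4}$.
\item Since $\omega_j^+\asymp|\lambda-A_j^{-1}|^{1/2}\asymp\zeta_j^2$, the quantities $1/\zeta_j$ and $1/\omega_j^+$ are of order $|\lambda-A_j^{-1}|^{-1/4}$ and $|\lambda-A_j^{-1}|^{-1/2}$, respectively.
\end{itemize}
Multiplying \eqref{ic:H-inverse-positive-branch} by \eqref{ic:tildePsi-positive-branch} naively gives a leading term of order $|\lambda-A_j^{-1}|^{-3/4}$, namely
\[
 \frac{\ii\chi_j}{\zeta_j\omega_j^+}
 \begin{pmatrix}1&1\\1&1\end{pmatrix}
 \begin{pmatrix}1&1\\-1&-1\end{pmatrix}.
\]
This matrix product vanishes. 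So the singular part of $\widetilde\Psi_j$ lies in the kernel of the leading part of $H_j^{-1}$, and this cancellation is what leaves only a fourth-root singularity.

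The remaining terms are estimated as follows.
\begin{itemize}
\item $\zeta_j^{-1}\begin{pmatrix}1&1\\1&1\end{pmatrix}\cdot\begin{pmatrix}a_j&b_j\\b_j&a_j\end{pmatrix}$ is $\ord(|\lambda-A_j^{-1}|^{-1/4})$.
\item $\frac{\zeta_j}{4}\begin{pmatrix}1&-1\\-1&1\end{pmatrix}\cdot\frac{\ii\chi_j}{\omega_j^+}\begin{pmatrix}1&1\\-1&-1\end{pmatrix}$ is of order $\zeta_j/\omega_j^+\asymp|\lambda-A_j^{-1}|^{-1/4}$, and it does not cancel.
\item The cross term $\zeta_j^{-1}\cdot\ord(\omega_j^+)=\ord(|\lambda-A_j^{-1}|^{1/4})$.
\item The term $\ord(\zeta_j^3)\cdot\ord(1/\omega_j^+)=\ord(|\lambda-A_j^{-1}|^{1/4})$.
\end{itemize}
All remaining terms are bounded. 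Hence $H_j^{-1}\widetilde\Psi_j=\ord(|\lambda-A_j^{-1}|^{-1/4})$.

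At $-A_j^{-1}$, the analogous cancellation is
\[
 \begin{pmatrix}-1&1\\1&-1\end{pmatrix}\begin{pmatrix}-1&1\\-1&1\end{pmatrix}=0,
\]
and the same power counting applies.

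The main obstacle is being sure the expansions are compatible. Both must be taken on the same side of the cut and for the same column's domain of definition: columns are defined only in half-planes or on cut sides, so the expansions should be read columnwise, with consistent boundary values of $\zeta_j$ and $\omega_j^\pm$. In addition, the $\ord(\omega_j^+)$ remainder in \eqref{ic:tildePsi-positive-branch} must be uniform in the direction of approach, so that the products of error terms are controlled as stated. Given these, multiplication by $\eul^{-R_j}$ and by $G^{-1}$ preserves the bound, which proves \eqref{ic:Jost-fourth-root}.
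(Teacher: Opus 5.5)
Your proposal is correct and follows the paper's argument. You combine the branch-point expansions of $\widetilde\Psi_j$ and $H_j^{-1}$, note that the product of the leading matrices vanishes so that $H_j^{-1}\widetilde\Psi_j=\ord(\zeta_j^{-1})$, and conclude using $|\zeta_j|\asymp|\lambda\mp A_j^{-1}|^{1/4}$ together with the boundedness of $\eul^{-R_j}$ and $G^{-1}$. Your term-by-term power counting only spells out what the paper compresses into the $\ord(\zeta_j^{-1})$ claim, in particular that the non-cancelling $\zeta_j/\omega_j^{+}$ term is also of order $|\lambda-A_j^{-1}|^{-1/4}$.
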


\begin{proof}
Combining
\eqref{ic:tildePsi-positive-branch}--\eqref{ic:tildePsi-negative-branch}
with
\eqref{ic:H-inverse-positive-branch}--\eqref{ic:H-inverse-negative-branch},
and observing that the products of the leading matrices vanish, gives
\[
 H_j^{-1}\widetilde\Psi_j
 =
 \ord(\zeta_j^{-1}).
\]
Since
$|\zeta_j(\lambda)|\asymp
|\lambda\mp A_j^{-1}|^{1/4}$ and the remaining gauge factors are
bounded at the branch points, \eqref{ic:Jost-fourth-root} follows.
\end{proof}

The determinant representations
\eqref{ic:scattering-determinants} and the branch-point expansions above
yield the corresponding behavior of the scattering coefficients.

\begin{corollary}\label{cor:ic-c-branch-expansion}
There exist constants $\alpha_1\in\mathbb C$ and
$\gamma_2\in\mathbb R$ such that the following expansions hold.

As $\lambda\to A_1^{-1}$; $\lambda\in \dot\Sigma_1$,
\begin{subequations}\label{ic:c-positive-A1-outer}
\begin{align}
 c_{11}(\lambda)
 &=
 \frac{\alpha_1}{\zeta_1(\lambda)}
 +\ord\bigl(\zeta_1(\lambda)\bigr),\\
 c_{21}(\lambda)
 &=
 \frac{\ii\overline{\alpha_1}}{\zeta_1(\lambda)}
 +\ord\bigl(\zeta_1(\lambda)\bigr).
\end{align}
\end{subequations}
As $\lambda\to A_1^{-1}$; $\lambda\in \dot\Sigma_0$,
\begin{subequations}\label{ic:c-positive-A1-band}
\begin{align}
 c_{11}(\lambda_+)
 &=
 \frac{\alpha_1}{\zeta_1(\lambda_+)}
 +\ord\bigl(\zeta_1(\lambda_+)\bigr),\\
 c_{21}(\lambda_+)
 &=
 \frac{\ii\overline{\alpha_1}}{\zeta_1(\lambda_+)}
 +\ord\bigl(\zeta_1(\lambda_+)\bigr).
\end{align}
\end{subequations}

As $\lambda\to-A_1^{-1}$, $\lambda\in\dot\Sigma_1$,
\begin{subequations}\label{ic:c-negative-A1-outer}
\begin{align}
 c_{11}(\lambda)
 &=
 -\frac{\overline{\alpha_1}}{\zeta_1(\lambda)}
 +\ord\bigl(\zeta_1(\lambda)\bigr),\\
 c_{21}(\lambda)
 &=
 -\frac{\ii\alpha_1}{\zeta_1(\lambda)}
 +\ord\bigl(\zeta_1(\lambda)\bigr).
\end{align}
\end{subequations}
As $\lambda\to-A_1^{-1}$, $\lambda\in\dot\Sigma_0$,
\begin{subequations}\label{ic:c-negative-A1-band}
\begin{align}
 c_{11}(\lambda_+)
 &=
 -\frac{\overline{\alpha_1}}{\zeta_1(\lambda_+)}
 +\ord\bigl(\zeta_1(\lambda_+)\bigr),\\
 c_{21}(\lambda_+)
 &=
 -\frac{\ii\alpha_1}{\zeta_1(\lambda_+)}
 +\ord\bigl(\zeta_1(\lambda_+)\bigr).
\end{align}
\end{subequations}

As $\lambda\to A_2^{-1}$; $\lambda\in \dot\Sigma_0$,
\begin{subequations}\label{ic:c-positive-A2-band}
\begin{align}
 c_{11}(\lambda_+)
 &=
 \frac{\ii\gamma_2}{\zeta_2(\lambda_+)}
 +\ord\bigl(\zeta_2(\lambda_+)\bigr),\\
 c_{21}(\lambda_+)
 &=
 -\frac{\ii\gamma_2}{\zeta_2(\lambda_+)}
 +\ord\bigl(\zeta_2(\lambda_+)\bigr).
\end{align}
\end{subequations}
As $\lambda\to- A_2^{-1}$; $\lambda\in \dot\Sigma_0$,
\begin{subequations}\label{ic:c-negative-A2-band}
\begin{align}
 c_{11}(\lambda_+)
 &=
 \frac{\ii\gamma_2}{\zeta_2(\lambda_+)}
 +\ord\bigl(\zeta_2(\lambda_+)\bigr),\\
 c_{21}(\lambda_+)
 &=
 \frac{\ii\gamma_2}{\zeta_2(\lambda_+)}
 +\ord\bigl(\zeta_2(\lambda_+)\bigr).
\end{align}
\end{subequations}
 Moreover,
\begin{equation}\label{ic:alpha1-nonzero}
 \alpha_1\neq0.
\end{equation}
\end{corollary}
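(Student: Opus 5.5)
The plan is to compute the scattering coefficients through the determinant representations \eqref{ic:scattering-determinants}, written in terms of the normalized eigenfunctions. Since $\det G=1$ and $\det H_j=1$, \eqref{ic:Psi-tildePsi-relation} gives, for instance,
\[
 c_{11}=\det\Bigl(\bigl(H_1^{-1}\widetilde\Psi_1\bigr)^{(1)}\eul^{-f_1},\ \bigl(H_2^{-1}\widetilde\Psi_2\bigr)^{(2)}\eul^{f_2}\Bigr),
\]
and there are analogous formulas for $c_{21}$. Near $\pm A_1^{-1}$ the factor coming from $j=2$ is analytic, because $A_1^{-1}\in\dot\Sigma_2$. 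So all the singular behavior sits in $H_1^{-1}\widetilde\Psi_1$.

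By Corollary~\ref{cor:ic-Jost-fourth-root} and its proof, combining \eqref{ic:tildePsi-positive-branch} with \eqref{ic:H-inverse-positive-branch} gives
\[
 H_1^{-1}\widetilde\Psi_1=\zeta_1^{-1}M_1(x,t)+\ord(\zeta_1),
\]
where the order-zero term cancels. This works because $\zeta_1^2\asymp\omega_1^+$ and the cross term $\zeta_1^{-1}(\ii\chi_1/\omega_1^+)$ vanishes, since the product of the leading matrices is zero. Here $M_1$ has the form $(a_1+b_1)\begin{pmatrix}1&1\\1&1\end{pmatrix}$ plus a contribution $\tfrac{\ii\chi_1}{4}(\zeta_1^2/\omega_1^+)\begin{pmatrix}1&-1\\-1&1\end{pmatrix}\cdot\begin{pmatrix}1&1\\-1&-1\end{pmatrix}$, which is again proportional to $\begin{pmatrix}1&1\\1&1\end{pmatrix}$. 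So both columns of $M_1$ equal a common vector $v_1(x,t)$. Substituting into the determinants and using $x$-independence yields
\[
 c_{11}=\zeta_1^{-1}\alpha_1+\ord(\zeta_1),\qquad \alpha_1:=\det\bigl(G^{-1}v_1\eul^{-f_1},\Psi_2^{(2)}(A_1^{-1})\bigr).
\]
The expansion has no $\zeta_1^0$ term because the exponential $\eul^{-f_1}=1+\ord(\zeta_1^2)$ only contributes at order $\zeta_1$. The coefficient $c_{21}$ is treated the same way, with the other column of $\Psi_2$.

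To get the relation $c_{21}\sim \ii\overline{\alpha_1}/\zeta_1$, I would use the conjugation symmetry \eqref{ic:c-conjugation-entries} together with the fact that both columns of $M_1$ coincide. This relates $\det(v,\Psi_2^{(1)})$ to $\det(\Psi_2^{(2)},v)$ via $\Psi_2^{(1)}=\ii\overline{\Psi_2^{(2)}}$ on $\dot\Sigma_1$, and it requires tracking how $\zeta_1$ conjugates ($\overline{\zeta_1}$ versus $\zeta_1$ on each side). On the band $\dot\Sigma_0$ the same computation applies to the upper boundary values. The $-A_1^{-1}$ expansions follow from \eqref{ic:c-minus-entries}, i.e. $c_{11}(\lambda)=\overline{c_{22}(\lambda)}$, combined with $c_{11}(\lambda)=c_{22}(-\lambda)$, so that $c_{11}(-\lambda)=\overline{c_{11}(\lambda)}$. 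One must also check that $\zeta_1(-\lambda)$ matches $-\overline{\zeta_1(\lambda)}$ up to the branch conventions of \eqref{ic:zeta-branch-def}.

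Near $\pm A_2^{-1}$ the roles are reversed: $H_2^{-1}\widetilde\Psi_2$ carries the $\zeta_2^{-1}$ singularity and $\Psi_1^{(1)}(\lambda_+)$ is regular. The coefficient then has the form $\ii\gamma_2$. Reality of $\gamma_2$ comes from Corollary~\ref{cor:ic-upper-cut-conjugation}: $c_{11}(\lambda_+)$ is real on $\dot\Gamma_2$, while $\zeta_2(\lambda_+)$ is purely imaginary on the relevant side, since $\zeta_2^2=2\ii A_2 l_2$ and $l_2$ is real on $\dot\Sigma_0$. The sign relations for $c_{21}$ follow from \eqref{ic:c22-c21-band} and \eqref{ic:c11-c22-conjugation-cut}. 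The $-A_2^{-1}$ case follows from Corollary~\ref{cor:ic-c11-minus-upper-cut}.

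For $\alpha_1\neq0$, I would argue by contradiction. If $\alpha_1=0$, then $c_{11}$ and $c_{21}$ are both $\ord(\zeta_1)$, so $c_{11}\to0$ at $A_1^{-1}$. This contradicts \eqref{ic:c-pseudo-unitarity}, which forces $|c_{11}|\ge1$ on $\dot\Sigma_1$.

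The main obstacle is the bookkeeping of branches. One has to verify the exact phase factors ($\ii$, the conjugations, the signs) from the branch conventions of $\zeta_j$ and $\omega_j^\pm$ on the various sides, and confirm that the order-zero terms genuinely vanish. I would start by checking the leading coefficient explicitly in the free case $\rho\equiv$ step, $r\equiv0$.
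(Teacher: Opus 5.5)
Your skeleton is the paper's. You take the leading term of $\Psi_1$ at $\pm A_1^{-1}$ (resp. $\Psi_2$ at $\pm A_2^{-1}$) from the product of the expansions of $H_j^{-1}$ and $\widetilde\Psi_j$: the $\zeta_j^{-3}$ term cancels and the two columns of the $\zeta_j^{-1}$ coefficient coincide. You insert this into \eqref{ic:scattering-determinants} together with the regular columns of the other Jost solution, and you get $\alpha_1\neq0$ from \eqref{ic:c-pseudo-unitarity}, exactly as in the paper.

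Where you differ is in the secondary relations. The paper works with vectors:
\begin{enumerate}
\item the leading vector $w_j$ is real;
\item $\Psi_2^{(1)}(A_1^{-1})=-\ii\overline{z_2}$;
\item at $-A_1^{-1}$ it uses the negative-branch Jost expansion together with \eqref{ic:Psi-minus-columns};
\item $z_1=\Psi_{1,+}^{(1)}(A_2^{-1})$ is purely imaginary, which gives $\gamma_2\in\mathbb R$.
\end{enumerate}
You work with coefficient symmetries instead:
\begin{enumerate}
\item $c_{21}=\overline{c_{12}}$, where $c_{12}=\alpha_1/\zeta_1+\ord(\zeta_1)$ precisely because the columns coincide;
\item $c_{11}((-\lambda)_+)=\overline{c_{11}(\lambda_+)}$; on the band this needs the analytic versions \eqref{ic:c11-c22-minus-analytic}, \eqref{ic:c11-c22-conjugation-analytic};
\item $c_{21}(\lambda_+)=-\ii\overline{c_{11}(\lambda_+)}$ on $\dot\Sigma_0$.
\end{enumerate}
This avoids the reality of $w_j$, which the paper asserts only briefly, at the price of branch bookkeeping. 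That bookkeeping does close, using
\begin{enumerate}
\item $\overline{\zeta_j}=-\ii\zeta_j$ on $(A_j^{-1},\infty)$;
\item $\zeta_j(-\lambda)=-\overline{\zeta_j(\lambda)}$ on $\mathbb R\setminus\Gamma_j$;
\item $\overline{\zeta_j(\lambda_+)}=-\zeta_j(\lambda_+)$ on $\dot\Gamma_j$;
\item the expansions hold uniformly in the closed upper half-disc, so the same constant appears on both sides of each branch point.
\end{enumerate}

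Two small slips. First, your sentence via $\Psi_2^{(1)}$ has the wrong sign: it is $\Psi_2^{(1)}=-\ii\overline{\Psi_2^{(2)}}$. That route would also need $v_1$ real, which you have not shown; the route through $c_{21}=\overline{c_{12}}$ needs neither. Second, $\bigl(\begin{smallmatrix}1&-1\\-1&1\end{smallmatrix}\bigr)\bigl(\begin{smallmatrix}1&1\\-1&-1\end{smallmatrix}\bigr)$ is a multiple of $\bigl(\begin{smallmatrix}1&1\\-1&-1\end{smallmatrix}\bigr)$, not of $\bigl(\begin{smallmatrix}1&1\\1&1\end{smallmatrix}\bigr)$. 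The equal-columns conclusion still holds.

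The step that fails as written is the reality of $\gamma_2$. The expansion \eqref{ic:c-positive-A2-band} is on $\dot\Sigma_0$, i.e.\ $\lambda>A_2^{-1}$, where $l_2>0$. There $\zeta_2^2=2\ii A_2l_2$ is purely imaginary, so $\zeta_2=(2A_2l_2)^{1/2}\eul^{-3\pi\ii/4}$ is \emph{not} purely imaginary. Nor is $c_{11}(\lambda_+)$ real there in general. Corollary~\ref{cor:ic-upper-cut-conjugation} gives reality of $c_{11}(\lambda_+)$ only on the other side of $A_2^{-1}$, in $\dot\Gamma_2$. There $l_2(\lambda_+)\in\ii\mathbb R_{>0}$, hence $\zeta_2(\lambda_+)\in\ii\mathbb R$.

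To repair the argument, evaluate $\zeta_2c_{11}$ on the $\dot\Gamma_2$ side, where its limit lies in $\ii\mathbb R$. Then use that $\zeta_2c_{11}$ has a single limit at $A_2^{-1}$ from the closed upper half-disc: $\Psi_1^{(1)}$ is continuous there from $\overline{\mathbb C^+}$ because $A_2^{-1}\in\dot\Gamma_1$, and $\zeta_2\Psi_2^{(2)}$ has a limit there. This transfers the conclusion to $\dot\Sigma_0$.

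The same side-transfer is needed when you invoke Corollary~\ref{cor:ic-c11-minus-upper-cut}, which is stated on $\dot\Gamma_2$, for the expansion at $-A_2^{-1}$ on $\dot\Sigma_0$. More directly, $c_{11}((-\lambda)_+)=\overline{c_{11}(\lambda_+)}$ on $\dot\Sigma_0$ together with $\zeta_2(-\lambda)=-\overline{\zeta_2(\lambda)}$ gives \eqref{ic:c-negative-A2-band} at once.

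With these repairs your route proves the statement. Alternatively, use the paper's observation that $z_1$ is purely imaginary by \eqref{ic:Psi-conjugation-upper-cut} and $w_2$ is real.
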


\begin{proof}
Combining
\eqref{ic:tildePsi-positive-branch}--\eqref{ic:tildePsi-negative-branch}
with
\eqref{ic:H-inverse-positive-branch}--\eqref{ic:H-inverse-negative-branch},
\eqref{ic:Psi-b-j-def}, and \eqref{ic:Psi-j-def}, we obtain 
\begin{align}
 \Psi_j(x,t,\lambda)
 &=
 \frac1{\zeta_j(\lambda)}
 \bigl(w_j,w_j\bigr)
 +\ord\bigl(\zeta_j(\lambda)\bigr),
 &&\lambda\to A_j^{-1},
 \label{ic:Psi-positive-branch-leading}\\
 \Psi_j(x,t,\lambda)
 &=
 \frac1{\zeta_j(\lambda)}
 \bigl(-\sigma_3w_j,\sigma_3w_j\bigr)
 +\ord\bigl(\zeta_j(\lambda)\bigr),
 &&\lambda\to-A_j^{-1},
 \label{ic:Psi-negative-branch-leading}
\end{align}
where $w_j=w_j(x,t)$ is some real
vectors.

At $\lambda=A_1^{-1}$, the columns of $\Psi_2$ are regular. Set
\[
 z_2:=\Psi_2^{(2)}(x,t,A_1^{-1}),
 \qquad
 \alpha_1:=\det(w_1,z_2).
\]
By the symmetry \eqref{ic:Psi-conjugation-columns},
\[
 \Psi_2^{(1)}(x,t,A_1^{-1})
 =
 -\ii\overline{z_2}.
\]
Substitution into \eqref{ic:scattering-determinants} gives
\eqref{ic:c-positive-A1-outer}. The same local expansions on the upper
side of the adjacent interval $\dot\Sigma_0$ give
\eqref{ic:c-positive-A1-band}. The formulas at $-A_1^{-1}$ follow from
\eqref{ic:Psi-negative-branch-leading} and the symmetry \eqref{ic:Psi-minus-columns} of the
Jost solutions.

At $\lambda=A_2^{-1}$, set
\[
 z_1:=\Psi_{1,+}^{(1)}(x,t,A_2^{-1}).
\]
The vector $z_1$ is purely imaginary, whereas $w_2$ is real. Therefore,
there exists $\gamma_2\in\mathbb R$ such that
\[
 \det(z_1,w_2)=\ii\gamma_2.
\]
The determinant representations then yield
\eqref{ic:c-positive-A2-band}; the  symmetry \eqref{ic:Psi-minus-columns} implies
\eqref{ic:c-negative-A2-band}. 
By Proposition~\ref{prop:scattering-matrix}, the coefficients
$\alpha_1$ and $\gamma_2$ are independent of $x$ and $t$.

Finally, if $\alpha_1=0$, then
\[
 c_{11}(\lambda),\ c_{21}(\lambda)
 =
 \ord\bigl(\zeta_1(\lambda)\bigr),
 \qquad
 \lambda\to A_1^{-1},\quad\lambda\in\dot\Sigma_1,
\]
which contradicts
\[
 |c_{11}(\lambda)|^2-|c_{21}(\lambda)|^2=1.
\]
Hence \eqref{ic:alpha1-nonzero} holds.
\end{proof}

\subsection{Behavior near $\lambda=0$}
\label{sec:ic-zero-behavior}

In this section, we study the behavior of the Jost solutions near
$\lambda=0$, which will later be used to derive the recovery formulas.
Since $0$ lies on both cuts $\Gamma_1$ and $\Gamma_2$, we consider the
expansions in $\mathbb C^+$ and $\mathbb C^-$ separately.

For the Camassa--Holm equation and related CH-type systems, the analysis
near $\lambda=0$ is often based on an additional gauge transformation;
see, for example, \cites{BS08,BS15,BS13}. As in the scalar mCH case \cite{BKS20}, no
further matrix gauge is needed here. It is enough to regroup the terms
in the transformed Lax pair \eqref{ic:transformed-lax}.

We first introduce
\begin{equation}\label{ic:D12-def}
 D_1(x,t)
 :=
 \int_{-\infty}^{x}\delta_1(\xi,t)\dd\xi,
 \qquad
 D_2(x,t)
 :=
 \int_x^{+\infty}\delta_2(\xi,t)\dd\xi.
\end{equation}
Using \eqref{ic:delta-def}, these functions can be written as
\begin{equation}\label{ic:D12-r-identities}
\begin{aligned}
 D_1(x,t)
 &=
 \frac1{2A_1}
 \int_{-\infty}^{x}
 \bigl(\rho(\xi,t)-A_1\bigr)\dd\xi-r(x,t),\\
 D_2(x,t)
 &=
 \frac1{2A_2}
 \int_x^{+\infty}
 \bigl(\rho(\xi,t)-A_2\bigr)\dd\xi+r(x,t).
\end{aligned}
\end{equation}

Define the modified phases
\begin{equation}\label{ic:f0-def}
 f_1^0
 :=
 f_1-\ii A_1l_1D_1,
 \qquad
 f_2^0
 :=
 f_2+\ii A_2l_2D_2.
\end{equation}
In view of \eqref{ic:D12-r-identities}, both phases have the common
representation
\begin{equation}\label{ic:f0-common-form}
 f_j^0(x,t,\lambda)
 =
 \ii A_jl_j(\lambda)
 \left[
  \frac{x}{2}+r(x,t)
  -t\left(
   \frac1{\lambda^2}+\frac{A_j^2}{2}
  \right)
 \right],
 \qquad j=1,2.
\end{equation}
Set
\begin{equation}\label{ic:tildePsi0-def}
 \widetilde\Psi_j^0
 :=
 \widehat\Psi_j\eul^{f_j^0\sigma_3}.
\end{equation}
By \eqref{ic:f-derivatives},
\eqref{ic:rho-r-conservation}, and \eqref{ic:delta-def},
\begin{align}
 f_{j,x}^0-f_{j,x}
 &=
 -\ii A_jl_j\delta_j,
 \label{ic:f0-f-x-difference}\\
 f_{j,t}^0-f_{j,t}
 &=
 \ii A_jl_j
 \left[
  Q\left(\delta_j+\frac12\right)
  -\frac{A_j^2}{2}
 \right].
 \label{ic:f0-f-t-difference}
\end{align}
Consequently, the Lax pair \eqref{ic:transformed-lax} takes the form
\begin{subequations}\label{ic:zero-adapted-lax}
\begin{align}
 \widetilde\Psi_{j,x}^0
 +[f_{j,x}^0\sigma_3,\widetilde\Psi_j^0]
 &=
 \widehat W_j^0\widetilde\Psi_j^0,
 \label{ic:zero-adapted-lax-x}\\
 \widetilde\Psi_{j,t}^0
 +[f_{j,t}^0\sigma_3,\widetilde\Psi_j^0]
 &=
 \widehat V_j^0\widetilde\Psi_j^0.
 \label{ic:zero-adapted-lax-t}
\end{align}
\end{subequations}
Here
\begin{align}
 \widehat W_j^0
 &:=
 \widehat W_j-\ii A_jl_j\delta_j\sigma_3 =
 \frac{\lambda\delta_j}{l_j}
 \bigl(\sigma_2-\ii A_j\lambda\sigma_3\bigr),
 \label{ic:W0-def}\\
 \widehat V_j^0
 &:=
 \widehat V_j
 +\ii A_jl_j
 \left[
  Q\left(\delta_j+\frac12\right)
  -\frac{A_j^2}{2}
 \right]\sigma_3.
 \label{ic:V0-def}
\end{align}
In particular,
\begin{equation}\label{ic:W0-at-zero}
 \widehat W_j^0(x,t,\lambda)=\ord(\lambda),
 \qquad \lambda\to0.
\end{equation}

\begin{remark}\label{rem:ic-zero-phase-scalar-reduction}
The term $r(x,t)$ in \eqref{ic:f0-common-form} arises from the diagonal
contribution $-r_x\sigma_3$ produced by the balancing gauge and constitutes 
 a crucial difference between  the reconstruction procedures in the two-component and scalar cases: the expansion at
$\lambda=0$ determines $x+2r$, rather than the physical coordinate
$x$ alone; see Section~\ref{sec:ic-two-component-reconstruction}. Under the
scalar reduction $u=v$, one has $r\equiv0$, and the modified phase
$f_j^0$ reduces to the corresponding phase for the scalar mCH equation \cite{BKS20}.
\end{remark}

The Jost solutions $\widetilde\Psi_j^0$ of
\eqref{ic:zero-adapted-lax} are normalized by
\[
\widetilde\Psi_j^0(x,t,\lambda) \to  I,
 \qquad x\to(-1)^j\infty,
\]
and satisfy the Volterra equations
\begin{equation}\label{ic:volterra-tildePsi0}
\begin{aligned}
 \widetilde\Psi_j^0(x,t,\lambda)
 =I+\int_{(-1)^j\infty}^{x}
 &\eul^{(f_j^0(\xi,t,\lambda)-f_j^0(x,t,\lambda))\sigma_3}
 \widehat W_j^0(\xi,t,\lambda)
 \widetilde\Psi_j^0(\xi,t,\lambda)
 \eul^{-(f_j^0(\xi,t,\lambda)-f_j^0(x,t,\lambda))\sigma_3}
 \dd\xi .
\end{aligned}
\end{equation}
As in \eqref{ic:volterra-tildePsi}, the equation is understood
columnwise. Compatibility implies that these solutions also satisfy
the $t$-equation in \eqref{ic:zero-adapted-lax}. The singular
time-dependent term in $f_j^0$ cancels in the phase difference appearing
in \eqref{ic:volterra-tildePsi0}.

\begin{proposition}\label{prop:ic-Psi-zero-expansion}
For each fixed $(x,t)$, the Jost solutions have the following expansions.
As $\lambda\to0$ in $\mathbb C^+$,
\begin{subequations}\label{ic:Psi-upper-zero-expansion}
\begin{align}
 \Psi_1^{(1)}(x,t,\lambda)
 &=
 \ii\eul^{-D_1(x,t)-r(x,t)-f_1(x,t,\lambda)}
 \left[
  \begin{pmatrix}0\\1\end{pmatrix}
  +\frac{\lambda}{2}
   \begin{pmatrix}p(x,t)\\0\end{pmatrix}
  +\ord(\lambda^2)
 \right],
 \label{ic:Psi11-upper-zero}\\
 \Psi_2^{(2)}(x,t,\lambda)
 &=
 \ii\eul^{-D_2(x,t)+r(x,t)+f_2(x,t,\lambda)}
 \left[
  \begin{pmatrix}1\\0\end{pmatrix}
  +\frac{\lambda}{2}
   \begin{pmatrix}0\\q(x,t)\end{pmatrix}
  +\ord(\lambda^2)
 \right].
 \label{ic:Psi22-upper-zero}
\end{align}
\end{subequations}
As $\lambda\to0$ in $\mathbb C^-$,
\begin{subequations}\label{ic:Psi-lower-zero-expansion}
\begin{align}
 \Psi_2^{(1)}(x,t,\lambda)
 &=
 -\eul^{-D_2(x,t)+r(x,t)-f_2(x,t,\lambda)}
 \left[
  \begin{pmatrix}1\\0\end{pmatrix}
  +\frac{\lambda}{2}
   \begin{pmatrix}0\\q(x,t)\end{pmatrix}
  +\ord(\lambda^2)
 \right],
 \label{ic:Psi21-lower-zero}\\
 \Psi_1^{(2)}(x,t,\lambda)
 &=
 -\eul^{-D_1(x,t)-r(x,t)+f_1(x,t,\lambda)}
 \left[
  \begin{pmatrix}0\\1\end{pmatrix}
  +\frac{\lambda}{2}
   \begin{pmatrix}p(x,t)\\0\end{pmatrix}
  +\ord(\lambda^2)
 \right].
 \label{ic:Psi12-lower-zero}
\end{align}
\end{subequations}
\end{proposition}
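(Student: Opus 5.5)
The plan is to work with the zero-adapted Jost solutions $\widetilde\Psi_j^0$ of \eqref{ic:volterra-tildePsi0}, expand them to first order in $\lambda$, and then convert back to $\Psi_j$ through $\Psi_j=G^{-1}H_j^{-1}\widetilde\Psi_j^0\eul^{-f_j^0\sigma_3}$. The conversion uses \eqref{ic:G-def}, \eqref{ic:hatPsi-def} and \eqref{ic:tildePsi0-def}, together with the expansions \eqref{H_at_0} of $H_j^{-1}$ at $0_\pm$.

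First, I would show that the relevant columns of $\widetilde\Psi_j^0$ satisfy $\widetilde\Psi_j^0=I+\lambda\,\Phi_j^{(1)}(x,t)+\ord(\lambda^2)$ as $\lambda\to0$ in the appropriate half-plane. By \eqref{ic:W0-at-zero}, $\widehat W_j^0=\ord(\lambda)$. More precisely, using \eqref{ic:W0-def} and $\lambda/l_j\to\mp\ii A_j$ in $\mathbb C^\pm$ (Remark~\ref{rem:ic-H-at-zero}), we get
\[
\widehat W_j^0=\mp\ii A_j\lambda\delta_j\sigma_2+\ord(\lambda^2).
\]
The exponentials in the kernel are $\eul^{\pm\ii A_jl_j\cdot(\ldots)}$ with $\ii A_jl_j\to\mp1$. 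For the column decaying in the given half-plane, they reduce to $\eul^{\pm(\xi-x)/2}$-type factors, which are bounded on the integration half-line. The time-singular term cancels in the phase difference. Hence one Neumann iteration, with $\delta_j\in L^1$ by \eqref{ic:standing-direct-assumptions}, gives the first-order term, and the remainder is $\ord(\lambda^2)$ uniformly. The diagonal part of $\Phi_j^{(1)}$ vanishes. The off-diagonal entry of, say, $\widetilde\Psi_1^{0(1)}$ at $\lambda\in\mathbb C^+$ is an integral of the form $\pm A_1\int_{-\infty}^x \eul^{(\xi-x)\cdot c}\,\delta_1(\xi)\,\dd\xi$, with the exponent rate determined by $\rho$ via $f_1^0$.

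Next, I would multiply out. In $\mathbb C^+$, $H_j^{-1}=\ii\sigma_1+\frac{\ii A_j}{2}\lambda I+\ord(\lambda^2)$, and $G^{-1}=\eul^{r\sigma_3}$. For $\Psi_1^{(1)}$ this gives
\[
\ii\eul^{r\sigma_3}\Bigl[\begin{pmatrix}0\\1\end{pmatrix}+\lambda\Bigl(\tfrac{A_1}{2}\begin{pmatrix}1\\0\end{pmatrix}+\sigma_1\Phi_{1}^{(1),(1)}\Bigr)\Bigr]\eul^{-f_1^0}.
\]
Writing $\eul^{-f_1^0}=\eul^{-f_1-D_1}\cdot\eul^{\ii A_1(l_1-\ldots)}$ via \eqref{ic:f0-def}, and noting that $\ii A_1l_1D_1=-D_1+\ord(\lambda^2)$ in $\mathbb C^+$, the scalar prefactor becomes $\ii\eul^{-D_1-f_1}$. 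The factor $\eul^{r\sigma_3}$ acting on $(0,1)^{\mathsf T}$ gives $\eul^{-r}$, which produces the prefactor $\ii\eul^{-D_1-r-f_1}$. The same factor rescales the first component by $\eul^{2r}$. The $\mathbb C^-$ statements follow with $H_j^{-1}(0_-)=-I-\frac{A_j}{2}\lambda\sigma_1$, or directly from the symmetries \eqref{ic:Psi-minus-columns} and \eqref{ic:Psi-conjugation-columns} applied to the $\mathbb C^+$ expansions.

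The main obstacle is identifying the order-$\lambda$ coefficient with $p/2$ (respectively $q/2$). After the steps above, it amounts to the identity
\[
\eul^{2r}\Bigl(\tfrac{A_1}{2}+\text{(integral of }\delta_1\text{)}\Bigr)=\tfrac{p}{2}.
\]
Rather than evaluating the integral directly, I would argue via the original Lax pair. At $\lambda=0$ the $x$-equation is $\Psi_x=-\frac12\sigma_3\Psi$. Writing $\Psi_1^{(1)}=\ii\eul^{-D_1-r-f_1}(\lambda a,\,1+\ord(\lambda))^{\mathsf T}$, matching the $\ord(\lambda)$ terms of the first row of $\Psi_x=\mathsf U\Psi$ gives an ODE of the form $a_x+(\ldots)a=\tfrac12 m$. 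Here the exponent derivative is computed from $f_{1,x}=\frac{\ii l_1\rho}{2}$ and $D_{1,x}+r_x=\frac{\rho-A_1}{2A_1}$, all at leading order in $\lambda$. Checking that this reduces to $a_x+a=\tfrac m2$, i.e. $(p/2)_x+p/2=m/2$ by \eqref{eq:identities}, and that $a\to A_1/2=p(-\infty)/2$ as $x\to-\infty$, forces $a=p/2$ by uniqueness of the bounded solution. The $\Psi_2^{(2)}$ case is analogous, with $q_x-q=-n$ and normalization at $+\infty$. I expect the bookkeeping of the exponent (verifying that it is exactly $x/2$ to the needed order) to be the delicate point.
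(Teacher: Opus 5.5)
Your proposal is correct and follows essentially the paper's proof: one Neumann iteration of the zero-adapted Volterra equation \eqref{ic:volterra-tildePsi0}, multiplication by the expansion \eqref{H_at_0} of $H_j^{-1}$ (and by $G^{-1}$ and $\eul^{-f_j^0\sigma_3}$), identification of the order-$\lambda$ coefficient by uniqueness for a first-order linear ODE with a prescribed limit, and the $\mathbb C^-$ case by symmetry. The only difference is in how you obtain that ODE, and there are two cosmetic slips. You get $a_x+a=m/2$ by matching $\ord(\lambda)$ terms in $\Psi_x=\mathsf U\Psi$; this is legitimate because the remainder is uniform in $x$, and it is cleanest in integrated form. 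The paper instead checks directly that $\frac{A_1}{2}+\eta_1$ and $\frac12\eul^{-2r}p$ both solve $y_x+(1+2r_x)y=\rho/2$, which is the same equation after the substitution $a=\eul^{2r}y$. The slips: the correct limit is $1/l_j\to\mp\ii A_j$ (not $\lambda/l_j$), and at $\lambda=0$ the off-diagonal kernel is $\eul^{-(x-\xi)+2r(\xi)-2r(x)}$, which is governed by $r$ rather than $\rho$.
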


\begin{proof}
As $\lambda\to0$ in $\mathbb C^+$,
\begin{equation}\label{ic:zero-upper-coefficient-expansions}
 \ii A_jl_j(\lambda)
 =
 -1+\ord(\lambda^2),
 \qquad
 \widehat W_j^0
 =
 -\ii A_j\lambda\delta_j\sigma_2
 +\ord(\lambda^2).
\end{equation}
The first iteration of \eqref{ic:volterra-tildePsi0} therefore gives
\begin{align}
 \bigl(\widetilde\Psi_1^0\bigr)^{(1)}
 &=
 \begin{pmatrix}1\\0\end{pmatrix}
 +\lambda\eta_1(x,t)
  \begin{pmatrix}0\\1\end{pmatrix}
 +\ord(\lambda^2),
 \label{ic:tildePsi01-expansion}\\
 \bigl(\widetilde\Psi_2^0\bigr)^{(2)}
 &=
 \begin{pmatrix}0\\1\end{pmatrix}
 +\lambda\eta_2(x,t)
  \begin{pmatrix}1\\0\end{pmatrix}
 +\ord(\lambda^2),
 \label{ic:tildePsi02-expansion}
\end{align}
where
\begin{align}
 \eta_1(x,t)
 &=
 A_1\int_{-\infty}^{x}
 \eul^{-(x-\xi)-2r(x,t)+2r(\xi,t)}
 \delta_1(\xi,t)\dd\xi,
 \label{ic:eta1-def}\\
 \eta_2(x,t)
 &=
 A_2\int_x^{+\infty}
 \eul^{x-\xi+2r(x,t)-2r(\xi,t)}
 \delta_2(\xi,t)\dd\xi.
 \label{ic:eta2-def}
\end{align}
Combining \eqref{ic:tildePsi01-expansion} and
\eqref{ic:tildePsi02-expansion} with \eqref{H_at_0}, we obtain
\begin{align}
 \bigl(H_1^{-1}\widetilde\Psi_1^0\bigr)^{(1)}
 &=
 \ii
 \left[
  \begin{pmatrix}0\\1\end{pmatrix}
  +\lambda\left(\frac{A_1}{2}+\eta_1\right)
   \begin{pmatrix}1\\0\end{pmatrix}
  +\ord(\lambda^2)
 \right],
 \label{ic:HtildePsi0-first-column}\\
 \bigl(H_2^{-1}\widetilde\Psi_2^0\bigr)^{(2)}
 &=
 \ii
 \left[
  \begin{pmatrix}1\\0\end{pmatrix}
  +\lambda\left(\frac{A_2}{2}+\eta_2\right)
   \begin{pmatrix}0\\1\end{pmatrix}
  +\ord(\lambda^2)
 \right].
 \label{ic:HtildePsi0-second-column}
\end{align}

The definition of $\delta_j$ and the identities
\[
 p+p_x=\rho\eul^{2r},
 \qquad
 q-q_x=\rho\eul^{-2r}
\]
imply
\begin{equation}\label{ic:eta-pq-identities}
 \frac{A_1}{2}+\eta_1
 =
 \frac12\eul^{-2r}p,
 \qquad
 \frac{A_2}{2}+\eta_2
 =
 \frac12\eul^{2r}q.
\end{equation}
Indeed, both sides of the first identity satisfy the scalar equation
\[
 y_x+(1+2r_x)y=\frac{\rho}{2}
\]
and tend to $A_1/2$ as $x\to-\infty$. Likewise, both sides of the
second identity satisfy
\[
 y_x-(1+2r_x)y=-\frac{\rho}{2}
\]
and tend to $A_2/2$ as $x\to+\infty$. Thus
\eqref{ic:eta-pq-identities} follows by uniqueness.

The lower-half-plane expansions
\eqref{ic:Psi-lower-zero-expansion} follow from
\eqref{ic:Psi-upper-zero-expansion} by the  symmetries
\eqref{ic:Psi-conjugation-columns}, using
$ \overline{f_j(x,t,\overline\lambda)}
 =
 -f_j(x,t,\lambda).$
\end{proof}

\begin{remark}\label{rem:ic-Psi-zero-phases}
The exponential factors in
Proposition~\ref{prop:ic-Psi-zero-expansion} can be written without
$r$. Indeed, \eqref{ic:D12-r-identities} gives
\begin{align*}
 -D_1-r\pm f_1
 &=
 -\frac1{2A_1}
 \int_{-\infty}^{x}
 \bigl(\rho(\xi,t)-A_1\bigr)\dd\xi
 \pm f_1,\\
 -D_2+r\pm f_2
 &=
 -\frac1{2A_2}
 \int_x^{+\infty}
 \bigl(\rho(\xi,t)-A_2\bigr)\dd\xi
 \pm f_2.
\end{align*}
Thus, although $r$ is needed in the modified phase
\eqref{ic:f0-common-form}, its explicit contribution cancels from the
exponential factors in the expansions of the original Jost solutions.
\end{remark}

\section{Riemann--Hilbert problems}
\label{sec:ic-rh-problems}

We now formulate the right and left Riemann--Hilbert problems associated
with the scattering data constructed above.
Having both left and right RH problems 
facilitates the large-time asymptotic analysis, which will be addressed elsewhere.
Both RH matrices are built from the balanced Jost solutions $\Psi_j^{\mathrm b}$, $ j=1,2.$

\subsection{Pre-RH problems parametrized by $(x,t)$}
\label{sec:ic-pre-rh}

To formulate the RH problems with the standard normalization at infinity,
introduce the constant matrix
\begin{equation}\label{ic:E-def}
 \mathcal E:=
 -\frac1{\sqrt2}
 \begin{pmatrix}
  1&\ii\\
  \ii&1
 \end{pmatrix}.
\end{equation}
Then
\begin{equation}\label{ic:E-identities}
 \det\mathcal E=1,
 \qquad
 \mathcal E^{-1}=\overline{\mathcal E},
 \qquad
 \mathcal E=\sigma_2\mathcal E\sigma_3,
 \qquad
 \mathcal E=\ii\sigma_1\overline{\mathcal E}.
\end{equation}

Let $\{\lambda_k\}_{k=1}^{N}\subset\dot\Gamma_2$ denote the zeros of
$c_{11}(\lambda_+)$. They are simple and occur in symmetric pairs:
\[
 c_{11}(\lambda_{k+})=0
 \quad\Longleftrightarrow\quad
 c_{11}((-\lambda_k)_+)=0.
\]
We label the paired zero by $\lambda_{k^*}=-\lambda_k$. The corresponding
zeros of $c_{22}(\lambda_-)$ are located at the same points.

For a matrix-valued function with a simple one-sided pole, we write
\begin{align}
 \Res_{\lambda=\lambda_{k+}}F
 &:=
 \lim_{\substack{\lambda\to\lambda_k\\
                  \Im\lambda>0}}
 (\lambda-\lambda_k)F(\lambda),
 \label{ic:upper-side-residue}\\
 \Res_{\lambda=\lambda_{k-}}F
 &:=
 \lim_{\substack{\lambda\to\lambda_k\\
                  \Im\lambda<0}}
 (\lambda-\lambda_k)F(\lambda).
 \label{ic:lower-side-residue}
\end{align}

\subsubsection{Right pre-RH problem}

Define the right reflection coefficient by
\begin{equation}\label{ic:right-reflection-def}
 \mathfrak r(\lambda)
 :=
 \begin{cases}
  \dfrac{c_{21}(\lambda)}{c_{11}(\lambda)},
  &\lambda\in\dot\Sigma_1,\\[3mm]
  \dfrac{c_{21}(\lambda_+)}{c_{11}(\lambda_+)},
  &\lambda\in\dot\Sigma_0.
 \end{cases}
\end{equation}
It is continuous on each open component of
$\dot\Sigma_1\cup\dot\Sigma_0$ and satisfies
\begin{subequations}\label{ic:right-reflection-properties}
\begin{align}
 \mathfrak r(\lambda)
 &=
 \ord(\lambda^{-1}),
 &&\lambda\to\infty,
 \label{ic:right-reflection-infinity}\\
 \mathfrak r(\lambda)
 &=
 -\overline{\mathfrak r(-\lambda)},
 &&\lambda\in\dot\Sigma_1,
 \label{ic:right-reflection-outer-symmetry}\\
 \mathfrak r(\lambda)
 &=
 -\frac1{\mathfrak r(-\lambda)},
 \qquad
 |\mathfrak r(\lambda)|=1,
 &&\lambda\in\dot\Sigma_0.
 \label{ic:right-reflection-inner-symmetry}
\end{align}
\end{subequations}

\begin{remark}
If $\gamma_2\neq0$, the expansions at the inner branch points also
give
\begin{equation}\label{ic:right-reflection-inner-limits}
 \lim_{\substack{\lambda\to A_2^{-1}\\
                  \lambda\in\dot\Sigma_0}}
 \mathfrak r(\lambda)=-1,
 \qquad
 \lim_{\substack{\lambda\to-A_2^{-1}\\
                  \lambda\in\dot\Sigma_0}}
 \mathfrak r(\lambda)=1.
\end{equation}
\end{remark}

Define
\begin{equation}\label{ic:N-right-def}
 N(x,t,\lambda)
 :=
 \mathcal E
 \begin{cases}
 \left(
  \dfrac{(\Psi_1^{\mathrm b})^{(1)}(x,t,\lambda)}
        {c_{11}(\lambda)},
  (\Psi_2^{\mathrm b})^{(2)}(x,t,\lambda)
 \right)
 \eul^{f_2(x,t,\lambda)\sigma_3},
 &\lambda\in\mathbb C^+,\\[4mm]
 \left(
  (\Psi_2^{\mathrm b})^{(1)}(x,t,\lambda),
  \dfrac{(\Psi_1^{\mathrm b})^{(2)}(x,t,\lambda)}
        {c_{22}(\lambda)}
 \right)
 \eul^{f_2(x,t,\lambda)\sigma_3},
 &\lambda\in\mathbb C^-.
 \end{cases}
\end{equation}

\begin{proposition}\label{prop:ic-right-pre-rh}
The matrix $N$ has the following properties.

\begin{enumerate}[(i)]
\item It is analytic in $\mathbb C\setminus\mathbb R$ and has
one-sided simple-pole behavior at $\lambda_{k+}$ and
$\lambda_{k-}$, $k=1,\ldots,N$.

 \item Its boundary values satisfy 
 \begin{equation}\label{ic:N-right-jump}
  N_+(x,t,\lambda)
  =
  N_-(x,t,\lambda)G_{\mathrm R}(x,t,\lambda),
 \end{equation}
 where
 \begin{equation}\label{ic:G-right-def}
  G_{\mathrm R}(x,t,\lambda)
  =
  \begin{cases}
   \eul^{-f_2(x,t,\lambda)\sigma_3}
   G_{\mathrm R,0}(\lambda)
   \eul^{f_2(x,t,\lambda)\sigma_3},
   &\lambda\in\dot\Sigma_1\cup\dot\Sigma_0,\\[2mm]
   -\ii\sigma_1,
   &\lambda\in\dot\Gamma_2,
  \end{cases}
 \end{equation}
 with
 \begin{equation}\label{ic:G-right-zero-def}
  G_{\mathrm R,0}(\lambda)
  =
  \begin{cases}
   \begin{pmatrix}
    1-|\mathfrak r(\lambda)|^2&
    -\overline{\mathfrak r(\lambda)}\\
    \mathfrak r(\lambda)&1
   \end{pmatrix},
   &\lambda\in\dot\Sigma_1,\\[5mm]
   \begin{pmatrix}
    0&-\mathfrak r(\lambda)^{-1}\\
    \mathfrak r(\lambda)&1
   \end{pmatrix},
   &\lambda\in\dot\Sigma_0.
  \end{cases}
 \end{equation}

 \item The residue conditions are
\begin{subequations}\label{ic:N-right-residues}
\begin{align}
 \Res_{\lambda=\lambda_{k+}}N^{(1)}(x,t,\lambda)
 &=
 \varkappa_k
 \eul^{2f_2(x,t,\lambda_{k+})}
 N^{(2)}(x,t,\lambda_{k+}),
 \label{ic:N-right-residue-upper}\\
 \Res_{\lambda=\lambda_{k-}}N^{(2)}(x,t,\lambda)
 &=
 \varkappa_k
 \eul^{-2f_2(x,t,\lambda_{k-})}
 N^{(1)}(x,t,\lambda_{k-}),
 \label{ic:N-right-residue-lower}
\end{align}
\end{subequations}
for $k=1,\ldots,N$.

 \item As $\lambda\to\infty$,
 \begin{equation}\label{ic:N-right-normalization}
  N(x,t,\lambda)
  =
  I+\ord(\lambda^{-1}).
 \end{equation}

 \item At every
 $\lambda_0\in\{\pm A_1^{-1},\pm A_2^{-1}\}$,
 \begin{equation}\label{ic:N-right-branch-bound}
  N(x,t,\lambda)
  =
  \ord\bigl(|\lambda-\lambda_0|^{-1/4}\bigr).
 \end{equation}

 \item As $\lambda\to0$ nontangentially from either half-plane,
 \begin{equation}\label{ic:N-right-zero-bound}
  N(x,t,\lambda)=\ord(1).
 \end{equation}
\end{enumerate}
\end{proposition}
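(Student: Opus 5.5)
The proof proceeds item by item. The key preliminary step is to rewrite each column of $N$ in terms of the normalized functions $H_j^{-1}\widetilde\Psi_j$, which carry all the analytic information. By \eqref{ic:Psi-b-j-def}, in $\mathbb C^+$,
\[
 (\Psi_1^{\mathrm b})^{(1)}\eul^{f_2}
 =\bigl(H_1^{-1}\widetilde\Psi_1\bigr)^{(1)}\eul^{f_2-f_1},
 \qquad
 (\Psi_2^{\mathrm b})^{(2)}\eul^{-f_2}
 =\bigl(H_2^{-1}\widetilde\Psi_2\bigr)^{(2)}.
\]
Since $\det G=1$, the determinant representation \eqref{ic:scattering-determinants} gives
\[
 c_{11}=\eul^{f_2-f_1}\det\Bigl(\bigl(H_1^{-1}\widetilde\Psi_1\bigr)^{(1)},\bigl(H_2^{-1}\widetilde\Psi_2\bigr)^{(2)}\Bigr).
\]
Hence the first column of $N$ in $\mathbb C^+$ equals $\mathcal E\bigl(H_1^{-1}\widetilde\Psi_1\bigr)^{(1)}$ divided by this determinant, and the exponential factors cancel. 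The lower half-plane is handled analogously, or directly by the symmetries of Corollary~\ref{cor:ic-conjugation-symmetries}.

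\emph{Items (i) and (iv).} Item (i) follows from Proposition~\ref{prop:inner-cut-column-analyticity}, the analyticity of $c_{11}$ and $c_{22}$ from Proposition~\ref{prop:scattering-matrix}, and the nonvanishing results \eqref{ic:no-nonreal-scattering-zeros}--\eqref{ic:no-nonreal-scattering-zeros-lower}. The only poles are the boundary zeros of $c_{11}(\lambda_+)$ and $c_{22}(\lambda_-)$ on $\dot\Gamma_2$, and these are simple by Proposition~\ref{prop:ic-simple-zeros}. For (iv), I will check from \eqref{ic:H-inverse} that $H_j^{-1}(\lambda)\to\frac{\ii}{\sqrt2}\bigl(\begin{smallmatrix}\ii&1\\1&\ii\end{smallmatrix}\bigr)=\overline{\mathcal E}=\mathcal E^{-1}$ as $\lambda\to\infty$. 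Combined with \eqref{ic:tildePsi-large-lambda}, this shows that the above determinant tends to $1$ and that $N=I+\ord(\lambda^{-1})$.

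\emph{Items (ii) and (iii).} On $\dot\Sigma_1$, the jump follows from the scattering relation \eqref{ic:scattering-relation} in the form $\Psi_2=\Psi_1c^{-1}$, using $\det c=1$, the symmetry \eqref{ic:c-conjugation-entries}, and \eqref{ic:c-pseudo-unitarity}. This is the standard computation: express $N_+$ through $N_-$ and read off $G_{\mathrm R,0}$ with $\mathfrak r=c_{21}/c_{11}$. On $\dot\Sigma_0$ the same computation applies, with \eqref{ic:continued-column-scattering} replacing the scattering relation and \eqref{ic:c11-c12-band}--\eqref{ic:c22-c21-band} yielding the $(1,1)$ entry $0$ and the entry $-\mathfrak r^{-1}$. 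On $\dot\Gamma_2$, I will combine \eqref{ic:Psi-opposite-side-cut}, the identity $c_{11}(\lambda_+)=c_{22}(\lambda_-)$ from \eqref{ic:c11-c22-cut}, and $f_2(\lambda_-)=-f_2(\lambda_+)$ from \eqref{ic:R-cut-jump}. These show that the two columns are interchanged with factors $\mp\ii$, which gives $N_+=N_-(-\ii\sigma_1)$.

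For the residues, I will write $\Res_{\lambda_{k+}}N^{(1)}=\mathcal E(\Psi_1^{\mathrm b})^{(1)}\eul^{f_2}/c'_{11,k}$. Then \eqref{ic:eigenfunction-linear-dependence} converts this into $\frac{\mathfrak b_k}{c'_{11,k}}\eul^{2f_2}N^{(2)}(\lambda_{k+})=\varkappa_k\eul^{2f_2}N^{(2)}$. The lower residue follows either from the same computation with \eqref{ic:Psi-opposite-side-cut}, or from the $\dot\Gamma_2$ jump relation applied to \eqref{ic:N-right-residue-upper}.

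\emph{Items (v) and (vi), the main obstacle.} At $\pm A_1^{-1}$, Corollary~\ref{cor:ic-Jost-fourth-root} bounds the numerators by $\ord(|\zeta_1|^{-1})$. Corollary~\ref{cor:ic-c-branch-expansion} with $\alpha_1\neq0$ shows that $1/c_{11}=\ord(\zeta_1)$, so the first column is actually bounded there and the remaining columns are $\ord(|\lambda\mp A_1^{-1}|^{-1/4})$. At $\pm A_2^{-1}$, the column $(\Psi_2^{\mathrm b})^{(2)}$ has a fourth-root singularity. For the quotient I will use the normalized representation: if $\gamma_2\neq0$, division by $c_{11}\sim\ii\gamma_2/\zeta_2$ gains a factor $\zeta_2$. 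If $\gamma_2=0$, I expect to use the determinant representation to show that the numerator and $c_{11}$ vanish compatibly, or to argue via the jump relations that the singularity cannot be worse than fourth root. This degenerate branch-point case is the step I expect to be most delicate.

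At $\lambda=0$, Remark~\ref{rem:ic-H-at-zero} and the finite boundary values of $\widetilde\Psi_j$ show that the numerators $H_j^{-1}\widetilde\Psi_j$ stay bounded. The potential danger is the singular factor $\eul^{\pm(f_2-f_1)}$, which I eliminate by writing everything via the cancellation above. Alternatively, Proposition~\ref{prop:ic-Psi-zero-expansion} shows that the $t/\lambda^2$ terms of $f_1$ and $f_2$ cancel up to $\ord(\lambda^2)\cdot t\lambda^{-2}=\ord(1)$, because $\ii A_jl_j=-1+\ord(\lambda^2)$ for both $j$. This cancellation is uniform only under nontangential approach, which explains that hypothesis. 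Together with $c_{11}(0_+)=c_{22}(0_-)=1$ from Corollary~\ref{cor:ic-scattering-zeros}, this gives $N=\ord(1)$.
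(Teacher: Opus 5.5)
Your plan follows the paper's proof item by item. Items (i)–(iv) and (vi) go through as you describe, and so does (v) at $\pm A_1^{-1}$ and at $\pm A_2^{-1}$ when $\gamma_2\neq0$. Writing the columns through $H_j^{-1}\widetilde\Psi_j$, so that $\eul^{\pm(f_2-f_1)}$ is absorbed into the determinant, is a clean way to organize (iv) and (vi).

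\textbf{The gap.} Item (v) at $\lambda_0=\pm A_2^{-1}$ when $\gamma_2=0$ is left open, and the mechanism you suggest cannot work. The numerator $(\Psi_1^{\mathrm b})^{(1)}$ does not vanish at $A_2^{-1}$. That point lies in $\dot\Gamma_1$ and is not a branch point of $\Psi_1$, so $z_1=\Psi_{1,+}^{(1)}(A_2^{-1})$ is a nontrivial solution of the spatial equation and is nonzero. Hence the bound $N^{(1)}=\ord(|\lambda-\lambda_0|^{-1/4})$ in $\mathbb C^+$ is equivalent to a lower bound $|c_{11}|\gtrsim|\zeta_2|$. In other words, the zero of $c_{11}$ at the branch point must be exactly of order $\zeta_2$. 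Corollary~\ref{cor:ic-c-branch-expansion} with $\gamma_2=0$ gives only the upper bound $c_{11}=\ord(\zeta_2)$.

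\textbf{The missing idea} is the unit Wronskian of $\Psi_2$ at the branch point. With $\zeta=\zeta_2(\lambda_+)$, write $\Psi_{2,-}^{(1)}=w_2/\zeta+\zeta v_1+\ord(\zeta^3)$ and $\Psi_{2,+}^{(2)}=w_2/\zeta+\zeta v_2+\ord(\zeta^3)$. Expanding $\det(\Psi_{2,-}^{(1)},\Psi_{2,+}^{(2)})=1$ gives $\det(w_2,v_2)+\det(v_1,w_2)=1$.
\begin{itemize}
\item If $\gamma_2=0$, then $z_1=\vartheta w_2$ with $\vartheta\neq0$. This yields $c_{11}(\lambda_+)=\kappa_{11}\zeta+\ord(\zeta^3)$ and $c_{21}(\lambda_+)=\kappa_{21}\zeta+\ord(\zeta^3)$, with $\kappa_{11}+\kappa_{21}=\vartheta\neq0$.
\item By \eqref{ic:c11-c22-conjugation-cut} and \eqref{ic:c22-c21-band}, $|c_{11}(\lambda_+)|=|c_{21}(\lambda_+)|$ on $\dot\Sigma_0$, so $|\kappa_{11}|=|\kappa_{21}|$.
\item Together these force $\kappa_{11}\neq0$. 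Then $1/c_{11}=\ord(\zeta_2^{-1})$ and the first column is $\ord(|\lambda-\lambda_0|^{-1/4})$.
\end{itemize}
The lower half-plane and the point $-A_2^{-1}$ then follow from the conjugation and $\lambda\mapsto-\lambda$ symmetries. The paper's proof of this proposition also just cites Section~\ref{sec:ic-branch-point-behavior} at this point. The nondegeneracy argument above appears only in the proof of Proposition~\ref{prop:ic-left-pre-rh}.

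\textbf{A minor point.} Your explanation of the nontangential hypothesis in (vi) is not correct. The term $\ord(\lambda^2)\cdot t\lambda^{-2}$ is bounded in every direction. In fact, from either half-plane, $f_2-f_1$ continues analytically to a full neighborhood of $0$. So the cancellation does not depend on the angle of approach.
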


\begin{proof}
The analyticity of $N$ follows from
Proposition~\ref{prop:inner-cut-column-analyticity}. By
Proposition~\ref{prop:ic-simple-zeros}, the zeros of $c_{11}$ and,
by symmetry, those of $c_{22}$ are simple and therefore produce the
stated one-sided simple poles.

To establish the jump relation \eqref{ic:N-right-jump}, we first rewrite
the scattering relations in terms of the column matrices entering
\eqref{ic:N-right-def}. On $\dot\Sigma_1$, the
scattering relation \eqref{ic:scattering-relation}, together with
\eqref{ic:c-conjugation-entries} and
\eqref{ic:c-pseudo-unitarity}, yields
\begin{equation}\label{ic:right-column-jump-outer}
 \left(
  \frac{(\Psi_1^{\mathrm b})^{(1)}}{c_{11}},
  (\Psi_2^{\mathrm b})^{(2)}
 \right)
 =
 \left(
  (\Psi_2^{\mathrm b})^{(1)},
  \frac{(\Psi_1^{\mathrm b})^{(2)}}{c_{22}}
 \right)
 \begin{pmatrix}
  1-|\mathfrak r|^2&-\overline{\mathfrak r}\\
  \mathfrak r&1
 \end{pmatrix}.
\end{equation}
On $\dot\Sigma_0$, the continued column scattering relations
\eqref{ic:continued-column-scattering} and the coefficient relations
\eqref{ic:c-cut-relations} imply
\begin{equation}\label{ic:right-column-jump-band}
 \left(
  \frac{(\Psi_{1,+}^{\mathrm b})^{(1)}}{c_{11,+}},
  (\Psi_2^{\mathrm b})^{(2)}
 \right)
 =
 \left(
  (\Psi_2^{\mathrm b})^{(1)},
  \frac{(\Psi_{1,-}^{\mathrm b})^{(2)}}{c_{22,-}}
 \right)
 \begin{pmatrix}
  0&-\mathfrak r^{-1}\\
  \mathfrak r&1
 \end{pmatrix}.
\end{equation}
Substituting \eqref{ic:right-column-jump-outer} and
\eqref{ic:right-column-jump-band} into
\eqref{ic:N-right-def} and collecting the diagonal phase factors, the jump relations \eqref{ic:N-right-jump} on
$\dot\Sigma_1$ and $\dot\Sigma_0$, respectively.

On $\dot\Gamma_2\setminus\{0\}$,
\eqref{ic:Psi-opposite-side-cut} and
\eqref{ic:c11-c22-cut} imply
\[
 \left(
  \frac{(\Psi_{1,+}^{\mathrm b})^{(1)}}{c_{11,+}},
  (\Psi_{2,+}^{\mathrm b})^{(2)}
 \right)
 =
 \left(
  (\Psi_{2,-}^{\mathrm b})^{(1)},
  \frac{(\Psi_{1,-}^{\mathrm b})^{(2)}}{c_{22,-}}
 \right)(-\ii\sigma_1).
\]
Using \eqref{ic:N-right-def},
$f_{2,-}=-f_{2,+}$, and
$\sigma_1\sigma_3=-\sigma_3\sigma_1$, we obtain
$ N_+=N_-(-\ii\sigma_1)$.
By Proposition~\ref{prop:ic-Psi-zero-expansion}, this relation extends
to $\lambda=0$ by continuity. Together with the relations on
$\dot\Sigma_1$ and $\dot\Sigma_0$, this proves
\eqref{ic:N-right-jump}.

For the residue conditions, \eqref{ic:eigenfunction-linear-dependence},
together with $\Psi_j^{\mathrm b}=G\Psi_j$, yields
\begin{equation}\label{ic:balanced-eigenfunction-dependence-upper}
 (\Psi_1^{\mathrm b})^{(1)}(\lambda_{k+})
 =
 \mathfrak b_k
 (\Psi_2^{\mathrm b})^{(2)}(\lambda_{k+}).
\end{equation}
The relations \eqref{ic:Psi-opposite-side-cut} then imply
\begin{equation}\label{ic:balanced-eigenfunction-dependence-lower}
 (\Psi_1^{\mathrm b})^{(2)}(\lambda_{k-})
 =
 \mathfrak b_k
 (\Psi_2^{\mathrm b})^{(1)}(\lambda_{k-}).
\end{equation}
Combining \eqref{ic:balanced-eigenfunction-dependence-upper} and
\eqref{ic:balanced-eigenfunction-dependence-lower} with
\eqref{ic:norming-constants-def} yields the residue conditions
\eqref{ic:N-right-residues}.

By \eqref{ic:Psi-tildePsi-relation} and
\eqref{ic:tildePsi-large-lambda},
\begin{align*}
 \left(
  (\Psi_1^{\mathrm b})^{(1)}\eul^{f_1},
  (\Psi_2^{\mathrm b})^{(2)}\eul^{-f_2}
 \right)
 &=
\mathcal E^{-1}+\ord(\lambda^{-1}),
 &&\lambda\to\infty,\quad\lambda\in\mathbb C^+,\\
 \left(
  (\Psi_2^{\mathrm b})^{(1)}\eul^{f_2},
  (\Psi_1^{\mathrm b})^{(2)}\eul^{-f_1}
 \right)
 &=
\mathcal E^{-1}+\ord(\lambda^{-1}),
 &&\lambda\to\infty,\quad\lambda\in\mathbb C^-.
\end{align*}
Together with
\begin{align*}
 c_{11}(\lambda)
 &=
 \eul^{f_2-f_1}
 \bigl(1+\ord(\lambda^{-1})\bigr),
 &&\lambda\to\infty,\quad\lambda\in\mathbb C^+,\\
 c_{22}(\lambda)
 &=
 \eul^{f_1-f_2}
 \bigl(1+\ord(\lambda^{-1})\bigr),
 &&\lambda\to\infty,\quad\lambda\in\mathbb C^-,
\end{align*}
these expansions imply the normalization at infinity \eqref{ic:N-right-normalization}.

The branch-point estimates follow from the expansions obtained in
Section~\ref{sec:ic-branch-point-behavior}.

Finally, the column expansions
\eqref{ic:Psi-upper-zero-expansion} and
\eqref{ic:Psi-lower-zero-expansion}, together with the determinant
representations \eqref{ic:scattering-determinants}, yield
\begin{align*}
 c_{11}(\lambda)
 &=
 \eul^{-D_1-D_2+f_2-f_1}
 \bigl(1+\ord(\lambda^2)\bigr),
 &&\lambda\to0,\quad\lambda\in\mathbb C^+,\\
 c_{22}(\lambda)
 &=
 \eul^{-D_1-D_2+f_1-f_2}
 \bigl(1+\ord(\lambda^2)\bigr),
 &&\lambda\to0,\quad\lambda\in\mathbb C^-.
\end{align*}
Substituting the preceding asymptotics for $c_{11}$ and $c_{22}$,
together with the Jost-column expansions
\eqref{ic:Psi-upper-zero-expansion} and
\eqref{ic:Psi-lower-zero-expansion}, into
\eqref{ic:N-right-def}, and using
$\Psi_j^{\mathrm b}=G\Psi_j$, we obtain
\begin{align*}
 N_+(\lambda)
 &=
 \ii\mathcal E
 \begin{pmatrix}
  \dfrac{\lambda}{2}\eul^{D_2-2r}p&\eul^{-D_2}\\[1mm]
  \eul^{D_2}&\dfrac{\lambda}{2}\eul^{-D_2+2r}q
 \end{pmatrix}
 +\ord(\lambda^2),
 &&\lambda\to0,\quad\lambda\in\mathbb C^+,\\
 N_-(\lambda)
 &=
 -\mathcal E
 \begin{pmatrix}
  \eul^{-D_2}&\dfrac{\lambda}{2}\eul^{D_2-2r}p\\[1mm]
  \dfrac{\lambda}{2}\eul^{-D_2+2r}q&\eul^{D_2}
 \end{pmatrix}
 +\ord(\lambda^2),
 &&\lambda\to0,\quad\lambda\in\mathbb C^-.
\end{align*}
In particular, $N$ is bounded at $\lambda=0$, which proves
\eqref{ic:N-right-zero-bound}.
\end{proof}

\subsubsection{Left pre-RH problem}

Define the left reflection coefficient by
\begin{equation}\label{ic:left-reflection-def}
 \widetilde{\mathfrak r}(\lambda)
 :=
 \begin{cases}
  \dfrac{\overline{c_{21}(\lambda)}}{c_{11}(\lambda)},
  &\lambda\in\dot\Sigma_1,\\[3mm]
  \dfrac1{
   c_{21}(\lambda_+)c_{11}(\lambda_+)}
  =
  \dfrac{\ii}{|c_{11}(\lambda_+)|^2},
  &\lambda\in\dot\Sigma_0.
 \end{cases}
\end{equation}

The function $\widetilde{\mathfrak r}$ is continuous on each open
component of $\dot\Sigma_1\cup\dot\Sigma_0$ and satisfies
\begin{subequations}\label{ic:left-reflection-properties}
\begin{align}
 \widetilde{\mathfrak r}(\lambda)
 &=
 \ord(\lambda^{-1}),
 &&\lambda\to\infty,
 \label{ic:left-reflection-infinity}\\
 \widetilde{\mathfrak r}(\lambda)
 &=
 -\overline{\widetilde{\mathfrak r}(-\lambda)},
 &&\lambda\in\dot\Sigma_1,
 \label{ic:left-reflection-outer-symmetry}\\
 \widetilde{\mathfrak r}(\lambda)
 &=
 \widetilde{\mathfrak r}(-\lambda),
 &&\lambda\in\dot\Sigma_0.
 \label{ic:left-reflection-inner-symmetry}
\end{align}
\end{subequations}

\begin{remark}
The limiting behavior of the left reflection coefficient at the
endpoints of $\Sigma_0$ is given by
\begin{align}
 \lim_{\substack{\lambda\to A_1^{-1}\\
                  \lambda\in\dot\Sigma_1}}
 \widetilde{\mathfrak r}(\lambda)
 &=1,
 &
 \lim_{\substack{\lambda\to-A_1^{-1}\\
                  \lambda\in\dot\Sigma_1}}
 \widetilde{\mathfrak r}(\lambda)
 &=-1,
 \label{ic:left-reflection-A1-outer-limits}\\
 \lim_{\substack{\lambda\to\pm A_1^{-1}\\
                  \lambda\in\dot\Sigma_0}}
 \widetilde{\mathfrak r}(\lambda)
 &=0.
 \label{ic:left-reflection-A1-inner-limits}
\end{align}
Near $\pm A_2^{-1}$, its behavior depends on $\gamma_2$:
\begin{equation}\label{ic:left-reflection-A2-behavior}
 \widetilde{\mathfrak r}(\lambda)
 =
 \begin{cases}
  \ord\bigl(|\lambda\mp A_2^{-1}|^{1/2}\bigr),
  &\gamma_2\neq0,\\[1mm]
  \ord\bigl(|\lambda\mp A_2^{-1}|^{-1/2}\bigr),
  &\gamma_2=0,
 \end{cases}
 \qquad
 \lambda\to\pm A_2^{-1},\quad
 \lambda\in\dot\Sigma_0.
\end{equation}
\end{remark}

Define
\begin{equation}\label{ic:N-left-def}
 \widetilde N(x,t,\lambda)
 :=
 \mathcal E
 \begin{cases}
 \left(
  (\Psi_1^{\mathrm b})^{(1)}(x,t,\lambda),
  \dfrac{(\Psi_2^{\mathrm b})^{(2)}(x,t,\lambda)}
        {c_{11}(\lambda)}
 \right)
 \eul^{f_1(x,t,\lambda)\sigma_3},
 &\lambda\in\mathbb C^+,\\[4mm]
 \left(
  \dfrac{(\Psi_2^{\mathrm b})^{(1)}(x,t,\lambda)}
        {c_{22}(\lambda)},
  (\Psi_1^{\mathrm b})^{(2)}(x,t,\lambda)
 \right)
 \eul^{f_1(x,t,\lambda)\sigma_3},
 &\lambda\in\mathbb C^-.
 \end{cases}
\end{equation}

\begin{proposition}\label{prop:ic-left-pre-rh}
The matrix $\widetilde N$ has the following properties.

\begin{enumerate}[(i)]
 \item It is analytic in $\mathbb C\setminus\mathbb R$ and has
one-sided simple-pole behavior at $\lambda_{k+}$ and
$\lambda_{k-}$, $k=1,\ldots,N$.

 \item Its boundary values satisfy
 \begin{equation}\label{ic:N-left-jump}
  \widetilde N_+(x,t,\lambda)
  =
  \widetilde N_-(x,t,\lambda)
  G_{\mathrm L}(x,t,\lambda),
 \end{equation}
 where
 \begin{equation}\label{ic:G-left-def}
  G_{\mathrm L}(x,t,\lambda)
  =
  \begin{cases}
   \eul^{-f_1(x,t,\lambda)\sigma_3}
   \begin{pmatrix}
    1&-\widetilde{\mathfrak r}(\lambda)\\
    \overline{\widetilde{\mathfrak r}(\lambda)}
    &1-|\widetilde{\mathfrak r}(\lambda)|^2
   \end{pmatrix}
   \eul^{f_1(x,t,\lambda)\sigma_3},
   &\lambda\in\dot\Sigma_1,\\[6mm]
   -\ii
   \eul^{-f_1(x,t,\lambda_-)\sigma_3}
   \begin{pmatrix}
    0&1\\
    1&\widetilde{\mathfrak r}(\lambda)
   \end{pmatrix}
   \eul^{f_1(x,t,\lambda_+)\sigma_3},
   &\lambda\in\dot\Sigma_0,\\[6mm]
   -\ii\sigma_1,
   &\lambda\in\dot\Gamma_2.
  \end{cases}
 \end{equation}

 \item The residue conditions are
 \begin{subequations}\label{ic:N-left-residues}
\begin{align}
 \Res_{\lambda=\lambda_{k+}}
 \widetilde N^{(2)}(x,t,\lambda)
 &=
 \widetilde\varkappa_k
 \eul^{-2f_1(x,t,\lambda_{k+})}
 \widetilde N^{(1)}(x,t,\lambda_{k+}),
 \label{ic:N-left-residue-upper}\\
 \Res_{\lambda=\lambda_{k-}}
 \widetilde N^{(1)}(x,t,\lambda)
 &=
 \widetilde\varkappa_k
 \eul^{2f_1(x,t,\lambda_{k-})}
 \widetilde N^{(2)}(x,t,\lambda_{k-}),
 \label{ic:N-left-residue-lower}
\end{align}
\end{subequations}

 \item As $\lambda\to\infty$,
 \begin{equation}\label{ic:N-left-normalization}
  \widetilde N(x,t,\lambda)
  =
  I+\ord(\lambda^{-1}).
 \end{equation}

 \item At $\lambda_0=\pm A_1^{-1}$,
\begin{equation}\label{ic:N-left-A1-branch-bound}
 \widetilde N(x,t,\lambda)
 =
 \ord\bigl(|\lambda-\lambda_0|^{-1/4}\bigr).
\end{equation}
At $\lambda_0=\pm A_2^{-1}$, the behavior depends on $\gamma_2$.
If $\gamma_2\neq0$ (generic case), then
\begin{equation}\label{ic:N-left-A2-regular-bound}
 \widetilde N(x,t,\lambda)=\ord(1),
 \qquad
 \lambda\to\lambda_0.
\end{equation}
If $\gamma_2=0$, then, for each
$\lambda_0\in\{\pm A_2^{-1}\}$, there exist scalar functions
$\tau_{\lambda_0,\pm}(x,t)$ such that
\begin{subequations}\label{ic:N-left-degenerate-behavior}
\begin{align}
 \widetilde N_+^{(1)}(x,t,\lambda)
 &=\ord(1),&
 \widetilde N_+^{(2)}(x,t,\lambda)
 -\frac{\tau_{\lambda_0,+}(x,t)}
        {l_2(\lambda_+)}
  \widetilde N_+^{(1)}(x,t,\lambda)
 &=\ord(1),
 \label{ic:N-left-degenerate-upper}\\
 \widetilde N_-^{(2)}(x,t,\lambda)
 &=\ord(1),&
 \widetilde N_-^{(1)}(x,t,\lambda)
 -\frac{\tau_{\lambda_0,-}(x,t)}
        {l_2(\lambda_-)}
  \widetilde N_-^{(2)}(x,t,\lambda)
 &=\ord(1),
 \label{ic:N-left-degenerate-lower}
\end{align}
\end{subequations}
as $\lambda\to\lambda_0$ along $\dot\Sigma_0$. In particular,
\begin{equation}\label{ic:N-left-degenerate-bound}
 \widetilde N(x,t,\lambda)
 =
 \ord\bigl(|\lambda-\lambda_0|^{-1/2}\bigr).
\end{equation}

 \item As $\lambda\to0$ nontangentially from either half-plane,
 \begin{equation}\label{ic:N-left-zero-bound}
  \widetilde N(x,t,\lambda)=\ord(1).
 \end{equation}
\end{enumerate}
\end{proposition}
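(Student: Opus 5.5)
The plan mirrors the proof of Proposition~\ref{prop:ic-right-pre-rh}, with the roles of the two Jost solutions exchanged; the expected obstacle is item (v) in the case $\gamma_2=0$.

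\emph{Analyticity and poles.} Item (i) follows from Proposition~\ref{prop:inner-cut-column-analyticity} together with Corollary~\ref{cor:ic-scattering-zeros}. The only poles come from the simple zeros of $c_{11}(\lambda_+)$ and $c_{22}(\lambda_-)$ at the points $\lambda_k$ (Proposition~\ref{prop:ic-simple-zeros}).

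\emph{Normalization at infinity.} Item (iv) uses the large-$\lambda$ expansions of the columns already recorded in the proof of Proposition~\ref{prop:ic-right-pre-rh}, namely $(\Psi_1^{\mathrm b})^{(1)}\eul^{f_1}\to$ the first column of $\mathcal E^{-1}$ and similar statements for the other columns. These are combined with $c_{11}=\eul^{f_2-f_1}(1+\ord(\lambda^{-1}))$. Note that now the phase $\eul^{f_1\sigma_3}$ is the natural one. Item (vi) is obtained by inserting \eqref{ic:Psi-upper-zero-expansion}, \eqref{ic:Psi-lower-zero-expansion} and the small-$\lambda$ asymptotics of $c_{11},c_{22}$ into \eqref{ic:N-left-def}. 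The exponential factors $\eul^{\pm f_1}$ cancel, leaving bounded quantities involving $\eul^{\pm D_1}$.

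\emph{Jumps.} For item (ii) I rewrite the scattering relation \eqref{ic:scattering-relation} in the ``left'' form $\Psi_2=\Psi_1c^{-1}$, with $c^{-1}=\begin{pmatrix}c_{22}&-c_{12}\\-c_{21}&c_{11}\end{pmatrix}$.
\begin{itemize}
\item On $\dot\Sigma_1$, using \eqref{ic:c-conjugation-entries} and \eqref{ic:c-pseudo-unitarity}, I express $\bigl((\Psi_1^{\mathrm b})^{(1)},(\Psi_2^{\mathrm b})^{(2)}/c_{11}\bigr)$ through $\bigl((\Psi_2^{\mathrm b})^{(1)}/c_{22},(\Psi_1^{\mathrm b})^{(2)}\bigr)$. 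The resulting matrix is $\begin{pmatrix}1&-\widetilde{\mathfrak r}\\ \overline{\widetilde{\mathfrak r}}&1-|\widetilde{\mathfrak r}|^2\end{pmatrix}$, conjugated by $\eul^{f_1\sigma_3}$.
\item On $\dot\Sigma_0$ the phase $f_1$ has different boundary values from the two sides, which is why the jump carries $\eul^{-f_1(\lambda_-)\sigma_3}$ on the left and $\eul^{f_1(\lambda_+)\sigma_3}$ on the right. Here I combine \eqref{ic:continued-column-scattering} with \eqref{ic:Psi-opposite-side-cut} for $j=1$, giving $\Psi_{1,+}^{(1)}=-\ii\Psi_{1,-}^{(2)}$, and with \eqref{ic:c-cut-relations}. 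From the second relation in \eqref{ic:continued-column-scattering} I solve for $\Psi_{2,+}^{(2)}/c_{11,+}$. Since $c_{22,-}=\ii c_{21,+}$ and $c_{12,-}=\ii c_{11,+}$, this yields $\Psi_{2,+}^{(2)}/c_{11,+}=-\ii\Psi_{2,-}^{(1)}/c_{22,-}-\ii\,\widetilde{\mathfrak r}\,\Psi_{1,-}^{(2)}$, up to checking the sign conventions. The matrix $-\ii\begin{pmatrix}0&1\\1&\widetilde{\mathfrak r}\end{pmatrix}$ then follows.
\item On $\dot\Gamma_2$ the same computation as in the right problem applies, using \eqref{ic:Psi-opposite-side-cut} for both $j$, \eqref{ic:c11-c22-cut}, and $f_{1,-}=-f_{1,+}$ on $\dot\Gamma_1\supset\dot\Gamma_2$. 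It extends to $0$ by continuity, via Proposition~\ref{prop:ic-Psi-zero-expansion}.
\end{itemize}

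\emph{Residues.} Item (iii) follows as before from \eqref{ic:balanced-eigenfunction-dependence-upper}: $\Res(\Psi_2^{\mathrm b})^{(2)}/c_{11}=\mathfrak b_k^{-1}(\Psi_1^{\mathrm b})^{(1)}/c'_{11,k}$, which gives $\widetilde\varkappa_k$ by \eqref{ic:norming-constants-def}. The lower-half-plane residue uses \eqref{ic:balanced-eigenfunction-dependence-lower} together with the relation $c_{22,-}=c_{11,+}$ differentiated along the cut.

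\emph{Branch points (item (v)).} This is the step I expect to be the main obstacle. At $\pm A_1^{-1}$, Corollary~\ref{cor:ic-Jost-fourth-root} gives the $|\lambda-\lambda_0|^{-1/4}$ bound for the $\Psi_1$ columns. For the quotient $\Psi_2^{(2)}/c_{11}$, $\Psi_2$ is regular while $c_{11}\sim\alpha_1/\zeta_1$ with $\alpha_1\ne0$ by \eqref{ic:alpha1-nonzero}, so the quotient even vanishes there.

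At $\pm A_2^{-1}$, $\Psi_1$ is regular and $\Psi_2^{(2)}\sim w_2/\zeta_2$ by \eqref{ic:Psi-positive-branch-leading}.
\begin{itemize}
\item If $\gamma_2\neq0$, then $c_{11}\sim\ii\gamma_2/\zeta_2$ by \eqref{ic:c-positive-A2-band}, and the ratio is $\ord(1)$.
\item If $\gamma_2=0$, then $c_{11,+}$ is bounded and is $\ord(\zeta_2)$ or nonvanishing. Indeed, $\det(z_1,w_2)=0$ means $z_1\parallel w_2$, so I expand $c_{11}=\det(\Psi_1^{(1)},\Psi_2^{(2)})$ to the next order. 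The singular part of $\Psi_2^{(2)}/c_{11}$ is then proportional to $w_2\parallel z_1$, i.e.\ to $\widetilde N^{(1)}$. This proportionality is what yields the structure \eqref{ic:N-left-degenerate-upper}, with a $1/l_2$ coefficient $\tau_{\lambda_0,+}$, since $\zeta_2^2\propto l_2$.
\end{itemize}
The lower side follows by the conjugation symmetries of Corollary~\ref{cor:ic-conjugation-symmetries}, and the bound \eqref{ic:N-left-degenerate-bound} follows from $|l_2|^{-1}\asymp|\lambda-\lambda_0|^{-1/2}$. Making the next-order expansion of $c_{11}$ precise is where I expect care to be needed, in particular showing that its leading term is $\ord(\zeta_2)$ and not smaller.
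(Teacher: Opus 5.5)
Your items (i)--(iv) and (vi), the jumps, the residues, and the branch points $\pm A_1^{-1}$ and $\pm A_2^{-1}$ with $\gamma_2\neq0$ follow the paper's argument. Your $\dot\Sigma_0$ identity $\Psi_{2,+}^{(2)}/c_{11,+}=-\ii\Psi_{2,-}^{(1)}/c_{22,-}-\ii\widetilde{\mathfrak r}\,\Psi_{1,-}^{(2)}$ is correct as written.

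\textbf{The gap.} It is the step you flag and leave open. For $\gamma_2=0$ you never show that $c_{11}(\lambda_+)=\kappa_{11}\zeta_2+\ord(\zeta_2^3)$ with $\kappa_{11}\neq0$. You also never show that $z_1=\Psi_{1,+}^{(1)}(A_2^{-1})$ is a \emph{nonzero} multiple $\vartheta w_2$.

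Your hedge ``$\ord(\zeta_2)$ or nonvanishing'' is unnecessary. With $\gamma_2=0$, \eqref{ic:c-positive-A2-band} already gives $c_{11}(\lambda_+)=\ord(\zeta_2)$. What is missing is the matching lower bound.

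This is not a technicality. If $\kappa_{11}=0$, then $|\widetilde N_+^{(2)}|\gtrsim|\zeta_2|^{-4}\asymp|\lambda-\lambda_0|^{-1}$, which contradicts both \eqref{ic:N-left-degenerate-upper} and \eqref{ic:N-left-degenerate-bound}. If $\vartheta=0$, the singular part of $\widetilde N_+^{(2)}$ cannot be written as a multiple of $\widetilde N_+^{(1)}$.

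\textbf{The missing idea.} Combine the unimodularity of $\Psi_2$ with the band symmetry.
\begin{enumerate}
\item Expand $\Psi_{2,-}^{(1)}=w_2/\zeta+\zeta v_1+\ord(\zeta^3)$ and $\Psi_{2,+}^{(2)}=w_2/\zeta+\zeta v_2+\ord(\zeta^3)$. The identity $\det(\Psi_{2,-}^{(1)},\Psi_{2,+}^{(2)})=1$ then gives $w_2\neq0$ and $\det(w_2,v_2)+\det(v_1,w_2)=1$.
\item $A_2^{-1}$ is not a branch point of $\Psi_1$, so $z_1$ is the value of a nontrivial solution of the $x$-equation. Hence $z_1\neq0$, and $\gamma_2=0$ forces $z_1=\vartheta w_2$ with $\vartheta\neq0$.
\item The determinant representations then give $c_{11}(\lambda_+)=\kappa_{11}\zeta+\ord(\zeta^3)$ and $c_{21}(\lambda_+)=\kappa_{21}\zeta+\ord(\zeta^3)$. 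Here $\kappa_{11}=\vartheta\det(w_2,v_2)$ and $\kappa_{21}=\vartheta\det(v_1,w_2)$, so $\kappa_{11}+\kappa_{21}=\vartheta\neq0$.
\item By \eqref{ic:c11-c22-conjugation-cut} and \eqref{ic:c22-c21-band}, $\overline{c_{11}(\lambda_+)}=\ii c_{21}(\lambda_+)$, so $|\kappa_{11}|=|\kappa_{21}|$. The two coefficients would therefore have to vanish together, which contradicts $\kappa_{11}+\kappa_{21}\neq0$. Hence $\kappa_{11}\neq0$.
\end{enumerate}

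With $\kappa_{11}\neq0$, your proportionality argument goes through. It gives $\widetilde N_+^{(2)}=\eul^{-f_1(\lambda_{0+})}U/(\kappa_{11}\zeta^2)+\ord(1)$ with $U$ a multiple of $\widetilde N_+^{(1)}(\lambda_0)$. Using $\zeta^2=2\ii A_2l_2(\lambda_+)$, this yields $\tau_{\lambda_0,+}=\eul^{-2f_1(x,t,\lambda_{0+})}/(2\ii A_2\vartheta\kappa_{11})$. The lower side and the point $-A_2^{-1}$ then follow from the conjugation and $\lambda\mapsto-\lambda$ symmetries.
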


\begin{proof}
The proof is analogous to that of
Proposition~\ref{prop:ic-right-pre-rh}. The jump relation follows from
\eqref{ic:scattering-relation},
\eqref{ic:continued-column-scattering}, and
\eqref{ic:Psi-opposite-side-cut}, while
\eqref{ic:eigenfunction-linear-dependence} and
\eqref{ic:norming-constants-def} give the residue conditions. The
large-$\lambda$ and zero assertions follow from the same expansions
used there.
Finally, Corollaries~\ref{cor:ic-Jost-fourth-root} and
\ref{cor:ic-c-branch-expansion}, together with
\eqref{ic:N-left-def}, imply
\eqref{ic:N-left-A1-branch-bound}. At
$\lambda_0=\pm A_2^{-1}$, if $\gamma_2\neq0$, the relevant columns of
$\Psi_2^{\mathrm b}$ and the coefficients $c_{11}$ and $c_{22}$ have
the same fourth-root singularity. Their quotients in
\eqref{ic:N-left-def} are therefore bounded, which proves
\eqref{ic:N-left-A2-regular-bound}.

It remains to consider the non-generic case with $\gamma_2=0$. We first take
$\lambda_0=A_2^{-1}$ and write
$\zeta=\zeta_2(\lambda_+)$. Expanding the Jost solutions one term further as
$\lambda\to A_2^{-1}$, $\lambda\in \dot\Sigma_0$, we obtain
\[
 \Psi_{2,-}^{(1)}
 =
 \frac{w_2}{\zeta}
 +\zeta v_1+\ord(\zeta^3),
 \qquad
 \Psi_{2,+}^{(2)}
 =
 \frac{w_2}{\zeta}
 +\zeta v_2+\ord(\zeta^3),
\]
with some vectors $v_1=v_1(x,t)$ and $v_2=v_2(x,t)$. 
Since $\det\bigl(\Psi_{2,-}^{(1)},\Psi_{2,+}^{(2)}\bigr)=1$, it follows that $w_2\neq0$ and 
\[
 1
 =
\det\bigl(\Psi_{2,-}^{(1)},\Psi_{2,+}^{(2)}\bigr)
 =
 \det(w_2,v_2)+\det(v_1,w_2)+\ord(\zeta^2).
\]
Hence
\begin{equation}\label{ic:degenerate-determinant-balance}
 \det(w_2,v_2)+\det(v_1,w_2)=1.
\end{equation}
Set
\[
 z_1:=\Psi_{1,+}^{(1)}(A_2^{-1}).
\]
Since $A_2^{-1}$ is not a branch point of $\Psi_1$, the column
\[
 y(x):=\Psi_{1,+}^{(1)}(x,t,A_2^{-1})
\]
is a nontrivial solution of the spatial Lax equation. Its propagator is
invertible; hence $y(x_0)=0$ at any finite $x_0$ would imply
$y\equiv0$, contradicting the Jost normalization. Therefore
$z_1\neq0$.

Definition of $\gamma_2$ in
Corollary~\ref{cor:ic-c-branch-expansion} together with $\gamma_2=0$ imply that
 $\det(z_1,w_2)=0$. Thus
\[
 z_1=\vartheta w_2
\]
for some $\vartheta\neq0$. Moreover, since $A_2^{-1}$ is not a branch
point of $\Psi_1$,
\[
 \Psi_{1,+}^{(1)}(\lambda)
 =
 z_1+\ord(\lambda-A_2^{-1})
 =
 z_1+\ord(\zeta^4).
\]
The determinant representations
\eqref{ic:scattering-determinants} therefore yield
\[
 c_{11}(\lambda_+)
 =
 \kappa_{11}\zeta+\ord(\zeta^3),
 \qquad
 c_{21}(\lambda_+)
 =
 \kappa_{21}\zeta+\ord(\zeta^3),
\]
where
\[
 \kappa_{11}
 :=
 \vartheta\det(w_2,v_2),
 \qquad
 \kappa_{21}
 :=
 \vartheta\det(v_1,w_2).
\]
The symmetry relations
\eqref{ic:c11-c22-conjugation-cut} and
\eqref{ic:c22-c21-band} imply
\[
 \overline{c_{11}(\lambda_+)}
 =
 \ii c_{21}(\lambda_+).
\]
In particular, $|\kappa_{11}|=|\kappa_{21}|$. Together with
\eqref{ic:degenerate-determinant-balance}, this yields
\[
 \kappa_{11}\neq0,
 \qquad
 \kappa_{21}\neq0.
\]

Let
\[
 U:=\mathcal E G w_2,
 \qquad
 F_+:=f_1(x,t,\lambda_{0+}).
\]
Then $U\neq0$, and \eqref{ic:N-left-def} gives
\[
 \widetilde N_+^{(1)}
 =
 \vartheta\eul^{F_+}U+\ord(\zeta^4),
 \qquad
 \widetilde N_+^{(2)}
 =
 \frac{\eul^{-F_+}}
      {\kappa_{11}\zeta^2}U+\ord(1).
\]
Since
\[
 \zeta^2=2\ii A_2l_2(\lambda_+),
\]
we may define
\[
 \tau_{\lambda_0,+}(x,t)
 :=
 \frac{\eul^{-2F_+}}
      {2\ii A_2\vartheta\kappa_{11}}.
\]
This proves \eqref{ic:N-left-degenerate-upper}. The lower-side formula
\eqref{ic:N-left-degenerate-lower} follows from the 
symmetries \eqref{ic:Psi-conjugation-columns} and \eqref{ic:c11-c22-conjugation-cut}. The result at $\lambda_0=-A_2^{-1}$ follows from the symmetries\eqref{ic:Psi-minus-upper-cut},
\eqref{ic:Psi-minus-lower-cut},
\eqref{ic:c11-minus-upper-cut},
\eqref{ic:c22-minus-lower-cut}.

Finally,
\[
 l_2(\lambda_\pm)
 =
 \ord\bigl(|\lambda-\lambda_0|^{1/2}\bigr),
\]
which proves \eqref{ic:N-left-degenerate-bound}.
\end{proof}

The two pre-RH matrices differ by a diagonal factor.

\begin{proposition}\label{prop:ic-pre-rh-connection}
For $\lambda\in\mathbb C\setminus\mathbb R$,
\begin{equation}\label{ic:pre-rh-connection}
 N(x,t,\lambda)
 =
 \widetilde N(x,t,\lambda)
 P_{\mathrm{LR}}(x,t,\lambda),
\end{equation}
where
\begin{equation}\label{ic:pre-rh-connection-factor}
 P_{\mathrm{LR}}(x,t,\lambda)
 =
 \begin{cases}
 \begin{pmatrix}
 \displaystyle
 \frac{\eul^{f_2(x,t,\lambda)-f_1(x,t,\lambda)}}
      {c_{11}(\lambda)}
 &0\\[4mm]
 0&
 \displaystyle
 c_{11}(\lambda)
 \eul^{f_1(x,t,\lambda)-f_2(x,t,\lambda)}
 \end{pmatrix},
 &\lambda\in\mathbb C^+,\\[10mm]
 \begin{pmatrix}
 \displaystyle
 c_{22}(\lambda)
 \eul^{f_2(x,t,\lambda)-f_1(x,t,\lambda)}
 &0\\[4mm]
 0&
 \displaystyle
 \frac{\eul^{f_1(x,t,\lambda)-f_2(x,t,\lambda)}}
      {c_{22}(\lambda)}
 \end{pmatrix},
 &\lambda\in\mathbb C^-.
 \end{cases}
\end{equation}
Moreover,
\begin{equation}\label{ic:pre-rh-connection-properties}
 \det P_{\mathrm{LR}}(x,t,\lambda)=1,
 \qquad
 P_{\mathrm{LR}}(x,t,\lambda)
 =
 I+\ord(\lambda^{-1}),
 \qquad
 \lambda\to\infty
\end{equation}
in either half-plane.
\end{proposition}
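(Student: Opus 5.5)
The plan is to compare the definitions \eqref{ic:N-right-def} and \eqref{ic:N-left-def} column by column. Both are built from the same four columns of the balanced Jost solutions $\Psi_j^{\mathrm b}$, premultiplied by $\mathcal E$; they differ only in where the scalar factors $c_{11}^{-1}$ or $c_{22}^{-1}$ sit and in the diagonal phase ($f_2$ versus $f_1$). Since right multiplication by a diagonal matrix rescales columns, it suffices to identify the scalar relating each pair of corresponding columns.

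For $\lambda\in\mathbb C^+$, the first column of $N$ is $\mathcal E(\Psi_1^{\mathrm b})^{(1)}\eul^{f_2}/c_{11}$, and the first column of $\widetilde N$ is $\mathcal E(\Psi_1^{\mathrm b})^{(1)}\eul^{f_1}$. Hence
\[
N^{(1)}=\widetilde N^{(1)}\,\frac{\eul^{f_2-f_1}}{c_{11}} .
\]
The second column of $N$ is $\mathcal E(\Psi_2^{\mathrm b})^{(2)}\eul^{-f_2}$, while that of $\widetilde N$ is $\mathcal E(\Psi_2^{\mathrm b})^{(2)}\eul^{-f_1}/c_{11}$. Hence
\[
N^{(2)}=\widetilde N^{(2)}\,c_{11}\eul^{f_1-f_2} .
\]
The case $\lambda\in\mathbb C^-$ is handled identically, with $c_{22}$ in place of $c_{11}$ and the roles of the columns swapped. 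This gives \eqref{ic:pre-rh-connection-factor}. The identity first holds away from the zeros of $c_{11}$ and $c_{22}$. By Corollary~\ref{cor:ic-scattering-zeros}, there are no such zeros off the real axis, so it holds on all of $\mathbb C\setminus\mathbb R$.

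The determinant identity is immediate, since the two diagonal entries are mutually reciprocal in each half-plane.

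For the behaviour at infinity, I would use the large-$\lambda$ asymptotics already recorded in the proof of Proposition~\ref{prop:ic-right-pre-rh}:
\[
c_{11}=\eul^{f_2-f_1}\bigl(1+\ord(\lambda^{-1})\bigr)\ \text{in }\mathbb C^+,
\qquad
c_{22}=\eul^{f_1-f_2}\bigl(1+\ord(\lambda^{-1})\bigr)\ \text{in }\mathbb C^-.
\]
Substituting these makes each diagonal entry equal to $1+\ord(\lambda^{-1})$. These asymptotics follow from the determinant representations \eqref{ic:scattering-determinants} combined with \eqref{ic:Psi-tildePsi-relation}, \eqref{ic:tildePsi-large-lambda}, and $\det H_j=1$. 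A cleaner alternative is to invoke the normalizations \eqref{ic:N-right-normalization} and \eqref{ic:N-left-normalization}: then $P_{\mathrm{LR}}=\widetilde N^{-1}N=I+\ord(\lambda^{-1})$, since $\det\widetilde N=1$.

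The only step needing care is this asymptotic statement for $c_{11}$ and $c_{22}$, where the phase factors must be tracked correctly. The argument via $\widetilde N^{-1}N$ sidesteps that bookkeeping, provided $\det\widetilde N\equiv1$. That holds because $\det\mathcal E=1$, $\det\Psi_j^{\mathrm b}$-type Wronskians equal the appropriate $c_{11}$ or $c_{22}$, and the diagonal phase factor has unit determinant.
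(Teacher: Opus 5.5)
Your proposal is correct and matches the paper's proof. It uses the same column-by-column comparison of \eqref{ic:N-right-def} and \eqref{ic:N-left-def}, the same observation that the diagonal entries are reciprocal, so $\det P_{\mathrm{LR}}=1$, and the same large-$\lambda$ behaviour $c_{11}\eul^{f_1-f_2}=1+\ord(\lambda^{-1})$ in $\mathbb C^+$ and $c_{22}\eul^{f_2-f_1}=1+\ord(\lambda^{-1})$ in $\mathbb C^-$, obtained from \eqref{ic:scattering-determinants}, \eqref{ic:Psi-tildePsi-relation}, and \eqref{ic:tildePsi-large-lambda}. One small correction: the fact that actually drives that determinant step is that $H_1^{-1}$ and $H_2^{-1}$ both tend to $\mathcal E^{-1}$ with an $\ord(\lambda^{-1})$ error, together with $\det G=1$; $\det H_j=1$ alone is not enough.

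Your alternative via $\widetilde N^{-1}N$ is valid but not independent, because \eqref{ic:N-right-normalization} and \eqref{ic:N-left-normalization} were themselves derived from exactly these asymptotics.
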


\begin{proof}
It follows from \eqref{ic:N-right-def} and
\eqref{ic:N-left-def} that, for $\lambda\in\mathbb C^+$,
\[
 N^{(1)}
 =
 \frac{\eul^{f_2-f_1}}{c_{11}}\,
 \widetilde N^{(1)},
 \qquad
 N^{(2)}
 =
 c_{11}\eul^{f_1-f_2}\,
 \widetilde N^{(2)}.
\]
Similarly, for $\lambda\in\mathbb C^-$, the corresponding relations are
\[
 N^{(1)}
 =
 c_{22}\eul^{f_2-f_1}\,
 \widetilde N^{(1)},
 \qquad
 N^{(2)}
 =
 \frac{\eul^{f_1-f_2}}{c_{22}}\,
 \widetilde N^{(2)}.
\]
These column relations are precisely
\eqref{ic:pre-rh-connection}--\eqref{ic:pre-rh-connection-factor}.

A direct computation using
\eqref{ic:pre-rh-connection-factor} yields
$\det P_{\mathrm{LR}}(x,t,\lambda)=1$
. Furthermore,
\eqref{ic:scattering-determinants},
\eqref{ic:Psi-tildePsi-relation}, and
\eqref{ic:tildePsi-large-lambda} imply
\begin{align*}
 c_{11}(\lambda)\eul^{f_1-f_2}
 &=
 1+\ord(\lambda^{-1}),
 &&\lambda\to\infty,\quad\lambda\in\mathbb C^+,\\
 c_{22}(\lambda)\eul^{f_2-f_1}
 &=
 1+\ord(\lambda^{-1}),
 &&\lambda\to\infty,\quad\lambda\in\mathbb C^-.
\end{align*}
Consequently,
$ P_{\mathrm{LR}}(x,t,\lambda)
 =
 I+\ord(\lambda^{-1}),
 \qquad
 \lambda\to\infty.$
\end{proof}

\subsection{The right RH problem in the
$\boldsymbol{(y,t)}$ scale}
\label{sec:ic-right-rh}

The phase function $f_2$ contains $\rho$, so the resulting RH
problem would involve the solution that it is intended to reconstruct.
To obtain a closed formulation in terms of the scattering data, we
introduce the new spatial variable
\begin{equation}\label{ic:right-hodograph-def}
 y(x,t)
 :=
 x-\frac1{A_2}
 \int_x^{+\infty}
 \bigl(\rho(\xi,t)-A_2\bigr)\dd\xi
 -A_2^2t.
\end{equation}
The conservation law for $\rho$ in
\eqref{ic:rho-r-conservation} yields
\begin{equation}\label{ic:right-hodograph-derivatives}
 y_x=\frac{\rho}{A_2},
 \qquad
 y_t=-\frac{Q\rho}{A_2}=-Qy_x.
\end{equation}
In particular, $y_x>0$, and, for all fixed $t$, the map $x\mapsto y(x,t)$ is a strictly
increasing diffeomorphism of $\mathbb R$ onto $\mathbb R$.

Then
\begin{equation}\label{ic:right-phase-hodograph}
 f_2(x,t,\lambda)
 =
 \widehat f_2(y(x,t),t,\lambda),
\end{equation}
where
\begin{equation}\label{ic:right-phase-def}
 \widehat f_2(y,t,\lambda)
 :=
 \frac{\ii A_2l_2(\lambda)}2
 \left(
  y-\frac{2t}{\lambda^2}
 \right).
\end{equation}
Introduce $\widehat N$ by
\begin{equation}\label{ic:Nhat-right-def}
 N(x,t,\lambda)
 =
 \widehat N(y(x,t),t,\lambda).
\end{equation}

The \textbf{right RH problem} is to find a $2\times2$ matrix
$\widehat N(y,t,\lambda)$ with the following properties.

\begin{enumerate}[(i)]
 \item $\widehat N$ is analytic in
 $\mathbb C\setminus\mathbb R$,  has
one-sided simple-pole behavior at $\lambda_{k+}$ and
$\lambda_{k-}$, $k=1,\ldots,N$, and admits nontangential
 boundary values away from $\pm A_1^{-1}$, $\pm A_2^{-1}$, and $\lambda_{k}$, $k=1,\ldots,N$.

 \item Its boundary values satisfy
 \begin{equation}\label{ic:Nhat-right-jump}
  \widehat N_+(y,t,\lambda)
  =
  \widehat N_-(y,t,\lambda)
  \widehat G_{\mathrm R}(y,t,\lambda),
 \end{equation}
 where
 \begin{equation}\label{ic:Ghat-right-def}
  \widehat G_{\mathrm R}(y,t,\lambda)
  =
  \begin{cases}
   \eul^{-\widehat f_2(y,t,\lambda)\sigma_3}
   G_{\mathrm R,0}(\lambda)
   \eul^{\widehat f_2(y,t,\lambda)\sigma_3},
   &\lambda\in\dot\Sigma_1\cup\dot\Sigma_0,\\[2mm]
   -\ii\sigma_1,
   &\lambda\in\dot\Gamma_2,
  \end{cases}
 \end{equation}
 and $G_{\mathrm R,0}$ is defined by
 \eqref{ic:G-right-zero-def}.

 \item Its one-sided residues satisfy
\begin{subequations}\label{ic:Nhat-right-residues}
\begin{align}
 \Res_{\lambda=\lambda_{k+}}
 \widehat N^{(1)}(y,t,\lambda)
 &=
 \varkappa_k
 \eul^{2\widehat f_2(y,t,\lambda_{k+})}
 \widehat N^{(2)}(y,t,\lambda_{k+}),
 \label{ic:Nhat-right-residue-upper}\\
 \Res_{\lambda=\lambda_{k-}}
 \widehat N^{(2)}(y,t,\lambda)
 &=
 \varkappa_k
 \eul^{-2\widehat f_2(y,t,\lambda_{k-})}
 \widehat N^{(1)}(y,t,\lambda_{k-}).
 \label{ic:Nhat-right-residue-lower}
\end{align}
\end{subequations}

 \item As $\lambda\to\infty$,
 \begin{equation}\label{ic:Nhat-right-normalization}
  \widehat N(y,t,\lambda)
  =
  I+\ord(\lambda^{-1}).
 \end{equation}

 \item At every
 $\lambda_0\in\{\pm A_1^{-1},\pm A_2^{-1}\}$,
 \begin{equation}\label{ic:Nhat-right-branch-bound}
  \widehat N(y,t,\lambda)
  =
  \ord\bigl(|\lambda-\lambda_0|^{-1/4}\bigr).
 \end{equation}

\end{enumerate}

\begin{proposition}\label{prop:ic-right-rh-properties}
The right RH problem has at most one solution. If a solution exists,
then
\begin{equation}\label{ic:Nhat-right-determinant}
 \det\widehat N(y,t,\lambda)\equiv1.
\end{equation}
For scattering data arising from the direct problem, it satisfies
\begin{subequations}\label{ic:Nhat-right-symmetries}
\begin{align}
 \widehat N(\lambda)
 &=
 \sigma_2\widehat N(-\lambda)\sigma_2,
 &
 \widehat N(\lambda)
 &=
 \sigma_1
 \overline{\widehat N(\overline\lambda)}
 \sigma_1,
 &&\lambda\in\mathbb C^+,
 \label{ic:Nhat-right-off-contour-symmetries}\\
 \widehat N(\lambda_+)
 &=
 -\sigma_2
 \widehat N((-\lambda)_+)\sigma_3,
 &
 \widehat N(\lambda_+)
 &=
 -\ii\sigma_1
 \overline{\widehat N(\lambda_+)},
 &&\lambda\in
 \dot\Gamma_2\setminus\{\lambda_k\}_{k=1}^{N},
 \label{ic:Nhat-right-upper-symmetries}\\
 \widehat N(\lambda_-)
 &=
 \sigma_2
 \widehat N((-\lambda)_-)\sigma_3,
 &
 \widehat N(\lambda_-)
 &=
 \ii\sigma_1
 \overline{\widehat N(\lambda_-)},
 &&\lambda\in
 \dot\Gamma_2\setminus
 \{\lambda_k\}_{k=1}^{N}.
 \label{ic:Nhat-right-lower-symmetries}
\end{align}
\end{subequations}
\end{proposition}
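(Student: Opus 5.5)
The plan is to prove the determinant identity first, then uniqueness, and finally the symmetries, each by the standard Liouville-type argument adapted to the singular points.

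For \eqref{ic:Nhat-right-determinant}, set $d(\lambda):=\det\widehat N(y,t,\lambda)$. The jump matrices in \eqref{ic:Ghat-right-def} all have unit determinant: $\det G_{\mathrm R,0}=1$ on $\dot\Sigma_1$ and on $\dot\Sigma_0$, and $\det(-\ii\sigma_1)=1$. Hence $d_+=d_-$ on $\mathbb R$ away from the finitely many exceptional points.

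At $\lambda_{k\pm}$ the residue conditions \eqref{ic:Nhat-right-residues} make the singular column proportional to the regular column. The simple pole therefore cancels in the determinant, and $d$ is bounded there. At the branch points, \eqref{ic:Nhat-right-branch-bound} gives $d=\ord(|\lambda-\lambda_0|^{-1/2})$. This is an integrable, hence removable, singularity for a function that is continuous across the line.

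So $d$ is entire, and by \eqref{ic:Nhat-right-normalization} it tends to $1$ at infinity. Liouville gives $d\equiv1$. The point $\lambda=0$ is not special here because the jump $-\ii\sigma_1$ is continuous through $0$.

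For uniqueness, let $\widehat N$ and $\widehat N^\sharp$ be two solutions and consider $M:=\widehat N^\sharp\widehat N^{-1}$, using $\det\widehat N=1$ so that $\widehat N^{-1}$ is the adjugate.

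\begin{itemize}
\item $M$ has no jump on $\mathbb R$.
\item At $\lambda_{k+}$, write $\widehat N^{-1}$ row-wise. The residue condition with the same $\varkappa_k\eul^{2\widehat f_2}$ for both solutions shows that $(\lambda-\lambda_k)M\to0$. Concretely, $\widehat N\,(I-\frac{c}{\lambda-\lambda_k}E_{21})$ is regular, so $M=\widehat N^\sharp(I-\tfrac{c}{\lambda-\lambda_k}E_{21})\bigl[\widehat N(I-\tfrac{c}{\lambda-\lambda_k}E_{21})\bigr]^{-1}$ is a product of regular factors. The analogous statement holds at $\lambda_{k-}$.
\item At the branch points, $M=\ord(|\lambda-\lambda_0|^{-1/2})$, which is removable.
\end{itemize}

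Thus $M$ is entire and tends to $I$ at infinity, so $M\equiv I$.

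For the symmetries \eqref{ic:Nhat-right-off-contour-symmetries}, I would show that $\sigma_2\widehat N(-\lambda)\sigma_2$ and $\sigma_1\overline{\widehat N(\bar\lambda)}\sigma_1$ solve the same RH problem, and then invoke uniqueness. This requires checking the following for each transformed matrix:

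\begin{itemize}
\item \textbf{Jump.} This uses $\widehat f_2(-\lambda)=-\widehat f_2(\lambda)$ and $\overline{\widehat f_2(\bar\lambda)}=-\widehat f_2(\lambda)$ off the cut, together with \eqref{ic:R-cut-minus}--\eqref{ic:R-cut-jump} on it, and the reflection symmetries \eqref{ic:right-reflection-outer-symmetry}--\eqref{ic:right-reflection-inner-symmetry}. Note that $\lambda\mapsto-\lambda$ swaps half-planes, so the jump gets inverted and conjugated by $\sigma_2$. One must verify, for instance, $\sigma_2 G_{\mathrm R}(-\lambda)^{-1}\sigma_2=G_{\mathrm R}(\lambda)$, including $\sigma_2(\ii\sigma_1)\sigma_2=-\ii\sigma_1$.
\item \textbf{Residues.} This uses $\varkappa_{k^*}=\varkappa_k$ from \eqref{ic:norming-parity} and the reality of $\varkappa_k$.
\item \textbf{Normalization and branch bounds.} These are immediate.
\end{itemize}

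Alternatively, and more directly for direct-problem data, one can derive these from \eqref{ic:Psi-minus-columns}, \eqref{ic:Psi-conjugation-columns}, \eqref{ic:c11-c22-minus-analytic}, \eqref{ic:c11-c22-conjugation-analytic}, the identities \eqref{ic:E-identities} for $\mathcal E$, and $G=\sigma_3G\sigma_3$.

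The boundary relations \eqref{ic:Nhat-right-upper-symmetries}--\eqref{ic:Nhat-right-lower-symmetries} on $\dot\Gamma_2$ follow by combining these with the Jost-level cut symmetries: \eqref{ic:Psi-minus-upper-cut}, \eqref{ic:Psi-minus-lower-cut}, \eqref{ic:Psi-conjugation-upper-cut}, \eqref{ic:Psi-conjugation-lower-cut}, and \eqref{ic:c11-minus-upper-cut}, \eqref{ic:c11-real-upper-cut}, \eqref{ic:c22-real-lower-cut}. One also needs the reality of $\widehat f_2$ on the cut, $\mathcal E=\sigma_2\mathcal E\sigma_3$, and $\mathcal E=\ii\sigma_1\overline{\mathcal E}$.

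The main obstacle is the sign bookkeeping in these cut symmetries. The column-dependent signs $\mp\sigma_3$ and the phase signs have to assemble into exactly $-\sigma_2(\cdot)\sigma_3$ versus $+\sigma_2(\cdot)\sigma_3$ on the two sides. Rather than a global argument, I would verify this column by column, using $\sigma_2\mathcal E\sigma_3=\mathcal E$ to move $\sigma_3$ through $\mathcal E$.
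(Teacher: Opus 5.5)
Your proposal is correct. For the determinant identity, uniqueness, and the off-contour symmetries \eqref{ic:Nhat-right-off-contour-symmetries}, it follows the paper's argument: Liouville applied to $\det\widehat N$ and to $\widehat N^\sharp\widehat N^{-1}$, using removability at $\lambda_{k\pm}$ and the $\ord(|\lambda-\lambda_0|^{-1/2})$ bound at the branch points, followed by the check that $\sigma_2\widehat N(-\lambda)\sigma_2$ and $\sigma_1\overline{\widehat N(\overline\lambda)}\sigma_1$ solve the same problem. Your factorization with $I-\frac{c}{\lambda-\lambda_k}E_{21}$ spells out what the paper only asserts. A small precision: the removability criterion at a branch point is $o(|\lambda-\lambda_0|^{-1})$ for a function analytic in a punctured disc, not integrability, but your bound satisfies it.

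You diverge only for the cut relations \eqref{ic:Nhat-right-upper-symmetries}--\eqref{ic:Nhat-right-lower-symmetries}. You rebuild them column by column from the Jost-level cut symmetries and the cut properties of $c_{11}$, $c_{22}$ and $\widehat f_2$. That works: on the upper side, the column signs $-\sigma_3$, $+\sigma_3$ produce a right factor $-\sigma_3$, and $\mathcal E\sigma_3\mathcal E^{-1}=\sigma_2$ turns this into $-\sigma_2\widehat N((-\lambda)_+)\sigma_3$. It is, however, more work than needed.

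The paper takes boundary values of the off-contour identities, noting that $\lambda\to\lambda_+$ sends $-\lambda\to(-\lambda)_-$ and $\overline\lambda\to\lambda_-$. It then converts minus-side values to plus-side ones via the jump $\widehat N_-=\widehat N_+(\ii\sigma_1)$ on $\dot\Gamma_2$. For example,
\[
 \widehat N(\lambda_+)=\sigma_2\widehat N((-\lambda)_-)\sigma_2
 =\sigma_2\widehat N((-\lambda)_+)\,\ii\sigma_1\sigma_2
 =-\sigma_2\widehat N((-\lambda)_+)\sigma_3,
 \qquad
 \widehat N(\lambda_+)=\sigma_1\overline{\widehat N(\lambda_-)}\sigma_1
 =-\ii\sigma_1\overline{\widehat N(\lambda_+)}.
\]
This settles in one line the sign bookkeeping you identify as the main obstacle, and it uses only the RH data. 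It also covers $\lambda=0$ directly. The Jost-level relations you cite, such as \eqref{ic:Psi-minus-upper-cut}, are stated only away from $0$, so your route would need an additional continuity argument there.
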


\begin{proof}
The jump matrices have determinant one, and the residue conditions
make the possible singularities of $\det\widehat N$ at
$\lambda_{k\pm}$ removable. Hence $\det\widehat N$ has no jump and
extends across the regular parts of the contour. At each branch point
$\lambda_0$,
\[
 \det\widehat N(\lambda)
 =
 \ord\bigl(|\lambda-\lambda_0|^{-1/2}\bigr),
\]
so the remaining singularities are also removable. Thus
$\det\widehat N$ extends to an entire function. The normalization
\eqref{ic:Nhat-right-normalization} and Liouville's theorem yield
\[
 \det\widehat N\equiv1.
\]

If $\widehat N_1$ and $\widehat N_2$ are two solutions, then
$\widehat N_1\widehat N_2^{-1}$ has no jump or poles and extends
across the regular parts of the contour. Its singularities at the
branch points are removable by
\eqref{ic:Nhat-right-branch-bound}. Since
\[
 \widehat N_1\widehat N_2^{-1}
 =
 I+\ord(\lambda^{-1}),
 \qquad \lambda\to\infty,
\]
Liouville's theorem gives
$\widehat N_1\widehat N_2^{-1}\equiv I$, proving uniqueness.

For scattering data arising from the direct problem,
\eqref{ic:l-symmetries}, \eqref{ic:right-phase-def},
\eqref{ic:right-reflection-properties},
\eqref{ic:discrete-spectrum-parity},
\eqref{ic:norming-constants-integral}, and
\eqref{ic:norming-parity} imply that
\[
 \sigma_2\widehat N(-\lambda)\sigma_2,
 \qquad
 \sigma_1\overline{\widehat N(\overline\lambda)}\sigma_1
\]
satisfy the same RH problem as $\widehat N$. Uniqueness proves
\eqref{ic:Nhat-right-off-contour-symmetries}. Taking the corresponding
boundary values on $\dot\Gamma_2$ and using
\eqref{ic:Nhat-right-jump} gives
\eqref{ic:Nhat-right-upper-symmetries} and
\eqref{ic:Nhat-right-lower-symmetries}.
\end{proof}

\subsection{The left RH problem in the
$\boldsymbol{(\widetilde y,t)}$ scale}
\label{sec:ic-left-rh}

Similarly, for the left RH problem, introduce
\begin{equation}\label{ic:left-hodograph-def}
 \widetilde y(x,t)
 :=
 x+\frac1{A_1}
 \int_{-\infty}^{x}
 \bigl(\rho(\xi,t)-A_1\bigr)\dd\xi
 -A_1^2t.
\end{equation}
Then the conservation law for $\rho$ in
\eqref{ic:rho-r-conservation} yields
\begin{equation}\label{ic:left-hodograph-derivatives}
 \widetilde y_x=\frac{\rho}{A_1}>0,
 \qquad
 \widetilde y_t=-\frac{Q\rho}{A_1}
 =-Q\widetilde y_x.
\end{equation}
Hence, for all fixed $t$, $x\mapsto\widetilde y(x,t)$ is a strictly increasing
diffeomorphism of $\mathbb R$ onto $\mathbb R$.

Then
\begin{equation}\label{ic:left-phase-hodograph}
 f_1(x,t,\lambda)
 =
 \widehat f_1(\widetilde y(x,t),t,\lambda),
\end{equation}
where
\begin{equation}\label{ic:left-phase-def}
 \widehat f_1(\widetilde y,t,\lambda)
 :=
 \frac{\ii A_1l_1(\lambda)}2
 \left(
  \widetilde y-\frac{2t}{\lambda^2}
 \right).
\end{equation}
Introduce $\widehat{\widetilde N}$ by
\begin{equation}\label{ic:Nhat-left-def}
 \widetilde N(x,t,\lambda)
 =
 \widehat{\widetilde N}
 \bigl(\widetilde y(x,t),t,\lambda\bigr).
\end{equation}

The \textbf{left RH problem} is to find a $2\times2$ matrix
$\widehat{\widetilde N}(\widetilde y,t,\lambda)$ with the following
properties.

\begin{enumerate}[(i)]
 \item $\widehat{\widetilde N}$ is analytic in
 $\mathbb C\setminus\mathbb R$,  has
one-sided simple-pole behavior at $\lambda_{k+}$ and
$\lambda_{k-}$, $k=1,\ldots,N$, and admits nontangential
 boundary values away from $\pm A_1^{-1}$, $\pm A_2^{-1}$, and $\lambda_{k}$, $k=1,\ldots,N$.

 \item Its boundary values satisfy
 \begin{equation}\label{ic:Nhat-left-jump}
  \widehat{\widetilde N}_+
  (\widetilde y,t,\lambda)
  =
  \widehat{\widetilde N}_-
  (\widetilde y,t,\lambda)
  \widehat G_{\mathrm L}(\widetilde y,t,\lambda),
 \end{equation}
 where
 \begin{equation}\label{ic:Ghat-left-def}
  \widehat G_{\mathrm L}(\widetilde y,t,\lambda)
  =
  \begin{cases}
   \eul^{-\widehat f_1(\widetilde y,t,\lambda)\sigma_3}
   \begin{pmatrix}
    1&-\widetilde{\mathfrak r}(\lambda)\\
    \overline{\widetilde{\mathfrak r}(\lambda)}
    &1-|\widetilde{\mathfrak r}(\lambda)|^2
   \end{pmatrix}
   \eul^{\widehat f_1(\widetilde y,t,\lambda)\sigma_3},
   &\lambda\in\dot\Sigma_1,\\[6mm]
   -\ii
   \eul^{-\widehat f_1(
                 \widetilde y,t,\lambda_-)\sigma_3}
   \begin{pmatrix}
    0&1\\
    1&\widetilde{\mathfrak r}(\lambda)
   \end{pmatrix}
   \eul^{\widehat f_1(
                 \widetilde y,t,\lambda_+)\sigma_3},
   &\lambda\in\dot\Sigma_0,\\[6mm]
   -\ii\sigma_1,
   &\lambda\in\dot\Gamma_2.
  \end{cases}
 \end{equation}

 \item Its one-sided residues satisfy
 \begin{subequations}\label{ic:Nhat-left-residues}
\begin{align}
 \Res_{\lambda=\lambda_{k+}}
 \widehat{\widetilde N}^{(2)}(\widetilde y,t,\lambda)
 &=
 \widetilde\varkappa_k
 \eul^{-2\widehat f_1(\widetilde y,t,\lambda_{k+})}
 \widehat{\widetilde N}^{(1)}
 (\widetilde y,t,\lambda_{k+}),
 \label{ic:Nhat-left-residue-upper}\\
 \Res_{\lambda=\lambda_{k-}}
 \widehat{\widetilde N}^{(1)}(\widetilde y,t,\lambda)
 &=
 \widetilde\varkappa_k
 \eul^{2\widehat f_1(\widetilde y,t,\lambda_{k-})}
 \widehat{\widetilde N}^{(2)}
 (\widetilde y,t,\lambda_{k-}).
 \label{ic:Nhat-left-residue-lower}
\end{align}
\end{subequations}

 \item As $\lambda\to\infty$,
 \begin{equation}\label{ic:Nhat-left-normalization}
  \widehat{\widetilde N}
  (\widetilde y,t,\lambda)
  =
  I+\ord(\lambda^{-1}).
 \end{equation}

\item At $\lambda_0=\pm A_1^{-1}$,
\begin{equation}\label{ic:Nhat-left-A1-branch-bound}
 \widehat{\widetilde N}(\widetilde y,t,\lambda)
 =
 \ord\bigl(|\lambda-\lambda_0|^{-1/4}\bigr).
\end{equation}
At $\lambda_0=\pm A_2^{-1}$, the behavior depends on $\gamma_2$.
If $\gamma_2\neq0$, then
\begin{equation}\label{ic:Nhat-left-A2-regular-bound}
 \widehat{\widetilde N}(\widetilde y,t,\lambda)
 =
 \ord(1),
 \qquad
 \lambda\to\lambda_0.
\end{equation}
If $\gamma_2=0$, then, for each
$\lambda_0\in\{\pm A_2^{-1}\}$, there exist scalar functions
$\widehat\tau_{\lambda_0,\pm}(\widetilde y,t)$ such that
\begin{subequations}\label{ic:Nhat-left-degenerate-behavior}
\begin{align}
 \widehat{\widetilde N}_+^{(1)}
 (\widetilde y,t,\lambda)
 &=\ord(1),&
 \widehat{\widetilde N}_+^{(2)}
 (\widetilde y,t,\lambda)
 -
 \frac{\widehat\tau_{\lambda_0,+}(\widetilde y,t)}
      {l_2(\lambda_+)}
 \widehat{\widetilde N}_+^{(1)}
 (\widetilde y,t,\lambda)
 &=\ord(1),
 \label{ic:Nhat-left-degenerate-upper}\\
 \widehat{\widetilde N}_-^{(2)}
 (\widetilde y,t,\lambda)
 &=\ord(1),&
 \widehat{\widetilde N}_-^{(1)}
 (\widetilde y,t,\lambda)
 -
 \frac{\widehat\tau_{\lambda_0,-}(\widetilde y,t)}
      {l_2(\lambda_-)}
 \widehat{\widetilde N}_-^{(2)}
 (\widetilde y,t,\lambda)
 &=\ord(1),
 \label{ic:Nhat-left-degenerate-lower}
\end{align}
\end{subequations}
as $\lambda\to\lambda_0$ along $\dot\Sigma_0$. In particular,
\begin{equation}\label{ic:Nhat-left-degenerate-bound}
 \widehat{\widetilde N}(\widetilde y,t,\lambda)
 =
 \ord\bigl(|\lambda-\lambda_0|^{-1/2}\bigr).
\end{equation}

\end{enumerate}

\begin{proposition}\label{prop:ic-left-rh-properties}
The left RH problem has at most one solution. If a solution exists,
then
\begin{equation}\label{ic:Nhat-left-determinant}
 \det\widehat{\widetilde N}
 (\widetilde y,t,\lambda)
 \equiv1.
\end{equation}
For scattering data arising from the direct problem, it satisfies
\begin{subequations}\label{ic:Nhat-left-symmetries}
\begin{align}
 \widehat{\widetilde N}(\lambda)
 &=
 \sigma_2
 \widehat{\widetilde N}(-\lambda)\sigma_2,
 &
 \widehat{\widetilde N}(\lambda)
 &=
 \sigma_1
 \overline{
 \widehat{\widetilde N}(\overline\lambda)}
 \sigma_1,
 &&\lambda\in\mathbb C^+,
 \label{ic:Nhat-left-off-contour-symmetries}\\
 \widehat{\widetilde N}(\lambda_+)
 &=
 -\sigma_2
 \widehat{\widetilde N}((-\lambda)_+)\sigma_3,
 &
 \widehat{\widetilde N}(\lambda_+)
 &=
 -\ii\sigma_1
 \overline{
 \widehat{\widetilde N}(\lambda_+)},
 &&\lambda\in
 \dot\Gamma_2\setminus
 \{\lambda_k\}_{k=1}^{N},
 \label{ic:Nhat-left-upper-symmetries}\\
 \widehat{\widetilde N}(\lambda_-)
 &=
 \sigma_2
 \widehat{\widetilde N}((-\lambda)_-)\sigma_3,
 &
 \widehat{\widetilde N}(\lambda_-)
 &=
 \ii\sigma_1
 \overline{
 \widehat{\widetilde N}(\lambda_-)},
 &&\lambda\in
 \dot\Gamma_2\setminus
 \{\lambda_k\}_{k=1}^{N}.
 \label{ic:Nhat-left-lower-symmetries}
\end{align}
\end{subequations}
\end{proposition}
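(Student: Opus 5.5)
I will follow the scheme of the proof of Proposition~\ref{prop:ic-right-rh-properties}. The one genuinely new ingredient is the singular behavior at $\pm A_2^{-1}$ when $\gamma_2=0$.

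\emph{Determinant.} The jump matrices in \eqref{ic:Ghat-left-def} all have determinant one. On $\dot\Sigma_0$ this is because $\det\bigl(-\ii\begin{pmatrix}0&1\\1&\widetilde{\mathfrak r}\end{pmatrix}\bigr)=1$ and $\widehat f_1(\lambda_-)=-\widehat f_1(\lambda_+)$, so the outer exponential factors have determinant $\eul^{\pm(\widehat f_1(\lambda_+)+\widehat f_1(\lambda_-))}=1$. The residue conditions \eqref{ic:Nhat-left-residues} express one column's residue as a multiple of the other column, so the one-sided poles of $\det\widehat{\widetilde N}$ are removable. Hence $\det\widehat{\widetilde N}$ is analytic off the branch points. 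At $\pm A_1^{-1}$ it is $\ord(|\lambda-\lambda_0|^{-1/2})$. At $\pm A_2^{-1}$ with $\gamma_2\neq0$ it is bounded. In the degenerate case $\gamma_2=0$, I would use \eqref{ic:Nhat-left-degenerate-behavior}: on the upper side, subtracting $\widehat\tau_{\lambda_0,+}/l_2$ times the first column from the second does not change the determinant, so $\det\widehat{\widetilde N}_+=\det(\widehat{\widetilde N}^{(1)}_+,\widehat{\widetilde N}^{(2)}_+-\tfrac{\widehat\tau}{l_2}\widehat{\widetilde N}^{(1)}_+)=\ord(1)$, and the lower side is handled the same way. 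The isolated singularities are therefore removable, and Liouville's theorem together with \eqref{ic:Nhat-left-normalization} gives $\det\equiv1$.

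\emph{Uniqueness.} Let $M=\widehat{\widetilde N}_1\widehat{\widetilde N}_2^{-1}$. It has no jumps, and its poles are removable by the standard residue argument: locally $\widehat{\widetilde N}_i=\widetilde{\mathcal N}_i(I+\text{nilpotent}/(\lambda-\lambda_k))$ with the same nilpotent part, since both solutions carry the same norming constant and phase. Because $\det=1$, the inverse is the adjugate, which has the same local bounds. At $\pm A_1^{-1}$ we get $M=\ord(|\lambda-\lambda_0|^{-1/2})$, which is removable.

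The hard point is $\pm A_2^{-1}$ when $\gamma_2=0$, where a naive bound gives only $\ord(|\lambda-\lambda_0|^{-1})$, and a simple pole need not be removable. I would write $\widehat{\widetilde N}_{i,+}=B_i\,T_i$ with $T_i=\begin{pmatrix}1&\widehat\tau_i/l_2\\0&1\end{pmatrix}$ and $B_i$ bounded with $\det B_i=1$. Then
\[
M=B_1T_1T_2^{-1}B_2^{-1}=B_1\begin{pmatrix}1&(\widehat\tau_1-\widehat\tau_2)/l_2\\0&1\end{pmatrix}B_2^{-1}=\ord(|\lambda-\lambda_0|^{-1/2}),
\]
which is removable. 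The same factorization on the lower side gives the analogous bound. If one worries that $\widehat\tau$ might depend on the solution, this computation shows that the difference does not matter. Liouville's theorem then gives $M\equiv I$.

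\emph{Symmetries.} For data arising from the direct problem, I would check that $\sigma_2\widehat{\widetilde N}(-\lambda)\sigma_2$ and $\sigma_1\overline{\widehat{\widetilde N}(\bar\lambda)}\sigma_1$ solve the same left RH problem. This uses \eqref{ic:l-symmetries}, \eqref{ic:left-phase-def}, \eqref{ic:left-reflection-properties} (in particular $\widetilde{\mathfrak r}$ is purely imaginary and even on $\dot\Sigma_0$), \eqref{ic:discrete-spectrum-parity}, and the fact that $\widetilde\varkappa_k$ is real with $\widetilde\varkappa_{k^*}=\widetilde\varkappa_k$. One must also verify that the singular structure at $\pm A_2^{-1}$ is preserved, which it is, since these maps send $l_2$ to $\pm l_2$ or its conjugate. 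Uniqueness then yields \eqref{ic:Nhat-left-off-contour-symmetries}. Taking boundary values and combining with the jump $-\ii\sigma_1$ on $\dot\Gamma_2$ gives the on-cut relations \eqref{ic:Nhat-left-upper-symmetries}--\eqref{ic:Nhat-left-lower-symmetries}, exactly as in the right problem.

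Alternatively, these symmetries follow directly from \eqref{ic:pre-rh-connection}, using the symmetries of $c_{11}$ and $c_{22}$ and the relation $P_{\mathrm{LR}}(-\lambda)=\sigma_2P_{\mathrm{LR}}(\lambda)\sigma_2$.
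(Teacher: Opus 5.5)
Your proposal is correct and follows essentially the paper's own proof. You reduce the case $\gamma_2\neq0$ to the right problem, and for $\gamma_2=0$ you bound $\det\widehat{\widetilde N}_\pm$ at $\pm A_2^{-1}$ by the same column operation, then prove uniqueness with the same unipotent factorization $B_{q,\pm}\bigl(I+\frac{\widehat\tau_{q,\pm}}{l_2}E_\pm\bigr)$, which gives $\widehat{\widetilde N}_1\widehat{\widetilde N}_2^{-1}=\ord\bigl(|\lambda-\lambda_0|^{-1/2}\bigr)$ however $\widehat\tau$ depends on the solution; the symmetries then follow, as in the paper, from uniqueness together with the jump $-\ii\sigma_1$ on $\dot\Gamma_2$ (one small slip: each factor $\eul^{a\sigma_3}$ has determinant $1$ for any $a$, so no relation between $\widehat f_1(\lambda_+)$ and $\widehat f_1(\lambda_-)$ is needed to get $\det\widehat G_{\mathrm L}=1$ on $\dot\Sigma_0$).
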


\begin{proof}
The argument used in the proof of
Proposition~\ref{prop:ic-right-rh-properties} applies directly when
$\gamma_2\neq0$. 

It remains to consider the case $\gamma_2=0$. By \eqref{ic:Nhat-left-degenerate-upper},
\[
 \widehat{\widetilde N}_+^{(2)}
 =
 \frac{\widehat\tau_{\lambda_0,+}}
      {l_2(\lambda_+)}
 \widehat{\widetilde N}_+^{(1)}
 +\ord(1),
 \qquad
 \widehat{\widetilde N}_+^{(1)}=\ord(1).
\]
Therefore,
\[
 \det\widehat{\widetilde N}_+
 =
 \frac{\widehat\tau_{\lambda_0,+}}
      {l_2(\lambda_+)}
 \det\left(
  \widehat{\widetilde N}_+^{(1)},
  \widehat{\widetilde N}_+^{(1)}
 \right)
 +\ord(1)
 =
 \ord(1).
\]
Similarly, \eqref{ic:Nhat-left-degenerate-lower} yields
\[
 \det\widehat{\widetilde N}_-
 =
 \frac{\widehat\tau_{\lambda_0,-}}
      {l_2(\lambda_-)}
 \det\left(
  \widehat{\widetilde N}_-^{(2)},
  \widehat{\widetilde N}_-^{(2)}
 \right)
 +\ord(1)
 =
 \ord(1).
\]
Thus the singularities of the determinant at
$\lambda_0=\pm A_2^{-1}$ are removable. 
Thus $\det\widehat{\widetilde N}$ is bounded at
$\lambda=\pm A_2^{-1}$. Together with the argument in the proof of
Proposition~\ref{prop:ic-right-rh-properties}, this shows that it
extends to an entire function. By
\eqref{ic:Nhat-left-normalization},
$ \det\widehat{\widetilde N}(\lambda) \to 1,
 \qquad \lambda\to\infty$,
and Liouville's theorem yields
\[
 \det\widehat{\widetilde N}\equiv1.
\]
To prove uniqueness, assume that
$\widehat{\widetilde N}_1$ and
$\widehat{\widetilde N}_2$ are two solutions and set
$ \mathcal X
 :=
 \widehat{\widetilde N}_1
 \widehat{\widetilde N}_2^{-1}$.
Proceeding as in the proof of
Proposition~\ref{prop:ic-right-rh-properties}, we conclude that $\mathcal X$ has no
jumps or poles; its possible singularities at
$\lambda=\pm A_1^{-1}$ are removable, and tends to $I$ as $\lambda\to\infty$. 
At $\lambda_0=\pm A_2^{-1}$,
\eqref{ic:Nhat-left-degenerate-upper} and
\eqref{ic:Nhat-left-degenerate-lower} yield the factorizations
\[
 \widehat{\widetilde N}_{q,\pm}
 =
 B_{q,\pm}
 \left(
  I+
  \frac{\widehat\tau_{q,\pm}}
       {l_2(\lambda_\pm)}E_\pm
 \right),
 \qquad q=1,2,
\]
where
\[
 E_+
 =
 \begin{pmatrix}0&1\\0&0\end{pmatrix},
 \qquad
 E_-
 =
 \begin{pmatrix}0&0\\1&0\end{pmatrix},
\]
and $B_{q,\pm}$ and $B_{q,\pm}^{-1}$ are bounded as
$\lambda\to\lambda_0$. Since $E_\pm^2=0$,
\[
 \mathcal X_\pm
 =
 B_{1,\pm}
 \left(
  I+
  \frac{\widehat\tau_{1,\pm}-\widehat\tau_{2,\pm}}
       {l_2(\lambda_\pm)}E_\pm
 \right)
 B_{2,\pm}^{-1}.
\]
Therefore,
\[
 \mathcal X(\lambda)
 =
 \ord\bigl(l_2(\lambda)^{-1}\bigr)
 =
 \ord\bigl(|\lambda-\lambda_0|^{-1/2}\bigr).
\]
Thus,
the singularity is removable, and Liouville's theorem implies
$ \mathcal X\equiv I$.

Finally, the symmetry relations follow from the same uniqueness argument used
in the proof of Proposition~\ref{prop:ic-right-rh-properties}.

\end{proof}

\subsection{Connection between the left and right
Riemann--Hilbert problems}
\label{sec:ic-left-right-connection}

The variables $y$ and $\widetilde y$ are  related. Indeed,
\[
 \widetilde y_x
 =
 \frac{A_2}{A_1}y_x,
 \qquad
 \widetilde y_t
 =
 \frac{A_2}{A_1}y_t.
\]
Hence
\begin{equation}\label{ic:two-hodograph-relation}
 \widetilde y(x,t)
 =
 \frac{A_2}{A_1}y(x,t)+C,
\end{equation}
where $C$ is independent of $x$ and $t$. Evaluating this identity at
$(x,t)=(0,0)$ yields
\begin{equation}\label{ic:two-hodograph-constant}
 C
 =
 \frac1{A_1}
 \left[
  \int_{-\infty}^{0}
  \bigl(\rho(\xi,0)-A_1\bigr)\dd\xi
  +
  \int_{0}^{+\infty}
  \bigl(\rho(\xi,0)-A_2\bigr)\dd\xi
 \right].
\end{equation}

In view of \eqref{ic:two-hodograph-relation},
\begin{equation}\label{ic:ytilde-of-y}
 \widetilde y(y)
 :=
 \frac{A_2}{A_1}y+C.
\end{equation}
We now establish a direct relation between
$\widehat N(y,t,\lambda)$ and
$\widehat{\widetilde N}(\widetilde y(y),t,\lambda)$, the solutions of
the right and left RH problems, respectively.

\begin{proposition}\label{prop:ic-left-right-connection}
The right RH problem at $(y,t)$ is solvable if and only if the left
RH problem at $(\widetilde y(y),t)$ is solvable. In this case, their
solutions satisfy
\begin{equation}\label{ic:left-right-connection}
 \widehat N(y,t,\lambda)
 =
 \widehat{\widetilde N}
 \bigl(\widetilde y(y),t,\lambda\bigr)
 \mathcal P_{\mathrm{LR}}(y,t,\lambda),
\end{equation}
where
\begin{equation}\label{ic:left-right-connection-factor}
 \mathcal P_{\mathrm{LR}}(y,t,\lambda)
 =
 \begin{cases}
 \begin{pmatrix}
 \displaystyle
 \frac1{
 c_{11}(\lambda)
 \eul^{
 \widehat f_1(\widetilde y(y),t,\lambda)
 -\widehat f_2(y,t,\lambda)}}&0\\[4mm]
 0&
 \displaystyle
 c_{11}(\lambda)
 \eul^{
 \widehat f_1(\widetilde y(y),t,\lambda)
 -\widehat f_2(y,t,\lambda)}
 \end{pmatrix},
 &\lambda\in\mathbb C^+,\\[9mm]
 \begin{pmatrix}
 \displaystyle
 c_{22}(\lambda)
 \eul^{
 \widehat f_2(y,t,\lambda)
 -\widehat f_1(\widetilde y(y),t,\lambda)}&0\\[4mm]
 0&
 \displaystyle
 \frac1{
 c_{22}(\lambda)
 \eul^{
 \widehat f_2(y,t,\lambda)
 -\widehat f_1(\widetilde y(y),t,\lambda)}}
 \end{pmatrix},
 &\lambda\in\mathbb C^-.
 \end{cases}
\end{equation}
\end{proposition}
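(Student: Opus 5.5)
The plan is to work at the level of the abstract RH problems in the $(y,t)$ and $(\widetilde y,t)$ scales. The idea is to show that right-multiplication by $\mathcal P_{\mathrm{LR}}$ maps solutions of the left RH problem at $(\widetilde y(y),t)$ to solutions of the right RH problem at $(y,t)$, and that multiplication by $\mathcal P_{\mathrm{LR}}^{-1}$ maps in the opposite direction. Uniqueness (Propositions~\ref{prop:ic-right-rh-properties} and \ref{prop:ic-left-rh-properties}) then gives both the equivalence of solvability and the identity \eqref{ic:left-right-connection}. For data coming from the direct problem, this is consistent with Proposition~\ref{prop:ic-pre-rh-connection}: substituting $f_2=\widehat f_2(y)$ and $f_1=\widehat f_1(\widetilde y)$ via \eqref{ic:right-phase-hodograph}, \eqref{ic:left-phase-hodograph} and \eqref{ic:two-hodograph-relation} turns $P_{\mathrm{LR}}$ into $\mathcal P_{\mathrm{LR}}$. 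The abstract argument, however, should use only the scattering data. Accordingly, $c_{11}$ (respectively $c_{22}$) will be regarded as the function determined by the data, through its analyticity in $\mathbb C^\pm$, its absence of zeros off the real line (Corollary~\ref{cor:ic-scattering-zeros}), its simple zeros $\lambda_k$, and its boundary modulus $|c_{11}|^2=1+|c_{21}|^2$ on $\dot\Sigma_1$.

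The first step is to check the analytic properties of $\mathcal P_{\mathrm{LR}}$. It is diagonal, has determinant $1$, and tends to $I$ at infinity by \eqref{ic:pre-rh-connection-properties}. At $\lambda=0$ the singular terms $t/\lambda^2$ must cancel. Indeed, $\widehat f_1(\widetilde y)-\widehat f_2(y)$ contains $-t\lambda^{-2}\ii(A_1l_1-A_2l_2)$, and $A_jl_j=\pm\ii(1-A_j^2\lambda^2/2+\dots)$ near $0$, so the difference is bounded. The factor is also compatible with the bounds at the branch points.

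The second step is the jump computation. Set $\widetilde N' := \widehat{\widetilde N}\mathcal P_{\mathrm{LR}}$. Then its jump is $\widetilde N'_-{}^{-1}\widetilde N'_+=\mathcal P_{\mathrm{LR},-}^{-1}\widehat G_{\mathrm L}\mathcal P_{\mathrm{LR},+}$, and this must equal $\widehat G_{\mathrm R}$ on each part of the contour.
\begin{itemize}
\item On $\dot\Sigma_1$ this is the scalar factorization identity for the triangular factors. It uses $c_{22}=\overline{c_{11}}$, $\mathfrak r=c_{21}/c_{11}$, $\widetilde{\mathfrak r}=\overline{c_{21}}/c_{11}$, and \eqref{ic:c-pseudo-unitarity}.
\item On $\dot\Sigma_0$ it uses \eqref{ic:c-cut-relations} together with $c_{11,+}\overline{c_{11,+}}=|c_{11,+}|^2$ and the explicit formula $\widetilde{\mathfrak r}=\ii/|c_{11,+}|^2$.
\item On $\dot\Gamma_2$ one needs $c_{11}(\lambda_+)=c_{22}(\lambda_-)$ from \eqref{ic:c11-c22-cut}. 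One also needs $\widehat f_j(\lambda_-)=-\widehat f_j(\lambda_+)$, so that $\sigma_1$ conjugates the diagonal factor correctly.
\end{itemize}

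The third step treats the residues. At $\lambda_k$, the factor $1/c_{11}$ in the first column creates the pole in $\widehat N^{(1)}$. Meanwhile the second column of $\widehat{\widetilde N}$, which has a pole, gets multiplied by $c_{11}$ with its simple zero, so it becomes regular. Computing the residue of $\widehat{\widetilde N}^{(1)}/c_{11}$ and using \eqref{ic:Nhat-left-residue-upper} converts $\widetilde\varkappa_k$ into $\varkappa_k=\mathfrak b_k^2\widetilde\varkappa_k$ with the correct exponential. This is exactly the relation \eqref{ic:norming-constants-def}, and it gives \eqref{ic:Nhat-right-residue-upper}. The lower half-plane case is handled in the same way.

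I expect the main obstacle to be the branch points $\pm A_2^{-1}$ and $\pm A_1^{-1}$. There, $c_{11}$ has the singular behaviour described in Corollary~\ref{cor:ic-c-branch-expansion}, and in the case $\gamma_2=0$ it even vanishes like $\zeta_2$. One has to verify that the products respect the prescribed growth $\ord(|\lambda-\lambda_0|^{-1/4})$ for the right problem. Conversely, one has to verify that the degenerate structure \eqref{ic:Nhat-left-degenerate-behavior} is produced when passing from right to left.

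To handle this, I would first use the uniqueness argument only in the direction from the left problem to the right problem. If the resulting $\widetilde N'$ satisfies weaker bounds of the form $\ord(|\lambda-\lambda_0|^{-1/2})$, then the argument from the proof of Proposition~\ref{prop:ic-left-rh-properties} (the factorization with nilpotent $E_\pm$) still forces $\widetilde N'\widehat N^{-1}\equiv I$. The sharper bounds then follow a posteriori.
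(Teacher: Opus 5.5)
Your route is the paper's: right-multiply a left solution by $\mathcal P_{\mathrm{LR}}$, check jumps, residues and normalization, and conclude by uniqueness. Your jump and residue computations are correct; in the residue step the identity actually used is $\varkappa_k\widetilde\varkappa_k=(c'_{11,k})^{-2}$.

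The gap is at the branch points, which is exactly where the ``solvable if and only if solvable'' claim lives. Your fallback removes the singularity of $\widetilde N'\widehat N^{-1}$, so it compares with a solution $\widehat N$ of the right problem and presupposes that the right problem is solvable. It therefore gives \eqref{ic:left-right-connection} only when both problems are already known to be solvable. It does not show that left-solvability implies right-solvability (``the sharper bounds follow a posteriori'' is circular), and the converse direction is not treated at all.

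Neither direction is a formality, because Corollary~\ref{cor:ic-c-branch-expansion} plus product estimates lose a quarter power:
\begin{itemize}
\item At $\pm A_1^{-1}$ you only get $\widehat{\widetilde N}^{(2)}c_{11}=\ord(|\lambda-\lambda_0|^{-1/2})$, and $\widehat N^{(1)}c_{11}=\ord(|\lambda-\lambda_0|^{-1/2})$ in the other direction, instead of the required exponent $-1/4$.
\item At $\pm A_2^{-1}$ with $\gamma_2\neq0$, $\widehat N^{(1)}c_{11}$ is only $\ord(|\lambda-\lambda_0|^{-1/2})$, while the left problem demands $\ord(1)$.
\item For $\gamma_2=0$ you must produce the structure \eqref{ic:Nhat-left-degenerate-behavior}, not merely a bound.
\end{itemize}
Also, the nilpotent factorization of Proposition~\ref{prop:ic-left-rh-properties} needs both factors in the form $B(I+\tau l_2^{-1}E_\pm)$, which a bare $\ord(|\lambda-\lambda_0|^{-1/2})$ bound does not provide. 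If $\widehat N$ exists, the plain estimate $\widetilde N'\widehat N^{-1}=\ord(|\lambda-\lambda_0|^{-3/4})$ already suffices.

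There are two ways to close the gap.

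\emph{First route.} Drop the premise that the argument must use ``only the scattering data.'' The properties you list do not determine $c_{11}$: for $a\in\mathbb R$, the function $c_{11}\eul^{\ii a l_2(\lambda)}$ has the same zeros and the same moduli on $\dot\Sigma_1\cup\dot\Sigma_0$, and is still real on the upper side of $\dot\Gamma_2$. Its exponential type is tied to the constant $C$ in $\widetilde y(y)$, so the proposition is inherently about data coming from the direct problem. For such data and $t\in[0,T)$, the transported $N$ and $\widetilde N$ of \eqref{ic:Nhat-right-def} and \eqref{ic:Nhat-left-def} solve the two problems at every point, because $y(\cdot,t)$ and $\widetilde y(\cdot,t)$ are onto $\mathbb R$. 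Hence both problems are solvable, and \eqref{ic:left-right-connection} is just Proposition~\ref{prop:ic-pre-rh-connection} rewritten via \eqref{ic:right-phase-hodograph}, \eqref{ic:left-phase-hodograph} and \eqref{ic:two-hodograph-relation}.

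\emph{Second route.} A genuine data-level transfer, for example for arbitrary $t$, needs local parametrices.
\begin{itemize}
\item At $\pm A_1^{-1}$ the right jump is H\"older continuous: $\mathfrak r$ tends to the same unimodular limit $\ii\overline{\alpha_1}/\alpha_1$ from $\dot\Sigma_1$ and $\dot\Sigma_0$, and $1-|\mathfrak r|^2=|c_{11}|^{-2}\to0$. With a bounded local solution $\Phi$, the product $\widetilde N'\Phi^{-1}$ has a removable $\ord(|\lambda-\lambda_0|^{-1/2})$ singularity, so $\widetilde N'$ is in fact bounded there.
\item The same argument works for the left problem at $\pm A_2^{-1}$ when $\gamma_2\neq0$, since the left jump tends to $-\ii\sigma_1$ from both sides.
\item It also works for the left problem at $\pm A_1^{-1}$, using a parametrix of order $-1/4$.
\item The case $\gamma_2=0$ requires a finer local analysis that yields \eqref{ic:Nhat-left-degenerate-behavior}.
\end{itemize}

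The paper itself disposes of this step with a one-line appeal to Corollary~\ref{cor:ic-c-branch-expansion}. So the difficulty you flagged is genuine, but your proposed resolution does not remove it.
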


\begin{proof}
A direct calculation using
\eqref{ic:left-right-connection-factor} shows that
 $\det\mathcal P_{\mathrm{LR}}\equiv1$.
Moreover,
\eqref{ic:scattering-determinants},
\eqref{ic:Psi-tildePsi-relation}, and
\eqref{ic:tildePsi-large-lambda} imply
\begin{align*}
 c_{11}(\lambda)
 \eul^{
  \widehat f_1(\widetilde y(y),t,\lambda)
  -\widehat f_2(y,t,\lambda)}
 &=
 1+\ord(\lambda^{-1}),
 &&\lambda\to\infty,\quad\lambda\in\mathbb C^+,\\
 c_{22}(\lambda)
 \eul^{
  \widehat f_2(y,t,\lambda)
  -\widehat f_1(\widetilde y(y),t,\lambda)}
 &=
 1+\ord(\lambda^{-1}),
 &&\lambda\to\infty,\quad\lambda\in\mathbb C^-.
\end{align*}
Consequently,
\begin{equation}\label{ic:left-right-factor-infinity}
 \mathcal P_{\mathrm{LR}}(y,t,\lambda)
 =
 I+\ord(\lambda^{-1}),
 \qquad
 \lambda\to\infty.
\end{equation}

Suppose that the left RH problem is solvable and define
\[
 \mathcal M(y,t,\lambda)
 :=
 \widehat{\widetilde N}
 \bigl(\widetilde y(y),t,\lambda\bigr)
 \mathcal P_{\mathrm{LR}}(y,t,\lambda).
\]
Substitution of \eqref{ic:Ghat-right-def},
\eqref{ic:Ghat-left-def}, and
\eqref{ic:left-right-connection-factor} yields
\[
 \mathcal P_{\mathrm{LR},-}^{-1}
 \widehat G_{\mathrm L}
 \mathcal P_{\mathrm{LR},+}
 =
 \widehat G_{\mathrm R}.
\]
Therefore,
\[
 \mathcal M_+
 =
 \widehat{\widetilde N}_-
 \widehat G_{\mathrm L}
 \mathcal P_{\mathrm{LR},+}
 =
 \widehat{\widetilde N}_-
 \mathcal P_{\mathrm{LR},-}
 \widehat G_{\mathrm R}
 =
 \mathcal M_-\widehat G_{\mathrm R}.
\]

At the discrete eigenvalues, the factors involving $c_{11}$ and
$c_{22}$ in \eqref{ic:left-right-connection-factor}, together with
\eqref{ic:norming-constants-def}, transform
\eqref{ic:Nhat-left-residues} into
\eqref{ic:Nhat-right-residues}. The branch-point expansions in
Corollary~\ref{cor:ic-c-branch-expansion} similarly transform the
endpoint conditions of the left problem into those of the right
problem. Finally,
\eqref{ic:Nhat-left-normalization} and
\eqref{ic:left-right-factor-infinity} give
\[
 \mathcal M(y,t,\lambda)
 =
 I+\ord(\lambda^{-1}),
 \qquad
 \lambda\to\infty.
\]
Thus $\mathcal M$ solves the right RH problem.

Conversely, if the right RH problem is solvable, the same calculations
show that
\[
 \widehat N(y,t,\lambda)
 \mathcal P_{\mathrm{LR}}(y,t,\lambda)^{-1}
\]
solves the left RH problem at $(\widetilde y(y),t)$. Hence the two
problems are solvable simultaneously, and uniqueness yields
\eqref{ic:left-right-connection}.
\end{proof}

\section{Recovery of the solution to the 2-mCH system}
\label{sec:ic-two-component-reconstruction}

\subsection{Recovery of the solution to the 2-mCH system from the right RH problem}\label{sec:ic-two-component-reconstruction-right}

In this section, we recover the original fields from the solution of
the right RH problem. The reconstruction is based on the expansion of
$\widehat N$ at $\lambda=0$.

\begin{proposition}\label{prop:ic-Nhat-zero-expansion}
As
$\lambda\to0$ in $\mathbb C^+$,
\begin{equation}\label{ic:Nhat-zero-expansion}
 -\ii\mathcal E^{-1}\widehat N_+(y,t,\lambda)
 =
 \begin{pmatrix}
  0&\widehat a_1(y,t)\\
  \widehat a_1(y,t)^{-1}&0
 \end{pmatrix}
 +
 \lambda
 \begin{pmatrix}
  \widehat a_2(y,t)&0\\
  0&\widehat a_3(y,t)
 \end{pmatrix}
 +\ord(\lambda^2),
\end{equation}
where \begin{subequations}\label{ic:right-zero-coefficients}
\begin{align}
 \widehat a_1(y,t)
 &=
 \eul^{-D_2(x(y,t),t)},
 \label{ic:right-zero-a1}\\
 \widehat a_2(y,t)
 &=
 \frac12
 \eul^{D_2(x(y,t),t)-2r(x(y,t),t)}
 p(x(y,t),t),
 \label{ic:right-zero-a2}\\
 \widehat a_3(y,t)
 &=
 \frac12
 \eul^{-D_2(x(y,t),t)+2r(x(y,t),t)}
 q(x(y,t),t),
 \label{ic:right-zero-a3}
\end{align}
\end{subequations}
where $D_2$ defined in \eqref{ic:D12-r-identities}.
In particular, all $\widehat a_j(y,t)$ are real-valued; moreover,
$\widehat a_1(y,t)>0$.
\end{proposition}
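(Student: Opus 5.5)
The expansion \eqref{ic:Nhat-zero-expansion} follows directly from the $\lambda\to0$ asymptotics of $N$ obtained at the end of the proof of Proposition~\ref{prop:ic-right-pre-rh}. By \eqref{ic:Nhat-right-def}, we have $\widehat N_+(y(x,t),t,\lambda)=N_+(x,t,\lambda)$ for $\lambda\in\mathbb C^+$. Since $x\mapsto y(x,t)$ is a diffeomorphism of $\mathbb R$ onto itself, every $y$ corresponds to a unique $x=x(y,t)$. It therefore suffices to expand $N_+$ at $\lambda=0$ for fixed $(x,t)$ and then substitute $x=x(y,t)$.

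The computation runs as follows. In \eqref{ic:N-right-def} for $\lambda\in\mathbb C^+$, insert $\Psi_j^{\mathrm b}=G\Psi_j$ with $G=\eul^{-r\sigma_3}$, and use the column expansions \eqref{ic:Psi11-upper-zero}--\eqref{ic:Psi22-upper-zero}. For $c_{11}$ use the expansion
\[
c_{11}(\lambda)=\eul^{-D_1-D_2+f_2-f_1}\bigl(1+\ord(\lambda^2)\bigr),
\]
which comes from the determinant representation \eqref{ic:scattering-determinants}. The $\ord(\lambda)$ cross terms in this determinant vanish because the two leading vectors are $(0,1)^{\mathsf T}$ and $(1,0)^{\mathsf T}$ and each correction is off-diagonal relative to them.

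For the first column of $N_+$, the factors $\eul^{-D_1-r-f_1}$ from $\Psi_1^{(1)}$ and $\eul^{D_1+D_2-f_2+f_1}$ from $1/c_{11}$ combine with $\eul^{f_2}$ from the right multiplication to leave $\eul^{D_2-r}$. The factor $G$ then contributes $\eul^{\mp r}$ to the two components, giving the column $\ii\bigl(\tfrac{\lambda}{2}\eul^{D_2-2r}p,\ \eul^{D_2}\bigr)^{\mathsf T}$. Likewise, the second column is $\ii\bigl(\eul^{-D_2},\ \tfrac{\lambda}{2}\eul^{-D_2+2r}q\bigr)^{\mathsf T}$. This reproduces the displayed formula for $N_+$. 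Multiplying by $-\ii\mathcal E^{-1}$ removes the prefactor $\ii\mathcal E$, and we can read off $\widehat a_1=\eul^{-D_2}$, the $(2,1)$ entry $\eul^{D_2}=\widehat a_1^{-1}$, and $\widehat a_2,\widehat a_3$ as in \eqref{ic:right-zero-coefficients}.

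One point needs checking: the $\ord(\lambda^2)$ remainders. The factors $\eul^{\pm f_j}$ are singular at $\lambda=0$ (for $t\ne0$, through the $t/\lambda^2$ term), but they cancel exactly in the combination above. What remains is the $\ord(\lambda^2)$ error of Proposition~\ref{prop:ic-Psi-zero-expansion}, which holds for fixed $(x,t)$ because it is derived from the Volterra equation \eqref{ic:volterra-tildePsi0}. There the singular time term cancels in the phase difference, and $\widehat W_j^0=\ord(\lambda)$ by \eqref{ic:W0-at-zero}. I would verify that the second iterate of this Volterra equation is indeed uniformly $\ord(\lambda^2)$ for $\lambda\to0$ in $\mathbb C^+$. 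I expect this to be the only delicate step, and the decay of $\delta_j$ assumed in \eqref{ic:standing-direct-assumptions} should handle it.

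Reality and positivity are immediate. The functions $\rho$, $r$, $p$, $q$ are real, so $D_2$ is real by \eqref{ic:D12-r-identities}, and hence $\widehat a_2,\widehat a_3$ are real and $\widehat a_1=\eul^{-D_2}>0$.
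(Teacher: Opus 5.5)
Your proposal is correct and follows the paper's proof. You substitute the upper-half-plane expansions of Proposition~\ref{prop:ic-Psi-zero-expansion}, the induced expansion $c_{11}=\eul^{-D_1-D_2+f_2-f_1}\bigl(1+\ord(\lambda^2)\bigr)$ and $\Psi_j^{\mathrm b}=G\Psi_j$ into \eqref{ic:N-right-def}, then read off the coefficients at $x=x(y,t)$; your remark that the singular factors $\eul^{\pm f_j}$ cancel exactly makes explicit what the paper leaves implicit, and the $\ord(\lambda^2)$ remainder you flag as delicate is already supplied by Proposition~\ref{prop:ic-Psi-zero-expansion}, so no further Volterra estimate is needed.
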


\begin{proof}
By definition,
$
 \widehat N(y,t,\lambda)
 =
 N(x(y,t),t,\lambda)$.
Combining the definition \eqref{ic:N-right-def} with the Jost-column
expansions \eqref{ic:Psi-upper-zero-expansion}, the expansion of
$c_{11}$ at $\lambda=0$, and the identity
$\Psi_j^{\mathrm b}=G\Psi_j$, we obtain
\[
 -\ii\mathcal E^{-1}\widehat N_+
 =
 \begin{pmatrix}
  \dfrac{\lambda}{2}\eul^{D_2-2r}p
  &
  \eul^{-D_2}
  \\[3mm]
  \eul^{D_2}
  &
  \dfrac{\lambda}{2}\eul^{-D_2+2r}q
 \end{pmatrix}
 +\ord(\lambda^2),
\]
which proves \eqref{ic:Nhat-zero-expansion} and
\eqref{ic:right-zero-coefficients}.
\end{proof}

\begin{corollary}\label{cor:ic-right-recoverable-combinations}
The following identities hold:
\begin{subequations}\label{ic:right-recoverable-combinations}
\begin{align}
 x(y,t)+2r(x(y,t),t)
 &=
 y+A_2^2t-2\log\widehat a_1(y,t),
 \label{ic:recover-x-plus-r}\\
 p(x(y,t),t)\eul^{-2r(x(y,t),t)}
 &=
 2\widehat a_1(y,t)\widehat a_2(y,t),
 \label{ic:recover-balanced-p}\\
 q(x(y,t),t)\eul^{2r(x(y,t),t)}
 &=
 2\widehat a_1(y,t)^{-1}\widehat a_3(y,t),
 \label{ic:recover-balanced-q}\\
 Q(x(y,t),t)
 &=
 4\widehat a_2(y,t)\widehat a_3(y,t).
 \label{ic:recover-Q}
\end{align}
\end{subequations}
\end{corollary}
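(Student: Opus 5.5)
The identities follow by direct algebra from the explicit coefficients \eqref{ic:right-zero-coefficients} in Proposition~\ref{prop:ic-Nhat-zero-expansion}, combined with the definition \eqref{ic:right-hodograph-def} of $y$ and the representation \eqref{ic:D12-r-identities} of $D_2$. Throughout, $x=x(y,t)$ is the inverse of the diffeomorphism $x\mapsto y(x,t)$, and all functions of $x$ are evaluated there.

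For \eqref{ic:recover-x-plus-r}, I take logarithms in \eqref{ic:right-zero-a1}, which is legitimate since $\widehat a_1>0$:
\[
-2\log\widehat a_1=2D_2=\frac1{A_2}\int_x^{+\infty}\bigl(\rho(\xi,t)-A_2\bigr)\dd\xi+2r(x,t).
\]
From \eqref{ic:right-hodograph-def},
\[
\frac1{A_2}\int_x^{+\infty}\bigl(\rho-A_2\bigr)\dd\xi=x-y-A_2^2t .
\]
Substituting gives $-2\log\widehat a_1=x-y-A_2^2t+2r$. Rearranging yields $x+2r=y+A_2^2t-2\log\widehat a_1$, which is \eqref{ic:recover-x-plus-r}.

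For \eqref{ic:recover-balanced-p} and \eqref{ic:recover-balanced-q}, I multiply \eqref{ic:right-zero-a2} by $2\widehat a_1=2\eul^{-D_2}$; the factors $\eul^{\pm D_2}$ cancel, leaving $p\eul^{-2r}$. Likewise, $2\widehat a_1^{-1}\widehat a_3=\eul^{D_2}\eul^{-D_2+2r}q=q\eul^{2r}$. For \eqref{ic:recover-Q}, I form the product
\[
4\widehat a_2\widehat a_3=\eul^{D_2-2r}p\,\eul^{-D_2+2r}q=pq,
\]
and $pq=Q$ by \eqref{ic:Q} and \eqref{ic:def-pqQ}.

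There is no real obstacle here. The only point needing care is the sign bookkeeping in $D_2$, namely that its integral runs from $x$ to $+\infty$ with a $+r$ term, so that it matches the integral appearing in $y$. Positivity of $\widehat a_1$, used for the logarithm, was already established in Proposition~\ref{prop:ic-Nhat-zero-expansion}.
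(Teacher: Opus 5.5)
Your proposal is correct and follows the paper's own argument: the paper combines \eqref{ic:right-zero-a1} with \eqref{ic:D12-r-identities} and \eqref{ic:right-hodograph-def} to obtain $D_2(x(y,t),t)=\frac12\bigl(x(y,t)-y-A_2^2t\bigr)+r(x(y,t),t)$, reads off the $p$ and $q$ identities directly from \eqref{ic:right-zero-a2}--\eqref{ic:right-zero-a3}, and uses $Q=pq$ for the last one. The only cosmetic difference is that you compute $4\widehat a_2\widehat a_3$ directly, whereas the paper multiplies the two balanced identities; these are trivially equivalent.
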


\begin{proof}
Combining \eqref{ic:right-zero-a1} with
\eqref{ic:D12-r-identities} and
\eqref{ic:right-hodograph-def} yields
\[
 D_2(x(y,t),t)
 =
 \frac12
 \bigl(
  x(y,t)-y-A_2^2t
 \bigr)
 +r(x(y,t),t),
\]
which proves \eqref{ic:recover-x-plus-r}. Relations
\eqref{ic:recover-balanced-p} and
\eqref{ic:recover-balanced-q} follow directly from
\eqref{ic:right-zero-a2} and \eqref{ic:right-zero-a3}. Multiplying
these two identities and using $Q=pq$ yields
\eqref{ic:recover-Q}.
\end{proof}

Notice that identity \eqref{ic:recover-x-plus-r} allows determining the combination
$x+2r$. To recover $x$ and $r$ separately, consider the right 
 change of variables at $t=0$,
\begin{equation}\label{ic:initial-y-map}
 y_0(x)
 :=
 x-\frac1{A_2}
 \int_x^{+\infty}
 \bigl(\rho(\xi,0)-A_2\bigr)\dd\xi,
\end{equation}
and denote its inverse by $x_0:=y_0^{-1}$. The initial value of $r$ in this variable
is
\begin{equation}\label{ic:advected-r-profile}
 r^\sharp(y)
 :=
 r(x_0(y),0)
 =
 \frac14\log
 \frac{m_0(x_0(y))}{n_0(x_0(y))}.
\end{equation}
Remarkably, the value of  $ r(x(y,t),t)$ does not depend on $t$.

\begin{lemma}\label{lem:ic-r-recovery}
For all $(y,t)$,
\begin{equation}\label{ic:recovery-r}
 r(x(y,t),t)
 =
 r^\sharp(y).
\end{equation}
\end{lemma}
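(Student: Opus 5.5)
The plan is to show that both $r$ and the hodograph variable $y$ are transported by the same characteristic flow $\dot x=Q(x,t)$, so that $r$ is constant along the level curves $\{y(x,t)=\text{const}\}$. Then we evaluate at $t=0$.

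Define $R(y,t):=r(x(y,t),t)$, where $x(\cdot,t)$ is the inverse of the strictly increasing diffeomorphism $x\mapsto y(x,t)$ from \eqref{ic:right-hodograph-def}. Differentiating the identity $y(x(y,t),t)=y$ in $t$ gives $y_x\,x_t+y_t=0$. By \eqref{ic:right-hodograph-derivatives}, $y_t=-Qy_x$ and $y_x=\rho/A_2>0$, hence
\[
 x_t(y,t)=Q(x(y,t),t).
\]
Here $x_t$ denotes the derivative of the inverse map at fixed $y$. The chain rule and the transport equation $r_t+Qr_x=0$ from Lemma~\ref{lem:rho-r-conservation} then give
\[
 R_t(y,t)=r_x(x(y,t),t)\,x_t(y,t)+r_t(x(y,t),t)=\bigl(r_t+Qr_x\bigr)(x(y,t),t)=0 .
\]
So $R(y,t)=R(y,0)$ for each fixed $y$ and all $t\in[0,T)$.

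It remains to identify $R(y,0)$. At $t=0$ the term $-A_2^2t$ in \eqref{ic:right-hodograph-def} vanishes, so $y(x,0)=y_0(x)$ from \eqref{ic:initial-y-map}. Hence $x(y,0)=x_0(y)$, and $R(y,0)=r(x_0(y),0)=r^\sharp(y)$ by \eqref{ic:advected-r-profile} and the definition of $r$ in \eqref{ic:def-rhoRr}.

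The main point needing care is regularity: we need differentiability of $x(y,t)$ in $t$ and a justification of \eqref{ic:right-hodograph-derivatives}, i.e.\ differentiation under the integral $\int_x^{\infty}(\rho-A_2)\,\dd\xi$ in $t$. For this I would use the conservation law $\rho_t=-(Q\rho)_x$ together with $Q\rho\to A_2^3$ as $x\to+\infty$ and the uniform weighted $L^1$ bounds \eqref{ic:standing-direct-assumptions}. These give
\[
 y_t=-\frac{1}{A_2}\int_x^{\infty}\rho_t\,\dd\xi-A_2^2=\frac{1}{A_2}\bigl(A_2^3-Q\rho\bigr)-A_2^2=-\frac{Q\rho}{A_2}.
\]
With $y$ of class $C^1$ jointly in $(x,t)$ and $y_x>0$, the implicit function theorem gives the $C^1$ dependence of $x(y,t)$, and the computation above is valid for classical solutions.
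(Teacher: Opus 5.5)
Your proof is correct and follows the paper's argument essentially verbatim. You derive $x_t(y,t)=Q(x(y,t),t)$ from $y_t=-Qy_x$; the only difference is that you differentiate $y(x(y,t),t)=y$ where the paper differentiates $x(y(x,t),t)=x$, which is immaterial. You then use $r_t+Qr_x=0$ to show that $r(x(y,t),t)$ does not depend on $t$, and evaluate it at $t=0$, exactly as the paper does. Your regularity remarks on differentiating under the integral in $y$ are a reasonable supplement that the paper leaves implicit.
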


\begin{proof}
Differentiating the identity
$ x(y(x,t),t)=x$
with respect to $t$ at fixed $x$, we obtain
\[
 x_y(y(x,t),t)y_t(x,t)
 +
 x_t(y(x,t),t)
 =
 0.
\]
By \eqref{ic:right-hodograph-derivatives},
 $y_t=-Qy_x$,
while the inverse-function identity gives
 $x_y(y(x,t),t)y_x(x,t)=1$.
Therefore,
$ x_t(y(x,t),t)=Q(x,t)$,
or, equivalently,
\begin{equation}\label{ic:inverse-map-time-derivative}
 x_t(y,t)=Q(x(y,t),t).
\end{equation}

Using the transport equation
$r_t+Qr_x=0$
and \eqref{ic:inverse-map-time-derivative}, we obtain
\[
 \frac{\dd}{\dd t}r(x(y,t),t)
 =
 r_t(x(y,t),t)
 +
 x_t(y,t)r_x(x(y,t),t)
 =
 0.
\]
Hence \(r(x(y,t),t)\) is independent of \(t\). At \(t=0\),
\[
 r(x(y,0),0)
 =
 r(x_0(y),0)
 =
 r^\sharp(y),
\]
which proves \eqref{ic:recovery-r}.
\end{proof}

\begin{proposition}\label{prop:ic-recovery}
The solution of the 2-mCH system admits the following parametric
reconstruction. The inverse coordinate map is
\begin{equation}\label{ic:recovery-x}
 x(y,t)
 =
 y+A_2^2t
 -2\log\widehat a_1(y,t)
 -2r^\sharp(y).
\end{equation}
Equivalently,
\begin{equation}\label{ic:recovery-x-time}
 x(y,t)
 =
 x_0(y)
 +4\int_0^t
 \widehat a_2(y,\tau)\widehat a_3(y,\tau)\dd\tau.
\end{equation}

The functions $p$ and $q$ are given by
\begin{subequations}\label{ic:recovery-pq}
\begin{align}
 p(x(y,t),t)
 &=
 2\eul^{2r^\sharp(y)}
 \widehat a_1(y,t)\widehat a_2(y,t),\\
 q(x(y,t),t)
 &=
 2\eul^{-2r^\sharp(y)}
 \widehat a_1(y,t)^{-1}\widehat a_3(y,t),
\end{align}
\end{subequations}
while the momenta $m$ and $n$ are given by
\begin{subequations}\label{ic:recovery-mn}
\begin{align}
 m(x(y,t),t)
 &=
 \frac{A_2}{x_y(y,t)}
 \eul^{2r^\sharp(y)},\\
 n(x(y,t),t)
 &=
 \frac{A_2}{x_y(y,t)}
 \eul^{-2r^\sharp(y)}.
\end{align}
\end{subequations}
Finally,  $u$ and $v$ can be recovered by
\begin{subequations}\label{ic:recovery-uv}
\begin{align}
 u(x(y,t),t)
 &=
 \frac{A_2}{2}
 \int_{\mathbb R}
 \eul^{-\lvert x(y,t)-x(\eta,t)\rvert}
 \eul^{2r^\sharp(\eta)}
 \dd\eta,
 \label{ic:recovery-u}\\
 v(x(y,t),t)
 &=
 \frac{A_2}{2}
 \int_{\mathbb R}
 \eul^{-\lvert x(y,t)-x(\eta,t)\rvert}
 \eul^{-2r^\sharp(\eta)}
 \dd\eta.
 \label{ic:recovery-v}
\end{align}
\end{subequations}
\end{proposition}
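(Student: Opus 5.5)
The plan is to combine Corollary~\ref{cor:ic-right-recoverable-combinations} with Lemma~\ref{lem:ic-r-recovery}, and then to use the hodograph relations \eqref{ic:right-hodograph-derivatives} together with the Green's function of $1-\partial_x^2$.

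First, substituting \eqref{ic:recovery-r} into \eqref{ic:recover-x-plus-r} gives \eqref{ic:recovery-x} immediately. For the equivalent form \eqref{ic:recovery-x-time}, we use \eqref{ic:inverse-map-time-derivative}, namely $x_t(y,t)=Q(x(y,t),t)$, together with \eqref{ic:recover-Q}, which gives $Q(x(y,t),t)=4\widehat a_2\widehat a_3(y,t)$. Integrating in $t$ from $0$, and noting that $x(y,0)=x_0(y)$ because $y(x,0)=y_0(x)$, yields \eqref{ic:recovery-x-time}. Similarly, replacing $r(x(y,t),t)$ by $r^\sharp(y)$ in \eqref{ic:recover-balanced-p}--\eqref{ic:recover-balanced-q} gives \eqref{ic:recovery-pq}.

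For the momenta, \eqref{ic:right-hodograph-derivatives} gives $y_x=\rho/A_2$. By the inverse function theorem,
\[
 \rho(x(y,t),t)=\frac{A_2}{x_y(y,t)}.
\]
The definitions \eqref{ic:def-rhoRr} give $m=\rho\eul^{2r}$ and $n=\rho\eul^{-2r}$, since $\nu=\eul^{4r}$. Applying \eqref{ic:recovery-r} once more produces \eqref{ic:recovery-mn}.

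For $u$ and $v$, the approach is to invert $m=u-u_{xx}$. Because $u-A_j$ and its derivatives decay at the respective ends, and the step $A_1\neq A_2$ cancels against the constant background, we have
\[
 u(x,t)=\frac12\int_{\mathbb R}\eul^{-|x-\xi|}m(\xi,t)\,\dd\xi .
\]
The justification is that both sides solve $w-w_{xx}=m$ and are bounded, and the difference of two bounded solutions of $w-w_{xx}=0$ must vanish. We then change variables $\xi=x(\eta,t)$, so that $\dd\xi=x_y(\eta,t)\,\dd\eta$. Substituting \eqref{ic:recovery-mn} cancels the factor $x_y$ and gives \eqref{ic:recovery-u}. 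The same computation with $n$ gives \eqref{ic:recovery-v}.

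The only delicate point is this last step. We need boundedness of $u$ and $v$, which holds because they tend to $A_j$. We also need $x(\cdot,t)$ to be a bijection of $\mathbb R$ onto itself with $x_y>0$; this follows from the fact that $y(\cdot,t)$ is a diffeomorphism. With these in hand the change of variables and the uniqueness argument are legitimate. Everything else reduces to direct substitution of earlier identities.
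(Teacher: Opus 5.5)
Your proposal is correct and follows the paper's proof essentially step by step. It substitutes $r(x(y,t),t)=r^\sharp(y)$ into Corollary~\ref{cor:ic-right-recoverable-combinations}, uses $x_t=Q$ and $y_x=\rho/A_2$, and inverts $1-\partial_x^2$ via its Green function with the change of variables $\xi=x(\eta,t)$; your added justifications ($x(y,0)=x_0(y)$ and uniqueness of bounded solutions of $w-w_{xx}=m$) fill in details the paper leaves implicit.
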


\begin{proof}

Combining \eqref{ic:recovery-r} with
\eqref{ic:recover-x-plus-r}, we obtain
\eqref{ic:recovery-x}. Equations
\eqref{ic:recover-Q} and $x_t=Q$ yield
\eqref{ic:recovery-x-time}, while
\eqref{ic:recover-balanced-p} and
\eqref{ic:recover-balanced-q} together with \eqref{ic:recovery-r} imply
\eqref{ic:recovery-pq}.

Since $y_x=\rho/A_2$, it follows that
\[
 \rho(x(y,t),t)=\frac{A_2}{x_y(y,t)}.
\]
Together with
$m=\rho\eul^{2r}$ and $n=\rho\eul^{-2r}$, this proves
\eqref{ic:recovery-mn}. 

Finally, since
\[
 m=u-u_{xx},
 \qquad
 n=v-v_{xx},
\]
the fields $u$ and $v$ are obtained by inverting
$1-\partial_x^2$. Its Green function on $\mathbb R$ is
$\frac12\eul^{-|x|}$, and hence
\begin{align*}
 u(x,t)
 &=
 \frac12\int_{\mathbb R}
 \eul^{-|x-z|}m(z,t)\dd z,\\
 v(x,t)
 &=
 \frac12\int_{\mathbb R}
 \eul^{-|x-z|}n(z,t)\dd z.
\end{align*}
Evaluating these identities at $x=x(y,t)$ and making the change of
variables
\[
 z=x(\eta,t),
 \qquad
 \dd z=x_\eta(\eta,t)\dd\eta,
\]
we obtain, using \eqref{ic:recovery-mn},
\begin{align*}
 m(x(\eta,t),t)x_\eta(\eta,t)
 &=
 A_2\eul^{2r^\sharp(\eta)},\qquad
 n(x(\eta,t),t)x_\eta(\eta,t)
 =
 A_2\eul^{-2r^\sharp(\eta)}.
\end{align*}
Therefore,
 \eqref{ic:recovery-uv} follows.
\end{proof}

\begin{remark}
Unlike the scalar case, the leading term in
\eqref{ic:Nhat-zero-expansion} determines $x+2r$ rather than $x$
alone. It follows that the reconstruction of the solution to the 2-mCH equation involves not only the solution of the  RH problem constructed in terms of spectral functions associated with initial data, but also incorporates the transported profile $r^\sharp$, which is determined directly from the initial data.
When $u=v$, one has
$r^\sharp\equiv0$, and \eqref{ic:recovery-x} reduces to the scalar mCH formula; see \cite{KST22}.
\end{remark}

\begin{remark}\label{rem:ic-recovery-at-infinity}
The expansion at $\lambda=0$ is not the only way to recover the inverse coordinate map. Assume that the solution of
the right RH problem has the expansion
\begin{equation}\label{ic:Nhat-infinity-first-order}
 \widehat N(y,t,\lambda)
 =
 I+\frac{\widehat N^{[1]}(y,t)}{\lambda}
 +\ord(\lambda^{-2}),
 \qquad
 \lambda\to\infty.
\end{equation}
Set
\[
 X(y,t):=x(y,t)+2r(x(y,t),t).
\]
Comparison of the terms of order $\lambda^{-1}$ in the spatial part of
the transformed Lax pair gives
\begin{equation}\label{ic:Xy-from-infinity}
 X_y(y,t)
 =
 2A_2\bigl(\widehat N^{[1]}(y,t)\bigr)_{12}.
\end{equation}
Since
\[
 r(x(y,t),t)=r^\sharp(y),
\]
we obtain
\begin{equation}\label{ic:xy-from-infinity}
 x_y(y,t)
 =
 2A_2\bigl(\widehat N^{[1]}(y,t)\bigr)_{12}
 -2(r^\sharp)'(y).
\end{equation}
Therefore,
\begin{equation}\label{ic:rho-from-infinity}
 \rho(x(y,t),t)
 =
 \frac{A_2}{
  2A_2\bigl(\widehat N^{[1]}(y,t)\bigr)_{12}
  -2(r^\sharp)'(y)}.
\end{equation}
This recovers $\rho$ from the large-$\lambda$ coefficient without
differentiating $\widehat a_1$ in the expansion at $\lambda=0$.

The inverse coordinate map may equivalently be written in the spatial
integral form
\begin{equation}\label{ic:x-from-infinity}
\begin{aligned}
 x(y,t)
 =
 y+A_2^2t
 -\int_y^{+\infty}
 \Bigl[
  2A_2\bigl(\widehat N^{[1]}(\eta,t)\bigr)_{12}
  -2(r^\sharp)'(\eta)-1
 \Bigr]\dd\eta.
\end{aligned}
\end{equation}
\end{remark}

\subsection{Recovery of the solution to the 2-mCH system from the left
RH problem}
\label{sec:ic-two-component-reconstruction-left}

The solution may equivalently be recovered from the left RH problem. We begin with the expansion of
$\widehat{\widetilde N}$ at $\lambda=0$.

\begin{proposition}\label{prop:ic-Ntildehat-zero-expansion}
As
$\lambda\to0$ in $\mathbb C^+$,
\begin{equation}\label{ic:Ntildehat-zero-expansion}
 -\ii\mathcal E^{-1}
 \widehat{\widetilde N}_+(\widetilde y,t,\lambda)
 =
 \begin{pmatrix}
  0&\widehat b_1(\widetilde y,t)\\
  \widehat b_1(\widetilde y,t)^{-1}&0
 \end{pmatrix}
 +
 \lambda
 \begin{pmatrix}
  \widehat b_2(\widetilde y,t)&0\\
  0&\widehat b_3(\widetilde y,t)
 \end{pmatrix}
 +\ord(\lambda^2),
\end{equation}
where
\begin{subequations}\label{ic:left-zero-coefficients}
\begin{align}
 \widehat b_1(\widetilde y,t)
 &=
 \eul^{D_1(x(\widetilde y,t),t)},
 \label{ic:left-zero-b1}\\
 \widehat b_2(\widetilde y,t)
 &=
 \frac12
 \eul^{-D_1(x(\widetilde y,t),t)
       -2r(x(\widetilde y,t),t)}
 p(x(\widetilde y,t),t),
 \label{ic:left-zero-b2}\\
 \widehat b_3(\widetilde y,t)
 &=
 \frac12
 \eul^{D_1(x(\widetilde y,t),t)
       +2r(x(\widetilde y,t),t)}
 q(x(\widetilde y,t),t).
 \label{ic:left-zero-b3}
\end{align}
\end{subequations}
In particular, the coefficients are real-valued and
$\widehat b_1(\widetilde y,t)>0$.
\end{proposition}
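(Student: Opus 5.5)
The plan is to mirror the proof of Proposition~\ref{prop:ic-Nhat-zero-expansion}, replacing the right pre-RH matrix by the left one. By \eqref{ic:Nhat-left-def} we have $\widehat{\widetilde N}(\widetilde y,t,\lambda)=\widetilde N(x(\widetilde y,t),t,\lambda)$, where $x(\widetilde y,t)$ denotes the inverse of the diffeomorphism $x\mapsto\widetilde y(x,t)$. It therefore suffices to expand $\widetilde N(x,t,\lambda)$ as $\lambda\to0$ in $\mathbb C^+$ at fixed $(x,t)$ and then substitute $x=x(\widetilde y,t)$. By \eqref{ic:N-left-def}, for $\lambda\in\mathbb C^+$,
\[
 -\ii\mathcal E^{-1}\widetilde N
 =
 -\ii\left(
  G\Psi_1^{(1)}\eul^{f_1},\
  \frac{G\Psi_2^{(2)}}{c_{11}}\eul^{-f_1}
 \right).
\]

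The first step treats the first column. Insert \eqref{ic:Psi11-upper-zero} and use $G=\eul^{-r\sigma_3}$. The phase $\eul^{-f_1}$ in $\Psi_1^{(1)}$ cancels against $\eul^{f_1}$. The factor $\eul^{-r}$ from the exponent combines with $G$ to give first entry $\tfrac{\lambda}{2}\eul^{-D_1-2r}p$ and second entry $\eul^{-D_1}$, up to $\ord(\lambda^2)$. After multiplication by $-\ii$ this column is $(\lambda\widehat b_2,\widehat b_1^{-1})^{\mathsf T}$, with $\widehat b_1=\eul^{D_1}$ and $\widehat b_2$ as in \eqref{ic:left-zero-b2}.

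The second step treats the second column. Insert \eqref{ic:Psi22-upper-zero}: $G$ cancels the $\eul^{r}$ in the first component and produces $\eul^{2r}$ in the second, giving
\[
 G\Psi_2^{(2)}=\ii\eul^{-D_2+f_2}\bigl(1,\tfrac{\lambda}{2}\eul^{2r}q\bigr)^{\mathsf T}+\ord(\lambda^2).
\]
Then divide by the expansion $c_{11}(\lambda)=\eul^{-D_1-D_2+f_2-f_1}(1+\ord(\lambda^2))$, already obtained in the proof of Proposition~\ref{prop:ic-right-pre-rh} from \eqref{ic:scattering-determinants}, and multiply by $\eul^{-f_1}$. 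Both $f_1$ and $f_2$ cancel exactly, leaving $-\ii$ times this column equal to $\eul^{D_1}\bigl(1,\tfrac{\lambda}{2}\eul^{2r}q\bigr)^{\mathsf T}+\ord(\lambda^2)$. This yields $\widehat b_1$ in the $(1,2)$ entry and $\widehat b_3$ as in \eqref{ic:left-zero-b3}.

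The main point requiring care is the bookkeeping of the singular phases. The factors $\eul^{\pm f_j}$ contain $t/\lambda^2$ terms that are unbounded as $\lambda\to0$, so they must cancel exactly rather than up to $\ord(\lambda^2)$ errors. This is why I would keep $c_{11}$ in the factored form $\eul^{f_2-f_1}\times(\text{regular})$, using the same multiplicative structure as in the right problem, so that the remainder is multiplied only by bounded quantities. The final step is to note that $D_1$, $r$, $p$ and $q$ are real by the reality of $(u,v)$ and of $\rho$, which makes all $\widehat b_j$ real and $\widehat b_1=\eul^{D_1}>0$.
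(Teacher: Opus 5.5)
Your proposal is correct and is essentially the paper's own argument: you substitute the zero expansions of $\Psi_1^{(1)}$ and $\Psi_2^{(2)}$, together with $c_{11}=\eul^{-D_1-D_2+f_2-f_1}(1+\ord(\lambda^2))$, into the $\mathbb C^+$ branch of the left pre-RH matrix using $\Psi_j^{\mathrm b}=G\Psi_j$, and then compose with the inverse of $x\mapsto\widetilde y(x,t)$. Your entry-by-entry computation reproduces the paper's matrix. Your explicit check that the singular phases $\eul^{\pm f_j}$ cancel exactly is a detail the paper leaves implicit.
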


\begin{proof}
By definition,
\[
 \widehat{\widetilde N}(\widetilde y,t,\lambda)
 =
 \widetilde N(x(\widetilde y,t),t,\lambda).
\]
Combining \eqref{ic:N-left-def} with the Jost-column expansions
\eqref{ic:Psi-upper-zero-expansion}, the expansion of $c_{11}$ at
$\lambda=0$, and $\Psi_j^{\mathrm b}=G\Psi_j$, we obtain
\[
 -\ii\mathcal E^{-1}\widehat{\widetilde N}_+
 =
 \begin{pmatrix}
  \dfrac{\lambda}{2}\eul^{-D_1-2r}p
  &
  \eul^{D_1}
  \\[3mm]
  \eul^{-D_1}
  &
  \dfrac{\lambda}{2}\eul^{D_1+2r}q
 \end{pmatrix}
 +\ord(\lambda^2).
\]
This proves \eqref{ic:Ntildehat-zero-expansion} and
\eqref{ic:left-zero-coefficients}.
\end{proof}

\begin{corollary}\label{cor:ic-left-recoverable-combinations}
The following identities hold:
\begin{subequations}\label{ic:left-recoverable-combinations}
\begin{align}
 x(\widetilde y,t)+2r(x(\widetilde y,t),t)
 &=
 \widetilde y+A_1^2t
 -2\log\widehat b_1(\widetilde y,t),
 \label{ic:left-recover-x-plus-r}\\
 p(x(\widetilde y,t),t)
 \eul^{-2r(x(\widetilde y,t),t)}
 &=
 2\widehat b_1(\widetilde y,t)
 \widehat b_2(\widetilde y,t),
 \label{ic:left-recover-balanced-p}\\
 q(x(\widetilde y,t),t)
 \eul^{2r(x(\widetilde y,t),t)}
 &=
 2\widehat b_1(\widetilde y,t)^{-1}
 \widehat b_3(\widetilde y,t),
 \label{ic:left-recover-balanced-q}\\
 Q(x(\widetilde y,t),t)
 &=
 4\widehat b_2(\widetilde y,t)
 \widehat b_3(\widetilde y,t).
 \label{ic:left-recover-Q}
\end{align}
\end{subequations}
\end{corollary}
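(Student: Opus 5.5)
The plan is to mirror the proof of Corollary~\ref{cor:ic-right-recoverable-combinations}, now using the left coefficients \eqref{ic:left-zero-coefficients} together with the left hodograph variable \eqref{ic:left-hodograph-def}. Throughout, $x=x(\widetilde y,t)$ denotes the inverse of the strictly increasing map $x\mapsto\widetilde y(x,t)$. This inverse is well defined by \eqref{ic:left-hodograph-derivatives}.

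For \eqref{ic:left-recover-x-plus-r}, take the logarithm of \eqref{ic:left-zero-b1}:
\[
 \log\widehat b_1(\widetilde y,t)=D_1(x(\widetilde y,t),t).
\]
Substitute the representation of $D_1$ from \eqref{ic:D12-r-identities}. The integral $\frac1{2A_1}\int_{-\infty}^{x}(\rho-A_1)\dd\xi$ equals $\frac12(\widetilde y-x+A_1^2t)$ by \eqref{ic:left-hodograph-def}, evaluated at $x=x(\widetilde y,t)$. This gives
\[
 D_1=\tfrac12\bigl(\widetilde y-x+A_1^2t\bigr)-r,
\]
and hence $x+2r=\widetilde y+A_1^2t-2D_1$, which is \eqref{ic:left-recover-x-plus-r}. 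The signs are the one point worth double-checking. In the right problem, $D_2$ carries $+r$ and the hodograph integral appears with a minus sign. In the left problem both signs flip, and these two flips cancel, so the combination $x+2r$ appears again, consistently with the common phase \eqref{ic:f0-common-form}.

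For \eqref{ic:left-recover-balanced-p}, multiply \eqref{ic:left-zero-b1} by \eqref{ic:left-zero-b2}. The factor $\eul^{D_1}$ cancels $\eul^{-D_1}$, leaving $2\widehat b_1\widehat b_2=p\,\eul^{-2r}$. For \eqref{ic:left-recover-balanced-q}, divide \eqref{ic:left-zero-b3} by \eqref{ic:left-zero-b1}, which gives $2\widehat b_1^{-1}\widehat b_3=q\,\eul^{2r}$. Multiplying these two identities and using $Q=pq$ from \eqref{eq:identities} yields \eqref{ic:left-recover-Q}, since the factors $\widehat b_1^{\pm1}$ and $\eul^{\mp2r}$ cancel.

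No step is a genuine obstacle; all inputs come directly from Proposition~\ref{prop:ic-Ntildehat-zero-expansion}. The only care needed is the sign bookkeeping in the first identity, namely the orientation of the integral in $D_1$ versus the one in $\widetilde y$, and the sign of $r$ in \eqref{ic:D12-r-identities}. As a consistency check, one can compare with the right-problem identity \eqref{ic:recover-x-plus-r} via \eqref{ic:two-hodograph-relation}.
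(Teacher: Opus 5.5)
Your proposal is correct and takes the same approach as the paper: it combines \eqref{ic:D12-r-identities} with \eqref{ic:left-hodograph-def} to get $D_1(x(\widetilde y,t),t)=\tfrac12\bigl(\widetilde y-x(\widetilde y,t)+A_1^2t\bigr)-r(x(\widetilde y,t),t)$. All four identities, including your sign bookkeeping, then follow directly from \eqref{ic:left-zero-coefficients} and $Q=pq$.
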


\begin{proof}
Equations \eqref{ic:D12-r-identities} and
\eqref{ic:left-hodograph-def} imply
\[
 D_1(x(\widetilde y,t),t)
 =
 \frac12
 \bigl(
  \widetilde y-x(\widetilde y,t)+A_1^2t
 \bigr)
 -r(x(\widetilde y,t),t).
\]
Together with \eqref{ic:left-zero-coefficients}, this proves
\eqref{ic:left-recoverable-combinations}.
\end{proof}

To recover $x(y,t)$, introduce the left change of variable at $t=0$,
\begin{equation}\label{ic:initial-ytilde-map}
 \widetilde y_0(x)
 :=
 x+\frac1{A_1}
 \int_{-\infty}^{x}
 \bigl(\rho(\xi,0)-A_1\bigr)\dd\xi,
\end{equation}
and denote its inverse by
$\widetilde x_0:=\widetilde y_0^{-1}$. Define
\begin{equation}\label{ic:left-advected-r-profile}
 \widetilde r^\sharp(\widetilde y)
 :=
 r(\widetilde x_0(\widetilde y),0)
 =
 \frac14\log
 \frac{
  m_0(\widetilde x_0(\widetilde y))
 }{
  n_0(\widetilde x_0(\widetilde y))
 }.
\end{equation}

Applying the argument of Lemma~\ref{lem:ic-r-recovery} to the left
coordinate $\widetilde y$, we obtain the following result.

\begin{lemma}\label{lem:ic-left-r-recovery}
For all $(\widetilde y,t)$,
\begin{equation}\label{ic:left-recovery-r}
 r(x(\widetilde y,t),t)
 =
 \widetilde r^\sharp(\widetilde y).
\end{equation}
\end{lemma}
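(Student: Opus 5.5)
I will mirror the proof of Lemma~\ref{lem:ic-r-recovery}, replacing the right coordinate $y$ by the left coordinate $\widetilde y$ of \eqref{ic:left-hodograph-def}. Throughout, $x(\widetilde y,t)$ denotes the inverse of the strictly increasing diffeomorphism $x\mapsto\widetilde y(x,t)$. This inverse is well defined because $\widetilde y_x=\rho/A_1>0$ and $\widetilde y(\cdot,t)$ maps $\mathbb R$ onto $\mathbb R$.

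The first step is to compute the time derivative of the inverse map at fixed $\widetilde y$. I will differentiate the identity $x(\widetilde y(x,t),t)=x$ with respect to $t$ at fixed $x$, which gives
\[
 x_{\widetilde y}(\widetilde y(x,t),t)\,\widetilde y_t(x,t)+x_t(\widetilde y(x,t),t)=0 .
\]
By \eqref{ic:left-hodograph-derivatives}, $\widetilde y_t=-Q\widetilde y_x$. The inverse-function identity gives $x_{\widetilde y}\widetilde y_x=1$. Together these yield $x_t(\widetilde y,t)=Q(x(\widetilde y,t),t)$, the analogue of \eqref{ic:inverse-map-time-derivative}. In words, curves of constant $\widetilde y$ are exactly the characteristics of the common flow $\dot x=Q$.

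The second step applies the chain rule together with the transport equation $r_t+Qr_x=0$ from Lemma~\ref{lem:rho-r-conservation}:
\[
 \frac{\dd}{\dd t}\,r(x(\widetilde y,t),t)=r_t+x_t\,r_x=r_t+Qr_x=0 .
\]
Hence $r(x(\widetilde y,t),t)$ does not depend on $t$. At $t=0$, the map $x\mapsto\widetilde y(x,0)$ coincides with $\widetilde y_0$ in \eqref{ic:initial-ytilde-map}, because the term $-A_1^2t$ vanishes. Therefore $x(\widetilde y,0)=\widetilde x_0(\widetilde y)$, and the value equals $\widetilde r^\sharp(\widetilde y)$ by \eqref{ic:left-advected-r-profile}.

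No step here is a genuine obstacle. The only point needing care is regularity: the chain rule requires $x(\widetilde y,t)$ to be $C^1$ in $t$. This follows from the implicit function theorem, since $\widetilde y$ is $C^1$ in $(x,t)$ under the standing assumptions and $\widetilde y_x=\rho/A_1$ is bounded away from zero. One also needs $r$ to be differentiable along the flow, which follows from the assumed regularity of $m$ and $n$ and their positivity.

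As a consistency check, I could note that $\widetilde y=\frac{A_2}{A_1}y+C$ by \eqref{ic:two-hodograph-relation}. The lemma then also follows directly from Lemma~\ref{lem:ic-r-recovery}, since $r(x(\widetilde y,t),t)=r^\sharp\bigl(\tfrac{A_1}{A_2}(\widetilde y-C)\bigr)$ is independent of $t$.
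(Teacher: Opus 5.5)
Your proposal is correct and is essentially the paper's proof: the paper simply applies the argument of Lemma~\ref{lem:ic-r-recovery} to $\widetilde y$. That argument derives $x_t(\widetilde y,t)=Q$ from $\widetilde y_t=-Q\widetilde y_x$, uses $r_t+Qr_x=0$, and evaluates at $t=0$, exactly as you do. Your side remark deducing the lemma from Lemma~\ref{lem:ic-r-recovery} via \eqref{ic:two-hodograph-relation} is also valid, since $\widetilde x_0(\widetilde y)=x_0\bigl(\tfrac{A_1}{A_2}(\widetilde y-C)\bigr)$.
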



The argument used in Proposition~\ref{prop:ic-recovery} applies to the left RH problem. In view of
\eqref{ic:left-recoverable-combinations} and
\eqref{ic:left-recovery-r}, we obtain the following result.

\begin{proposition}\label{prop:ic-left-recovery}
The solution of the 2-mCH system admits the following parametric
reconstruction from the left RH problem. The inverse coordinate map is
\begin{equation}\label{ic:left-recovery-x}
 x(\widetilde y,t)
 =
 \widetilde y+A_1^2t
 -2\log\widehat b_1(\widetilde y,t)
 -2\widetilde r^\sharp(\widetilde y).
\end{equation}
Equivalently,
\begin{equation}\label{ic:left-recovery-x-time}
 x(\widetilde y,t)
 =
 \widetilde x_0(\widetilde y)
 +4\int_0^t
 \widehat b_2(\widetilde y,\tau)
 \widehat b_3(\widetilde y,\tau)\dd\tau.
\end{equation}

The functions $p$ and $q$ are
\begin{subequations}\label{ic:left-recovery-pq}
\begin{align}
 p(x(\widetilde y,t),t)
 &=
 2\eul^{2\widetilde r^\sharp(\widetilde y)}
 \widehat b_1(\widetilde y,t)
 \widehat b_2(\widetilde y,t),\\
 q(x(\widetilde y,t),t)
 &=
 2\eul^{-2\widetilde r^\sharp(\widetilde y)}
 \widehat b_1(\widetilde y,t)^{-1}
 \widehat b_3(\widetilde y,t),
\end{align}
\end{subequations}
while the momenta satisfy
\begin{subequations}\label{ic:left-recovery-mn}
\begin{align}
 m(x(\widetilde y,t),t)
 &=
 \frac{A_1}{x_{\widetilde y}(\widetilde y,t)}
 \eul^{2\widetilde r^\sharp(\widetilde y)},\\
 n(x(\widetilde y,t),t)
 &=
 \frac{A_1}{x_{\widetilde y}(\widetilde y,t)}
 \eul^{-2\widetilde r^\sharp(\widetilde y)}.
\end{align}
\end{subequations}
Finally, $u$ and $v$ can be recovered by
\begin{subequations}\label{ic:left-recovery-uv}
\begin{align}
 u(x(\widetilde y,t),t)
 &=
 \frac{A_1}{2}
 \int_{\mathbb R}
 \eul^{-\lvert
       x(\widetilde y,t)-x(\eta,t)
       \rvert}
 \eul^{2\widetilde r^\sharp(\eta)}
 \dd\eta,
 \label{ic:left-recovery-u}\\
 v(x(\widetilde y,t),t)
 &=
 \frac{A_1}{2}
 \int_{\mathbb R}
 \eul^{-\lvert
       x(\widetilde y,t)-x(\eta,t)
       \rvert}
 \eul^{-2\widetilde r^\sharp(\eta)}
 \dd\eta.
 \label{ic:left-recovery-v}
\end{align}
\end{subequations}
\end{proposition}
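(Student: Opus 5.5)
The plan is to copy the proof of Proposition~\ref{prop:ic-recovery} step by step, with the right-problem data replaced by the left-problem data from Proposition~\ref{prop:ic-Ntildehat-zero-expansion} and Corollary~\ref{cor:ic-left-recoverable-combinations}. Throughout, $x(\widetilde y,t)$ denotes the inverse of $x\mapsto\widetilde y(x,t)$. This inverse exists because \eqref{ic:left-hodograph-derivatives} shows that $\widetilde y(\cdot,t)$ is an increasing diffeomorphism of $\mathbb R$.

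First I obtain \eqref{ic:left-recovery-x}. I insert Lemma~\ref{lem:ic-left-r-recovery} into \eqref{ic:left-recover-x-plus-r}, which replaces $2r(x(\widetilde y,t),t)$ by $2\widetilde r^\sharp(\widetilde y)$. Lemma~\ref{lem:ic-left-r-recovery} itself is proved as Lemma~\ref{lem:ic-r-recovery} was. Differentiating $x(\widetilde y(x,t),t)=x$ in $t$ and using $\widetilde y_t=-Q\widetilde y_x$ gives $x_t(\widetilde y,t)=Q(x(\widetilde y,t),t)$. The transport equation $r_t+Qr_x=0$ from Lemma~\ref{lem:rho-r-conservation} then shows that $r(x(\widetilde y,t),t)$ is constant in $t$. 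Its value at $t=0$ is $\widetilde r^\sharp(\widetilde y)$, because $\widetilde y(x,0)=\widetilde y_0(x)$ and so $x(\widetilde y,0)=\widetilde x_0(\widetilde y)$.

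Next I derive \eqref{ic:left-recovery-x-time}. Integrating $x_t=Q$ in time from $0$, with initial value $x(\widetilde y,0)=\widetilde x_0(\widetilde y)$, and substituting $Q=4\widehat b_2\widehat b_3$ from \eqref{ic:left-recover-Q} gives the formula. The expressions \eqref{ic:left-recovery-pq} follow by multiplying \eqref{ic:left-recover-balanced-p} and \eqref{ic:left-recover-balanced-q} by $\eul^{\pm2r}$ and again replacing $r$ by $\widetilde r^\sharp(\widetilde y)$.

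For the momenta, I use $\widetilde y_x=\rho/A_1$ together with the inverse-function relation $x_{\widetilde y}\widetilde y_x=1$. This gives $\rho(x(\widetilde y,t),t)=A_1/x_{\widetilde y}(\widetilde y,t)$. The definitions of $\rho$ and $r$ give $m=\rho\eul^{2r}$ and $n=\rho\eul^{-2r}$, and combining these yields \eqref{ic:left-recovery-mn}.

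Finally, $u$ and $v$ are obtained by inverting $1-\partial_x^2$ with its Green function $\tfrac12\eul^{-|x|}$, as in the proof of Proposition~\ref{prop:ic-recovery}. After the substitution $z=x(\eta,t)$, the integrand contains $m\,x_\eta=A_1\eul^{2\widetilde r^\sharp(\eta)}$ and $n\,x_\eta=A_1\eul^{-2\widetilde r^\sharp(\eta)}$, which gives \eqref{ic:left-recovery-uv}.

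The only step that needs care is justifying the Green-function representation, since $m$ does not decay: it tends to $A_1$ and $A_2$ at the two ends. To handle this, I note that $u$ is bounded with limits $A_j$ at the respective ends. Then $u-\tfrac12\int\eul^{-|x-z|}m(z,t)\,\dd z$ is a bounded solution of $w-w_{xx}=0$, and the only such solution is $w\equiv0$. Everything else is bookkeeping identical to the right-problem case.
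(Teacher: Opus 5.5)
Your proposal is correct and follows essentially the same route as the paper, which only says that the argument of Proposition~\ref{prop:ic-recovery} carries over via \eqref{ic:left-recoverable-combinations} and \eqref{ic:left-recovery-r}; your step-by-step substitution is exactly that argument. Your extra justification of the Green-function representation, namely that a bounded solution of $w-w_{xx}=0$ must vanish, fills in a detail the paper leaves implicit.
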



\begin{remark}\label{rem:ic-left-recovery-at-infinity}
Assume that the solution of
the right RH problem has the large-$\lambda$
expansion
\begin{equation}\label{ic:Ntildehat-infinity-first-order}
 \widehat{\widetilde N}(\widetilde y,t,\lambda)
 =
 I+\frac{
  \widehat{\widetilde N}^{[1]}(\widetilde y,t)
 }{\lambda}
 +\ord(\lambda^{-2}),
 \qquad
 \lambda\to\infty.
\end{equation}
The argument of Remark~\ref{rem:ic-recovery-at-infinity}, applied to
the left RH problem, gives
\begin{equation}\label{ic:left-Xy-from-infinity}
 \partial_{\widetilde y}
 \bigl(
  x(\widetilde y,t)
  +2r(x(\widetilde y,t),t)
 \bigr)
 =
 2A_1
 \bigl(
  \widehat{\widetilde N}^{[1]}(\widetilde y,t)
 \bigr)_{12}.
\end{equation}
Hence
\begin{equation}\label{ic:left-xy-from-infinity}
 x_{\widetilde y}(\widetilde y,t)
 =
 2A_1
 \bigl(
  \widehat{\widetilde N}^{[1]}(\widetilde y,t)
 \bigr)_{12}
 -2(\widetilde r^\sharp)'(\widetilde y),
\end{equation}
and
\begin{equation}\label{ic:left-rho-from-infinity}
 \rho(x(\widetilde y,t),t)
 =
 \frac{A_1}{
  2A_1
  \bigl(
   \widehat{\widetilde N}^{[1]}(\widetilde y,t)
  \bigr)_{12}
  -2(\widetilde r^\sharp)'(\widetilde y)}.
\end{equation}
Using
\[
 x(\widetilde y,t)-\widetilde y-A_1^2t
 \to 0,
 \qquad
 \widetilde y\to-\infty,
\]
the inverse coordinate map can also be written as
\begin{equation}\label{ic:left-x-from-infinity}
\begin{aligned}
 x(\widetilde y,t)
 =
 \widetilde y+A_1^2t
 +\int_{-\infty}^{\widetilde y}
 \Bigl[
  2A_1
  \bigl(
   \widehat{\widetilde N}^{[1]}(\eta,t)
  \bigr)_{12}
  -2(\widetilde r^\sharp)'(\eta)-1
 \Bigr]\dd\eta.
\end{aligned}
\end{equation}
\end{remark}

\section{Different background values for the two components}
\label{sec:different-component-backgrounds}

In this section, we discuss the modifications required when the two
components approach different constants at each spatial end. Suppose
that
\[
 (u(x,t),v(x,t))
 \to 
 \bigl(A_j^{(u)},A_j^{(v)}\bigr),
 \qquad
 x\to(-1)^j\infty,
 \qquad
 A_j^{(u)},A_j^{(v)}>0,
 \qquad j=1,2,
\]
and assume that the corresponding limits of the quantities entering
the Lax pair are
\[
 m,p \to  A_j^{(u)},
 \qquad
 n,q \to  A_j^{(v)},
 \qquad
 r_x \to 0,
 \qquad
 x\to(-1)^j\infty.
\]
Set
\begin{equation}\label{ic:general-background-parameters}
 \mathcal A_j
 :=
 \sqrt{A_j^{(u)}A_j^{(v)}},
 \qquad
 r_j
 :=
 \frac14
 \log\frac{A_j^{(u)}}{A_j^{(v)}}.
\end{equation}
Then
\[
 \rho \to \mathcal A_j,
 \qquad
  \nu  \to 
 \frac{A_j^{(u)}}{A_j^{(v)}},
 \qquad
 r \to  r_j,
 \qquad
 x\to(-1)^j\infty.
\]
Moreover,
\begin{equation}\label{ic:general-background-balanced-limits}
 p \nu ^{-1/2} \to \mathcal A_j,
 \qquad
 q \nu ^{1/2} \to \mathcal A_j,
 \qquad
 Q \to \mathcal A_j^2,
 \qquad
 x\to(-1)^j\infty.
\end{equation}
Assume throughout this section that
\begin{equation}\label{ic:general-background-ordering}
 0<\mathcal A_1<\mathcal A_2,
\end{equation}
or, equivalently,
\begin{equation}\label{ic:general-background-ordering-equiv}
 A_1^{(u)}A_1^{(v)}
 <
 A_2^{(u)}A_2^{(v)}.
\end{equation}

For $j=1,2$, set
\begin{equation}\label{ic:general-background-cuts}
 \Gamma_j^{\mathrm{bg}}
 :=
 \left[
  -\mathcal A_j^{-1},\mathcal A_j^{-1}
 \right]
\end{equation}
and define
\begin{equation}\label{ic:general-background-l-def}
 l_j^{\mathrm{bg}}(\lambda)
 :=
 \sqrt{\lambda^2-\mathcal A_j^{-2}}.
\end{equation}
The branch cut is taken along $\Gamma_j^{\mathrm{bg}}$, and the branch
is normalized by
\[
 l_j^{\mathrm{bg}}(\lambda)\sim\lambda,
 \qquad
 \lambda\to\infty.
\]
The corresponding branch points are
\begin{equation}\label{ic:general-background-branch-points}
 \lambda=\pm\mathcal A_j^{-1}
 =
 \pm\frac1{\sqrt{A_j^{(u)}A_j^{(v)}}}.
\end{equation}
Consequently, the location and ordering of the spectral cuts depend
only on the products
$A_j^{(u)}A_j^{(v)}$, $j=1,2$.

By \eqref{ic:general-background-parameters} and
\eqref{ic:general-background-balanced-limits}, the limiting
coefficients in \eqref{ic:balanced-U} and \eqref{ic:balanced-V} have
the same form as in the equal-background case, with $A_j$ replaced by
$\mathcal A_j$. Accordingly, the matrices $H_j$ and $H_j^{-1}$ are
defined by \eqref{ic:H-def} and \eqref{ic:H-inverse}, respectively,
with $A_j$ replaced by $\mathcal A_j$ and $l_j$ replaced by
$l_j^{\mathrm{bg}}$. The identities \eqref{ic:H-identities} and
\eqref{ic:H-diagonalization} remain valid after the same
replacements.

Define
\begin{equation}\label{ic:general-background-delta}
 \delta_j^{\mathrm{bg}}
 :=
 \frac{\rho-\mathcal A_j}{2\mathcal A_j}-r_x
\end{equation}
and
\begin{equation}\label{ic:general-background-phase}
 f_j^{\mathrm{bg}}(x,t,\lambda)
 :=
 \ii\mathcal A_jl_j^{\mathrm{bg}}(\lambda)
 \left[
  \frac1{2\mathcal A_j}
  \int_{(-1)^j\infty}^{x}
  \bigl(\rho(\xi,t)-\mathcal A_j\bigr)\dd\xi
  +\frac x2
  -t\left(
    \frac1{\lambda^2}+\frac{\mathcal A_j^2}{2}
   \right)
 \right].
\end{equation}
With these definitions, the calculation in
Proposition~\ref{prop:inner-cut-transformed-lax} yields the same
transformed Lax pair \eqref{ic:transformed-lax}, with
 $A_j,~ l_j,~\delta_j,~ f_j$ replaced by $\mathcal A_j,~ l_j^{\mathrm{bg}},~
 \delta_j^{\mathrm{bg}},~ f_j^{\mathrm{bg}}$,
respectively. The coefficient formulas \eqref{ic:W-def} and
\eqref{ic:hatV-def} retain the same form.

To carry the direct spectral and Riemann--Hilbert analysis through,
we impose the analogues of
\eqref{ic:standing-direct-assumptions}, with $A_j$ and $\delta_j$
replaced by $\mathcal A_j$ and $\delta_j^{\mathrm{bg}}$,
respectively.

Notice that the balancing gauge now generally has non-identity limits:
\begin{equation}\label{ic:general-background-gauge-limits}
 G(x,t)=\eul^{-r(x,t)\sigma_3}
 \to 
 G_j:=\eul^{-r_j\sigma_3},
 \qquad
 x\to(-1)^j\infty.
\end{equation}
Accordingly, the normalization of the original Jost solution $\Psi_j$
as $x\to(-1)^j\infty$ contains the additional factor $G_j^{-1}$.
Indeed, setting
\[
 R_j^{\mathrm{bg}}
 :=
 f_j^{\mathrm{bg}}\sigma_3,
\]
the normalized and balanced Jost solutions satisfy
\[
\widetilde\Psi_j(x,t,\lambda) \to  I,
 \qquad
 \Psi_j^{\mathrm b}(x,t,\lambda)
 \sim
 H_j^{-1}(\lambda)
 \eul^{-R_j^{\mathrm{bg}}(x,t,\lambda)},
 \qquad
 x\to(-1)^j\infty.
\]
Returning to the original Jost solutions through
$\Psi_j=G^{-1}\Psi_j^{\mathrm b}$ and using
\eqref{ic:general-background-gauge-limits}, we obtain
\begin{equation}\label{ic:general-background-original-Jost-normalization}
 \Psi_j(x,t,\lambda)
 \sim
 G_j^{-1}H_j^{-1}(\lambda)
 \eul^{-R_j^{\mathrm{bg}}(x,t,\lambda)},
 \qquad
 x\to(-1)^j\infty.
\end{equation}

The identity $\Psi_j^{\mathrm b}=G\Psi_j$ transforms the scattering
relation $\Psi_1=\Psi_2c$ into
$\Psi_1^{\mathrm b}=\Psi_2^{\mathrm b}c$. Thus the same scattering
matrix relates the original and balanced Jost solutions on the common
continuous spectrum.

Furthermore, the right and left RH problems retain the same structure,
with $A_j$, $l_j$, and $f_j$ replaced by
$\mathcal A_j$, $l_j^{\mathrm{bg}}$, and
$f_j^{\mathrm{bg}}$, respectively. Their jump and residue conditions
and their normalization at infinity remain unchanged in form. The
matrices $G_j$ do not enter the RH problems, since they are formulated in
terms of the balanced Jost solutions.

For the analysis near $\lambda=0$, define
\begin{equation}\label{ic:general-background-D-def}
 D_1^{\mathrm{bg}}(x,t)
 :=
 \int_{-\infty}^{x}
 \delta_1^{\mathrm{bg}}(\xi,t)\dd\xi,
 \qquad
 D_2^{\mathrm{bg}}(x,t)
 :=
 \int_x^{+\infty}
 \delta_2^{\mathrm{bg}}(\xi,t)\dd\xi.
\end{equation}
Using \eqref{ic:general-background-delta} and the limits
$r(x,t)\to r_j$ as $x\to(-1)^j\infty$, we obtain
\begin{subequations}\label{ic:general-background-D-identities}
\begin{align}
 D_1^{\mathrm{bg}}(x,t)
 &=
 \frac1{2\mathcal A_1}
 \int_{-\infty}^{x}
 \bigl(\rho(\xi,t)-\mathcal A_1\bigr)\dd\xi
 -\bigl(r(x,t)-r_1\bigr),
 \label{ic:general-background-D1}\\
 D_2^{\mathrm{bg}}(x,t)
 &=
 \frac1{2\mathcal A_2}
 \int_x^{+\infty}
 \bigl(\rho(\xi,t)-\mathcal A_2\bigr)\dd\xi
 +\bigl(r(x,t)-r_2\bigr).
 \label{ic:general-background-D2}
\end{align}
\end{subequations}
The analogues of the modified phases \eqref{ic:f0-def} are
\begin{equation}\label{ic:general-background-f0-def}
 f_1^{0,\mathrm{bg}}
 :=
 f_1^{\mathrm{bg}}
 -\ii\mathcal A_1l_1^{\mathrm{bg}}D_1^{\mathrm{bg}},
 \qquad
 f_2^{0,\mathrm{bg}}
 :=
 f_2^{\mathrm{bg}}
 +\ii\mathcal A_2l_2^{\mathrm{bg}}D_2^{\mathrm{bg}}.
\end{equation}
Substitution of
\eqref{ic:general-background-D-identities} into
\eqref{ic:general-background-f0-def} gives
\begin{equation}\label{ic:general-background-f0-common}
 f_j^{0,\mathrm{bg}}(x,t,\lambda)
 =
 \ii\mathcal A_jl_j^{\mathrm{bg}}(\lambda)
 \left[
  \frac{x}{2}+r(x,t)-r_j
  -t\left(
   \frac1{\lambda^2}+\frac{\mathcal A_j^2}{2}
  \right)
 \right],
 \qquad j=1,2.
\end{equation}
Thus the zero-adapted phase retains the form of
\eqref{ic:f0-common-form}, with $r$ replaced by $r-r_j$.

The corresponding new right and left spatial variables are
\begin{align}
 y(x,t)
 &:=
 x-\frac1{\mathcal A_2}
 \int_x^{+\infty}
 \bigl(\rho(\xi,t)-\mathcal A_2\bigr)\dd\xi
 -\mathcal A_2^2t,
 \label{ic:general-background-right-coordinate}\\
 \widetilde y(x,t)
 &:=
 x+\frac1{\mathcal A_1}
 \int_{-\infty}^{x}
 \bigl(\rho(\xi,t)-\mathcal A_1\bigr)\dd\xi
 -\mathcal A_1^2t.
 \label{ic:general-background-left-coordinate}
\end{align}
The leading coefficients in the expansions of the solutions to the  right and left RH
problems near  $\lambda=0$ then determine
\begin{align}
 y+\mathcal A_2^2t-2\log\widehat a_1(y,t)
 &=
 x(y,t)+2\bigl(r(x(y,t),t)-r_2\bigr),
 \label{ic:general-background-right-combination}\\
 \widetilde y+\mathcal A_1^2t
 -2\log\widehat b_1(\widetilde y,t)
 &=
 x(\widetilde y,t)
 +2\bigl(r(x(\widetilde y,t),t)-r_1\bigr).
 \label{ic:general-background-left-combination}
\end{align}
Let $r_{\mathrm R}^{\sharp}$ and $r_{\mathrm L}^{\sharp}$ denote the
initial profiles of $r$ in the right and left coordinates,
respectively, defined as in
\eqref{ic:advected-r-profile} and
\eqref{ic:left-advected-r-profile}. Since
$r_t+Qr_x=0$, the argument of Lemma~\ref{lem:ic-r-recovery} gives
\[
 r(x(y,t),t)=r_{\mathrm R}^{\sharp}(y),
 \qquad
 r(x(\widetilde y,t),t)
 =
 r_{\mathrm L}^{\sharp}(\widetilde y).
\]
Consequently, the inverse coordinate maps are recovered from
\begin{align}
 x(y,t)
 &=
 y+\mathcal A_2^2t
 -2\log\widehat a_1(y,t)
 -2\bigl(r_{\mathrm R}^{\sharp}(y)-r_2\bigr),
 \label{ic:general-background-right-recovery}\\
 x(\widetilde y,t)
 &=
 \widetilde y+\mathcal A_1^2t
 -2\log\widehat b_1(\widetilde y,t)
 -2\bigl(r_{\mathrm L}^{\sharp}(\widetilde y)-r_1\bigr).
 \label{ic:general-background-left-recovery}
\end{align}

\begin{remark}\label{rem:ic-common-background-ratio}
If
\begin{equation}\label{ic:common-background-ratio}
 \frac{A_1^{(u)}}{A_1^{(v)}}
 =
 \frac{A_2^{(u)}}{A_2^{(v)}},
\end{equation}
then the problem can be reduced to the equal-background setting by a
constant rescaling. In fact, \eqref{ic:common-background-ratio}
implies $r_1=r_2$, and there exists a constant $\gamma>0$ such that
\begin{equation*}
 \gamma^2
 =
 \frac{A_j^{(v)}}{A_j^{(u)}},
 \qquad j=1,2.
\end{equation*}
Setting
\begin{equation}\label{ic:constant-component-rescaling}
 \bar u=\gamma u,
 \qquad
 \bar v=\gamma^{-1}v,
\end{equation}
we obtain
\begin{equation*}
 (\bar u,\bar v)
 \to 
 (\mathcal A_j,\mathcal A_j),
 \qquad
 x\to(-1)^j\infty.
\end{equation*}
Moreover,
\begin{equation*}
 \bar m=\gamma m,
 \qquad
 \bar n=\gamma^{-1}n,
 \qquad
 (\bar u-\bar u_x)(\bar v+\bar v_x)
 =
 (u-u_x)(v+v_x),
\end{equation*}
so the rescaling \eqref{ic:constant-component-rescaling} leaves
\eqref{eq:2mch} invariant.


\end{remark}

\section{Concluding remarks}
\label{sec:concluding-remarks}

The analysis in this paper has been carried out under the assumption
$0<A_1<A_2$. The opposite ordering $0<A_2<A_1$ can be treated by
a corresponding reorganization of the spectral construction, as in
the scalar mCH case \cite{KST22}. In this setting,
\begin{equation*}
 \Sigma_2\subset\Sigma_1,
 \qquad
 \Gamma_1\subset\Gamma_2,
\end{equation*}
and the ``intermediate'' part of the spectrum becomes
\begin{equation*}
 \Sigma_0
 =
 \left[-A_2^{-1},-A_1^{-1}\right]
 \cup
 \left[A_1^{-1},A_2^{-1}\right].
\end{equation*}
Accordingly, the roles of the indices $1$ and $2$ in the analytic
continuation of the Jost columns, the scattering relations on
$\Sigma_0$, and the construction of the RH problems are interchanged.
The discrete spectrum is then contained in $\dot\Gamma_1$. The
remaining parts of the analysis follow in the same way.

The RH problems obtained in this paper also provide a starting point
for the study of the long-time behavior of solutions. After the
change of spatial variable, the dependence of the jump and residue
conditions on space and time is contained in the explicit phases
$\widehat f_2$ and $\widehat f_1$. This makes the resulting RH
problems suitable for the nonlinear steepest descent method. The
stationary points and branch points of these phases determine the
different asymptotic regions, while the discrete spectrum and the
reflection coefficients describe the soliton and continuous-spectrum
contributions, respectively. The asymptotics of $u$ and $v$ can then
be recovered parametrically from the corresponding RH expansions.
In the two-component problem, this final step must also incorporate
the transported profile $r^\sharp$, which distinguishes the
reconstruction from that of the scalar mCH equation.

\section{Acknowledgment}

IK acknowledges the support from the Austrian Science Fund (FWF), grant no.
10.55776/ESP691.
DS acknowledges the support from the National 
Research Foundation of Ukraine, grant no. 2025.07/0437.

\appendix

\section{Sign-preserving property of $m$ and $n$}\label{app:A}

The following result shows that the sign condition is preserved for as long
as a sufficiently regular solution exists (cf. \cite{KST22}).

\begin{proposition}\label{prop:positivity}
Let $(u,v)$ be a classical solution of \eqref{eq:2mch} on
$\R\times[0,T)$, $T\leq\infty$, and suppose that
\[
 Q=(u-u_x)(v+v_x)
 \in  C\bigl([0,T);C_b^1(\R)\bigr).
\]
If $m_0(x)>0$ and $n_0(x)>0$ for every $x\in\R$, then
\[
 m(x,t)>0,\qquad n(x,t)>0,
 \qquad (x,t)\in\R\times[0,T).
\]
\end{proposition}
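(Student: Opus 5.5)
The plan is to use the method of characteristics. Both equations in \eqref{eq:2mch} are continuity equations transported by the same velocity field $Q$. Expanding the derivative gives
\[
 m_t+Qm_x=-Q_xm,
 \qquad
 n_t+Qn_x=-Q_xn,
\]
so along characteristics each momentum satisfies a linear scalar ODE. Its solution is an exponential times the initial value, and this gives positivity directly.

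First I introduce the characteristic flow $\xi(t,x)$, defined by
\[
 \partial_t\xi(t,x)=Q(\xi(t,x),t),
 \qquad
 \xi(0,x)=x.
\]
Since $Q\in C([0,T);C_b^1(\R))$, the field $Q$ is continuous in $(x,t)$, bounded, and globally Lipschitz in $x$, uniformly on $[0,T_0]$ for each $T_0<T$. The Picard--Lindel\"of theorem therefore gives a unique global $C^1$ solution on $[0,T)$. Differentiating in $x$ gives
\[
 \partial_t\xi_x=Q_x(\xi,t)\,\xi_x,
 \qquad
 \xi_x(0,x)=1,
\]
hence
\[
 \xi_x(t,x)=\exp\Bigl(\int_0^tQ_x(\xi(s,x),s)\,\dd s\Bigr)>0.
\]
So $x\mapsto\xi(t,x)$ is strictly increasing. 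It is also onto $\R$, because $|\xi(t,x)-x|\le t\|Q\|_\infty$. Thus $\xi(t,\cdot)$ is a diffeomorphism of $\R$ for every $t$.

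Next I compute, using the equation for $m$ and the chain rule,
\[
 \frac{\dd}{\dd t}\bigl[m(\xi(t,x),t)\,\xi_x(t,x)\bigr]
 =(m_t+Qm_x+Q_xm)(\xi,t)\,\xi_x=0.
\]
Hence
\[
 m(\xi(t,x),t)=m_0(x)\exp\Bigl(-\int_0^tQ_x(\xi(s,x),s)\,\dd s\Bigr)>0,
\]
and the identical computation gives the same formula for $n$ with $n_0$ in place of $m_0$. Surjectivity of $\xi(t,\cdot)$ then yields $m,n>0$ at every point of $\R\times[0,T)$.

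The main point needing care is regularity: the chain-rule computation requires $m_t$ and $m_x$ to exist along the curves. For a classical solution this holds, since $m=u-u_{xx}$ is $C^1$ in the classical sense assumed. If only weaker regularity were available, I would instead verify the formula in the weak, conservative form $\partial_t(m\circ\xi\,\xi_x)=0$, by testing the conservation law against smooth functions transported along the flow. The rest of the argument is the standard Picard--Lindel\"of setup given the $C_b^1$ hypothesis on $Q$.
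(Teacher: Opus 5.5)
Your proposal is correct and follows essentially the same route as the paper: the characteristic flow of $Q$, the Jacobian formula $\chi_\xi=\exp\bigl(\int_0^tQ_x(\chi(s,\xi),s)\,\dd s\bigr)>0$, surjectivity from the bound $|\chi(t,\xi)-\xi|\le\int_0^t\|Q(\cdot,s)\|_{L^\infty}\,\dd s$, and conservation of $m(\chi,t)\chi_\xi$ and $n(\chi,t)\chi_\xi$ along the flow. Your remark on the regularity needed for the chain-rule step is a reasonable caveat, which the paper leaves implicit under its classical-solution assumption.
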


\begin{proof}
Fix $T_0<T$. Since
$Q\in C([0,T);C_b^1(\R))$, we have
 $\sup_{0\leq t\leq T_0}
 \|Q(\cdot,t)\|_{C_b^1}<\infty$. Hence the characteristic problem
\begin{equation}\label{eq:characteristic-flow}
 \frac{d}{dt}\chi(t,\xi)=Q(\chi(t,\xi),t),
 \qquad \chi(0,\xi)=\xi,
\end{equation}
has a unique solution on $[0,T_0]$.

Differentiating \eqref{eq:characteristic-flow} with
respect to $\xi$, we obtain
\begin{equation}
    \label{eq:chi_xi}
     \frac{d}{dt}\chi_\xi(t,\xi)
 =Q_x(\chi(t,\xi),t)\chi_\xi(t,\xi),
 \qquad \chi_\xi(0,\xi)=1,
\end{equation}
and thus
\begin{equation}\label{eq:flow-jacobian}
    \chi_\xi(t,\xi)
=\exp\left(\int_0^t Q_x(\chi(s,\xi),s)\,ds\right)>0,
\end{equation}
It follows that, for each $t\in[0,T_0]$, the map
$\xi\mapsto\chi(t,\xi)$ is strictly increasing.  Moreover,
\[
 |\chi(t,\xi)-\xi|
 \leq\int_0^t\lVert Q(\cdot,s)\rVert_{L^\infty}\,ds
\]
which implies that $\chi(t,\xi)\to\pm\infty$ as $\xi\to\pm\infty$. Consequently, for
each $t\in[0,T_0]$, this map is a diffeomorphism from $\R$ onto $\R$.

Differentiating $m(\chi(t,\xi),t)$ and $n(\chi(t,\xi),t)$ with respect to $t$ using the chain rule, \eqref{eq:2mch}, and \eqref{eq:characteristic-flow}, we obtain that, along the characteristic flow,
\[
 \frac{d}{dt}m(\chi(t,\xi),t)
 =-Q_x(\chi(t,\xi),t)m(\chi(t,\xi),t),\qquad \frac{d}{dt}n(\chi(t,\xi),t)
 =-Q_x(\chi(t,\xi),t)n(\chi(t,\xi),t)
\]
Combining these identities with \eqref{eq:chi_xi} yields
\[
 \frac{d}{dt}\bigl[m(\chi(t,\xi),t)\chi_\xi(t,\xi)\bigr]=0,\qquad \frac{d}{dt}\bigl[n(\chi(t,\xi),t)\chi_\xi(t,\xi)\bigr]=0,
\]
and hence 
\begin{equation}\label{eq:mn-transport}
m(\chi(t,\xi),t)\chi_\xi(t,\xi)=m_0(\xi),\qquad n(\chi(t,\xi),t)\chi_\xi(t,\xi)=n_0(\xi).
\end{equation}
Equivalently,
\begin{align*}
 m(\chi(t,\xi),t)
 &=m_0(\xi)
   \exp\left(-\int_0^tQ_x(\chi(s,\xi),s)\,ds\right),\\
 n(\chi(t,\xi),t)
 &=n_0(\xi)
   \exp\left(-\int_0^tQ_x(\chi(s,\xi),s)\,ds\right).
\end{align*}
Since the exponential factor is strictly positive and the characteristic
flow is onto $\R$, the positivity of $m_0$ and $n_0$ implies \begin{equation*}
 m(x,t)>0,\qquad n(x,t)>0,
 \qquad x\in\R,\quad 0\leq t\leq T_0.
\end{equation*}
Since $T_0<T$ was arbitrary, the conclusion holds on
$\R\times[0,T)$.
\end{proof}

\begin{remark}
    The proof yields two additional observations.
\begin{enumerate}
    \item Since $m$ and $n$ are multiplied by the same factor along the
characteristic flow, their ratio is conserved:
\begin{equation}\label{eq:ratio-transport}
 \frac{m(\chi(t,\xi),t)}
 {n(\chi(t,\xi),t)}
 =
 \frac{m_0(\xi)}
 {n_0(\xi)}.
\end{equation}
Thus $m/n$ is constant along each characteristic.

\item
The proof of Proposition~\ref{prop:positivity} uses only the fact that $m$ and $n$ satisfy the conservation laws
\[
 m_t+(Qm)_x=0,
 \qquad
 n_t+(Qn)_x=0,
\]
and the stated regularity of
$Q$. No asymptotic boundary conditions on $u$ or $v$ are used.
In particular, it does not require any relation between the left and right
background values of $u$ and $v$. Hence the same argument applies when $u$ and $v$ have
distinct nonzero background values at spatial infinities.

\end{enumerate}
 
\end{remark}

\bibliography{references}

@article{BKS20,
  author = {Boutet de Monvel, Anne and Karpenko, Iryna and Shepelsky, Dmitry},
  title = {{A Riemann-Hilbert approach to the modified Camassa--Holm equation with nonzero boundary conditions}},
  journal = {J. Math. Phys.},
  volume = {61},
  year = {2020},
  number = {3},
  pages = {031504, 24},
}

@article{BS06,
  author = {Boutet de Monvel, Anne and Shepelsky, Dmitry},
  title = {{Riemann--Hilbert approach for the Camassa--Holm equation on the line}},
  journal = {C. R. Math. Acad. Sci. Paris},
  volume = {343},
  year = {2006},
  number = {10},
  pages = {627--632},
}

@article{BS08,
  author = {Boutet de Monvel, Anne and Shepelsky, Dmitry},
  title = {{Riemann-Hilbert problem in the inverse scattering for the Camassa-Holm equation on the line}},
  booktitle = {{Probability, geometry and integrable systems}},
  series = {Math. Sci. Res. Inst. Publ.},
  volume = {55},
  publisher = {Cambridge Univ. Press},
  address = {Cambridge},
  year = {2008},
  pages = {53--75},
}

@article{BS13,
  author = {Boutet de Monvel, Anne and Shepelsky, Dmitry},
  title = {{A Riemann--Hilbert approach for the Degasperis--Procesi equation}},
  journal = {Nonlinearity},
  volume = {26},
  year = {2013},
  number = {7},
  pages = {2081--2107},
}

@article{BS15,
  author = {Boutet de Monvel, Anne and Shepelsky, Dmitry},
  title = {{The Ostrovsky--Vakhnenko equation by a Riemann--Hilbert approach}},
  journal = {J. Phys. A},
  volume = {48},
  year = {2015},
  number = {3},
  pages = {035204, 34},
}

@article{CH93,
  author = {Camassa, Roberto and Holm, Darryl D.},
  title = {{An integrable shallow water equation with peaked solitons}},
  journal = {Phys. Rev. Lett.},
  volume = {71},
  year = {1993},
  number = {11},
  pages = {1661--1664},
}

@article{Karpenko2022,
  author  = {Karpenko, I.},
  title   = {Long-time asymptotics for the modified Camassa--Holm equation with nonzero boundary conditions},
  journal = {Journal of Mathematical Physics, Analysis, Geometry},
  volume  = {18},
  number  = {2},
  pages   = {224--252},
  year    = {2022},
  doi     = {10.15407/mag18.02.224}
}

@article{KST22,
  author = {Karpenko, Iryna and Shepelsky, Dmitry and Teschl, Gerald},
  title = {{A Riemann--Hilbert approach to the modified Camassa--Holm equation with step-like boundary conditions}},
  journal = {Monatshefte f\"ur Mathematik},
  volume = {201},
  year = {2023},
  pages = {127–172},
}

@article{OR96,
  author = {Olver, P. J. and Rosenau, P.},
  title = {{Tri-hamiltonian duality between solitons and solitary-wave solutions having compact support}},
  journal = {Phys. Rev. E},
  volume = {53},
  year = {1996},
  number = {2},
  pages = {1900},
}

@article{SQQ11,
    author = {J. Song and C. Qu and Z. Qiao}, 
    title = {A new integrable two-component system with cubic nonlinearity},
    journal = {J. Math. Phys.},
    year = 2011,
    volume = {52, 1 },
    pages = {013503}
}

@book{Ahlfors79,
  author    = {Ahlfors, Lars V.},
  title     = {Complex Analysis},
  edition   = {3},
  publisher = {McGraw--Hill},
  address   = {New York},
  year      = {1979}
}

@article{YanQiaoZhang2015,
  author  = {Yan, K. and Qiao, Z. and Zhang, Y.},
  title   = {Blow-up phenomena for an integrable two-component Camassa--Holm system with cubic nonlinearity and peakon solutions},
  journal = {Journal of Differential Equations},
  volume  = {259},
  pages   = {6644--6671},
  year    = {2015}
}

@article{Fokas95,
  author  = {Fokas, Athanassios S.},
  title   = {On a class of physically important integrable equations},
  journal = {Physica D},
  volume  = {87},
  number  = {1--4},
  pages   = {145--150},
  year    = {1995},
  doi     = {10.1016/0167-2789(95)00133-O}
}

@article{Fuchssteiner96,
  author  = {Fuchssteiner, Benno},
  title   = {Some tricks from the symmetry-toolbox for nonlinear
             equations: Generalizations of the {Camassa--Holm}
             equation},
  journal = {Physica D},
  volume  = {95},
  number  = {3--4},
  pages   = {229--243},
  year    = {1996},
  doi     = {10.1016/0167-2789(96)00048-6}
}

@article{Qiao06,
  author  = {Qiao, Zhijun},
  title   = {A new integrable equation with cuspons and
             {W/M}-shape-peaks solitons},
  journal = {Journal of Mathematical Physics},
  volume  = {47},
  number  = {11},
  pages   = {112701},
  year    = {2006},
  doi     = {10.1063/1.2365758}
}

@article{CHS16,
  author  = {Chang, Xiang-Ke and Hu, Xing-Biao and Szmigielski, Jacek},
  title   = {Multipeakons of a two-component modified
             {Camassa--Holm} equation and the relation with the
             finite {Kac--van Moerbeke} lattice},
  journal = {Advances in Mathematics},
  volume  = {299},
  pages   = {1--35},
  year    = {2016},
  doi     = {10.1016/j.aim.2016.05.004}
}

@article{Manakov74,
  author  = {Manakov, S. V.},
  title   = {On the theory of two-dimensional stationary
             self-focusing of electromagnetic waves},
  journal = {Soviet Physics JETP},
  volume  = {38},
  number  = {2},
  pages   = {248--253},
  year    = {1974}
}

@article{PAB06,
  author  = {Prinari, Barbara and Ablowitz, Mark J.
             and Biondini, Gino},
  title   = {Inverse scattering transform for the vector nonlinear
             {Schr{\"o}dinger} equation with nonvanishing boundary
             conditions},
  journal = {Journal of Mathematical Physics},
  volume  = {47},
  number  = {6},
  pages   = {063508},
  year    = {2006},
  doi     = {10.1063/1.2209169}
}

@article{PBT11,
  author  = {Prinari, Barbara and Biondini, Gino
             and Trubatch, A. David},
  title   = {Inverse scattering transform for the multi-component
             nonlinear {Schr{\"o}dinger} equation with nonzero
             boundary conditions},
  journal = {Studies in Applied Mathematics},
  volume  = {126},
  number  = {3},
  pages   = {245--302},
  year    = {2011},
  doi     = {10.1111/j.1467-9590.2010.00504.x}
}

@article{PDLH18,
  author  = {Prinari, Barbara and Demontis, Francesco and Li, Sitai
             and Horikis, Theodoros P.},
  title   = {Inverse scattering transform and soliton solutions for
             square matrix nonlinear {Schr{\"o}dinger} equations
             with non-zero boundary conditions},
  journal = {Physica D},
  volume  = {368},
  pages   = {22--49},
  year    = {2018},
  doi     = {10.1016/j.physd.2017.12.007}
}

\end{document}